\newtheorem{definition}{Definition}[section]
\newtheorem{lemma}[definition]{Lemma}
\newtheorem{proposition}[definition]{Proposition}
\newtheorem{theorem}[definition]{Theorem}
\newtheorem{remark}[definition]{Remark}
\newtheorem{corollary}[definition]{Corollary}
\numberwithin{equation}{section}
\def\tr{\mathrm{tr}}
\def\vr{v^{\text{r}}}
\def\ur{u^{\text{r}}}
\def\omegar{\omega^{\text{r}}}
\begin{document}

\title{The dilute Fermi gas via Bogoliubov theory}

\author[1]{Marco Falconi}
\affil[1]{University of Roma Tre, Department of Mathematics and Physics, L.go S. L. Murialdo 1, 00146 Roma, Italy}
\author[2]{Emanuela L. Giacomelli}
\affil[2]{LMU M\"unich, Department of Mathematics, Theresienstr. 39, 80333 M\"unchen, Germany}
\author[2]{Christian Hainzl}
\author[3]{Marcello Porta}
\affil[3]{SISSA, Mathematics Area, Via Bonomea 265, 34136 Trieste, Italy}

\maketitle

\abstract{We study the ground state properties of interacting Fermi gases in the dilute regime, in three dimensions. We compute the ground state energy of the system, for positive interaction potentials. We recover a well-known expression for the ground state energy at second order in the particle density, which depends on the interaction potential only via its scattering length. The first proof of this result has been given by Lieb, Seiringer and Solovej in \cite{LSS}. In this paper we give a new derivation of this formula, using a different method; it is inspired by Bogoliubov theory, and it makes use of the almost-bosonic nature of the low-energy excitations of the systems. With respect to previous work, our result applies to a more regular class of interaction potentials, but it comes with improved error estimates on the ground state energy asymptotics in the density.}

\tableofcontents

\section{Introduction}

In this paper we consider interacting, spin $1/2$ fermions, in three dimensions, in the thermodynamic limit. We will focus on the ground state energy of the system, for positive and short-ranged interaction potential. Let $\rho_{\sigma}$ be the density of particles with spin up, $\sigma =\, \uparrow$, or spin down, $\sigma =\, \downarrow$. Let $e(\rho_{\uparrow}, \rho_{\downarrow})$ be the ground state energy density of the system, in the termodynamic limit. We will be interested in the dilute regime, corresponding to $\rho_{\sigma} \ll 1$. It is well-known that, in units such that $\hbar = 1$ and setting the masses of the particles to be equal to $1/2$:
\begin{equation}\label{eq:LSS}
e(\rho_{\uparrow}, \rho_{\downarrow}) = \frac{3}{5}(6\pi^{2})^{\frac{2}{3}} (\rho_{\uparrow}^{\frac{5}{3}} + \rho_{\downarrow}^{\frac{5}{3}}) + 8\pi a \rho_{\uparrow} \rho_{\downarrow} + o(\rho^{2})\;.
\end{equation}
The first term in the right-hand side of Eq. (\ref{eq:LSS}) is purely kinetic, and its $\rho^{5/3}$-dependence is a consequence of the fermionic nature of the wave function. It is easy to find a fermionic state that reproduces the correct $\rho^{5/3}$ dependence of the energy; this is the free Fermi gas, {\it i.e.} the fermionic state that minimizes the total kinetic energy of the systems, in a way compatible with Pauli principle.

The effect of the interaction is visible at the next order; denoting by $V$ the two-body potential, the constant $a$ in Eq. (\ref{eq:LSS}) is the {\it scattering length} of the potential. For small potentials, it can be computed as a perturbative expansion in $V$, via the Born series. It is easy to check that taking the free Fermi gas as a trial state for the many-body problem, the $\rho^{2}$-dependence of the ground state energy is off by an order $1$ multiplicative constant: instead of $8\pi a$ one finds $\hat V(0)$, which is strictly larger than $8\pi a$. To reproduce the correct dependence of the energy in the interaction, one has to understand the effect of {\it correlations} in the fermionic ground state, which is not an easy task even from the point of view of an upper bound. 

The first proof of (\ref{eq:LSS}) has been given by Lieb, Seiringer and Solovej in an important work \cite{LSS}. The proof of \cite{LSS} covers a large class of positive two-body potentials, including the case of hard spheres. The result of \cite{LSS} has then been extended by Seiringer to the computation of the thermodynamic pressure for positive temperature Fermi gases \cite{Se}. Concerning interacting lattice fermions (Hubbard model), the analogue of Eq. (\ref{eq:LSS}) follows from the upper bound of Giuliani \cite{G} and from the lower bound of Seiringer and Yin \cite{SY}.

For bosonic systems, in a seminal paper \cite{LY} Lieb and Yngvason proved that the ground state energy density of the dilute Bose gas is, assuming the particles to be spinless:
\begin{equation}\label{eq:LY}
e(\rho) = 4\pi a \rho^{2} + o(\rho^{2})\;.
\end{equation}
In this expression, the interaction determines the ground state energy at leading order, in contrast to (\ref{eq:LSS}). This is consistent with the fact that bosons tend to minimize the energy occupying the lowest momentum state, which gives no contribution to the ground state energy. This is of course forbidden for fermions, due to Pauli principle. The result of \cite{LY} has been  recently improved by Fournais and Solovej in \cite{FS}. The work \cite{FS} obtained a more refined asymptotics for the ground state energy density, from the point of view of a lower bound. Combined with the upper bound of Yau and Yin in \cite{YY}, the work \cite{FS} determined the next order correction to the ground state energy of dilute bosons, and put on rigorous grounds the celebrated Lee-Huang-Yang formula.

Comparing Eq. (\ref{eq:LSS}) with Eq. (\ref{eq:LY}), one is naturally tempted to think the low energy excitations around the free Fermi gas as pairs of fermions, which can be described by emergent bosonic particles, whose ground state energy reconstructs the $8\pi a \rho_{\uparrow} \rho_{\downarrow}$ term in (\ref{eq:LSS}) (the extra factor $2$ is due to the spin degrees of freedom). The main motivation of the present paper is to make this intuition mathematically precise.
 
For the bosonic problem, a natural trial state that captures the correct dependence of the ground state energy on the scattering length is provided by a suitable unitary transformation, a {\it Bogoliubov rotation}, of a coherent state, \cite{ESYtrial}. Interestingly, for small potentials, the energy of this trial state also reproduces the Lee-Huang-Yang formula for the next order correction to the energy \cite{ESYtrial, NRS1, NRS2}, up to higher order terms in the interaction. In this paper we introduce the fermionic analogue of such transformations, roughly by considering pairs of fermions as effective bosons. The main difficulty we have to face is that, in the language of second quantization, quadratic expressions in the bosonic creation and annihilation operators become {\it quartic} in terms of the fermionic operators. As a consequence, the nice algebraic properties of Bogoliubov transformations are only {\it approximately} true, in the fermionic setting; quantifying the validity of this approximation is a nontrivial task, and it is the main technical challenge faced in the present paper.

The main application of our method is a new proof of (\ref{eq:LSS}). Our result comes with a substantial improvement of the error estimate. However, it is restricted to more regular interaction potentials with respect to \cite{LSS}. In particular, the result of \cite{LSS} includes the case of hard spheres, which we cannot cover at the moment. We believe that a larger class of interactions could be treated by approximation arguments, but we have not tried to extend the result in this direction. Nevertheless, we think that our approach is conceptually simple, and that it gives a new point of view on dilute Fermi gases.

Our method borrows ideas from a series of recent, groundbreaking works of Boccato, Brennecke, Cenatiempo and Schlein \cite{BBCS1, BBCS2, BBCS3, BBCS, BS}. There, Bogoliubov theory for interacting Bose gases in the Gross-Pitaevskii regime has been put on rigorous grounds, and it has been used to obtain sharp asymptotics on the ground state energy and on the excitation spectrum. Concerning the energy asymptotics of interacting fermions in the mean-field regime, the first rigorous result about the correlation energy, defined as the difference between the many-body and Hartree-Fock ground state energies, has been obtained in \cite{HPR}. In \cite{HPR}, the correlation energy has been rigorously computed for small potentials via upper and lower bounds, that agree at second order in the interaction. The proof is based on rigorous second order perturbation theory, first developed in \cite{HQED, HSei}. The method that we introduce in the present paper is related to the bosonization approach of \cite{BNPSS, BNPSS2}. The method of \cite{BNPSS, BNPSS2} allowed to compute the correlation energy of weakly interacting, mean-field fermionic systems, at all orders in the interaction strength. The result of \cite{BNPSS, BNPSS2} confirmed the prediction of the random phase approximation, see \cite{Bpro} for a review. Both \cite{HPR, BNPSS, BNPSS2} make use of Fock space methods and fermionic Bogoliubov transformations, extending ideas previously introduced in the context of many-body fermionic dynamics \cite{BPS, BPS2, BJPSS, PRSS, BPSbook}.

A key technical ingredient of \cite{BNPSS, BNPSS2} is the localization of the low energy excitations around the Fermi surface in terms of suitable patches, where the quasi-particle dispersion relation can be approximated by a linear one. This is not needed in the dilute regime considered here, due to the fact that the Fermi momentum is much smaller than the typical momentum exchanged in the two-body scattering. Another difference with respect to \cite{BNPSS, BNPSS2} is that here we consider interacting systems in the thermodynamic limit; controlling this limit is nontrivial, due to the slow decay of the correlations for the free Fermi gas, which plays the role of reference state in our analysis, and of the solution of the scattering equation. Despite the mean-field regime and the dilute regime are somewhat opposite, we find it remarkable that similar bosonization ideas apply in both cases.

As a future perspective, we think that the method presented in this paper might provide a good starting point for the derivation of more refined energy asymptotics, by importing tools that have been developed in the last decade for interacting bosons. An outstanding open problem is to prove the fermionic analogue of the Lee-Huang-Yang formula, due to Huang and Yang in \cite{HY}, which gives the next order correction to the ground state energy of dilute fermions, of order $\rho^{7/3}$ (in three dimensions).

The paper is organized as follows. In Section \ref{sec:res} we define the model and state our main result, Theorem \ref{thm:main}. In Section \ref{sec:fock} we formulate the problem in Fock space, and we introduce fermionic Bogoliubov transformations, which will allow to extract the $\rho^{5/3}$ dependence of the ground state energy, and part of the $\rho^{2}$ dependence. In Section \ref{sec:T} we define the fermionic analogue of the bosonic Bogoliubov transformation, called {\it correlation structure}, that will allow us to compute the ground state energy at order $\rho^{2}$; see Section \ref{sec:heu} for a heuristic discussion. Section \ref{sec:prop} is the main technical section of the paper; here we discuss the properties of the correlation structure, that mimic the algebraic properties of bosonic Bogoliubov transformations at leading order in the density. In Section \ref{sec:lwbd} we use the discussion of Section \ref{sec:prop} to prove a lower bound for the ground state energy, that displays the correct dependence of the scattering length at order $\rho^{2}$. Then, in Section \ref{sec:upper} we conclude the proof of Theorem \ref{thm:main} by proving a matching upper bound, by the choice of a suitable trial state. Finally, in Appendix \ref{sec:scat} we collect properties of the solution of the scattering equation, that we shall use in our proofs; in Appendix \ref{app:lemphi} we prove some technical estimates for almost-bosonic operators; and in Appendix \ref{app:UV} we collect technical estimates on the infrared and ultraviolet regularizations of various expressions appearing in our proofs.

\section{Main result}\label{sec:res}
We consider a system of $N$ interacting, spinning fermions in a cubic box $\Lambda_{L} = [0,L]^{3}$, with periodic boundary conditions. The Hamiltonian of the model acts on $L^{2}(\Lambda_{L}; \mathbb{C}^{2})^{\otimes N}$, and it is given by:
\begin{equation}
H_{N} = -\sum_{i=1}^{N} \Delta_{x_{i}} + \sum_{i<j = 1}^{N} V(x_{i} - x_{j})\;,
\end{equation}
with $\Delta_{x_{i}}$ the Laplacian acting on the $i$-th particle, and $V$ the pair interaction potential. We shall suppose that $V$ is the `periodization' on $\Lambda_{L}$ of a potential $V_{\infty}$ on $\mathbb{R}^{3}$, compactly supported and regular enough:
\begin{equation}
V(x-y) = \frac{1}{L^{3}} \sum_{p\in \frac{2\pi}{L}\mathbb{Z}^{3}} e^{ip\cdot (x - y)} \hat V_{\infty}(p)\;,
\end{equation}
with $\hat V_{\infty}(p) = \int_{\mathbb{R}^{3}} dx\, e^{-ip\cdot x} V_{\infty}(x)$. We shall denote by $N_{\sigma}$ the number of particles with a given spin $\sigma \in \{ \uparrow, \downarrow \}$, and we shall set $N = N_{\uparrow} + N_{\downarrow}$. We shall require the wave function of the system, on which the Hamiltonian acts, to be antisymmetric separately in the first $N_{\uparrow}$ variables, and in the second $N_{\downarrow}$ variables. That is, the space of allowed wave functions is $\frak{h}(N_{\uparrow}, N_{\downarrow}) := L^{2}_{\text{a}}(\Lambda_{L}^{N_{\uparrow}}) \otimes L^{2}_{\text{a}}(\Lambda_{L}^{N_{\downarrow}})$,  with $L^{2}_{\text{a}}(\Lambda_{L}^{N_{\sigma}}) = L^{2}(\Lambda_{L})^{\wedge N_{\sigma}}$ the antisymmetric sector of $L^{2}(\Lambda_{L})^{\otimes N_{\sigma}}$.  We will focus on the ground state energy of the system:
\begin{equation}
E_{L}(N_{\uparrow}, N_{\downarrow}) = \inf_{\Psi \in \frak{h}(N_{\uparrow}, N_{\downarrow})} \frac{\langle \Psi, H_{N} \Psi\rangle}{\langle \Psi, \Psi \rangle}\;.
\end{equation}
By translation invariance of the Hamiltonian, the energy is extensive in the system size. Thus, let us define the ground state energy density as:
\begin{equation}
e_{L}(\rho_{\uparrow}, \rho_{\downarrow}) := \frac{E_{L}(N_{\uparrow}, N_{\downarrow})}{L^{3}}\;.
\end{equation}
Let $\rho_{\sigma} = N_{\sigma} / L^{3}$ be the density of particles with spin $\sigma$, and let $\rho = \rho_{\uparrow} + \rho_{\downarrow}$ be the total density. We shall be interested in the thermodynamic limit, meaning $N_{\sigma}, L\to \infty$, with $\rho_{\sigma}$ fixed. The existence of the limit, and the independence of the limit from the choice of the boundary conditions, is well-known \cite{R, Ro}. We shall focus on the dilute regime, corresponding to $\rho \ll 1$. The next theorem is our main result.
\begin{theorem}\label{thm:main} Let $V\in L^{1}(\Lambda_{L})$, compactly supported, $V\geq 0$. There exists $L_{0}>0$ large enough such that for $L\geq L_{0}$ the following holds. There exists $k_{0} > 0$ such that, for $k\geq k_{0}$ and $V\in C^{k}(\Lambda_{L})$:
\begin{equation}\label{eq:main}
e_{L}(\rho_{\uparrow}, \rho_{\downarrow}) = \frac{3}{5}(6\pi^{2})^{\frac{2}{3}} (\rho_{\uparrow}^{\frac{5}{3}} + \rho_{\downarrow}^{\frac{5}{3}}) + 8\pi a \rho_{\uparrow} \rho_{\downarrow} + r_{L}(\rho_{\uparrow}, \rho_{\downarrow})\;,
\end{equation}
where $a$ is the scattering length of the potential $V$, and for some constant $C$ only dependent on $V$:
\begin{equation}\label{eq:xi1xi2}
-C\rho^{2 + \xi_{2}} \leq r_{L}(\rho_{\uparrow}, \rho_{\downarrow}) \leq C\rho^{2 + \xi_{1}}
\end{equation}
with $\xi_{1} = \frac{2}{9}$ and $\xi_{2} = \frac{1}{9}$.
\end{theorem}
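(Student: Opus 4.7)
The plan is to combine a particle--hole quasi-free transformation with an almost-bosonic Bogoliubov rotation on fermionic pairs. First I would pass to second quantization and conjugate $H_{N}$ by the fermionic Bogoliubov transformation $R$ that maps the Fock vacuum onto the free Fermi sea with momenta $|k|\le k_{F,\sigma}=(6\pi^{2}\rho_{\sigma})^{1/3}$. This extracts the Hartree--Fock energy, producing the kinetic contribution $\tfrac{3}{5}(6\pi^{2})^{2/3}(\rho_{\uparrow}^{5/3}+\rho_{\downarrow}^{5/3})L^{3}$ together with a direct interaction $\hat V(0)\rho_{\uparrow}\rho_{\downarrow}L^{3}$ at leading order in $\rho$ (the same-spin direct and exchange HF terms cancelling at this order), and rewrites $R^{*}H_{N}R$ on the excitation Fock space in terms of particle/hole operators with a manifestly non-negative relative kinetic term. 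What remains is to recover $(8\pi a-\hat V(0))\rho_{\uparrow}\rho_{\downarrow}L^{3}$, a non-positive shift (since $V\ge 0$ forces $8\pi a\le \hat V(0)$) that must be produced by correlations. For this I would introduce the correlation structure $T$ of Section~\ref{sec:T}, schematically $T=\sum_{k,\ell}\eta_{k,\ell}\,b^{*}_{k,\uparrow}b^{*}_{\ell,\downarrow}-\mathrm{h.c.}$, where $b^{*}_{k,\sigma}$ creates an excitation of momentum $k$ outside the $\sigma$-Fermi ball and $\eta$ is constructed from the solution $\ph$ of the two-body scattering equation associated with $V$, cut off at an infrared scale set by the Fermi momentum and an ultraviolet scale $k_{\max}$. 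At the heuristic bosonic level, conjugation by $e^{T}$ acting on the vacuum precisely replaces $\hat V(0)$ by $8\pi a$ through the scattering equation.

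For the lower bound I would apply this conjugation to an arbitrary normalized $\Psi$. Writing $\Phi=e^{-T}R^{*}\Psi$ and expanding,
\begin{equation*}
\langle\Psi,H_{N}\Psi\rangle = E_{\mathrm{HF}} + (8\pi a-\hat V(0))\rho_{\uparrow}\rho_{\downarrow}L^{3} + \langle\Phi,\mathbb{H}_{\mathrm{eff}}\Phi\rangle + \cE(\Phi),
\end{equation*}
where $\mathbb{H}_{\mathrm{eff}}\ge 0$ is the kinetic energy in the excitation picture plus a positive scattering quadratic form, and $\cE(\Phi)$ collects commutator remainders, exchange corrections and truncation errors. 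The goal is to bound $|\cE(\Phi)|\le C\rho^{\xi_{2}}\langle\Phi,(K+\cN+C'\rho L^{3})\Phi\rangle$ using the quantitative estimates of Section~\ref{sec:prop} on the action of $e^{T}$ on $\cN$ and on the kinetic energy, together with the \emph{a~priori} inequality $\langle\Psi,H_{N}\Psi\rangle\le E_{\mathrm{HF}}+\mathcal{O}(\rho^{2}L^{3})$ obtained by testing on the free Fermi sea. Optimising the infrared/ultraviolet truncations of $\eta$ should then yield $\xi_{2}=1/9$. The matching upper bound is obtained by taking $\Psi_{\mathrm{trial}}=R\,e^{-T}\Omega$ and computing its energy directly, using the same algebraic identities; here no unknown state has to be controlled by positivity and several Cauchy--Schwarz losses are avoided, which is what allows the better exponent $\xi_{1}=2/9$ to emerge.

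The principal obstacle, already emphasized in the introduction, is to quantify the \emph{almost}-bosonic nature of the pair operators $b^{*}_{k,\sigma}b^{*}_{\ell,\tau}$: their commutators fail to close on c-numbers by terms of order $1/L^{3}$ supported on the Fermi surface, and these corrections decay only slowly in momentum. Propagating them through the conjugation $e^{T}(\,\cdot\,)e^{-T}$ without spoiling the leading $\rho^{2}L^{3}$ behaviour, uniformly as $L\to\infty$, requires delicate estimates and tight control of $\ph$ in both position and momentum space; it is also coupled with the slow decay of the free Fermi gas two-point function, which complicates the thermodynamic limit. This is precisely why the potential is required to lie in $C^{k}$ with $k$ large: enough decay of $\hat V$, and hence of the Fourier coefficients of $\eta$, is needed to sum the commutator series. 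It is also what places the hard-sphere case of \cite{LSS} outside the present approach.
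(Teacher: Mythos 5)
Your proposal follows essentially the same route as the paper: particle--hole Bogoliubov transformation $R$ to extract the Hartree--Fock energy, an almost-bosonic pair-correlation exponential $T$ built from the (infrared/ultraviolet-regularized) scattering solution, a priori bounds on $\mathcal{N}$ and $\mathbb{H}_0$ for approximate ground states, and the upper bound via $RT\Omega$ with the better exponent coming from the sharper a priori control on $\mathcal{N}$. The only nuance worth flagging is that the paper does not actually produce the clean identity $\langle\Psi,H_N\Psi\rangle = E_{\mathrm{HF}} + (8\pi a-\hat V(0))\rho_\uparrow\rho_\downarrow L^3 + \langle\Phi,\mathbb{H}_{\mathrm{eff}}\Phi\rangle + \mathcal{E}$ in one stroke; since the pair operators are only approximately bosonic, the conjugation does not close, and the argument is instead run by Duhamel interpolation $\frac{d}{d\lambda}\langle\xi_\lambda,(\mathbb{H}_0+\mathbb{Q}_1)\xi_\lambda\rangle = -\langle\xi_\lambda,(\mathbb{T}_1+\mathbb{T}_2)\xi_\lambda\rangle+\text{err}$, a Gr\"onwall propagation of the a priori estimates along $\xi_\lambda=T_\lambda^* R^*\psi$, and the scattering-equation cancellation $\mathbb{T}_1+\mathbb{T}_2+\widetilde{\mathbb{Q}}_4^{\mathrm{r}}\approx 0$ (enabled by the Neumann boundary conditions on a ball of radius $\rho^{-\gamma}$, which kills boundary terms in the integration by parts). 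Your sketch acknowledges these remainders and the propagation estimates, so it captures the content; just be aware that the ``effective Hamiltonian'' picture is shorthand for this interpolation argument rather than a literal algebraic identity.
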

\begin{remark}
\begin{itemize}
\item[(i)] The result is not new. As discussed in the introduction, the first proof of (\ref{eq:main}) has been given by Lieb, Seiringer, Solovej in \cite{LSS}. The extension to positive temperature has been obtained in \cite{Se}. The analogue of (\ref{eq:main}) for the Hubbard model follows from the combination of the upper bound of \cite{G} and the lower bound of \cite{SY}. 

\item[(ii)] With respect to \cite{LSS}, our result comes with improved error estimates; in \cite{LSS}, $\xi_{1} = 2/27$ and $\xi_{2} = 1/39$.

\item[(iii)] With respect to \cite{LSS}, our method is restricted to more regular interaction potentials. This restriction is technical: the regularity of the potential will be used to prove the smallness of various error terms involving particle excitations with quasi-momenta greater than $\rho^{-\beta}$, for some $\beta > 0$. The work \cite{LSS} also covers the case of hard spheres, which we cannot consider at the moment. We believe that a larger class of potentials could be treated via approximation arguments, but we have not tried to improve the result is this direction. 

\item[(iv)] We think that our method gives a new perspective on dilute Fermi gases, that might allow to import the ideas developed for Bose gases in the last years. This could be useful for the computation of higher order corrections to the ground state energy, or for the determination of the excitation spectrum. As mentioned in the introduction, an outstanding open problem is to prove the fermionic analogue of the Lee-Huang-Yang formula, due to Huang and Yang \cite{HY}, which predicts the next order correction to the ground state energy density, of order $\rho^{7/3}$ (corresponding to $\xi_{1} = \xi_{2} = \frac{1}{3}$ in (\ref{eq:xi1xi2})).
\end{itemize}
\end{remark}
The rest of the paper is devoted to the proof of Theorem \ref{thm:main}.
\section{Second quantization}\label{sec:fock}
\subsection{Fock space representation}
In the following, it will be convenient to work in a setting in which the number of particles is not fixed. To this end, we define the fermionic Fock space as:
\begin{equation}
\mathcal{F} = \bigoplus_{n\geq 0} \mathcal{F}^{(n)}\;,\qquad \mathcal{F}^{(n)} = L^{2}(\Lambda_{L}; \mathbb{C}^{2})^{\wedge n}
\end{equation}
with the understanding that $\mathcal{F}^{(0)} = \mathbb{C}$. Thus, a given element $\psi \in \mathcal{F}$ is an infinite sequence of fermionic wave functions, $\psi = (\psi^{(0)}, \psi^{(1)}, \ldots, \psi^{(n)}, \ldots)$ with $\psi^{(n)} \in \mathcal{F}^{(n)}$, $\psi^{(n)} \equiv \psi^{(n)}((x_{1}, \sigma_{1}), \ldots, (x_{n}, \sigma_{n}))$ and $(x,\sigma) \in \Lambda_{L} \times \{\uparrow, \downarrow\}$. An important example of vector in the Fock space is the vacuum vector $\Omega$, describing the zero particle state:
\begin{equation}
\Omega = (1, 0,0, \ldots, 0,\ldots )\;.
\end{equation}
Next, it is convenient to introduce the fermionic creation/annihilation operators, as follows. For $f\in L^{2}(\Lambda_{L}; \mathbb{C}^{2}) \simeq L^{2}(\Lambda_{L}) \oplus L^{2}(\Lambda_{L})$, $f = (f_{\uparrow}, f_{\downarrow})$, the fermionic annihilation operator $a(f) : \mathcal{F}^{(n)} \to \mathcal{F}^{(n-1)}$ and creation operator $a^{*}(f): \mathcal{F}^{(n)} \to \mathcal{F}^{(n+1)}$ are defined as:
\begin{eqnarray}\label{eq:aastar}
&&(a(f) \psi)^{(n)}((x_{1}, \sigma_{1}), \ldots, (x_{n}, \sigma_{n})) = \sqrt{n+1} \sum_{\sigma = \uparrow\downarrow} \int_{\Lambda_{L}} dx\, \overline{f_{\sigma}(x)} \psi^{(n+1)}((x, \sigma), (x_{1}, \sigma_{1}), \ldots, (x_{n}, \sigma_{n})) \nonumber\\
&&(a^{*}(f) \psi)^{(n)}((x_{1}, \sigma_{1}), \ldots, (x_{n}, \sigma_{n})) \nonumber\\&& = \frac{1}{\sqrt{n}} \sum_{j=1}^{n} (-1)^{j+1} f_{\sigma_{j}}(x_{j}) \psi^{(n-1)}((x_{1}, \sigma_{1}), \ldots, (x_{j-1}, \sigma_{j-1}), (x_{j+1}, \sigma_{j+1}), \ldots, (x_{n}, \sigma_{n}))\;.
\end{eqnarray} 
The above definitions are complemented by the requirement that the operator $a(f)$ annihilates the Fock space vacuum, $a(f)\Omega = 0$. The definition (\ref{eq:aastar}) implies that $a(f)^{*} = a^{*}(f)$, and that:
\begin{equation}\label{eq:CAR}
\{ a(f), a(g) \} = \{ a^{*}(f), a^{*}(g) \} = 0\;,\qquad \{ a(f), a^{*}(g) \} = \langle f, g \rangle_{L^{2}(\Lambda_{L}; \mathbb{C}^{2})}
\end{equation}
where $\langle f, g \rangle_{L^{2}(\Lambda_{L}; \mathbb{C}^{2})} = \sum_{\sigma = \uparrow\downarrow} \int_{\Lambda_{L}} dx\, \overline{f_{\sigma}(x)} g_{\sigma}(x)$. As a consequence of the canonical anticommutation relations (\ref{eq:CAR}) we have:
\begin{equation}\label{eq:bdferm}
\| a(f) \| \leq \|f\|_{L^{2}(\Lambda_{L}; \mathbb{C}^{2})}\;,\qquad \|a^{*}(g)\| \leq \|g\|_{L^{2}(\Lambda_{L};\mathbb{C}^{2})}\;.
\end{equation}
It will also be convenient to represent the creation/annihilation operators in terms of the operator-valued distributions $a^{*}_{x,\sigma}$, $a_{x,\sigma}$,
\begin{equation}
a(f) = \sum_{\sigma = \uparrow\downarrow} \int_{\Lambda_{L}} dx\, a_{x,\sigma} \overline{f_{\sigma}(x)}\;,\qquad a^{*}(g) = \sum_{\sigma = \uparrow\downarrow} \int_{\Lambda_{L}} dx\, a^{*}_{x,\sigma} g_{\sigma}(x)\;,
\end{equation}
where, formally, $a_{x,\sigma} = a(\delta_{x,\sigma})$. We used the notation $\delta_{x,\sigma}(y, \sigma') = \delta_{\sigma, \sigma'} \delta(x-y)$, with $\delta_{\sigma, \sigma'}$ the Kronecker delta, and $\delta(x-y)$ the Dirac delta distribution, periodic over $\Lambda_{L}$:
\begin{equation}\label{eq:perDir}
\delta(x-y) = \frac{1}{L^{3}} \sum_{k\in \frac{2\pi}{L}\mathbb{Z}^{3}} e^{ik\cdot (x-y)}\;.
\end{equation}
It will also be convenient to introduce momentum-space fermionic creation and annihilation operators. Let $f_{k}(x) = L^{-3/2} e^{ik\cdot x}$, for $k\in (2\pi / L) \mathbb{Z}^{3}$. Then:
\begin{equation}\label{eq:Fou}
\hat a_{k,\sigma} \equiv a_{\sigma}(f_{k}) = \frac{1}{L^{\frac{3}{2}}}\int_{\Lambda_{L}} dx\, a_{x,\sigma} e^{-ik\cdot x}\;,\qquad \hat a^{*}_{k,\sigma} = a_{\sigma}(f_{k})^{*}\;.
\end{equation}
These relations can be inverted as follows, for all $x\in \Lambda_{L}$:
\begin{equation}
a_{x,\sigma} = \frac{1}{L^{\frac{3}{2}}} \sum_{k\in \frac{2\pi}{L}\mathbb{Z}^{3}} e^{ik\cdot x} \hat a_{k,\sigma}\;.
\end{equation}
We then define the number operator $\mathcal{N}$ as:
\begin{equation}\label{eq:calN}
\mathcal{N} = \sum_{\sigma = \uparrow \downarrow}\int_{\Lambda_{L}} dx\, a^{*}_{x,\sigma} a_{x,\sigma} \equiv \sum_{\sigma = \uparrow\downarrow} \sum_{k\in \frac{2\pi}{L}\mathbb{Z}^{3}} \hat a^{*}_{k,\sigma} \hat a_{k,\sigma}\;.
\end{equation}
The operator $\mathcal{N}$ counts the number of particles in a given sector of the fermionic Fock space, $(\mathcal{N} \psi)^{(n)} = n\psi^{(n)}$. We shall also define the number operator associated to particles with a given spin $\sigma$ as:
\begin{equation}
\mathcal{N}_{\sigma} = \int_{\Lambda_{L}} dx\, a^{*}_{x,\sigma} a_{x,\sigma}\;.
\end{equation}
We shall say that $\psi\in \mathcal{F}$ is an $N$-particle state if $\mathcal{N} \psi = N \psi$. Also, we shall say that an $N$-particle state $\psi$, with $N = N_{\uparrow} + N_{\downarrow}$, has $N_{\sigma}$ particles with spin $\sigma$ if $\mathcal{N}_{\sigma} \psi = N_{\sigma}\psi$. We shall denote by $\mathcal{F}^{(N_{\uparrow}, N_{\downarrow})} \subset \mathcal{F}$ the set of such states. 
Let us now rewrite the ground state energy of the system in the language of second quantization. We define the second-quantized Hamiltonian as:
\begin{equation}
\mathcal{H} = \sum_{\sigma = \uparrow\downarrow} \int_{\Lambda_{L}} dx\, \nabla_{x} a^{*}_{x,\sigma} \nabla_{x} a_{x,\sigma} + \frac{1}{2}\sum_{\sigma, \sigma' = \uparrow\downarrow} \int_{\Lambda_{L} \times \Lambda_{L}} dxdy\, V(x-y) a^{*}_{x,\sigma} a^{*}_{y,\sigma'} a_{y,\sigma'} a_{x,\sigma}\;.
\end{equation}
It is not difficult to check that $(\mathcal{H} \psi)^{(n)} = H_{n} \psi^{(n)}$, for $n\geq 1$. By the spin-independence of the $n$-particle Hamiltonian $H_{n}$, we have:
\begin{equation}\label{eq:gstate}
E_{L}(N_{\uparrow}, N_{\downarrow}) = \inf_{\psi \in \mathcal{F}^{(N_{\uparrow}, N_{\downarrow})}} \frac{\langle \psi, \mathcal{H} \psi \rangle}{\langle \psi, \psi \rangle}\;.
\end{equation}
Eq. (\ref{eq:gstate}) is a convenient starting point for our analysis.
\subsection{Fermionic Bogoliubov transformations}
\subsubsection{The free Fermi gas}\label{sec:FFG}
A simple upper bound for the ground state energy is obtained taking as a trial state the Slater determinant that minimizes the kinetic energy, for given $N_{\uparrow}$, $N_{\downarrow}$. We shall refer to this state as the {\it free Fermi gas} (FFG). Explicitly,
\begin{equation}\label{eq:psiFFG}
\Psi_{\text{FFG}}\big( (x_{i}, \uparrow\}_{i=1}^{N_{\uparrow}}, \{ y_{j}, \downarrow \}_{j=1}^{N_{\downarrow}} \big) = \frac{1}{\sqrt{N_{\uparrow}!}} \frac{1}{\sqrt{N_{\downarrow}!}} (\det f^{\uparrow}_{k_{i}}(x_{j}))_{1\leq i, j \leq N_{\uparrow}} (\det f^{\downarrow}_{k_{i}}(y_{j}))_{1 \leq i, j\leq N_{\downarrow}}\;,
\end{equation}
where $f^{\sigma}_{k}(x) \equiv f_{k}(x)$, with $f_{k}(x) = L^{-\frac{3}{2}} e^{ik\cdot x}$ and $k\in \mathcal{B}_{F}^{\sigma}$, with $\mathcal{B}_{F}^{\sigma}$ the {\it Fermi ball:}
\begin{equation}
\mathcal{B}_{F}^{\sigma} = \Big\{ k\in \frac{2\pi}{L} \mathbb{Z}^{3} \, \Big|\, |k| \leq k_{F}^{\sigma} \Big\}\;,
\end{equation}
where the Fermi momentum $k_{F}^{\sigma}$ is chosen so that $N_{\sigma} = |\mathcal{B}_{F}^{\sigma}|$. Of course, this is not possible for all values of $N_{\sigma}$. We shall only consider values of $N_{\sigma}$ such that the Fermi ball is completely filled; {\it i.e.}, for which there exists $k_{F}^{\sigma}$ so that $N_{\sigma} = |\mathcal{B}_{F}^{\sigma}|$. This is not a loss of generality, by the existence of the thermodynamic limit, and since the densities $\rho_{\sigma} = N_{\sigma} / L^{3}$ obtained in this way are dense in $\mathbb{R}_{+}$ as $L\to \infty$. Notice that, for fixed density $\rho_{\sigma}$, $k_{F}^{\sigma} = (6\pi^{2})^{1/3} \rho_{\sigma}^{1/3}+ o(1)$ as $L\to \infty$.

No repetition occurs in the momenta involved in the definition of each determinant in the right-hand side of (\ref{eq:psiFFG}); otherwise, the wave function would be exactly zero, by antisymmetry (Pauli principle). The state (\ref{eq:psiFFG}) turns out to be equal to the fermionic ground state of the total kinetic energy operator $-\sum_{j=1}^{N}\Delta_{x_j}$, on $\frak{h}(N_{\uparrow}, N_{\downarrow})$. The total kinetic energy density of such state is:
\begin{eqnarray}\label{eq:kinetic}
\frac{\langle \Psi_{\text{FFG}}, -\sum_{j=1}^{N} \Delta_{x_{j}} \Psi_{\text{FFG}} \rangle}{L^{3}} &=& \frac{1}{L^{3}} \sum_{\sigma} \sum_{k\in \mathcal{B}_{F}^{\sigma}} |k|^{2} \nonumber\\ 
&=& \frac{3}{5}(6\pi^{2})^{\frac{2}{3}} (\rho_{\uparrow}^{\frac{5}{3}} + \rho_{\downarrow}^{\frac{5}{3}}) + O(L^{-1})\;,
\end{eqnarray}
where the error term denotes contribution bounded as $CL^{-1}$ for $L$ large enough (it is the error term arising from replacing the Riemann sum in the first line by an integral). The same state allows to obtain a simple upper bound for the ground state energy of the interacting system. It is convenient to introduce the fully antisymmetrized version of the $\Psi_{\text{FFG}}$, as:
\begin{equation}\label{eq:AFFS}
\Phi_{\text{FFG}}(x_{1}, \ldots, x_{N}) = \frac{1}{\sqrt{N!}} \det (f^{\sigma_{i}}_{i}(x_{j}))_{1\leq i,j\leq N}\;,
\end{equation}
with the understanding that $\sigma_{i} = \uparrow$ for $i\in [1, N_{\uparrow}]$ and $\sigma_{i} = \downarrow$ for $i\in [N_{\uparrow} + 1, N]$. The orbitals satisfy the orthogonality condition $\langle f^{\sigma}_{k}, f^{\sigma'}_{k'}\rangle = \delta_{\sigma,\sigma'} \delta_{k,k'}$. In the Fock space language, the state (\ref{eq:AFFS}) can also be represented as (up to an overall sign):
\begin{equation}
\Phi_{\text{FFG}} = \prod_{\sigma = \uparrow,\downarrow} \prod_{k\in \mathcal{B}_{F}^{\sigma}} \hat a^{*}_{k,\sigma}\Omega\;.
\end{equation}
Being the Hamiltonian spin independent:
\begin{equation}
\langle \Phi_{\text{FFG}}, H_{N} \Phi_{\text{FFG}}\rangle = \langle \Psi_{\text{FFG}}, H_{N} \Psi_{\text{FFG}}\rangle\;.
\end{equation}
Eq. (\ref{eq:AFFS}) is an example of quasi-free state, for which all correlation functions can be computed starting from the reduced one-particle density matrix:
\begin{eqnarray}\label{eq:omegazz}
\omega_{\sigma, \sigma'}(x,x') &:=& \langle \Phi_{\text{FFG}}, a^{*}_{x',\sigma'} a_{x,\sigma}\Phi_{\text{FFG}}\rangle\nonumber\\
&=& \delta_{\sigma, \sigma'} \sum_{k\in \mathcal{B}^{\sigma}_{F}} \frac{1}{L^{3}}\, e^{ik\cdot (x-x')}\;.
\end{eqnarray}
Eq. (\ref{eq:omegazz}) defines the integral kernel of an operator $\omega: L^{2}(\Lambda_{L}; \mathbb{C}^{2}) \to L^{2}(\Lambda_{L}; \mathbb{C}^{2})$, such that $\omega = \omega^{2} = \omega^{*}$, $\tr\, \omega = N$. In Fourier space, $\hat \omega_{\sigma, \sigma}(k)$ is the characteristic function of the Fermi ball $\mathcal{B}_{F}^{\sigma}$. All higher order density matrices of the system can be computed starting from $\omega$, via the fermionic Wick rule. In particular, the energy of $\Phi_{\text{FFG}}$ only depends on $\omega$:
\begin{equation}
\langle \Phi_{\text{FFG}}, H_{N} \Phi_{\text{FFG}}\rangle = E_{\text{HF}}(\omega)
\end{equation}
where $E_{\text{HF}}(\omega)$ is the Hartree-Fock energy functional:
\begin{equation}
E_{\text{HF}}(\omega) = -\tr\, \Delta \omega + \frac{1}{2} \sum_{\sigma, \sigma' = \uparrow\downarrow}\int_{\Lambda_{L} \times \Lambda_{L}} dxdy\, V(x-y) ( \omega_{\sigma, \sigma}(x;x) \omega_{\sigma', \sigma'}(y;y) - |\omega_{\sigma, \sigma'}(x;y)|^{2})\;.
\end{equation}
The first term reproduces the kinetic energy of the free Fermi gas, Eq. (\ref{eq:kinetic}). The second term, called the direct term, only depends on the density of the system, $\omega_{\sigma, \sigma}(x;x) = \rho_{\sigma}$:
\begin{equation}\label{eq:direct}
\frac{1}{2} \sum_{\sigma, \sigma' = \uparrow, \downarrow}\int_{\Lambda_{L} \times \Lambda_{L}} dxdy\, V(x-y) \omega_{\sigma, \sigma}(x;x) \omega_{\sigma', \sigma'}(y;y) = \frac{L^{3}}{2} \sum_{\sigma, \sigma'} \hat V(0) \rho_{\sigma} \rho_{\sigma'}\;.
\end{equation}
The last term, called the exchange term, can be computed at leading order in the density:
\begin{eqnarray}\label{eq:exchange}
- \frac{1}{2}\sum_{\sigma, \sigma' = \uparrow, \downarrow} \int_{\Lambda_{L} \times \Lambda_{L}} dxdy\, V(x-y) |\omega_{\sigma, \sigma'}(x;y)| &=& - \frac{1}{2}\sum_{\sigma, \sigma'}  \frac{\delta_{\sigma,\sigma'}}{L^{3}} \sum_{k,k' \in \mathcal{B}^{\sigma}_{F}} \hat V (k - k') \nonumber\\
&=& - \frac{1}{2}\sum_{\sigma} L^{3} \hat V(0) \rho^{2}_{\sigma} + O(\rho^{7/3})\;,
\end{eqnarray}
where we used that if $k,k'\in \mathcal{B}_{F}^{\sigma}$ then $|k - k'|\leq C\rho^{1/3}$, which implies $\hat V(k-k') = \hat V(0) + O(\rho^{1/3})$. By the variational principle we get, putting (\ref{eq:kinetic}), (\ref{eq:direct}), (\ref{eq:exchange}) together, for $L$ large enough:
\begin{eqnarray}\label{eq:enHF}
\frac{E_{L}(N_{\uparrow}, N_{\downarrow})}{L^{3}} &\leq& \frac{E_{\text{HF}}(\omega)}{L^{3}}\nonumber\\
&=&  \frac{3}{5}(6\pi^{2})^{\frac{2}{3}} (\rho_{\uparrow}^{\frac{5}{3}} + \rho_{\downarrow}^{\frac{5}{3}}) + \hat V(0) \rho_{\uparrow} \rho_{\downarrow} + O(\rho^{\frac{7}{3}})\;.
\end{eqnarray}
In Eq. (\ref{eq:enHF}), the effect of the interaction is only visible via the average of the potential, $\hat V(0) = \int_{\Lambda_{L}} dx\, V(x)$. The mismatch between (\ref{eq:main}) and (\ref{eq:enHF}) will be due to the correlations between the particles in the many-body ground state, which are absent in the free Fermi gas.
\subsubsection{Fermionic Bogoliubov transformation}\label{sec:bogHcorr}
In this section we shall introduce a suitable unitary transformation in Fock space, that will allow us to efficiently compare the many-body ground state energy with the energy of the free Fermi gas. 

Given the reduced one-particle density matrix of the free Fermi gas, $\omega_{\sigma, \sigma'} = \delta_{\sigma, \sigma'}\sum_{k\in \mathcal{B}^{\sigma}_{F}} |f_{k}\rangle \langle f_{k}|$, we define the operators $u: L^{2}(\Lambda_{L}; \mathbb{C}^{2}) \to L^{2}(\Lambda_{L}; \mathbb{C}^{2})$ and $v: L^{2}(\Lambda_{L};\mathbb{C}^{2}) \to L^{2}(\Lambda_{L}; \mathbb{C}^{2})$ as:
\begin{equation}
u_{\sigma, \sigma'}(x,y) = \delta_{\sigma, \sigma'}\delta(x-y) - \omega_{\sigma, \sigma'}(x;y)\;,\qquad v_{\sigma, \sigma'}(x;y) = \delta_{\sigma, \sigma'}\sum_{k\in \mathcal{B}^{\sigma}_{F}} |\overline{f_{k}}\rangle \langle f_{k}|\;.
\end{equation} 
The symbol $\delta(x-y)$ denotes the periodic Dirac delta distribution on $\Lambda_{L}$, see Eq. (\ref{eq:perDir}). Clearly, 
\begin{equation}\label{eq:uvprop}
u \overline{v} = 0\;,\qquad \overline{v} v = \omega\;.
\end{equation}
By the Shale-Stinespring theorem, see \cite{Sol} for a pedagogical introduction to the topic, there exists a unitary operator $R: \mathcal{F} \to \mathcal{F}$ such that the following holds.
\begin{itemize}
\item[(i)] The vector $R\Omega$ is an $N$-particle state, which reproduces the Slater determinant $\Phi_{\text{FFG}}$, Eq. (\ref{eq:AFFS}):
\begin{equation}\label{eq:defbogi}
(R\Omega)^{(n)} = 0\quad \text{unless $n = N$, in which case}\quad  (R\Omega)^{(N)}  = \Phi_{\text{FFG}}\;.
\end{equation}
\item[(ii)] The map $R:\mathcal{F} \to \mathcal{F}$ implements the following transformation in Fock space:
\begin{equation}\label{eq:bogii}
R^{*} a(f) R = a(uf) + a^{*}(\overline{v} \overline{f})\;,\qquad \text{for all $f\in L^{2}(\Lambda_{L}; \mathbb{C}^{2})$}\;.
\end{equation}
Equivalently,
\begin{equation}\label{eq:RaR}
R^{*} a_{x,\sigma}R = a_{\sigma}(u_{x}) + a^{*}_{\sigma}(\overline{v}_{x})\;,
\end{equation}
where, setting $u_{\sigma,\sigma} \equiv u_{\sigma}$, $v_{\sigma,\sigma}\equiv v_{\sigma}$, and $u_{x}(y) \equiv u(x;y)$, $v_{x}(y) \equiv v(x;y)$:
\begin{equation}
a_{\sigma}(u_{x}) = \int dx\, a_{x,\sigma} \overline{u_{\sigma}(x;y)}\;,\qquad a^{*}_{\sigma}(\overline{v}_{x}) = \int dy\, a^{*}_{y,\sigma} \overline{v_{\sigma}(y;x)}\;.
\end{equation}
Let $\hat a_{k,\sigma} = a_{\sigma}(f_{k})$, recall Eq. (\ref{eq:Fou}). Then, the transformation (\ref{eq:bogii}) reads:
\begin{equation}\label{eq:phol}
R^{*} \hat a_{k,\sigma} R = \left\{ \begin{array}{cc}  \hat a_{k,\sigma} & \text{if $k\notin \mathcal{B}^{\sigma}_{F}$} \\ \hat a^{*}_{k,\sigma} & \text{if $k\in \mathcal{B}^{\sigma}_{F}$.}\end{array}\right.
\end{equation}
Thus, Eq. (\ref{eq:bogii}) can be seen as implementing a particle-hole transformation. By the unitarity of $R$, Eq. (\ref{eq:phol}) also implies that $R^{*} \hat a_{k,\sigma} R = R \hat a_{k,\sigma} R^{*}$.
\end{itemize}
The operator $R$ is a Bogoliubov transformation, see \cite{BLS, Sol} for reviews. The property $(i)$ immediately implies:
\begin{equation}
\langle R\Omega, \mathcal{H} R\Omega\rangle = E_{\text{HF}}(\omega)\;. 
\end{equation}
Instead, property $(ii)$ allows to compare the energy of any state in the fermionic Fock space, with the energy of the free Fermi gas. This is the content of the next proposition. From now on, we shall simply write $\sum_{\sigma}$ for $\sum_{\sigma = \uparrow\downarrow}$ and $\int dx$ for $\int_{\Lambda_{L}}dx$.
\begin{proposition}\label{prp:conj} Let $\psi \in \mathcal{F}$ be a normalized state, such that $\langle \psi, \mathcal{N}_{\sigma} \psi \rangle = N_{\sigma}$ and $N = N_{\uparrow} + N_{\downarrow}$. Then:
\medskip

\noindent{(i)} The following identity holds true:
\begin{equation}\label{eq:bog0}
\langle \psi, \mathcal{H} \psi \rangle = E_{\text{HF}}(\omega) + \langle R^{*}\psi, \mathbb{H}_{0} R^{*}\psi \rangle + \langle R^{*}\psi, \mathbb{X} R^{*}\psi \rangle + \langle R^{*} \psi, \mathbb{Q} R^{*} \psi \rangle
\end{equation}
where: the operators $\mathbb{H}_{0}$, $\mathbb{X}$ are given by:
\begin{eqnarray}
\mathbb{H}_{0} &=& \sum_{k,\sigma} | |k|^{2} - \mu_{\sigma} |  \hat a^{*}_{k,\sigma} \hat a_{k,\sigma}\;,\qquad \mu_{\sigma} = k_{F}^{\sigma 2}\;,\\ \mathbb{X} &=& \sum_{\sigma}\int dxdy\, V(x-y) \omega_{\sigma}(x-y) ( a^{*}_{\sigma}(u_{x}) a_{\sigma}(u_{y}) - a^{*}_{\sigma}(\overline{v}_{y}) a_{\sigma}(\overline{v}_{x}))\;. \nonumber
\end{eqnarray}
The operator $\mathbb{Q}$ can be written as $\mathbb{Q} = \sum_{i=1}^{4} \mathbb{Q}_{i}$ with:
\begin{eqnarray}\label{eq:Q0}
\mathbb{Q}_{1} &=& \frac{1}{2} \sum_{\sigma, \sigma'} \int dxdy\, V(x-y) a_{\sigma}^{*}(u_{x})a_{\sigma'}^{*}(u_{y})a_{\sigma'}(u_{y})a_{\sigma}(u_{x})\\
\mathbb{Q}_{2} &=& \frac{1}{2}\sum_{\sigma, \sigma'}\int dxdy\, V(x-y) \Big[ a^{*}_{\sigma}(u_{x}) a^{*}_{\sigma}(\overline{v}_{x}) a_{\sigma'}(\overline{v}_{y}) a_{\sigma'}(u_{y})\nonumber\\&& - 2a_{\sigma}^{*}(u_{x})a_{\sigma'}^{*}(\overline v_{y})a_{\sigma'}(\overline v_{y})a_{\sigma}(u_{x}) + a_{\sigma'}^{*}(\overline v_{y})a_{\sigma}^{*}(\overline v_{x})a_{\sigma}(\overline v_{x})a_{\sigma'}(\overline v_{y})\Big]\nonumber\\
\mathbb{Q}_{3} &=& -\sum_{\sigma, \sigma'} \int dxdy\, V(x-y) \Big[ a^{*}_{\sigma}(u_{x}) a^{*}_{\sigma'}(u_{y}) a^{*}_{\sigma}(\overline{v}_{x}) a_{\sigma'}(u_{y}) - a^{*}_{\sigma}(u_{x}) a^{*}_{\sigma'}(\overline{v}_{y}) a^{*}_{\sigma}(\overline{v}_{x}) a_{\sigma'}(\overline{v}_{y})\Big] + \text{h.c.} \nonumber\\
\mathbb{Q}_{4} &=& \frac{1}{2} \sum_{\sigma, \sigma'} \int dxdy\, V(x-y) a^{*}_{\sigma}(u_{x}) a^{*}_{\sigma'}(u_{y}) a^{*}_{\sigma'}(\overline{v}_{y}) a^{*}_{\sigma}(\overline{v}_{x})  + \text{h.c.}\nonumber
\end{eqnarray}
\noindent{(ii)} The following inequality holds true:
\begin{equation}\label{eq:bog00}
\langle \psi, \mathcal{H} \psi \rangle \geq E_{\text{HF}}(\omega) + \langle R^{*}\psi, \mathbb{H}_{0} R^{*}\psi \rangle + \langle R^{*}\psi, \mathbb{X} R^{*}\psi \rangle + \langle R^{*} \psi, \widetilde{\mathbb{Q}} R^{*} \psi \rangle\;,
\end{equation}
where $\widetilde{\mathbb{Q}} = \sum_{i=1}^{4} \widetilde{\mathbb{Q}}_{i}$ and:
\begin{eqnarray}\label{eq:tildeQs}
\widetilde{\mathbb{Q}}_{1} &=& \frac{1}{2} \sum_{\sigma\neq \sigma'} \int dxdy\, V(x-y) a_{\sigma}^{*}(u_{x})a_{\sigma'}^{*}(u_{y})a_{\sigma'}(u_{y})a_{\sigma}(u_{x})\\
\widetilde{\mathbb{Q}}_{2} &=& \frac{1}{2}\sum_{\sigma\neq \sigma'}\int dxdy\, V(x-y) \Big[ a^{*}_{\sigma}(u_{x}) a^{*}_{\sigma}(\overline{v}_{x}) a_{\sigma'}(\overline{v}_{y}) a_{\sigma'}(u_{y})\nonumber\\&& - 2a_{\sigma}^{*}(u_{x})a_{\sigma'}^{*}(\overline v_{y})a_{\sigma'}(\overline v_{y})a_{\sigma}(u_{x}) + a_{\sigma'}^{*}(\overline v_{y})a_{\sigma}^{*}(\overline v_{x})a_{\sigma}(\overline v_{x})a_{\sigma'}(\overline v_{y})\Big]\nonumber\\
\widetilde{\mathbb{Q}}_{3} &=& -\sum_{\sigma\neq \sigma'} \int dxdy\, V(x-y) \Big[ a^{*}_{\sigma}(u_{x}) a^{*}_{\sigma'}(u_{y}) a^{*}_{\sigma}(\overline{v}_{x}) a_{\sigma'}(u_{y}) - a^{*}_{\sigma}(u_{x}) a^{*}_{\sigma'}(\overline{v}_{y}) a^{*}_{\sigma}(\overline{v}_{x}) a_{\sigma'}(\overline{v}_{y})\Big] + \text{h.c.} \nonumber\\
\widetilde{\mathbb{Q}}_{4} &=& \frac{1}{2} \sum_{\sigma\neq \sigma'} \int dxdy\, V(x-y) a^{*}_{\sigma}(u_{x}) a^{*}_{\sigma'}(u_{y}) a^{*}_{\sigma'}(\overline{v}_{y}) a^{*}_{\sigma}(\overline{v}_{x})  + \text{h.c.}\nonumber
\end{eqnarray}
\end{proposition}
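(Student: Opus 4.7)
By unitarity of $R$, $\langle\psi,\mathcal{H}\psi\rangle = \langle R^*\psi, R^*\mathcal{H}R\, R^*\psi\rangle$, so (i) reduces to computing $R^*\mathcal{H}R$ in normal-ordered form and (ii) to the corresponding operator inequality on $N_\sigma$-constrained states. My plan is to substitute the Bogoliubov rule $R^*a_{x,\sigma}R = a_\sigma(u_x)+a^*_\sigma(\overline v_x)$ into every field of $\mathcal{H}$ and to normal-order via CAR, exploiting the identities (\ref{eq:uvprop}) $u\overline v = 0$ and $\overline v v = \omega$. These imply that the only non-vanishing Wick contractions between post-conjugation operators are the "like with like" pairings $\{a(u_y), a^*(u_x)\} = u(x,y)$ and $\{a(\overline v_y), a^*(\overline v_x)\} = \omega(x,y)$, which drastically prunes the expansion.

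\textbf{Proof of (i).} Kinetic conjugation yields $R^*\mathbb{T}R = \tr(-\Delta\omega) + \sum_{k,\sigma}|k|^2\eta_k^\sigma \hat a^*_{k,\sigma}\hat a_{k,\sigma}$ with $\eta_k^\sigma := \mathrm{sgn}(|k|^2-\mu_\sigma)$. Using $|k|^2\eta_k^\sigma = ||k|^2-\mu_\sigma| + \mu_\sigma\eta_k^\sigma$ together with $\sum_k\eta_k^\sigma \hat a^*_{k,\sigma}\hat a_{k,\sigma} = R^*\mathcal{N}_\sigma R - N_\sigma$, this becomes
\[
R^*\mathbb{T}R \;=\; \tr(-\Delta\omega) \;+\; \mathbb{H}_0 \;+\; \sum_\sigma \mu_\sigma\bigl(R^*\mathcal{N}_\sigma R - N_\sigma\bigr),
\]
whose last term has vanishing expectation on $R^*\psi$ thanks to $\langle\psi,\mathcal{N}_\sigma\psi\rangle = N_\sigma$. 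For the interaction, substituting into each of the four fields of $V(x-y)a^*_{x,\sigma}a^*_{y,\sigma'}a_{y,\sigma'}a_{x,\sigma}$ yields $2^4 = 16$ monomials per spin pair; Wick-ordering organizes them into three classes: (a) fully contracted scalars which, together with $\tr(-\Delta\omega)$, assemble $E_{\text{HF}}(\omega)$ via the direct/exchange formulas (\ref{eq:direct})--(\ref{eq:exchange}); (b) doubly contracted quadratics splitting into a direct piece proportional to $R^*\mathcal{N}_\sigma R - N_\sigma$ (also vanishing in expectation) and an exchange piece equal to $\mathbb{X}$, automatically same-spin since $\omega_{\sigma,\sigma'} = \delta_{\sigma,\sigma'}\omega_\sigma$; (c) un-contracted quartic monomials which, regrouped by the number of $a^*(\overline v)$ factors, coincide with $\mathbb{Q}_1+\mathbb{Q}_2+\mathbb{Q}_3+\mathbb{Q}_4$. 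Summing in expectation on $R^*\psi$ gives the identity (i).

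\textbf{Proof of (ii).} Split $\mathbb{V} = \mathbb{V}_{=} + \mathbb{V}_{\neq}$ according to whether $\sigma=\sigma'$ or not. The same-spin piece admits the manifestly positive representation $\mathbb{V}_{=} = \tfrac{1}{2}\sum_\sigma \int V(x-y)(a_{y,\sigma}a_{x,\sigma})^*(a_{y,\sigma}a_{x,\sigma})\,dxdy \ge 0$ for $V\ge 0$, hence $R^*\mathbb{V}_{=}R \ge 0$. Running the decomposition of (i) separately on $\mathbb{V}_{=}$ and on $\mathbb{V}_{\neq}$, I would track that: the quartic residue of $R^*\mathbb{V}_{\neq}R$ is exactly $\widetilde{\mathbb{Q}}$, while that of $R^*\mathbb{V}_{=}R$ is exactly $\mathbb{Q}-\widetilde{\mathbb{Q}}$; the constant and quadratic residues of $R^*\mathbb{V}_{=}R$ fall entirely inside the same-spin portion of $E_{\text{HF}}(\omega)$, the exchange operator $\mathbb{X}$, and the direct chemical-potential-type shift that vanishes on constrained states. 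Combining the positivity $R^*\mathbb{V}_{=}R\ge 0$ with these identifications, and accounting for the same-spin constant/quadratic pieces which are already present on the right-hand side of (ii), one concludes that $\langle R^*\psi,(\mathbb{Q}-\widetilde{\mathbb{Q}})R^*\psi\rangle\ge 0$, which is exactly the gap between (i) and (ii).

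\textbf{Main obstacle.} The principal technical burden is the sign bookkeeping of the 16 interaction monomials, and in particular verifying that the absolute-value dispersion $||k|^2-\mu_\sigma|$ of $\mathbb{H}_0$ emerges cleanly from combining the $\pm|k|^2$ kinetic terms with the $\mp\mu_\sigma$ shifts that ultimately vanish on $N_\sigma$-constrained states. A parallel but slightly more delicate combinatorial check — matching the same-spin constant/quadratic residues of $R^*\mathbb{V}_{=}R$ with pieces of $E_{\text{HF}}+\mathbb{X}$ already accounted for — is what makes (ii) an immediate consequence of the operator positivity $\mathbb{V}_{=}\ge 0$ rather than a stand-alone argument.
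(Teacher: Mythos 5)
Your proof of part (i) follows the paper's route --- substitute the particle--hole rule (\ref{eq:RaR}), normal-order via the CAR and (\ref{eq:uvprop}), and sort the residues by degree, with the chemical-potential shifts handled as you describe --- and is sound.

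Part (ii) contains a genuine gap. You aim to deduce $\langle R^*\psi,(\mathbb{Q}-\widetilde{\mathbb{Q}})R^*\psi\rangle\ge 0$ from the operator positivity of $R^*\mathbb{V}_{=}R$, where $\mathbb{V}_{=}$ is the aligned-spin interaction. But the normal-ordered decomposition of $R^*\mathbb{V}_{=}R$ is
$$ R^*\mathbb{V}_{=}R \;=\; c_{=}(\omega)\;+\;\mathbb{D}_{=}\;+\;\mathbb{X}\;+\;(\mathbb{Q}-\widetilde{\mathbb{Q}}), $$
where $c_{=}(\omega)=\tfrac12\sum_\sigma\int dxdy\,V(x-y)\big(\omega_\sigma(x;x)\omega_\sigma(y;y)-|\omega_\sigma(x;y)|^2\big)\ge 0$ and $\mathbb{D}_{=}$ is the direct quadratic shift that vanishes in expectation on the constrained state. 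Positivity thus yields only $c_{=}(\omega)+\langle\mathbb{X}\rangle+\langle\mathbb{Q}-\widetilde{\mathbb{Q}}\rangle\ge 0$, and since $\mathbb{X}$ has no sign you cannot strip $c_{=}(\omega)+\langle\mathbb{X}\rangle$ and conclude. In fact $\mathbb{Q}-\widetilde{\mathbb{Q}}$ is not positive: writing $\mathbb{Q}_4-\widetilde{\mathbb{Q}}_4=A+A^*$ with $A$ a four-fold creation operator, the state $R^*\psi=(\Omega+\eps A\Omega)/\sqrt{1+\eps^2\|A\Omega\|^2}$ satisfies the $\mathcal{N}_\sigma$ constraint and gives $\langle R^*\psi,(\mathbb{Q}-\widetilde{\mathbb{Q}})R^*\psi\rangle=2\eps\|A\Omega\|^2+O(\eps^2)<0$ for small $\eps<0$.

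What your positivity argument plus identity (i) actually yield, after cancellation, is
$$ \langle\psi,\mathcal{H}\psi\rangle \;\ge\; \big(E_{\text{HF}}(\omega)-c_{=}(\omega)\big)\;+\;\langle R^*\psi,\mathbb{H}_0 R^*\psi\rangle \;+\; \langle R^*\psi,\widetilde{\mathbb{Q}}R^*\psi\rangle, $$
with $\mathbb{X}$ dropping out and the constant reduced to the opposite-spin part of $E_{\text{HF}}$. The paper reaches this same bound more directly: it drops $\mathbb{V}_{=}$ from $\mathcal{H}$ \emph{before} conjugating, as in (\ref{eq:nospin}), and repeats the normal-ordering of (i) on the kinetic plus opposite-spin interaction only. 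Since the exchange contraction $\omega_\sigma(x-y)$ can only arise from aligned-spin pairings, after dropping $\mathbb{V}_{=}$ neither $c_{=}(\omega)$ nor $\mathbb{X}$ are produced, and the quartic residue is precisely $\widetilde{\mathbb{Q}}$. You should follow that route rather than asserting positivity of the aligned-spin quartic alone. (The statement of (ii) in the paper displays the full $E_{\text{HF}}(\omega)$ and $\mathbb{X}$ as in (i); this is a harmless imprecision, since $c_{=}(\omega)=O(L^3\rho^{7/3})$ and $\langle\mathbb{X}\rangle$ is controlled by $C\rho\langle\mathcal{N}\rangle$, both absorbed by the later error estimates.)
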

\begin{proof} $(i)$ To prove this identity we transformed each fermionic operator according to (\ref{eq:RaR}) and we put the resulting expression into normal order, using the canonical anticommutation relations (\ref{eq:CAR}) and the properties (\ref{eq:uvprop}). The details of the computation have been given already in a number of places and hence will be omitted; see for instance \cite{BPS, BJPSS, HPR, BNPSS, BNPSS2}.
\medskip

\noindent{$(ii)$} We use that:
\begin{eqnarray}\label{eq:nospin}
&&\langle \psi, \mathcal{H} \psi \rangle \nonumber\\
&& = \sum_{\sigma} \int dx\, \| \nabla a_{x,\sigma} \psi \|^{2} + \frac{1}{2} \sum_{\sigma \neq \sigma'} \int dxdy\, V(x-y) \| a_{x,\sigma} a_{y,\sigma'} \psi \|^{2} + \frac{1}{2} \sum_{\sigma} \int dxdy\, V(x-y) \| a_{x,\sigma} a_{y,\sigma} \psi\|^{2} \nonumber\\
&&\geq \sum_{\sigma} \int dx\, \| \nabla a_{x,\sigma} \psi \|^{2} + \frac{1}{2} \sum_{\sigma \neq \sigma'} \int dxdy\, V(x-y) \| a_{x,\sigma} a_{y,\sigma'} \psi \|^{2}\;,
\end{eqnarray}
and then we repeat the proof of $(i)$ for the right-hand side of (\ref{eq:nospin}).
\end{proof}
As we will prove, the terms $\langle R^{*}\psi, \mathbb{H}_{0} R^{*}\psi \rangle$, $\langle R^{*} \psi, \mathbb{Q} R^{*} \psi \rangle$ give a contribution to the ground state energy of order $L^{3}\rho^{2}$, which will allow to reconstruct the scattering length in the final result (\ref{eq:main}). Before proving this, let us establish some useful estimates for the various terms arising in (\ref{eq:bog0}), that will allow us to identify terms that are subleading with respect to $L^{3} \rho^{2}$.
\begin{proposition}\label{prp:bogbd} Under the assumptions of Theorem \ref{thm:main}, the following holds.
\begin{itemize}
\item[a)] The operator $\mathbb{X}$ satisfies the bound:
\begin{equation}\label{eq:Xbd}
|\langle \psi, \mathbb{X} \psi \rangle| \leq C\rho \langle \psi, \mathcal{N} \psi \rangle\;.
\end{equation}
\item[b)] The operators $\mathbb{Q}_{1}$, $\widetilde{\mathbb{Q}}_{1}$ are nonnegative.
\item[c)] The operators $\mathbb{Q}_{2}$, $\widetilde{\mathbb{Q}}_{2}$ satisfy the bounds:
\begin{equation}\label{eq:Q2bd}
| \langle \psi, \mathbb{Q}_{2}\psi\rangle | \leq C\rho\langle \psi, \mathcal{N} \psi \rangle\;,\qquad | \langle \psi, \mathbb{\widetilde{Q}}_{2}\psi\rangle | \leq C\rho\langle \psi, \mathcal{N} \psi \rangle\;.
\end{equation}
\item[d)] The operators $\mathbb{Q}_{3}$, $\widetilde{\mathbb{Q}}_{3}$ satisfy the bounds, for any $\alpha \geq 0$:
\begin{equation}\label{eq:bdQ3}
|\langle \psi, \mathbb{Q}_{3} \psi \rangle| \leq \rho^{\alpha}\langle \psi, \mathbb{Q}_{1} \psi \rangle + C\rho^{1-\alpha} \langle \psi, \mathcal{N} \psi \rangle\;,\quad |\langle \psi, \mathbb{\widetilde{Q}}_{3} \psi \rangle| \leq \rho^{\alpha}\langle \psi, \mathbb{\widetilde{Q}}_{1} \psi \rangle + C\rho^{1-\alpha} \langle \psi, \mathcal{N} \psi \rangle\;.
\end{equation}
Furthermore, suppose that $\psi$ is a Fock space vector such that $\psi^{(n)} = 0$ unless $n = 4k$ for $k\in \mathbb{N}$. Then:
\begin{equation}\label{eq:Q3canc}
\langle \psi, \mathbb{Q}_{3} \psi \rangle =0\;,\qquad \langle \psi, \widetilde{\mathbb{Q}}_{3} \psi \rangle = 0\;.
\end{equation}
\item[e)] The operators $\mathbb{Q}_{4}$, $\widetilde{\mathbb{Q}}_{4}$ satisfy the bounds, for any $\delta > 0$:
\begin{equation}\label{eq:bdQ4}
| \langle \psi, \mathbb{Q}_{4} \psi\rangle | \leq \delta \langle \psi, \mathbb{Q}_{1} \psi\rangle + \frac{C}{\delta} \rho^{2} L^{3} \| \psi \|^{2}\;,\qquad | \langle \psi, \mathbb{\widetilde{Q}}_{4} \psi\rangle | \leq \delta \langle \psi, \widetilde{\mathbb{Q}}_{1} \psi\rangle + \frac{C}{\delta} \rho^{2} L^{3} \| \psi \|^{2}\;.
\end{equation}
\end{itemize}
\end{proposition}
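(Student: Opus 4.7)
The plan is to reduce every estimate to two elementary ingredients. First, $u_\sigma = \mathrm{Id} - \omega_\sigma$ is the orthogonal projection onto the momenta outside $\mathcal{B}_F^\sigma$, so $\|a^\#(u_x)\| \leq 1$ and, expanding in plane waves, $\int dx\, a^*_\sigma(u_x) a_\sigma(u_x) = \sum_{k\notin \mathcal{B}_F^\sigma} \hat a^*_{k,\sigma} \hat a_{k,\sigma} \leq \mathcal{N}$. Second, from the explicit kernel $\bar v_\sigma(x,y) = L^{-3}\sum_{k\in \mathcal{B}_F^\sigma} e^{ik(x+y)}$ one computes $\int |\bar v_\sigma(x,y)|^2\, dy = \rho_\sigma$, so $\|a^\#(\bar v_x)\| \leq \sqrt{\rho}$, and analogously $\int dx\, a^*_\sigma(\bar v_x) a_\sigma(\bar v_x) = \sum_{k\in \mathcal{B}_F^\sigma} \hat a^*_{k,\sigma} \hat a_{k,\sigma} \leq \mathcal{N}$. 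Every bound below will follow by combining these two inputs with $V \geq 0$ and a Cauchy--Schwarz inequality against the measure $V(x-y)\,dx\,dy$ (or, for $\mathbb{X}$, against $V(x-y)\,|\omega_\sigma(x-y)|\,dx\,dy$ with $|\omega_\sigma(x-y)| \leq \rho$).

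Items (a) and (b) are immediate. For (a), each summand of $\mathbb{X}$ is of the sesquilinear form $\int V(x-y)\,\omega_\sigma(x-y)\,\langle a(u_y)\psi, a(u_x)\psi\rangle\, dx\, dy$; Cauchy--Schwarz against the weight $V|\omega_\sigma|$, combined with $|\omega_\sigma|\leq \rho$ and $\int V(x-y)\,dy = \|V\|_1$, collapses this to $\rho\|V\|_1 \int \|a(u_x)\psi\|^2\,dx \leq C\rho\langle\psi,\mathcal{N}\psi\rangle$, and likewise for the $\bar v$ term. For (b), since $V\geq 0$ and $\mathbb{Q}_1$ is already in fully normal-ordered form one has the manifest sum-of-squares representation
\[
\langle\psi,\mathbb{Q}_1\psi\rangle \;=\; \tfrac{1}{2}\sum_{\sigma,\sigma'}\int V(x-y)\,\|a_{\sigma'}(u_y) a_\sigma(u_x)\psi\|^2\, dx\, dy \;\geq\; 0,
\]
and the same representation, restricted to $\sigma\neq\sigma'$, proves positivity of $\widetilde{\mathbb{Q}}_1$.

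For items (c)--(e) I would run a single recipe: write each quartic as a pairing $\langle A_{xy}\psi, B_{xy}\psi\rangle$ of two two-body operators and apply the weighted Cauchy--Schwarz $2|\langle A,B\rangle| \leq \epsilon\|A\|^2 + \epsilon^{-1}\|B\|^2$ against $V(x-y)\,dx\,dy$; each factor $a^\#(\bar v)$ contributes a $\sqrt{\rho}$. In (c), every summand of $\mathbb{Q}_2$ carries exactly two $\bar v$'s and two $u$'s, so after pairing the two squares one collects a factor $\rho$, an $\|V\|_1$ from the $V$-integration, and the remaining $u$-integrals are bounded by $\mathcal{N}$, yielding $C\rho\langle\psi,\mathcal{N}\psi\rangle$. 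In (d), the first family of terms in $\mathbb{Q}_3$ (three $u$'s and one $\bar v$) pairs, after moving the adjoint, the pure-$u$ square $\|a_{\sigma'}(u_y) a_\sigma(u_x)\psi\|^2$ (controlled by $\mathbb{Q}_1$) against the square $\|a^*_\sigma(\bar v_x) a_{\sigma'}(u_y)\psi\|^2$ carrying one extra $\bar v$ (hence a factor $\rho$); the weighted Cauchy--Schwarz with $\epsilon = \rho^\alpha$ gives $\rho^\alpha\langle\psi,\mathbb{Q}_1\psi\rangle + C\rho^{1-\alpha}\langle\psi,\mathcal{N}\psi\rangle$. The second family (three $\bar v$'s and one $u$) is directly bounded by $C\rho\langle\psi,\mathcal{N}\psi\rangle$ via a single Cauchy--Schwarz and is absorbed in the final estimate. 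The vanishing \eqref{eq:Q3canc} is a parity-of-sector selection rule: every monomial in $\mathbb{Q}_3$ has three $a^*$'s and one $a$ and so shifts the particle number by $+2$ (and $-2$ for the adjoint); when $\psi$ is supported on sectors with $n \in 4\mathbb{N}$, the vector $\mathbb{Q}_3\psi$ lies in sectors with $n \in 4\mathbb{N} \pm 2$, orthogonal to $\psi$. Finally, for (e) we pair the pure-$u$ square $\|a(u_y) a(u_x)\psi\|^2$ (giving $\delta\langle\psi,\mathbb{Q}_1\psi\rangle$) against the pure-$\bar v$ square $\|a^*(\bar v_y) a^*(\bar v_x)\psi\|^2 \leq \rho^2 \|\psi\|^2$; since $\int V(x-y)\,dx\,dy = L^3\|V\|_1$, the second arm contributes $\delta^{-1}\rho^2 L^3 \|\psi\|^2$.

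The tilde variants of (c)--(e) follow from the same computations with the sums restricted to $\sigma\neq\sigma'$. The analysis is essentially bookkeeping once the $\sqrt{\rho}$-gain per $\bar v$-factor is in hand; the one mildly delicate point is to choose, in $\mathbb{Q}_3$ and $\mathbb{Q}_4$, the correct splitting of the quartic into two squares so that one arm of the Cauchy--Schwarz reproduces $\mathbb{Q}_1$ while the other arm carries enough $\bar v$'s to close the estimate at the stated order in $\rho$.
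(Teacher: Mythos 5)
Your proposal is correct and follows essentially the same route as the paper: the two ingredients $\|a^\#(\overline{v}_x)\| \leq C\rho^{1/2}$ and $\int dx\, \|a_\sigma(u_x)\psi\|^2 \leq \langle\psi,\mathcal{N}\psi\rangle$ (and its $\overline{v}$ analogue), combined with $V\geq 0$ and weighted Cauchy--Schwarz against $V(x-y)\,dx\,dy$, are exactly the paper's tools for each item, including the sum-of-squares positivity of $\mathbb{Q}_1$ and the particle-number parity argument for \eqref{eq:Q3canc}. One small slip worth noting: the third summand of $\mathbb{Q}_2$, namely $a^*_{\sigma'}(\overline{v}_y)a^*_\sigma(\overline{v}_x)a_\sigma(\overline{v}_x)a_{\sigma'}(\overline{v}_y)$, carries four $\overline{v}$'s and no $u$'s, not ``exactly two $\overline{v}$'s and two $u$'s''; this is harmless since one pair of $\overline{v}$'s supplies the factor $\rho$ via the operator-norm bound while the other pair produces the $\mathcal{N}$-integral, yielding the same $C\rho\langle\psi,\mathcal{N}\psi\rangle$.
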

\begin{proof} We shall only prove the statements for the $\mathbb{Q}_{i}$ operators; the analogous statements for the $\widetilde{\mathbb{Q}}_{i}$ operators are proven in exactly the same way.
\medskip

\noindent\underline{Proof of $a).$} We have, using the notation $\| \omega_{\sigma, x} \|_{\infty} = \sup_{y} | \omega_{\sigma}(x;y)|$:
\begin{eqnarray}\label{eq:a0}
| \langle \psi, \mathbb{X} \psi \rangle | &\leq& \sum_{\sigma} \|\omega_{\sigma,x}\|_{\infty}\int dxdy\, V(x-y) \big( \| a_{\sigma}(u_{x}) \psi\|^{2} + \| a_{\sigma}(\overline{v}_{x}) \psi\|^{2} \big) \nonumber\\
&\leq& \rho \|V\|_{1} \langle \psi, \mathcal{N} \psi \rangle
\end{eqnarray}
where we used that $\| \omega_{\sigma, x} \|_{\infty} \leq \rho_{\sigma}$, and that:
\begin{equation}
\int dx\, \| a_{\sigma}(u_{x}) \psi \|^{2} = \sum_{k \notin \mathcal{B}_{F}^{\sigma}} \langle \psi, \hat a^{*}_{k,\sigma} \hat a_{k,\sigma}\psi \rangle\;,\qquad \int dx\, \| a_{\sigma}(\overline{v}_{x}) \psi \|^{2} = \sum_{k \in \mathcal{B}_{F}^{\sigma}} \langle \psi, \hat a^{*}_{k,\sigma} \hat a_{k,\sigma}\psi \rangle\;,
\end{equation}
from which the final bound in (\ref{eq:a0}) immediately follows (recall the expression for the number operator, (\ref{eq:calN})).
\medskip

\noindent\underline{Proof of $b)$.} We have:
\begin{equation}
\langle \psi, \mathbb{Q}_{1} \psi \rangle = \frac{1}{2} \sum_{\sigma, \sigma'} \int dxdy\, V(x-y) \| a_{\sigma}(u_{x}) a_{\sigma'}(u_{y}) \psi\|^{2} \geq 0\;,
\end{equation}
where we used that $V(x-y) \geq 0$.
\medskip

\noindent\underline{Proof of $c)$.} We have:
\begin{eqnarray}
&&\Big|\sum_{\sigma, \sigma'}\int dxdy\, V(x-y) \langle \psi, a^{*}_{\sigma}(u_{x}) a^{*}_{\sigma}(\overline{v}_{x}) a_{\sigma'}(\overline{v}_{y}) a_{\sigma'}(u_{y}) \psi \rangle \Big| \nonumber\\
&&\qquad \leq \sum_{\sigma,\sigma'} \int dxdy\, V(x-y) \|v_{\sigma, x}\|_{2} \|v_{\sigma', y}\|_{2}  \| a_{\sigma}(u_{x}) \psi \| \| a_{\sigma'}(u_{y}) \psi\| \nonumber\\
&&\qquad \leq C\rho \sum_{\sigma, \sigma'} \int dxdy\, V(x-y) \| a_{\sigma}(u_{x}) \psi \|^{2} \nonumber\\
&& \qquad \leq C \rho \| V\|_{1} \langle \psi, \mathcal{N} \psi \rangle\;.
\end{eqnarray}
The second inequality follows from Cauchy-Schwarz inequality and:
\begin{equation}
\|v_{\sigma, x}\|_{2}^{2} = \int dy\, |v_{\sigma}(x;y)|^{2} = \omega_{\sigma}(x;x) = \rho_{\sigma}\;.
\end{equation}
The last inequality follows from:
\begin{equation}
\sum_{\sigma} \int dx\, \| a_{\sigma}(u_{x}) \psi \|^{2} \leq \langle \psi, \mathcal{N} \psi \rangle\;.
\end{equation}
The other two terms in the definition of $\mathbb{Q}_{2}$ are estimated in exactly the same way.
\medskip

\noindent{\underline{Proof of $d)$.}} Consider the first contribution to $\mathbb{Q}_{3}$. We write:
\begin{eqnarray}
&&\Big| \sum_{\sigma, \sigma'} \int dxdy\, V(x-y) \langle \psi, a^{*}_{\sigma}(u_{x}) a^{*}_{\sigma'}(u_{y}) a^{*}_{\sigma}(\overline{v}_{x}) a_{\sigma'}(u_{y})\psi \rangle\Big| \nonumber\\
&&\leq \frac{\rho^{\alpha}}{8}\sum_{\sigma, \sigma'}\int dxdy\, V(x-y) \| a_{\sigma}(u_{x}) a_{\sigma'}(u_{y}) \psi \|^{2}  + \frac{2}{\rho^{\alpha}}  \sum_{\sigma, \sigma'}\int dxdy\, V(x-y) \| a^{*}_{\sigma}(\overline{v}_{x}) a_{\sigma'}(u_{y}) \psi \|^{2} \nonumber\\
&&\leq \frac{\rho^{\alpha}}{4} \langle \psi, \mathbb{Q}_{1} \psi \rangle + C \|V\|_{1} \rho^{1-\alpha} \langle \psi, \mathcal{N} \psi \rangle\;,
\end{eqnarray}
by Cauchy-Schwarz inequality and $\|v_{\sigma, x}\|_{2} \leq \rho^{\frac{1}{2}}$. Consider now the second contribution to $\mathbb{Q}_{3}$. We have:
\begin{eqnarray}
&&\Big| \sum_{\sigma, \sigma'} \int dxdy\, V(x-y) \langle \psi,  a^{*}_{\sigma}(u_{x}) a^{*}_{\sigma'}(\overline{v}_{y}) a^{*}_{\sigma}(\overline{v}_{x}) a_{\sigma'}(\overline{v}_{y}) \psi \rangle \Big|\nonumber\\
&&\qquad \leq \sum_{\sigma, \sigma'} \int dxdy\, V(x-y) \| v_{\sigma', y} \|_{2} \| v_{\sigma, x} \| \| a_{\sigma}(u_{x}) \psi  \| \| a_{\sigma'}(\overline{v}_{y}) \psi \|\nonumber\\
&&\qquad \leq C\rho \|V\|_{1} \langle \psi, \mathcal{N} \psi\rangle\;,
\end{eqnarray}
again by Cauchy-Schwarz inequality and $\|v_{\sigma, x}\|_{2} \leq \rho^{\frac{1}{2}}$. The remaining terms in the definition of $\mathbb{Q}_{3}$ are estimated in exactly the same way. Let us now prove the identities (\ref{eq:Q3canc}). Consider the first; the proof of the second is identical. Let $\psi$ be such that $\psi^{(n)} = 0$ unless $n=4k$ for $k\in \mathbb{N}$. Let $\varphi = \mathbb{Q}_{3} \psi$. Then, $\varphi$ is a Fock space vector such that $\varphi^{(n)} = 0$ unless $n = 4k + 2$ for $k\in \mathbb{N}$. Since $4k + 2$ is not a multiple of $4$, $\langle \psi, \varphi \rangle = \sum_{n\geq 0} \langle \psi^{(n)}, \varphi^{(n)} \rangle = 0$.
\medskip

\noindent\underline{Proof of $e)$.} Consider the first contribution to $\mathbb{Q}_{4}$. We write, by Cauchy-Schwarz inequality, for $\delta > 0$:
\begin{eqnarray}\label{eq:Q4Q1}
&&\Big| \sum_{\sigma, \sigma'} \int dxdy\, V(x-y) \langle \psi, a^{*}_{\sigma}(u_{x}) a^{*}_{\sigma'}(u_{y}) a^{*}_{\sigma'}(\overline{v}_{y}) a^{*}_{\sigma}(\overline{v}_{x}) \psi \rangle \Big| \nonumber\\
&&\leq \sum_{\sigma, \sigma'} \int dxdy\, V(x-y) \Big[ \frac{\delta}{2}\| a_{\sigma}(u_{x}) a_{\sigma'}(u_{y}) \psi \|^{2} + \frac{1}{2\delta} \| a^{*}_{\sigma'}(\overline{v}_{y}) a^{*}_{\sigma}(\overline{v}_{x}) \psi \|^{2} \Big]\nonumber\\
&&\leq \delta \langle \psi, \mathbb{Q}_{1} \psi \rangle + \frac{C}{\delta} \rho^{2} L^{3} \|\psi\|^{2}\;.
\end{eqnarray}
The other contribution to $\mathbb{Q}_{4}$ is bounded in the same way. This concludes the proof of Proposition \ref{prp:bogbd}.
\end{proof}
In order to make good use of the above estimates, we need a priori information on the size of the expectation of the number operator, on states that are close enough to the ground state of the system. We shall refer to these states as approximate ground states
\begin{definition}[Approximate ground state.]\label{def:appgs} Let $\psi \in \mathcal{F}$ be a normalized state, such that $\langle \psi, \mathcal{N}_{\sigma} \psi \rangle = N_{\sigma}$ and $N = N_{\uparrow} + N_{\downarrow}$. Suppose that:
\begin{equation}\label{eq:H0as}
\Big| \langle \psi, \mathcal{H} \psi \rangle - \sum_{\sigma = \uparrow\downarrow} \sum_{k\in \mathcal{B}_{F}^{\sigma}} |k|^{2} \Big| \leq CL^{3} \rho^{2}\;.
\end{equation}
Then, we shall say that $\psi$ is an approximate ground state of $\mathcal{H}$.
\end{definition}
We will first get an a priori estimate on the relative kinetic energy operator $\mathbb{H}_{0}$. Afterwards, we will show how to get information on the number operator from this a priori bound. 
\begin{lemma}[A priori estimate for $\mathbb{H}_{0}$.]\label{lem: apH0} Under the assumptions of Theorem \ref{thm:main}, the following holds. Suppose that $\psi$ is an approximate ground state. Then:
\begin{equation}\label{eq:aprH0}
\langle R^{*}\psi, \mathbb{H}_{0} R^{*}\psi \rangle \leq C L^{3}\rho^{2}\;.
\end{equation}
\end{lemma}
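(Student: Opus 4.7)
The plan is to apply the lower bound of Proposition \ref{prp:conj} to convert the a priori assumption (\ref{eq:H0as}) on $\mathcal{H}$ into an upper bound on $\mathbb{H}_{0}$, absorb the error operators $\mathbb{X}$ and $\widetilde{\mathbb{Q}}$ via the estimates of Proposition \ref{prp:bogbd}, and then close the resulting inequality self-consistently by controlling the number operator in terms of $\mathbb{H}_{0}$. First, setting $\phi := R^{*}\psi$, Proposition \ref{prp:conj}(ii) rearranged gives
\[
\langle \phi, \mathbb{H}_{0}\phi\rangle \leq \big(\langle\psi, \mathcal{H}\psi\rangle - E_{\text{HF}}(\omega)\big) - \langle\phi, \mathbb{X}\phi\rangle - \langle\phi, \widetilde{\mathbb{Q}}\phi\rangle.
\]
The Hartree--Fock computation of Section \ref{sec:FFG} shows that $E_{\text{HF}}(\omega) = \sum_{\sigma}\sum_{k\in\mathcal{B}_{F}^{\sigma}}|k|^{2} + O(L^{3}\rho^{2})$, since the direct and exchange contributions are both of order $L^{3}\rho^{2}$; combined with (\ref{eq:H0as}) this yields $\langle\psi, \mathcal{H}\psi\rangle - E_{\text{HF}}(\omega) \leq C L^{3}\rho^{2}$.

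Next, I would control the remaining operators using Proposition \ref{prp:bogbd}. Since $\widetilde{\mathbb{Q}}_{1}\geq 0$, the nonpositive term $-\langle\phi,\widetilde{\mathbb{Q}}_{1}\phi\rangle$ is available to absorb fractions of $|\langle\phi, \widetilde{\mathbb{Q}}_{3}\phi\rangle|$ and $|\langle\phi, \widetilde{\mathbb{Q}}_{4}\phi\rangle|$: fixing a small $\delta>0$ in part (e) and a small $\alpha>0$ in part (d), the coefficient $-1+\rho^{\alpha}+\delta$ in front of $\widetilde{\mathbb{Q}}_{1}$ is strictly negative for $\rho$ small, so that term may simply be dropped. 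Combined with $|\langle\phi,\mathbb{X}\phi\rangle|\leq C\rho\langle\phi,\mathcal{N}\phi\rangle$ and the analogous bound on $\widetilde{\mathbb{Q}}_{2}$, this produces
\[
\langle\phi, \mathbb{H}_{0}\phi\rangle \leq C L^{3}\rho^{2} + C\rho^{1-\alpha}\langle\phi,\mathcal{N}\phi\rangle.
\]

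To close the loop, I would prove the operator inequality $\mathcal{N} \leq CL^{3}\rho^{1/3}\delta' + (\delta')^{-1}\mathbb{H}_{0}$, valid for any $\delta'>0$. This follows by splitting momenta according to whether $||k|^{2}-\mu_{\sigma}|$ is below or above $\delta'$: high-energy modes are dominated by $(\delta')^{-1}\mathbb{H}_{0}$, while low-energy modes are controlled using $\hat a^{*}_{k,\sigma}\hat a_{k,\sigma}\leq 1$ together with the fact that the number of momenta in the spherical shell of width $O(\delta'/k_{F}^{\sigma})$ around the Fermi surface is of order $L^{3}\rho^{1/3}\delta'$. Inserting this into the previous inequality and choosing $\delta' = 2C\rho^{1-\alpha}$ absorbs half of $\langle\phi, \mathbb{H}_{0}\phi\rangle$ into the left-hand side and leaves a residual error of size $L^{3}\rho^{7/3-2\alpha}$; any $\alpha<1/6$ makes this smaller than $L^{3}\rho^{2}$ for $\rho$ small, and the claim follows.

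The main technical point is this self-consistency step. Because $\mathbb{H}_{0}$ vanishes on the Fermi surface, the number operator $\mathcal{N}$ cannot be controlled by $\mathbb{H}_{0}$ alone, and one must pay the volume factor $L^{3}\rho^{1/3}\delta'$ associated to the Fermi shell. The interplay between the exponent $\alpha$ built into Proposition \ref{prp:bogbd}(d) and the shell scale $\delta'$ is precisely what allows the iteration to close at the correct power $\rho^{2}$; the constraint $\alpha<1/6$ explains, a posteriori, why a free exponent was introduced in the bound for $\widetilde{\mathbb{Q}}_{3}$.
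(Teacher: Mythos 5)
Your argument is correct, but it follows a genuinely different and heavier route than the paper's. The paper's own proof is far more elementary: it exploits $V\geq 0$ to discard the entire interaction at the outset, so that
\begin{equation*}
\langle\psi,\mathcal{H}\psi\rangle \;\geq\; \sum_{\sigma,k}|k|^{2}\,\langle\psi,\hat a^{*}_{k,\sigma}\hat a_{k,\sigma}\psi\rangle\,,
\end{equation*}
and then conjugates this purely kinetic expression with the particle--hole transformation $R$. After subtracting the Fermi sea kinetic energy, the remainder is exactly $\langle R^{*}\psi,\mathbb{H}_{0}R^{*}\psi\rangle$ plus a chemical-potential term $\sum_{\sigma}\mu_{\sigma}\bigl(\langle\psi,\mathcal{N}_{\sigma}\psi\rangle - N_{\sigma}\bigr)$, which vanishes identically by the constraint on the particle numbers. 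The conclusion is then immediate from (\ref{eq:H0as}). No decomposition of the conjugated Hamiltonian, no control of quartic operators, and no self-consistent bootstrapping are needed. Your route instead keeps the interaction, uses Proposition \ref{prp:conj}(ii), estimates the $\mathbb{X}$ and $\widetilde{\mathbb{Q}}_{i}$ terms via Proposition \ref{prp:bogbd}, and closes a Gr\"onwall-like self-consistency via a shell-counting bound on $\mathcal{N}$ in terms of $\mathbb{H}_{0}$. This is essentially the strategy the paper deploys a few lines later for Lemma \ref{lem: est priori Q1}; applying it already to $\mathbb{H}_{0}$ works, but it is overkill, and it obscures the cleaner structural fact that $\mathbb{H}_{0}$ is controlled by the kinetic energy alone.

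Two small remarks on your write-up. First, your $\mathcal{N}$-versus-$\mathbb{H}_{0}$ inequality is exactly Lemma \ref{lem:apriori} of the paper with $\delta'=\rho^{\alpha}$; the constraint you need ($\delta'\lesssim\rho^{2/3}$, i.e.\ $1-\alpha\geq 2/3$) is automatically satisfied by your choice $\alpha<1/6$, so the bootstrap indeed closes, and you should flag that this lemma is logically independent of the present statement (it is, in the paper). Second, the claim that your constraint $\alpha<1/6$ ``explains, a posteriori, why a free exponent was introduced in the bound for $\widetilde{\mathbb{Q}}_{3}$'' overreads the paper's intent: the free exponent is used there precisely for Lemma \ref{lem: est priori Q1} (where the paper chooses $\alpha=1/12$), not for the present lemma, which the paper proves without touching $\widetilde{\mathbb{Q}}_{3}$ at all.
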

\begin{proof} By the positivity of the interaction,
\begin{eqnarray}\label{eq:lowH}
\langle \psi, \mathcal{H} \psi \rangle &\geq& \sum_{\sigma =\uparrow\downarrow}\sum_{k\in \frac{2\pi}{L}\mathbb{Z}^{3}} |k|^{2}\langle \psi, \hat a^{*}_{k,\sigma} \hat a_{k,\sigma} \psi \rangle \\
&=&  \sum_{\sigma =\uparrow\downarrow}\sum_{k\in \frac{2\pi}{L}\mathbb{Z}^{3}} |k|^{2}\langle R^{*}\psi, R^{*}\hat a^{*}_{k,\sigma} \hat a_{k,\sigma} R R^{*} \psi \rangle\nonumber\\
&=&  \sum_{\sigma = \uparrow\downarrow} \sum_{k\in \mathcal{B}_{F}^{\sigma}} |k|^{2} +  \sum_{\sigma = \uparrow\downarrow} \sum_{k \notin \mathcal{B}_{F}^{\sigma}} |k|^{2} \langle R^{*}\psi, \hat a^{*}_{k,\sigma} \hat a_{k,\sigma} R^{*} \psi\rangle - \sum_{\sigma = \uparrow\downarrow} \sum_{k \in \mathcal{B}_{F}^{\sigma}} |k|^{2} \langle R^{*}\psi, \hat a^{*}_{k,\sigma} \hat a_{k,\sigma} R^{*}\psi \rangle\;,\nonumber
\end{eqnarray}
where the last step follows from (\ref{eq:phol}). We then rewrite the last two terms as:
\begin{eqnarray}
&&\sum_{\sigma = \uparrow\downarrow} \sum_{k \notin \mathcal{B}_{F}^{\sigma}} (|k|^{2} - \mu_{\sigma}) \langle R^{*}\psi, \hat a^{*}_{k,\sigma} \hat a_{k,\sigma} R^{*}\psi \rangle - \sum_{\sigma = \uparrow\downarrow} \sum_{k \in \mathcal{B}_{F}^{\sigma}} (|k|^{2} - \mu_{\sigma}) \langle R^{*}\psi, \hat a^{*}_{k,\sigma} \hat a_{k,\sigma} R^{*}\psi \rangle\nonumber\\
&&\qquad  + \sum_{\sigma = \uparrow\downarrow} \mu_{\sigma} \Big[ \sum_{k \notin \mathcal{B}_{F}^{\sigma}} \langle R^{*}\psi, \hat a^{*}_{k,\sigma} \hat a_{k,\sigma} R^{*}\psi \rangle - \sum_{k \in \mathcal{B}_{F}^{\sigma}} \langle R^{*}\psi, \hat a^{*}_{k,\sigma} \hat a_{k,\sigma} R^{*}\psi \rangle\Big] \nonumber\\
&&\qquad \equiv \langle R^{*}\psi , \mathbb{H}_{0} R^{*}\psi \rangle + \sum_{\sigma = \uparrow\downarrow} \mu_{\sigma}\Big[ \sum_{k\in \frac{2\pi}{L} \mathbb{Z}^{3}}\langle \psi, \hat a^{*}_{k,\sigma} \hat a_{k,\sigma} \psi \rangle - N_{\sigma}\Big]\;.
\end{eqnarray}
To reconstruct the kinetic energy operator $\mathbb{H}_{0}$, we used that if $k\notin \mathcal{B}_{F}^{\sigma}$ then $|k|^{2} - \mu_{\sigma} \geq 0$, while if $k\in \mathcal{B}_{F}^{\sigma}$ then $|k|^{2} - \mu_{\sigma} \leq 0$. To obtain the term in the square brackets, we used again the properties of the Bogoliubov transformation (\ref{eq:phol}). The term in the brackets vanishes, by the assumptions on the state. The bound (\ref{eq:lowH}), combined with the assumption (\ref{eq:H0as}), implies the final claim. 
\end{proof}
We are now ready to prove an a priori estimate on the number operator. To do so, the following lemma will play an important role.
\begin{lemma}\label{lem:apriori} Let $\alpha \geq \frac{2}{3}$. The following bound holds true:
\begin{equation}\label{eq:apriori}
\langle \psi, \mathcal{N} \psi \rangle \leq CL^{3} \rho^{\frac{1}{3} + \alpha} \|\psi\|^{2} + \frac{1}{\rho^{\alpha}}\langle \psi, \mathbb{H}_{0} \psi \rangle\;.
\end{equation}
\end{lemma}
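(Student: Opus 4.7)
The plan is to split the sum defining $\mathcal{N}$ in momentum space into a piece localized near the Fermi surface, where the dispersion $||k|^2 - \mu_\sigma|$ is too small to be useful, and a complementary piece where $\mathbb{H}_0$ dominates. Concretely, for each $\sigma$ I introduce the cutoff set
\begin{equation*}
S_\sigma^{\text{near}} = \Bigl\{ k \in \tfrac{2\pi}{L}\mathbb{Z}^3 \,:\, \bigl||k|^2 - \mu_\sigma\bigr| < \rho^\alpha \Bigr\}, \qquad S_\sigma^{\text{far}} = \tfrac{2\pi}{L}\mathbb{Z}^3 \setminus S_\sigma^{\text{near}},
\end{equation*}
and write
\begin{equation*}
\langle \psi, \mathcal{N} \psi \rangle = \sum_{\sigma}\Bigl(\sum_{k \in S_\sigma^{\text{near}}} + \sum_{k \in S_\sigma^{\text{far}}}\Bigr) \langle \psi, \hat a^{*}_{k,\sigma} \hat a_{k,\sigma} \psi \rangle.
\end{equation*}

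For the far piece, since $||k|^2 - \mu_\sigma|/\rho^\alpha \geq 1$ on $S_\sigma^{\text{far}}$, I simply estimate
\begin{equation*}
\sum_{k \in S_\sigma^{\text{far}}} \langle \psi, \hat a^{*}_{k,\sigma} \hat a_{k,\sigma} \psi \rangle \leq \frac{1}{\rho^\alpha} \sum_{k \in S_\sigma^{\text{far}}} \bigl||k|^2 - \mu_\sigma\bigr| \langle \psi, \hat a^{*}_{k,\sigma} \hat a_{k,\sigma} \psi \rangle \leq \frac{1}{\rho^\alpha} \langle \psi, \mathbb{H}_0 \psi \rangle,
\end{equation*}
which produces the second term on the right-hand side of \eqref{eq:apriori}. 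For the near piece, I use the Pauli bound $\hat a^{*}_{k,\sigma} \hat a_{k,\sigma} \leq 1$ in operator norm (a consequence of the CAR \eqref{eq:CAR}) to get
\begin{equation*}
\sum_{k \in S_\sigma^{\text{near}}} \langle \psi, \hat a^{*}_{k,\sigma} \hat a_{k,\sigma} \psi \rangle \leq |S_\sigma^{\text{near}}|\, \|\psi\|^2,
\end{equation*}
so everything reduces to a volume count.

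The key (and only nontrivial) step is the volume estimate $|S_\sigma^{\text{near}}| \leq C L^3 \rho^{1/3 + \alpha}$, valid precisely because $\alpha \geq 2/3$. The continuum measure of $\{k\in\mathbb{R}^3 : ||k|^2 - \mu_\sigma| < \rho^\alpha\}$ equals the measure of a radial shell $\sqrt{\mu_\sigma - \rho^\alpha}_+ \leq |k| \leq \sqrt{\mu_\sigma + \rho^\alpha}$; recalling $\mu_\sigma = (k_F^\sigma)^2 \sim \rho^{2/3}$, a direct computation (Taylor expansion of the square roots when $\rho^\alpha \ll \mu_\sigma$, and a trivial ball bound when $\rho^\alpha \gtrsim \mu_\sigma$, which at $\alpha = 2/3$ still yields $\rho^{3\alpha/2} = \rho^{1/3+\alpha}$) gives that the measure is bounded by $C k_F^\sigma \rho^\alpha \lesssim \rho^{1/3+\alpha}$. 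The number of lattice points $k \in (2\pi/L)\mathbb{Z}^3$ inside this set is then bounded by this volume times $(L/2\pi)^3$, up to a surface correction that is harmless thanks to the assumption $L \geq L_0$ in Theorem \ref{thm:main}. Combining this with the two bounds above yields \eqref{eq:apriori}.

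I expect no real obstacle here: the main point is identifying the correct cutoff $\rho^\alpha$ on $||k|^2 - \mu_\sigma|$, matched to the natural scale set by $\mathbb{H}_0$, and checking that the condition $\alpha \geq 2/3$ is exactly what prevents the shell volume from blowing up relative to $\rho^{1/3+\alpha}$ when the shell grows to encompass the origin.
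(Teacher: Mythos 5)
Your proposal is correct and follows essentially the same route as the paper: split $\mathcal{N}$ in momentum space at the scale $||k|^2-\mu_\sigma|=\rho^\alpha$, control the far piece by $\rho^{-\alpha}\mathbb{H}_0$, and control the near piece by a shell-volume count. You spell out the volume estimate and the role of $\alpha\geq 2/3$ in more detail than the paper (which simply asserts the bound $CL^3\rho^{1/3+\alpha}\|\psi\|^2$), but there is no difference in method.
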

\begin{proof}
We write:
\begin{eqnarray}
\mathcal{N} &=& \sum_{\sigma} \sum_{k} \hat a^{*}_{\sigma, k} \hat a_{\sigma, k}  = \sum_{\sigma} \sum_{k: |k^{2} - \mu_{\sigma}| \leq \rho^{\alpha}} \hat a^{*}_{\sigma, k} \hat a_{\sigma, k} + \sum_{\sigma} \sum_{k: |k^{2} - \mu_{\sigma}| > \rho^{\alpha}} \hat a^{*}_{\sigma, k} \hat a_{\sigma, k}\nonumber\\
&\equiv& \mathcal{N}^{<} + \mathcal{N}^{>}\;.
\end{eqnarray}
For the first term, we use that:
\begin{equation}
\langle \psi, \mathcal{N}^{<} \psi \rangle = \sum_{\sigma} \sum_{k: |k^{2} - \mu_{\sigma}|  \leq \rho^{\alpha}} \| \hat a_{\sigma, k} \psi \|^{2} \leq CL^{3} \rho^{\frac{1}{3} + \alpha} \|\psi\|^{2}\;.
\end{equation}
For the second term we use that:
\begin{eqnarray}\label{eq:Nbig}
\langle \psi, \mathcal{N}^{>} \psi \rangle &=&  \sum_{\sigma} \sum_{k: |k^{2} - \mu_{\sigma}|  > \rho^{\alpha}} \| \hat a_{\sigma, k} \psi \|^{2} \nonumber\\ 
&\leq& \frac{1}{\rho^{\alpha}} \sum_{\substack{k,\sigma \\ |k^{2} - \mu_{\sigma}|  > \rho^{\alpha}}} |k^{2} - \mu_{\sigma}| \| \hat a_{\sigma, k} \psi \|^{2} \leq \frac{1}{\rho^{\alpha}}\langle \psi, \mathbb{H}_{0} \psi \rangle\;.
\end{eqnarray}
This concludes the proof. 
\end{proof}
\begin{corollary}[A priori estimate for $\mathcal{N}$.]\label{cor:Nbd} Under the assumptions of Lemma \ref{lem: apH0}, the following holds:
\begin{equation}\label{eq:bdcalN}
\langle R^{*}\psi, \mathcal{N} R^{*}\psi \rangle \leq CL^{3} \rho^{\frac{7}{6}}\;.
\end{equation}
\end{corollary}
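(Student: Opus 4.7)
The proof should be a short deduction combining the two preceding results: Lemma \ref{lem:apriori} gives an abstract interpolation bound between $\mathcal{N}$ and $\mathbb{H}_{0}$ on any normalized state, and Lemma \ref{lem: apH0} provides the a priori control $\langle R^{*}\psi,\mathbb{H}_{0}R^{*}\psi\rangle\leq CL^{3}\rho^{2}$ for approximate ground states. The plan is simply to apply the first with the state $R^{*}\psi$ in place of $\psi$, and then optimize the exponent.

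Concretely, I would first observe that $R$ is unitary, so $\|R^{*}\psi\|=\|\psi\|=1$, and that $R^{*}\psi$ is a legitimate Fock-space vector to which Lemma \ref{lem:apriori} applies without any additional hypothesis (the lemma is purely kinematical, it does not assume any particle-number condition on the state). Inserting $R^{*}\psi$ into \eqref{eq:apriori} for an arbitrary $\alpha\geq 2/3$ and then using \eqref{eq:aprH0} gives
\begin{equation*}
\langle R^{*}\psi,\mathcal{N} R^{*}\psi\rangle
\;\leq\; CL^{3}\rho^{\frac{1}{3}+\alpha} + \frac{1}{\rho^{\alpha}}\,\langle R^{*}\psi,\mathbb{H}_{0} R^{*}\psi\rangle
\;\leq\; CL^{3}\rho^{\frac{1}{3}+\alpha} + CL^{3}\rho^{2-\alpha}.
\end{equation*}
The two terms are balanced when $\tfrac{1}{3}+\alpha = 2-\alpha$, i.e.\ $\alpha=\tfrac{5}{6}$, which is admissible since $\tfrac{5}{6}\geq \tfrac{2}{3}$. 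With this choice both contributions are of order $L^{3}\rho^{7/6}$, which yields \eqref{eq:bdcalN}.

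There is no real obstacle here: the corollary is essentially a one-line optimization, and the only thing worth emphasizing in the write-up is why Lemma \ref{lem:apriori} can be freely applied to $R^{*}\psi$ (it needs no spin-sector or number condition, unlike the definition of approximate ground state which involves $\langle\psi,\mathcal{N}_{\sigma}\psi\rangle = N_{\sigma}$; that condition was already used in Lemma \ref{lem: apH0} to produce \eqref{eq:aprH0} and is not needed again). The choice $\alpha=5/6$, producing the exponent $7/6$, is the unique balancing exponent and explains the precise form of the bound stated in \eqref{eq:bdcalN}.
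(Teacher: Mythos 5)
Your proposal is correct and is exactly the argument the paper intends: apply Lemma~\ref{lem:apriori} to the normalized state $R^{*}\psi$, invoke the a priori bound \eqref{eq:aprH0}, and balance $\rho^{\frac{1}{3}+\alpha}$ against $\rho^{2-\alpha}$ at $\alpha=\tfrac{5}{6}$ to obtain $\rho^{7/6}$. The paper's own proof is the same one-line optimization, so there is nothing to add.
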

\begin{proof} The bound follows from Eqs. (\ref{eq:apriori}), (\ref{eq:aprH0}), after optimizing over $\alpha$.
\end{proof}
\begin{remark}[Condensation estimate.] It is well-known that the estimate (\ref{eq:bdcalN}) can be used to control the difference between the reduced one-particle density matrix of $\psi$ and the reduced one-particle density matrix of the free Fermi gas, see {\it e.g.} \cite{BPS}. Let $\gamma^{(1)}_{\psi}$ be the reduced one-particle density matrix of $\psi$,
\begin{equation}
\gamma^{(1)}_{\sigma, \sigma'}(x;y) = \langle \psi, a^{*}_{y,\sigma'} a_{x,\sigma} \psi \rangle\;.
\end{equation}
Then, the bound (\ref{eq:bdcalN}) implies that, for an approximate ground state $\psi$:
\begin{equation}\label{eq:condest}
\tr\, \gamma^{(1)}_{\psi} (1 - \omega) \leq CL^{3} \rho^{\frac{7}{6}}\;.
\end{equation}
This `condensation estimate' is not new: it was an important ingredient of the analysis of \cite{LSS}. One of the reasons for our improved error estimates in Theorem \ref{thm:main} is that we will be able to improve the bound (\ref{eq:condest}), see Remark \ref{rem:cond}.
\end{remark}

We conclude this section by discussing an a priori estimate for the operator $\mathbb{Q}_{1}$, arising after conjugating the Hamiltonian with the fermionic Bogoliubov transformation.
\begin{lemma}[A priori estimate for $\mathbb{Q}_{1}$.] \label{lem: est priori Q1} Under the assumptions of Theorem \ref{thm:main}, the following is true. Suppose that $\psi$ an approximate ground state. Then:
\begin{equation}\label{eq:aprQ1}
\langle R^{*}\psi, \mathbb{Q}_{1} R^{*}\psi \rangle \leq CL^{3} \rho^{2}\;, \qquad \langle R^{*}\psi, \widetilde{\mathbb{Q}}_{1} R^{*}\psi \rangle \leq CL^{3} \rho^{2}\;.
\end{equation}
\end{lemma}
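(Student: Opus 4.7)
The strategy is to solve identity \eqref{eq:bog0} of Proposition \ref{prp:conj}(i) for $\mathbb{Q}_{1}$, and then control every other term on the right-hand side by the a priori estimates already established (Lemma \ref{lem: apH0}, Corollary \ref{cor:Nbd}, Proposition \ref{prp:bogbd}), absorbing into the left-hand side the pieces of $\mathbb{Q}_{3}$ and $\mathbb{Q}_{4}$ that are themselves controlled by $\mathbb{Q}_{1}$. Since $\mathbb{Q}_{1}\geq 0$ by part b) of Proposition \ref{prp:bogbd}, a one-sided upper bound is exactly what we need.

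Writing \eqref{eq:bog0} in the form
\begin{equation*}
\langle R^{*}\psi, \mathbb{Q}_{1} R^{*}\psi\rangle = \langle \psi, \mathcal{H}\psi\rangle - E_{\mathrm{HF}}(\omega) - \langle R^{*}\psi, \mathbb{H}_{0} R^{*}\psi\rangle - \langle R^{*}\psi, \mathbb{X} R^{*}\psi\rangle - \sum_{i=2}^{4}\langle R^{*}\psi, \mathbb{Q}_{i} R^{*}\psi\rangle\;,
\end{equation*}
the first two terms combine into $O(L^{3}\rho^{2})$: by Definition \ref{def:appgs} the Hamiltonian energy differs from $\sum_{\sigma}\sum_{k\in\mathcal{B}_{F}^{\sigma}}|k|^{2}$ by $O(L^{3}\rho^{2})$, while the expansion \eqref{eq:enHF} shows that the Hartree-Fock energy differs from the same kinetic sum by $L^{3}\hat V(0)\rho_{\uparrow}\rho_{\downarrow} + O(L^{3}\rho^{7/3}) = O(L^{3}\rho^{2})$. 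For the remaining terms I would use: Lemma \ref{lem: apH0} to bound $\langle R^{*}\psi, \mathbb{H}_{0} R^{*}\psi\rangle \leq CL^{3}\rho^{2}$; parts a) and c) of Proposition \ref{prp:bogbd} together with Corollary \ref{cor:Nbd} to get $|\langle R^{*}\psi,\mathbb{X} R^{*}\psi\rangle|,\,|\langle R^{*}\psi,\mathbb{Q}_{2} R^{*}\psi\rangle| \leq C\rho\cdot L^{3}\rho^{7/6} = CL^{3}\rho^{13/6}$; and parts d), e) of Proposition \ref{prp:bogbd} with parameters $\alpha,\delta>0$ to be chosen:
\begin{equation*}
|\langle R^{*}\psi,\mathbb{Q}_{3} R^{*}\psi\rangle| \leq \rho^{\alpha}\langle R^{*}\psi,\mathbb{Q}_{1} R^{*}\psi\rangle + C\rho^{1-\alpha}\langle R^{*}\psi,\mathcal{N} R^{*}\psi\rangle,
\end{equation*}
\begin{equation*}
|\langle R^{*}\psi,\mathbb{Q}_{4} R^{*}\psi\rangle| \leq \delta\langle R^{*}\psi,\mathbb{Q}_{1} R^{*}\psi\rangle + \tfrac{C}{\delta}L^{3}\rho^{2}.
\end{equation*}

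Choosing $\alpha = 1/6$ (so that $\rho^{1-\alpha}\langle R^{*}\psi,\mathcal{N} R^{*}\psi\rangle \leq C\rho^{5/6}\cdot L^{3}\rho^{7/6} = CL^{3}\rho^{2}$) and $\delta = 1/4$ yields, for $\rho$ small enough that $\rho^{1/6} < 1/4$,
\begin{equation*}
\tfrac{1}{2}\langle R^{*}\psi,\mathbb{Q}_{1} R^{*}\psi\rangle \leq CL^{3}\rho^{2},
\end{equation*}
which proves the first estimate in \eqref{eq:aprQ1}.

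For the $\widetilde{\mathbb{Q}}_{1}$ bound, I would run the identical argument starting from the inequality \eqref{eq:bog00} of Proposition \ref{prp:conj}(ii): dropping $\mathbb{H}_{0}\geq 0$ and $\widetilde{\mathbb{Q}}_{1}\geq 0$ on the right gives
\begin{equation*}
\langle R^{*}\psi,\widetilde{\mathbb{Q}}_{1} R^{*}\psi\rangle \leq \langle\psi,\mathcal{H}\psi\rangle - E_{\mathrm{HF}}(\omega) - \langle R^{*}\psi,\mathbb{X} R^{*}\psi\rangle - \sum_{i=2}^{4}\langle R^{*}\psi,\widetilde{\mathbb{Q}}_{i} R^{*}\psi\rangle + \langle R^{*}\psi,\mathbb{H}_{0} R^{*}\psi\rangle,
\end{equation*}
and the same estimates from Proposition \ref{prp:bogbd} applied to the $\widetilde{\mathbb{Q}}_{i}$'s close the argument after an identical absorption. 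The proof is essentially mechanical once all ingredients are in place; the only place where one might expect trouble is the interplay between the $\mathbb{Q}_{3}$ bound and the a priori control of $\mathcal{N}$, but the factor $\rho^{7/6}$ from Corollary \ref{cor:Nbd} leaves enough room to pick $\alpha>0$ satisfying $\alpha \leq 1/6$ with strict positivity.
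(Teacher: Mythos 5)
Your proof is correct and takes essentially the same route as the paper: both decompose $\langle\psi,\mathcal{H}\psi\rangle$ via Proposition \ref{prp:conj}, bound the error terms using Proposition \ref{prp:bogbd}, absorb the $\mathbb{Q}_{1}$-controlled pieces of $\mathbb{Q}_{3}$ and $\mathbb{Q}_{4}$, and feed in the a priori number-operator bound. Two small remarks. First, the term $-\langle R^{*}\psi, \mathbb{H}_{0} R^{*}\psi\rangle$ in your rearranged identity should simply be dropped by positivity of $\mathbb{H}_{0}$; citing Lemma \ref{lem: apH0} for an \emph{upper} bound $\langle\mathbb{H}_{0}\rangle\leq CL^{3}\rho^{2}$ points in the wrong direction here (that bound gives $-\langle\mathbb{H}_{0}\rangle\geq -CL^{3}\rho^{2}$, which is not what is needed), and this carelessness reappears in the $+\langle\mathbb{H}_{0}\rangle$ that erroneously shows up in your display for $\widetilde{\mathbb{Q}}_{1}$. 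Second, you rerun the whole argument for $\widetilde{\mathbb{Q}}_{1}$, whereas the paper simply observes that $\widetilde{\mathbb{Q}}_{1}\leq\mathbb{Q}_{1}$ (it is a sub-sum of nonnegative terms), so the second estimate in \eqref{eq:aprQ1} is an immediate consequence of the first.
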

\begin{proof}
From the estimates for $\mathbb{X}$, $\mathbb{Q}_2$, $\mathbb{Q}_3$, Eqs. \eqref{eq:Xbd}, \eqref{eq:Q2bd} and \eqref{eq:bdQ3}, we get:
\begin{equation}
	\langle \psi, \mathcal{H}\psi\rangle \geq E_{\text{HF}}(\omega)  + \langle R^{*}\psi, \mathbb{H}_0 R^{*}\psi\rangle +(1-C\rho^\alpha)\langle R^{*}\psi, \mathbb{Q}_1 R^{*}\psi\rangle + \langle R^{*}\psi, \mathbb{Q}_4 R^{*}\psi\rangle + \mathcal{E}(\psi),
\end{equation}
with
\begin{equation}\label{eq:estNop}
	|\mathcal{E}(\psi)|\leq C \rho^{1-\alpha}\langle R^{*} \psi, \mathcal{N} R^{*}\psi\rangle.
\end{equation}
Eq. \eqref{eq:estNop} together with the bound $\pm \mathbb{Q}_{4} \leq \delta \mathbb{Q}_{1} + (C/\delta) L^{3}\rho^{2}$, Eq. (\ref{eq:bdQ4}), imply, taking {\it e.g.} $\alpha = 1/12$:
\begin{equation}
\langle \psi, \mathcal{H} \psi \rangle \geq E_{\text{HF}}(\omega) + \langle R^{*}\psi, \mathbb{H}_{0} R^{*}\psi \rangle + (1 - C\delta) \langle R^{*}\psi, \mathbb{Q}_{1} R^{*}\psi \rangle - \frac{C}{\delta}L^{3}\rho^{2}\;.
\end{equation}
Taking $\delta >0$ small enough, the final claim follows from assumption (\ref{eq:H0as}), from the explicit expression of $E_{\text{HF}}(\omega)$, Eq. (\ref{eq:enHF}), and from the positivity of $\mathbb{H}_{0}$.  The inequality for $\widetilde{\mathbb{Q}}_{1}$ follows immediately, since $\mathbb{Q}_{1} \geq \widetilde{\mathbb{Q}}_{1}$.
\end{proof}
\section{The correlation structure}\label{sec:T}
\subsection{Heuristics}\label{sec:heu}
Here we shall give the intuition behind the method developed in the rest of the paper. Recall the expression for the many-body energy, after conjugating with the Bogoliubov transformation:
\begin{equation}\label{eq:en}
\langle \psi, \mathcal{H} \psi \rangle = E_{\text{HF}}(\omega) + \langle R^{*}\psi, (\mathbb{H}_{0} + \mathbb{Q}_{1} + \mathbb{Q}_{4}) R^{*}\psi \rangle + \mathcal{E}_{\text{1}}(\psi)\;.
\end{equation}
If $\psi$ is an approximate ground state, the error term $\mathcal{E}_{1}(\psi)$ is subleading with respect to $\rho^{2}$, as a consequence of the estimates proven in the previous section. For the sake of the following heuristic discussion, we shall neglect it. The operator $\mathbb{Q}_{4}$ can be rewritten as:
\begin{equation}
\mathbb{Q}_{4} = \frac{1}{2}\sum_{\sigma,\sigma'} \frac{1}{L^{3}}\sum_{p} \hat V(p) \hat b_{p,\sigma} \hat b_{-p,\sigma'} + \text{h.c.}\;,
\end{equation}
with:
\begin{equation}
\hat b_{p,\sigma} = \int dx\, e^{ip\cdot x} a_{\sigma}(u_{x}) a_{\sigma}(\overline{v}_{x}) = \sum_{\substack{k: k+p\notin \mathcal{B}^{\sigma}_{F} \\ k\in \mathcal{B}^{\sigma}_{F}}} \hat a_{k+p,\sigma} \hat a_{k,\sigma}\;.
\end{equation}
The $b$, $b^{*}$ operators  turn out to behave as `bosonic' operators, if evaluated on states with few particles. To begin, notice that, denoting by $\delta_{k,k'}$ the Kronecker delta:
\begin{eqnarray}
[ \hat b_{p,\sigma}, \hat b^{*}_{q,\sigma'} ] = \delta_{p,q} \delta_{\sigma, \sigma'} |\mathcal{B}_{F}^{\sigma}| - \delta_{\sigma, \sigma'}\sum_{\substack{k,k'\in \mathcal{B}_{F}^{\sigma} \\ k+p, k'+q \notin \mathcal{B}_{F}^{\sigma}}} ( \hat a^{*}_{k'+q,\sigma'} \hat a_{k+p,\sigma} \delta_{k,k'} + \hat a^{*}_{k',\sigma'} \hat a_{k,\sigma} \delta_{k+p, k'+q}) 
\end{eqnarray}
and $[ \hat b_{p,\sigma}, \hat b_{q,\sigma'} ] = 0$. In particular, on states $\psi$ that contain `few' particles, $L^{-3}\langle \psi, \mathcal{N}\psi \rangle = o(\rho)$:
\begin{equation}\label{eq:CCR}
L^{-3}\langle \psi, [ \hat b_{p,\sigma}, \hat b_{q,\sigma'}^{*} ] \psi \rangle = \delta_{p,q}\delta_{\sigma\sigma'} \rho_{\sigma} + o(\rho)\;,\qquad \langle \psi, [ \hat b_{p,\sigma}, \hat b_{q,\sigma'} ] \psi \rangle = 0\;.
\end{equation}
Eqs. (\ref{eq:CCR}) suggest that, on states with `few' particles, the operators $b_{p,\sigma}$, $b^{*}_{q,\sigma'}$ satisfy approximate canonical commutation relations. Therefore, $\mathbb{Q}_{4}$ is quadratic on these pseudo-bosonic operators, which suggests that one might attempt to evaluate its energetic contribution to the ground state energy via diagonalization. Unfortunately, the $\mathbb{H}_{0}$, $\mathbb{Q}_{1}$ operators do not have this structure.  Nevertheless, if evaluated on a suitable class of states, they behave as quadratic bosonic operators, as we shall see below. For instance, consider $\mathbb{H}_{0}$. Let us rescale the $b$ operators so that they satisfy (approximate) canonical commutation relations, by setting $\hat c_{p,\sigma} = \rho_{\sigma}^{-1/2} \hat b_{p,\sigma}$. One has:
\begin{eqnarray}
[\mathbb{H}_{0}, \hat c^{*}_{q,\sigma}] &=& \rho_{\sigma}^{-\frac{1}{2}} \sum_{\substack{ k: k+q\notin \mathcal{B}^{\sigma}_{F} \\ k\in \mathcal{B}^{\sigma}_{F} }} ( | |k+q|^{2} - \mu_{\sigma} | + ||k|^{2} - \mu_{\sigma}| ) \hat a_{k+q,\sigma} \hat a_{k,\sigma}\nonumber\\
&=& \rho_{\sigma}^{-\frac{1}{2}}\sum_{\substack{ k: k+q\notin \mathcal{B}^{\sigma}_{F} \\ k\in \mathcal{B}^{\sigma}_{F} }} ( |k+q|^{2} - |k|^{2}) \hat a_{k+q,\sigma} \hat a_{k,\sigma}\;.
\end{eqnarray}
Being $k$ inside the Fermi ball, $|k|^{2}\leq C\rho_{\sigma}^{2/3}$. Thus, for $|q| \gg \rho_{\sigma}^{1/3}$, it makes sense to approximate:
\begin{equation}
[\mathbb{H}_{0}, c^{*}_{q,\sigma}] \simeq |q|^{2} c^{*}_{q,\sigma}\;.
\end{equation}
Let:
\begin{equation}
\mathbb{K}_{\text{B}} = \frac{1}{L^{3}}\sum_{p,\sigma} |p|^{2} \hat c^{*}_{p,\sigma}\hat c_{p,\sigma}\;.
\end{equation}
Considering the $c$ operators as true bosonic operators, we see that $\mathbb{K}_{\text{B}}$ satisfies the same (approximate) commutation relation as $\mathbb{H}_{0}$. This suggests that, on states with few bosons, the operators $\mathbb{H}_{0}$ and $\mathbb{K}_{\text{B}}$ act similarly. For instance, consider a state with one boson, $\hat c^{*}_{q,\sigma}\Omega$. Then:
\begin{equation}
\mathbb{H}_{0} \hat c^{*}_{q,\sigma} \Omega = [\mathbb{H}_{0}, \hat c^{*}_{q,\sigma}] \Omega \simeq [ \mathbb{K}_{\text{B}}, \hat c^{*}_{q,\sigma} ] \Omega = \mathbb{K}_{\text{B}} \hat c^{*}_{q,\sigma}\Omega\;.
\end{equation}
More generally, it is reasonable to expect that, on states $R^{*}\psi$ with few particles:
\begin{equation}
\langle R^{*}\psi, \mathbb{H}_{0} R^{*}\psi\rangle \simeq \langle R^{*}\psi, \mathbb{K}_{\text{B}} R^{*}\psi \rangle\;.
\end{equation}
Finally, consider now the $\mathbb{Q}_{1}$ operator. Again, $\mathbb{Q}_{1}$ is not quadratic in the pseudo-bosons. To understand its action in terms of pseudo-bosons, we rewrite it as:
\begin{eqnarray}\label{eq:Q4com}
\mathbb{Q}_{1} &=& \frac{1}{2}\sum_{\sigma, \sigma'} \int dxdy\, V(x-y) a_{\sigma}^{*}(u_{x}) a_{\sigma'}^{*}(u_{y})a_{\sigma'}(u_{y})a_{\sigma}(u_{x}) \\
&=& \frac{1}{2} \sum_{\sigma, \sigma'}\int dxdy\, V(x-y) a_{\sigma}^{*}(u_{x}) a_{\sigma}(u_{x}) a_{\sigma'}^{*}(u_{y}) a_{\sigma'}(u_{y})\nonumber\\
&& - \frac{1}{2}\sum_{\sigma, \sigma'} \int dxdy\, V(x-y) u_{\sigma}(y,x) \delta_{\sigma, \sigma'} a_{\sigma}^{*}(u_{x}) a_{\sigma'}(u_{y})\nonumber\\
&\equiv& \frac{1}{2L^{3}} \sum_{p} \hat V(p) D_{p} D_{-p} - \frac{1}{2L^{3}} \sum_{p} \hat V(p) E_{p}\;,\nonumber
\end{eqnarray}
where:
\begin{equation}
D_{p} = \sum_{\substack{ k: k\notin \mathcal{B}_{F}^{\sigma} \\ k - p \notin \mathcal{B}_{F}^{\sigma}}} \hat a^{*}_{k} \hat a_{k-p}\;, \qquad E_{p} = \sum_{\substack{ k: k\notin \mathcal{B}_{F}^{\sigma} \\ k - p \notin \mathcal{B}_{F}^{\sigma}}} \hat a^{*}_{k} \hat a_{k}\;.
\end{equation}
A simple computation shows that:
\begin{eqnarray}\label{eq:Q4com2}
[ D_{p}, \hat c^{*}_{q,\sigma'}] &=& \rho_{\sigma}^{-\frac{1}{2}}\sum_{\substack{ k: k\in \mathcal{B}^{\sigma}_{F} \\ k + q \notin \mathcal{B}^{\sigma}_{F} \\ k+q-p \notin \mathcal{B}^{\sigma}_{F}}} a^{*}_{k,\sigma} a^{*}_{k+q-p,\sigma'} \\
&\simeq& \rho_{\sigma}^{-\frac{1}{2}}\sum_{\substack{ k: k\in \mathcal{B}^{\sigma}_{F} \\ k+q-p \notin \mathcal{B}^{\sigma}_{F}}} a^{*}_{k,\sigma} a^{*}_{k+q-p,\sigma'} \equiv \hat c^{*}_{q-p}\;.\nonumber
\end{eqnarray}
In the last step we neglected the constraint $k+q\notin \mathcal{B}^{\sigma}_{F}$: this is reasonable, if $|q|\gg \rho_{\sigma}^{1/3}$. Thus, considering the $c$ operators as true bosons, we see that Eq. (\ref{eq:Q4com2}) is the same commutation relation satisfied by replacing $D_{p}$ with the operator:
\begin{equation}
\mathbb{G}_{p} = \frac{1}{L^{3}}\sum_{\sigma, q} \hat c^{*}_{q-p,\sigma} \hat c_{q,\sigma}\;.
\end{equation}
In the same spirit, it is not difficult to see that, for $|q|\gg \rho_{\sigma}^{1/3}$:
\begin{equation}
[ E_{p}, \hat c^{*}_{q,\sigma'} ] \simeq \hat c^{*}_{q,\sigma'}\;,
\end{equation}
which is the same commutation relation satisfied replacing $E_{p}$ by $\mathbb{G}_{0}$. All in all, we expect that, on states $R^{*}\psi$ with few pseudo-bosonic excitations particles, with momenta $|q|\gg \rho^{1/3}$:
\begin{eqnarray}
\langle \psi, \mathcal{H} \psi \rangle &\simeq& E_{\text{HF}}(\omega) + \frac{1}{L^{3}}\sum_{p,\sigma} |p|^{2} \langle R^{*}\psi, \hat c^{*}_{p,\sigma} \hat c_{p,\sigma} R^{*}\psi \rangle\nonumber\\&& + \frac{1}{2L^{3}}\sum_{p,\sigma,\sigma'} \rho_{\sigma}^{\frac{1}{2}} \rho_{\sigma'}^{\frac{1}{2}}\hat V(p) \langle R^{*}\psi, ( \hat c_{p,\sigma}\hat c_{-p,\sigma'} + \text{h.c.}) R^{*}\psi \rangle\nonumber\\
&& + \frac{1}{2L^{9}}\sum_{\substack{p, q, q' \\ \sigma, \sigma'}} \hat V(p) \langle R^{*}\psi, \hat c^{*}_{q-p, \sigma} \hat c_{q, \sigma} \hat c^{*}_{q'+p,\sigma'} \hat c_{q', \sigma'} R^{*}\psi \rangle\nonumber\\&& - \frac{1}{2 L^{6}} \sum_{p,q,\sigma} \hat V(p) \langle R^{*}\psi, \hat c^{*}_{q,\sigma} \hat c_{q,\sigma} R^{*}\psi \rangle\;.
\end{eqnarray}
Now, suppose that $R^{*}\psi = T\xi$, with $\xi$ `close' to the Fock space vacuum and:
\begin{equation}
T = \exp \Big\{\frac{1}{L^{3}}\sum_{p} \rho_{\uparrow}^{\frac{1}{2}} \rho_{\downarrow}^{\frac{1}{2}} \hat \varphi(p) \hat c_{p,\uparrow} \hat c_{-p,\downarrow} - \text{h.c.} \Big\}\;,
\end{equation}
for some even, real function $\hat \varphi(p)$, to be chosen in a moment. Treating the $c$ operators as true bosons, the operator $T$ implements a bosonic Bogoliubov transformation. It acts as:
\begin{equation}
T^{*} c_{q,\sigma} T = c_{q,\sigma} - \rho_{\sigma}^{\frac{1}{2}} \rho_{-\sigma}^{\frac{1}{2}} \hat \varphi(q) c^{*}_{-q,-\sigma} + o(\rho^{2})\;.
\end{equation}
The state $T\Omega$ is a bosonic quasi-free state, and its energy can be computed via the bosonic Wick rule. We have:
\begin{eqnarray}
\frac{1}{L^{3}}\langle T\Omega, \hat c^{*}_{p,\sigma} \hat c_{p,\sigma} T\Omega \rangle &=& \rho_{\sigma} \rho_{-\sigma} \hat \varphi(p)^{2} + o(\rho^{2}) \\ 
\frac{1}{L^{3}}\langle T\Omega, \hat c_{p,\sigma} \hat c_{-p,\sigma'} T\Omega \rangle &=& -\delta_{\sigma,-\sigma'} \rho_{\sigma}^{\frac{1}{2}} \rho_{-\sigma}^{\frac{1}{2}}\varphi(p) + o(\rho)\nonumber\\
\frac{1}{L^{6}} \langle T\Omega, \hat c^{*}_{q-p, \sigma} \hat c_{q, \sigma} \hat c^{*}_{q'+p,\sigma'} \hat c_{q', \sigma'} T\Omega \rangle &=& \delta_{p,0} \delta_{q,q'} \delta_{\sigma, \sigma'} \rho_{\sigma} \rho_{-\sigma}\varphi(q)^{2} + \delta_{q, q'+p} \delta_{\sigma, \sigma'} \rho_{\sigma} \rho_{-\sigma}\varphi(q)^{2} + o(\rho^{2})\;.\nonumber
\end{eqnarray}
Supposing that $R^{*}\psi \simeq T \Omega$ one has, neglecting all $o(\rho^{2})$ terms:
\begin{eqnarray}\label{eq:heue0}
\langle R^{*}\psi, \mathcal{H} R^{*}\psi \rangle &\simeq& E_{\text{HF}}(\omega) + \sum_{p,\sigma} \rho_{\sigma}\rho_{-\sigma} |p|^{2} \hat \varphi(p)^{2} - \sum_{p, \sigma} \rho_{\sigma} \rho_{-\sigma} \hat V(p) \hat \varphi(p)\nonumber\\
&& + \frac{1}{2L^{3}} \sum_{p,q, \sigma} \rho_{\sigma} \rho_{-\sigma} \hat V(p) (\hat \varphi(q-p) \hat \varphi(-q) + \hat \varphi(q-p)^{2}) \nonumber\\
&& - \frac{1}{2L^{3}} \sum_{p,q,\sigma} \rho_{\sigma}\rho_{-\sigma} \hat V(p) \hat \varphi(q)^{2} \equiv E_{\text{HF}}(\omega) + 2L^{3}\rho_{\uparrow}\rho_{\downarrow} e(\varphi)
\end{eqnarray}
where:
\begin{equation}\label{eq:heue}
e(\varphi) := \frac{1}{L^{3}}\sum_{p} \Big(|p|^{2} \hat \varphi(p)^{2}  - \hat V(p) \hat \varphi(p) + \frac{1}{2} (\hat V * \hat \varphi)(p) \hat \varphi(p)\Big)\;.
\end{equation}
We are interested in the value of $\varphi$ that gives the smallest possible energy. The equation for the minimizer is:
\begin{equation}\label{eq:scat}
2 |p|^{2} \hat \varphi(p) - \hat V(p) + (\hat V* \hat \varphi)(p) = 0\;,
\end{equation}
{\it i.e.} the zero energy scattering equation (in a periodic box). Plugging the solution of this equation in (\ref{eq:heue}) we get:
\begin{equation}
e(\varphi) = -\frac{1}{2L^{3}}\sum_{p} \hat V(p) \hat \varphi(p)\;;
\end{equation}
therefore, neglecting $o(\rho^{2})$ terms:
\begin{equation}
\frac{\langle R^{*}\psi, \mathcal{H} R^{*}\psi \rangle}{L^{3}} \simeq \frac{E_{\text{HF}}(\omega)}{L^{3}} - \rho_{\uparrow} \rho_{\downarrow} \frac{1}{L^{3}}\sum_{p}\hat V(p) \hat \varphi(p)\;.
\end{equation}
In conclusion, recalling the expression (\ref{eq:enHF}) for $E_{\text{HF}}(\omega)$, for $L$ large enough:
\begin{equation}
\frac{\langle R^{*}\psi, \mathcal{H} R^{*}\psi \rangle}{L^{3}} = \frac{3}{5}(6\pi^{2})^{\frac{2}{3}} (\rho_{\uparrow}^{\frac{5}{3}} + \rho_{\downarrow}^{\frac{5}{3}}) + \rho_{\uparrow}\rho_{\downarrow} \Big( \hat V(0) - \frac{1}{L^{3}} \sum_{p} \hat V(p) \hat \varphi(p) \Big) + o(\rho^{2})\;.
\end{equation}
The term in parenthesis can be rewritten as, as $L\to \infty$:
\begin{equation}
\int dx\, V(x) (1 - \varphi(x)) \equiv 8\pi a\;,
\end{equation}
where $a$ is the scattering length of the potential $V$. This reproduces the final result (\ref{eq:main}). Even at the heuristic level, however, there is a problem: the operators $\mathbb{H}_{0}$, $\mathbb{Q}_{1}$, $\mathbb{Q}_{4}$ can be represented in terms of bosons {\it provided} they act on states with few bosons, with momenta $|q|\gg \rho^{1/3}$. The state $T\Omega$ is given by a superposition of states by an even number of bosons, with momenta in the support of $\hat \varphi(p)$. To enforce the momentum contraint, we would like the function $\hat \varphi(p)$ to be supported for $|p|\gg \rho^{1/3}$. Equivalently, we would like to regularize $\varphi(x)$, so that it is supported on a ball of radius $1\ll R\ll \rho^{-1/3}$. Let $\varphi_{\infty}$ be the solution of the scattering equation in a ball $B\equiv B_{\rho^{-\gamma}}(0) \subset \mathbb{R}^{3}$ centered at zero, with radius $\rho^{-\gamma}$ with $0\leq \gamma \leq 1/3$ and Neumann boundary conditions (see Appendix \ref{sec:scat}):
\begin{equation}\label{eq:scat2}
-\Delta (1- \varphi_{\infty}) + \frac{1}{2}V_{\infty}(1 - \varphi_{\infty}) = \lambda_{\gamma}(1-\varphi_{\infty})\;,\qquad \varphi_{\infty} = \nabla \varphi_{\infty} = 0\quad \text{on $\partial B$,}
\end{equation}
with $|\lambda_{\gamma}| \leq C\rho^{3\gamma}$. For $x$ away from the support of $V_{\infty}$, the solution of (\ref{eq:scat2}) behaves as:
\begin{equation}
\varphi_{\infty}(x) \sim \frac{a_{\gamma}}{|x|}\;,\qquad 8\pi a_{\gamma} = \int dx\, V_{\infty}(x) (1 - \varphi_{\infty}(x))\;.
\end{equation}
The function $\varphi_{\infty}$ is extended to $\mathbb{R}^{3}$ by setting $\varphi_{\infty}(x) = 0$ for $x\notin B$. To make it compatible with the periodic boundary conditions on $\Lambda_{L}$, we shall take $\varphi$ as the periodization of $\varphi_{\infty}$, 
\begin{equation}\label{eq:phiper}
\varphi(x) = \sum_{n\in \mathbb{Z}^{3}} \varphi_{\infty}(x + n_{1} e_{1} L + n_{2} e_{2} L + n_{3} e_{3} L)\;.
\end{equation}
Equivalently:
\begin{equation}\label{eq:per}
\varphi(x) = \frac{1}{L^{3}} \sum_{p\in \frac{2\pi}{L}\mathbb{Z}^{3}} e^{ip\cdot x} \hat \varphi_{\infty}(p)\;,
\end{equation}
where $\hat \varphi_{\infty}(p) = \int_{\mathbb{R}^{3}} dx\, e^{-ip\cdot x} \varphi_{\infty}(x)$. Plugging the solution of this equation in $e(\varphi)$, one obtains, as $L\to \infty$, using that $\varphi(x) \to  \varphi_{\infty}(x)$ pointwise:
\begin{equation}
e(\varphi) = -\frac{1}{2} \int \frac{dp}{(2\pi)^{3}}\, \hat V_{\infty}(p) \hat \varphi_{\infty}(p) + \rho^{3\gamma} \int \frac{dp}{(2\pi)^{3}}\,\hat \varphi_{\infty}(p) (1 - \hat \varphi_{\infty}(p))\;.
\end{equation}
It is well-known that $|a - a_{\gamma}| \leq C\rho^{\gamma}$, see Lemma \ref{lem:scat}. Thus, the first term combined with the interaction energy of the free Fermi gas reproduces the scattering length, while the last term is bounded by:
\begin{equation}
\Big|\rho^{3\gamma} \int \frac{dp}{(2\pi)^{3}}\, \hat \varphi_{\infty}(p) (1 - \hat \varphi_{\infty}(p))\Big| \leq \rho^{3\gamma} \varphi_{\infty}(0) + \rho^{3\gamma} \| \varphi_{\infty} \|_{2}^{2} \leq C\rho^{2\gamma}\;,
\end{equation}
where we used that $\varphi_{\infty}(x) \sim |x|^{-1}$ for large $|x|$, and that $\varphi_{\infty}(x)$ is compactly supported. This amounts to a small correction to the ground state energy, which does not affect the $\rho^{2}$ term.
\begin{remark}
\begin{itemize}
\item[(i)] As the above heuristics suggests, the choice $\gamma = 1/3$ is expected to be the correct one in order to compute the ground state energy density with a $O(\rho^{7/3})$ precision; this is the order of magnitude of the next correction to the ground state energy after $8\pi a\rho_{\uparrow} \rho_{\downarrow}$, \cite{HY}.
\item[(ii)] As mentioned in the introduction, similar bosonization ideas have been recently used in order to prove the validity of the random phase approximation, for the ground state energy of interacting fermionic systems in the mean-field regime \cite{BNPSS, BNPSS2}. There, the emergent bosonic degrees of freedom correspond to particle-hole excitations that are localized around suitable `patches' on the Fermi surface (whose radius grows proportionally to $N^{1/3}$), around which the kinetic energy operator $\mathbb{H}_{0}$ can be approximated by a linear dispersion for the bosonic modes.
\end{itemize}
\end{remark}

\subsection{Definition of the correlation structure}\label{sec:cor}
Here we shall give a precise definition of the unitary operator $T$, introduced in the previous section. Before doing this, let us fix some notation, that will be used in the rest of the paper.
\medskip

\noindent{\bf Notations.}
\begin{itemize}
\item We shall denote by $\chi: \mathbb{R}_{+} \to [0,1]$ a smooth, nonincreasing cutoff function such that $\chi(t) = 1$ for $t\leq 1$ and $\chi(t) = 0$ for $t\geq 2$. We shall also use the notation $\chi^{c} = 1 - \chi$.
\item We shall denote by $C$ a general constant, possibly dependent on $V$, whose value might change from line to line.
\item We shall denote by $C_{\beta}$ a general prefactor of the form $C \rho^{-c\beta}$. We will not keep track of such $\rho$-dependence of the bounds. At the end, by taking the interaction potential regular enough, we will be able to consider $\beta > 0$ arbitrarily small.
\item Unless otherwise stated, we shall use the notation $\|\cdot \|_{p}$ for $\|\cdot \|_{L^{p}(\Lambda_{L})}$.
\item We shall use the notations $\sum_{\sigma}$ for $\sum_{\sigma = \uparrow\downarrow}$, $\int dx$ for $\int_{\Lambda_{L}}dx$ and $\sum_{k}$ for $\sum_{k\in \frac{2\pi}{L}\mathbb{Z}^{3}}$.
\item We shall denote by $|\cdot|$ the usual Euclidean distance on $\mathbb{R}^{3}$, and by $|\cdot|_{L}$ the distance on the torus: $|x - y|_{L} = \min_{n\in \mathbb{Z}^{3}} | x - y - n_{1} e_{1} - n_{2}e_{2} - n_{3} e_{3}  |$, with $\{e_{i}\}$ the standard orthonormal basis of $\mathbb{R}^{3}$.
\item We shall use the notation $u_{x}$, $v_{x}$ to denote the functions $y\mapsto u_{x}(y) \equiv u(x;y)$, $y\mapsto v_{x}(y)\equiv v(x;y)$.
\item We shall denote by $\frak{e}_{L}$ a general finite size correction, subleading with respect to $L^{3}$.
\end{itemize}
Let us introduce regularized versions of the operators $u$ and of $v$, introduced in Section \ref{sec:bogHcorr}. We define:
\begin{equation}\label{eq:reguv}
\vr_{\sigma, \sigma'} = \frac{\delta_{\sigma, \sigma'}}{L^3} \sum_{k} \hat v_{\sigma}^{\text{r}}(k) |\overline{f_{k}}\rangle \langle f_{k}|\;,\qquad \ur_{\sigma,\sigma'} =  \frac{\delta_{\sigma,\sigma'}}{L^{3}} \sum_{k} \hat u_{\sigma}^{\text{r}}(k) |f_{k}\rangle \langle f_{k}|\;,
\end{equation}
where $\hat v_{\sigma}^{\text{r}}(k)$, $\hat u_{\sigma}^{\text{r}}(k)$ are smooth, radial functions with the following properties. Let $\alpha = \frac{1}{3} + \frac{\epsilon}{3}$, with $\epsilon>0$, and let $\beta>0$, to be chosen later on:
\begin{equation}\label{eq:urvrdef}
\hat v_{\sigma}^{\text{r}}(k) = \left\{ \begin{array}{cc} 1 & \text{for $|k| < k_{F}^{\sigma} - \rho_{\sigma}^{\alpha}$} \\ 0 & \text{for $|k|\geq k_{F}^{\sigma}$} \end{array} \right.\qquad \hat u^{\text{r}}_{\sigma}(k) = \left\{ \begin{array}{cc} 0 & \text{for $|k|\leq k^{\sigma}_{F}$} \\ 1 & \text{for $2k^{\sigma}_{F}\leq |k|\leq \frac{3}{2}\rho_{\sigma}^{-\beta}$} \\ 0 & \text{for $|k| \geq 2\rho_{\sigma}^{-\beta}$.} \end{array} \right.
\end{equation}
Concretely, we choose:
\begin{equation}\label{eq:uvdef0}
\hat v_{\sigma}^{\text{r}}(k) = \chi\Big( \frac{|k| - (k_{F}^{\sigma} - 2\rho_{\sigma}^{\alpha})}{\rho_{\sigma}^{\alpha}} \Big)\;,\qquad \hat u^{\text{r}}_{\sigma}(k) = \chi(\rho_{\sigma}^{\beta} |k|) \chi^{c}\Big( \frac{|k|}{k_{F}^{\sigma}} \Big)\;.
\end{equation}
Notice that these regularized operators preserve the important orthogonality relation in Eq. (\ref{eq:uvprop}):
\begin{equation}
u^{\text{r}} \overline{v}^{\text{r}} = 0\;.
\end{equation}
We shall also denote by $\omegar$ the regularized version of the $\omega$, $\omegar = \overline{\vr} \vr$. The next proposition collects some useful bounds for these regularized objects.
\begin{proposition}[Bounds for the regularized kernels.]\label{prp:decreg} The following estimates hold true, for all $n\in \mathbb{N}$ and for $L$ large enough:
\begin{equation}\label{eq:decuv}
\| u_{x,\sigma}^{\text{r}} \|_{2} \leq C \rho_{\sigma}^{-\frac{3\beta}{2}}\;,\qquad \| v_{x,\sigma}^{\text{r}} \|_{2} \leq \rho_{\sigma}^{\frac{1}{2}}\;,\qquad \| \omega^{\text{r}}_{x,\sigma} \|_{1} \leq C\rho_{\sigma}^{-\frac{\epsilon}{3}}\;,\qquad \| u^{\text{r}}_{x,\sigma} \|_{1} \leq C\;.
\end{equation}
\end{proposition}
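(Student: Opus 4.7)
The plan is to prove all four bounds via Fourier analysis on the torus, exploiting that each regularized kernel is translation-invariant with explicit Fourier symbol. Parseval handles the $L^{2}$ bounds, while the $L^{1}$ bounds rely on the smoothness of the cutoffs built into (\ref{eq:urvrdef})--(\ref{eq:uvdef0}) and the radial symmetry of the multipliers.

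For the $L^{2}$ estimates, the Fourier coefficients of $y\mapsto\vr_{\sigma}(x;y)$ have modulus $|\hat v_{\sigma}^{\mathrm{r}}(k)|$ (up to an $(x,k)$-dependent phase), so Parseval gives
\begin{equation*}
\|\vr_{x,\sigma}\|_{2}^{2}=\frac{1}{L^{3}}\sum_{k}|\hat v_{\sigma}^{\mathrm{r}}(k)|^{2}\leq\frac{|\{k\,:\,|k|\leq k_{F}^{\sigma}\}|}{L^{3}}=\rho_{\sigma},
\end{equation*}
using $0\leq\hat v_{\sigma}^{\mathrm{r}}\leq 1$ and the support condition. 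The same calculation applied to $\ur$, whose symbol is bounded by $1$ and supported in $\{|k|\leq 2\rho_{\sigma}^{-\beta}\}$, yields $\|\ur_{x,\sigma}\|_{2}^{2}\leq C\rho_{\sigma}^{-3\beta}$.

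For $\|\omegar_{x,\sigma}\|_{1}$, translation invariance gives $\omegar_{\sigma}(x;y)=w_{\sigma}(x-y)$ with $w_{\sigma}$ the inverse Fourier transform of $|\hat v_{\sigma}^{\mathrm{r}}|^{2}$: a smooth radial approximation of $\mathbf{1}_{\{|k|\leq k_{F}^{\sigma}\}}$ with transition width $\rho_{\sigma}^{\alpha}$ at the Fermi sphere. I estimate $\|w_{\sigma}\|_{L^{1}(\Lambda_{L})}$ in three regions. For $|z|\leq 1/k_{F}^{\sigma}$ the trivial bound $\|w_{\sigma}\|_{\infty}\leq L^{-3}\sum_{k}|\hat v_{\sigma}^{\mathrm{r}}|^{2}\leq\rho_{\sigma}$ integrates to $O(1)$. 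For $1/k_{F}^{\sigma}\leq|z|\leq\rho_{\sigma}^{-\alpha}$, I use the radial identity $w_{\sigma}(z)=\frac{1}{2\pi^{2}|z|}\int_{0}^{\infty}k|\hat v_{\sigma}^{\mathrm{r}}(k)|^{2}\sin(k|z|)\rd k$ (up to $\mathfrak{e}_{L}$ corrections from passing to the continuum); one integration by parts in $k$ yields $|w_{\sigma}(z)|\leq Ck_{F}^{\sigma}/|z|^{2}$, integrating to $Ck_{F}^{\sigma}\rho_{\sigma}^{-\alpha}=C\rho_{\sigma}^{1/3-\alpha}=C\rho_{\sigma}^{-\epsilon/3}$. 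For $|z|\geq\rho_{\sigma}^{-\alpha}$, iterated radial integration by parts produces $|w_{\sigma}(z)|\leq C_{n}k_{F}^{\sigma}\rho_{\sigma}^{-(n-1)\alpha}/|z|^{n+1}$, and for any $n\geq 3$ the tail integral evaluates to $C_{n}k_{F}^{\sigma}\rho_{\sigma}^{-\alpha}=C_{n}\rho_{\sigma}^{-\epsilon/3}$. Summing the three regions gives the claim.

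Finally $\|\ur_{x,\sigma}\|_{1}\leq C$ follows from the same scheme applied to $\ur_{\sigma}(x;y)=h_{\sigma}(x-y)$, with multiplier $\hat u_{\sigma}^{\mathrm{r}}$ supported in $\{k_{F}^{\sigma}\leq|k|\leq 2\rho_{\sigma}^{-\beta}\}$. The trivial bound gives $\|h_{\sigma}\|_{\infty}\leq L^{-3}\sum_{k}|\hat u_{\sigma}^{\mathrm{r}}|\leq C\rho_{\sigma}^{-3\beta}$, and since the outer transition scale $\rho_{\sigma}^{-\beta}$ dominates for $\beta>0$, iterated integration by parts produces $|z|^{n}|h_{\sigma}(z)|\leq C_{n}\rho_{\sigma}^{(n-3)\beta}$. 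Splitting at $|z|=\rho_{\sigma}^{\beta}$, the near contribution $\rho_{\sigma}^{-3\beta}\cdot\rho_{\sigma}^{3\beta}=O(1)$ and the far tail $\int_{|z|\geq\rho_{\sigma}^{\beta}}\rho_{\sigma}^{(n-3)\beta}|z|^{-n}\rd z=O(1)$ (for $n>3$) are both bounded uniformly. The main technical care throughout is reserved for the $\omegar$ bound, where a naive estimate based on Cartesian integration by parts alone would incur a power loss worse than $\rho_{\sigma}^{-\epsilon/3}$; the sharper bound requires the oscillation estimate $|w_{\sigma}(z)|\lesssim k_{F}^{\sigma}/|z|^{2}$ obtained via radial Fourier analysis, together with the UV cutoff at scale $\rho_{\sigma}^{-\alpha}$ supplied by the regularization.
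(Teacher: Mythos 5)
Your handling of the $L^{2}$ bounds and of $\|\omega^{\text{r}}_{x}\|_{1}$ matches the paper's method: radial Fourier representation, integration by parts, and a split of the $|z|$-integration into near/far regions. The paper uses two regions (split at $|z|\sim\rho^{-\alpha}$) with the bounds $\rho^{1/3}/|z|^{2}$ and $\rho^{1/3-2\alpha}/|z|^{4}$; your three-region split, using the trivial $L^{\infty}$ bound $\rho_{\sigma}$ inside $|z|\lesssim 1/k_{F}^{\sigma}$, is a minor refinement and yields the same $\rho^{-\epsilon/3}$.

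There is a gap, however, in your argument for $\|u^{\text{r}}_{x}\|_{1}\leq C$. You claim that iterated integration by parts yields $|z|^{n}|h_{\sigma}(z)|\leq C_{n}\rho_{\sigma}^{(n-3)\beta}$ ``since the outer transition scale $\rho_{\sigma}^{-\beta}$ dominates''. This is not correct: the multiplier $\hat u_{\sigma}^{\text{r}}(k)=\chi(\rho_{\sigma}^{\beta}|k|)\,\chi^{c}(|k|/k_{F}^{\sigma})$ has \emph{two} transition regions, one at the Fermi sphere (momentum scale and width both $\sim k_{F}^{\sigma}\sim\rho_{\sigma}^{1/3}$) and one at the ultraviolet cutoff (scale and width $\sim\rho_{\sigma}^{-\beta}$). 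After $n-1$ integrations by parts in the radial variable, the integral $\int\bigl|\partial_{t}^{n-1}\bigl(t\,\hat u_{\sigma}^{\text{r}}(t)\bigr)\bigr|\,dt$ picks up a contribution $\sim\rho_{\sigma}^{(n-3)\beta}$ from the ultraviolet transition \emph{and} a contribution $\sim(k_{F}^{\sigma})^{3-n}=\rho_{\sigma}^{(3-n)/3}$ from the infrared transition. For $n\geq 4$ the latter diverges as $\rho_{\sigma}\to 0$, so your stated bound on $|z|^{n}|h_{\sigma}(z)|$ is off by a factor of $\rho_{\sigma}^{(3-n)(1/3+\beta)}$, and plugging the corrected pointwise decay into your far-tail integral over $\{|z|\geq\rho_{\sigma}^{\beta}\}$ no longer gives $O(1)$. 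The two scales must be separated: the natural way is to write $u_{\sigma}^{\text{r}}=\delta_{\sigma}^{\text{r}}-\nu_{\sigma}$ with $\hat\delta_{\sigma}^{\text{r}}(k)=\chi(\rho_{\sigma}^{\beta}|k|)$ and $\hat\nu_{\sigma}(k)=\chi(\rho_{\sigma}^{\beta}|k|)\chi(|k|/k_{F}^{\sigma})$, so that each piece is a smooth approximate identity at a single scale with uniformly bounded $L^{1}$-norm. In fact the paper's remark that ``the reason for the uniform bound in $\rho$ is that the infrared part of $\hat u^{\text{r}}(k)$ is smoothened on scale $\rho^{1/3}$ instead of $\rho^{1/3+\epsilon/3}$'' is precisely the point your argument bypasses: the infrared transition \emph{does} contribute, and it is the wider (order $k_{F}^{\sigma}$) smoothing there, compared to the $\rho_{\sigma}^{\alpha}$-smoothing in $\hat v^{\text{r}}$, that is responsible for $\|u^{\text{r}}_{x}\|_{1}$ being $O(1)$ rather than $O(\rho^{-\epsilon/3})$.
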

\begin{proof} The first two estimates of (\ref{eq:decuv}) easily follow from (\ref{eq:urvrdef}). Consider now the last two. Let us prove the estimate for $\omega^{\text{r}}(x,y)$ (we shall omit the spin label for simplicity). Let $\omega_{\infty}^{\text{r}}(x,y)$ be the infinite volume limit of $\omega^{\text{r}}(x,y)$. We have, performing the angular integration:
\begin{eqnarray}
\omega_{\infty}^{\text{r}}(x,y) &=& \int d^{3}k\, \hat \omega^{\text{r}}(k) e^{ik\cdot (x-y)} \nonumber\\
&=& \frac{4\pi}{|x-y|} \int dt\, t \hat \omega^{\text{r}}(t) \sin (t |x-y|)\;,
\end{eqnarray}
where we used that $\hat \omega^{\text{r}}(k)\equiv \hat \omega^{\text{r}}(|k|)$. Recall that $\hat \omega^{\text{r}}(|k|)$ is a smooth, compactly supported function, given by $(\hat v^{\text{r}}(k))^{2}$, Eq. (\ref{eq:uvdef0}). Using that $\sin (t |x-y|) = -|x-y|^{-1} \partial_{t} \cos (t|x-y|)$ and integrating by parts, we get:
\begin{equation}\label{eq:xy2}
\omega_{\infty}^{\text{r}}(x,y) = \frac{4\pi}{|x-y|^{2}} \int dt\, ( \hat \omega^{\text{r}}(t) + t \partial_{t} \hat \omega^{\text{r}}(t) ) \cos (t |x-y|)\;.
\end{equation}
That is:
\begin{equation}\label{eq:esto1}
|\omega_{\infty}^{\text{r}}(x,y)| \leq \frac{C\rho^{\frac{1}{3}}}{|x-y|^{2}}\;.
\end{equation}
Integrating by parts two more times, we get:
\begin{eqnarray}\label{eq:esto2}
| \omega^{\text{r}}_{\infty}(x;y) | &\leq& \frac{C}{|x-y|^{4}} \int dt\, ( | \partial_{t}^{2} \hat \omega^{\text{r}}(t)| + t |\partial_{t}^{3} \hat \omega^{\text{r}}(t)  | ) \nonumber\\
&\leq& \frac{C\rho^{\frac{1}{3} - 2\alpha}}{|x-y|^{4}}\;;
\end{eqnarray}
we used that $\alpha \geq 1/3$, to conclude that the last term in the right-hand side of the first line dominates over the first. Therefore, putting together (\ref{eq:esto1}), (\ref{eq:esto2}):
\begin{eqnarray}\label{eq:13a}
\| \omega^{\text{r}}_{\infty,x} \|_{1} &\leq& \int_{|y|\leq \rho^{-\alpha}} dy\,  \frac{C\rho^{\frac{1}{3}}}{|y|^{2}} + \int_{|y| > \rho^{-\alpha}} dy\,\frac{C\rho^{\frac{1}{3} - 2\alpha}}{|y|^{4}} \nonumber\\
&\leq& C\rho^{\frac{1}{3} - \alpha}  \equiv C\rho^{-\frac{\epsilon}{3}}\;,
\end{eqnarray}
recall that $\alpha = \frac{1}{3} + \frac{\epsilon}{3}$. To prove the estimate for $\| \omega^{\text{r}}_{x} \|_{1}$, we write, for $n\geq 0$ large enough, independent of $L$, and for $L$ large enough:
\begin{eqnarray}
\| \omega^{\text{r}}_{x} \|_{1} &\leq& \| \omega^{\text{r}}_{x} \chi(|\cdot|_{L} \leq \rho^{-n}) \|_{1} + \| \omega^{\text{r}}_{x} \chi(|\cdot|_{L} > \rho^{-n}) \|_{1} \nonumber\\
&\leq& \| \omega^{\text{r}}_{\infty, x} \chi(|\cdot|_{L} \leq \rho^{-n}) \|_{1} + \| (\omega^{\text{r}}_{x} - \omega^{\text{r}}_{\infty,x})  \chi(|\cdot|_{L} \leq \rho^{-n}) \|_{1} + \| \omega^{\text{r}}_{x} \chi(|\cdot|_{L} > \rho^{-n}) \|_{1} \nonumber\\
&\leq& C\rho^{-\frac{\epsilon}{3}}\;.
\end{eqnarray}
To prove the last inequality, we used that, for any fixed $x$, $|\omega^{\text{r}}_{\infty}(x) - \omega^{\text{r}}(x)| \leq C/L$. Also, we used that, by the smoothness of $\hat \omega^{\text{r}}$, $\omega^{\text{r}}(x;y)$ decays faster than any power in $|x-y|_{L}$ (nonuniformly in $\rho$), which allows to control the last term in the second line. The proof of the last estimate in Eq. (\ref{eq:decuv}) is completely analogous, and we shall omit the details; the reason for the uniform bound in $\rho$ is that the infrared part of $\hat u^{\text{r}}(k)$ is smoothened on scale $\rho^{\frac{1}{3}}$ instead of $\rho^{\frac{1}{3} + \frac{\epsilon}{3}}$.
\end{proof}
As hinted by the heuristic discussion in Section \ref{sec:heu}, in the following an important role will be played by the solution of the scattering equation (\ref{eq:scat2}). We shall denote by $\varphi$ the periodization of $\varphi_{\infty}$ over $\Lambda_{L}$, recall Eq. (\ref{eq:per}).
\begin{definition}[The correlation structure] Let $\gamma \geq 0$, $\lambda \in [0,1]$. We define the unitary operator $T_{\lambda}: \mathcal{F} \to \mathcal{F}$ as:
\begin{equation}\label{eq:defT}
T_{\lambda} := e^{\lambda (B - B^{*})}\;,\qquad B := \int dzdz'\, \varphi(z-z') a_{\uparrow}(u^{\text{r}}_{z}) a_{\uparrow}(\overline{v}^{\text{r}}_{z}) a_{\downarrow}(u^{\text{r}}_{z'}) a_{\downarrow}(\overline{v}^{\text{r}}_{z'})\;.
\end{equation}
We shall also set $T \equiv T_{1}$.
\end{definition}
The operator $T$ is a regularized version of the operator introduced in Section \ref{sec:heu}. The reason for the smoothing of $u^{\text{r}}$, $v^{\text{r}}$ is that the function $|\omega(x-y)|$ is not integrable uniformly in $L$. As we shall see, to bound the errors neglected in the analysis of Section \ref{sec:heu} we will have to control volume integrations, which will be possible thanks to the improved decay of $|\omega^{\text{r}}(x;y)|$, $|u^{\text{r}}(x;y)|$. Also, the ultraviolet cutoff in the definition of $u^{\text{r}}$ makes the operators involved in the definition of $T$ bounded. By (\ref{eq:bdferm}), recalling (\ref{eq:decuv}):
\begin{equation}
\| a(u^{\text{r}}_{x}) \| \leq \| u^{\text{r}}_{x} \|_{2} \leq C\rho^{-\frac{3\beta}{2}} \leq C_{\beta}\;,\qquad \| a(\overline{v}^{\text{r}}_{x}) \| \leq \| \overline{v}^{\text{r}}_{x} \|_{2} \leq C\rho^{\frac{1}{2}}\;.
\end{equation}
Finally, notice that the $B$ operator is extensive: this follows from the fact that the $z, z'$ variables satisfy $|z - z'|_{L} \leq \rho^{-\gamma}$, due to the compact support of $\varphi$ in $\Lambda_{L}$. In particular, thanks to the estimate $\|\varphi_{\infty}\|_{L^{1}(\mathbb{R}^{3})} \leq C\rho^{-2\gamma}$, see Appendix \ref{sec:scat}, and recalling (\ref{eq:phiper}), we have:
\begin{equation}
\| \varphi \|_{L^{1}(\Lambda_{L})} \leq C\| \varphi_{\infty} \|_{L^{1}(\mathbb{R}^{3})} \leq C\rho^{-2\gamma}\;.
\end{equation}
As it will be clear at the end of our analysis, all these regularizations will not affect the computation of the ground state energy density at order $\rho^{2}$. They will however determine the size of the subleading error terms.
\section{Bounds on interpolating states}\label{sec:prop}
\subsection{Introduction}
In this section we shall propagate the a priori bounds on $\mathcal{N}$, $\mathbb{H}_{0}$, $\mathbb{Q}_{1}$ proven in Section \ref{sec:bogHcorr}, over the states
\begin{equation}
\xi_{\lambda} := T^{*}_{\lambda} R^{*}\psi\;,
\end{equation}
with $\psi$ being an approximate ground state, in the sense of Definition \ref{def:appgs}. One of the main results of the section will be that, for $5/18\leq \gamma \leq 1/3$, the following bounds hold true. Let $\lambda \in [0,1]$. Then:
\begin{equation}\label{eq:propapri}
\langle \xi_{\lambda}, \mathcal{N} \xi_{\lambda} \rangle \leq C L^{\frac{3}{2}} \rho^{\frac{1}{6}} \| \mathbb{H}_{0}^{\frac{1}{2}} \xi_{1} \| + CL^{3} \rho^{2 - \gamma}\;,\quad \langle \xi_{\lambda}, \mathbb{H}_{0} \xi_{\lambda} \rangle \leq CL^{3} \rho^{2}\;,\quad \langle \xi_{\lambda}, \mathbb{Q}_{1} \xi_{\lambda} \rangle \leq CL^{3} \rho^{2}\;.
\end{equation}
In particular, these bounds show that the a priori estimates of $\mathbb{H}_{0}$, $\mathbb{Q}_{1}$ on $\xi_{0} = R^{*}\psi$, recall (\ref{eq:aprH0}), (\ref{eq:aprQ1}), do not deterioriate over $\xi_{\lambda} = T^{*}_{\lambda} R^{*} \psi$, for $\lambda \in [0,1]$. Along the way, we shall prove a number of auxiliary results, that will play an important role in the computation of the ground state energy of the system, in Sections \ref{sec:lwbd}, \ref{sec:upper}. More precisely, we will prove that:
\begin{equation}
\frac{d}{d\lambda} \langle \xi_{\lambda}, \mathbb{H}_{0} \xi_{\lambda} \rangle = -\langle \xi_{\lambda}, \mathbb{T}_{1} \xi_{\lambda}\rangle + \mathcal{E}_{\mathbb{H}_{0}}(\xi_{\lambda})\;,\qquad \frac{d}{d\lambda} \langle \xi_{\lambda}, \mathbb{Q}_{1} \xi_{\lambda} \rangle = -\langle \xi_{\lambda}, \mathbb{T}_{2} \xi_{\lambda}\rangle + \mathcal{E}_{\mathbb{Q}_{1}}(\xi_{\lambda})
\end{equation}
where $ \mathcal{E}_{\mathbb{H}_{0}}(\xi_{\lambda})$, $\mathcal{E}_{\mathbb{Q}_{1}}(\xi_{\lambda})$ are two error terms, subleading with respect to $L^{3} \rho^{2}$, and $\mathbb{T}_{1}$, $\mathbb{T}_{2}$ are given by:
\begin{eqnarray}
\mathbb{T}_{1} &=& -\int dxdy\, \theta(|x-y|_{L} < \rho^{-\gamma})(-2\Delta \varphi)(x-y) a_{\uparrow}(u^{\text{r}}_{x}) a_{\uparrow}(\overline{v}^{\text{r}}_{x}) a_{\downarrow}(u^{\text{r}}_{y}) a_{\downarrow}(\overline{v}^{\text{r}}_{y}) + \text{h.c.}\nonumber\\
\mathbb{T}_{2} &=& -\int dxdy\, V(x-y) \varphi(x-y) a_{\uparrow}(\overline{v}^{\text{r}}_{x}) a_{\uparrow}(u_{x}) a_{\downarrow}(\overline{v}^{\text{r}}_{y}) a_{\downarrow}(u_{y})  + \text{h.c.}\;,
\end{eqnarray}
with $\theta(\cdot)$ the characteristic function. The important point to notice here is that $\mathbb{T}_{1}$, $\mathbb{T}_{2}$ have the same `bosonic' structure as $\widetilde{\mathbb{Q}}_{4}$. In particular, we will prove the following cancellation, using the fact that the function $1 - \varphi$ solves the scattering equation:
\begin{equation}\label{eq:scateqcanc}
\langle \xi_{\lambda}, (\mathbb{T}_{1} + \mathbb{T}_{2} + \widetilde{\mathbb{Q}}_{4}) \xi_{\lambda} \rangle = o(L^{3}\rho^{2})\;.
\end{equation}
The results (\ref{eq:propapri})-(\ref{eq:scateqcanc}) are the main technical ingredients needed in order to prove our main result, Theorem \ref{thm:main}, in Sections \ref{sec:lwbd}, \ref{sec:upper}.
\subsection{Propagation of the estimates: preliminaries}
To begin, let us start by propagating the a priori estimate for $\mathcal{N}$. The propagation estimate we shall obtain below is not optimal; it will be improved at the end of the section. Nevertheless, Proposition \ref{prp:propN} will be enough to propagate the a priori bounds for $\mathbb{H}_{0}$ and for $\mathbb{Q}_{1}$, which is our first task.
\begin{proposition}[Propagation estimate for $\mathcal{N}$.]\label{prp:propN} Let $\psi \in \mathcal{F}$ such that $\|\psi\| = 1$. Let $\gamma \leq 1/2$. Then, the following bound holds, for $\lambda \in [0,1]$:
\begin{equation}\label{eq:propN}
|\partial_{\lambda}\langle \xi_{\lambda}, \mathcal{N} \xi_{\lambda} \rangle| \leq C\langle \xi_{\lambda}, \mathcal{N}  \xi_{\lambda}  \rangle + C L^3 \rho^{2- \gamma}\;.
\end{equation} 
\end{proposition}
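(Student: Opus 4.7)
The natural starting point is to differentiate $\xi_\lambda = T_\lambda^* R^*\psi$. Since $T_\lambda = e^{\lambda(B-B^*)}$ is generated by the anti-self-adjoint operator $B - B^*$, one has $\partial_\lambda \xi_\lambda = (B^*-B)\xi_\lambda$, so
\[
\partial_\lambda\langle\xi_\lambda, \mathcal{N}\xi_\lambda\rangle = \langle\xi_\lambda, [\mathcal{N}, B^*-B]\xi_\lambda\rangle.
\]
Since $B$ is a quartic product of annihilation operators $a_\uparrow(u^r_z)a_\uparrow(\bar v^r_z)a_\downarrow(u^r_{z'})a_\downarrow(\bar v^r_{z'})$, one has $[\mathcal{N}, B] = -4B$ and $[\mathcal{N}, B^*] = 4B^*$; hence $\partial_\lambda\langle\xi_\lambda,\mathcal{N}\xi_\lambda\rangle = 4\langle\xi_\lambda, (B+B^*)\xi_\lambda\rangle = 8\,\mathrm{Re}\,\langle\xi_\lambda, B\xi_\lambda\rangle$, and the problem reduces to bounding $|\langle\xi_\lambda, B\xi_\lambda\rangle|$.

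Introducing the pair operators $A_\sigma(z) := a_\sigma(u^r_z)a_\sigma(\bar v^r_z)$ and using that $A_\uparrow(z)$ and $A_\downarrow(z')$ commute (opposite spins), one writes
\[
\langle\xi_\lambda, B\xi_\lambda\rangle = \int dz\,dz'\,\varphi(z-z')\,\langle A^*_\uparrow(z)\xi_\lambda,\,A_\downarrow(z')\xi_\lambda\rangle.
\]
A weighted Cauchy--Schwarz, distributing $|\varphi(z-z')|^{1/2}$ to each factor, yields
\[
|\langle\xi_\lambda, B\xi_\lambda\rangle|\leq \|\varphi\|_1\Bigl(\int\|A^*_\uparrow(z)\xi_\lambda\|^2 dz\Bigr)^{1/2}\Bigl(\int\|A_\downarrow(z')\xi_\lambda\|^2 dz'\Bigr)^{1/2}.
\]
I would then control the two integrated norms using the operator identity
\[
A_\sigma(z) A_\sigma^*(z) = \bigl(\|\bar v^r_z\|_2^2 - a^*_\sigma(\bar v^r_z)a_\sigma(\bar v^r_z)\bigr)\bigl(\|u^r_z\|_2^2 - a^*_\sigma(u^r_z)a_\sigma(u^r_z)\bigr),
\]
which follows from the orthogonality $u^r\bar v^r = 0$ (Eq.~\eqref{eq:uvprop}) and the CAR, together with the bounds $\|\bar v^r_z\|_2^2 \leq C\rho$, $\|u^r_z\|_2^2 \leq C_\beta$ from Proposition~\ref{prp:decreg} and the operator inequality $\int a^*_\sigma(\bar v^r_z)a_\sigma(\bar v^r_z)\,dz \leq \mathcal{N}$. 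Together these give
\[
\int \|A_\downarrow(z)\xi_\lambda\|^2\,dz \leq C_\beta\langle\xi_\lambda, \mathcal{N}\xi_\lambda\rangle,\qquad \int \|A^*_\uparrow(z)\xi_\lambda\|^2\,dz \leq C_\beta L^3\rho + C_\beta\langle\xi_\lambda,\mathcal{N}\xi_\lambda\rangle.
\]

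Combining with $\|\varphi\|_1 \leq C\rho^{-2\gamma}$ (from the scattering equation bounds of Appendix~\ref{sec:scat}) and carefully choosing an AM--GM split of the resulting $\sqrt{L^3\rho\cdot\mathcal{N}}$ cross term, one would obtain the stated bound, the $C_\beta$ prefactors being absorbed into the constant $C$ per the notational convention of Section~\ref{sec:cor}. The main technical obstacle is tuning the AM--GM constants so that the ``vacuum piece'' $L^3\rho$ in the $A^*_\uparrow$ estimate is converted, via the $\rho^{-2\gamma}$ weight from $\|\varphi\|_1$, into an error of the claimed form $L^3\rho^{2-\gamma}$ while the coefficient of $\langle\xi_\lambda,\mathcal{N}\xi_\lambda\rangle$ stays $L$- and $\rho$-independent; this is precisely where the hypothesis $\gamma \leq 1/2$ enters.
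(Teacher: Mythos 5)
Your initial steps are correct and match the paper: $\partial_\lambda\langle\xi_\lambda,\mathcal{N}\xi_\lambda\rangle=8\,\mathrm{Re}\,\langle\xi_\lambda,B\xi_\lambda\rangle$, the pair operators $A_\sigma(z)=b_{z,\sigma}$, and the normal-ordering identity $A_\sigma(z)A_\sigma^*(z)=(\|\bar v^{\text{r}}_z\|_2^2-a^*_\sigma(\bar v^{\text{r}}_z)a_\sigma(\bar v^{\text{r}}_z))(\|u^{\text{r}}_z\|_2^2-a^*_\sigma(u^{\text{r}}_z)a_\sigma(u^{\text{r}}_z))$ is a valid consequence of $u^{\text{r}}\overline{v}^{\text{r}}=0$ and the CAR. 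However, the weighted Cauchy--Schwarz step leaves a genuine gap that \emph{no} tuning of the AM--GM constants can close. Your bound places the entire $L^1$ weight $\|\varphi\|_1\leq C\rho^{-2\gamma}$ in front of the vacuum contribution $L^3\rho$ coming from $\int\|A^*_\uparrow(z)\xi_\lambda\|^2\,dz$. Even after using the optimal $\int\|A_\downarrow(z')\xi_\lambda\|^2\,dz'\leq C\rho\langle\xi_\lambda,\mathcal{N}\xi_\lambda\rangle$ (your version pulls out $\|u^{\text{r}}_z\|_2\leq C_\beta$ rather than $\|\bar v^{\text{r}}_z\|_2\leq\rho^{1/2}$ and so loses a factor $\rho$), the decoupled estimate gives
\begin{equation*}
|\langle\xi_\lambda,B\xi_\lambda\rangle|\leq C_\beta\rho^{-2\gamma}L^{3/2}\rho\,\langle\xi_\lambda,\mathcal{N}\xi_\lambda\rangle^{1/2}\leq \langle\xi_\lambda,\mathcal{N}\xi_\lambda\rangle+C_\beta L^3\rho^{2-4\gamma}\;,
\end{equation*}
and the exponent $2-4\gamma$ cannot be converted into $2-\gamma$ by rebalancing: it is an irreducible consequence of having separated $\varphi$ from the pair operators. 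The hypothesis $\gamma\leq 1/2$ is of no help here (in the paper it serves to tame the \emph{particle} term $\rho^{1-2\gamma}\langle\xi_\lambda,\mathcal{N}\xi_\lambda\rangle$, not the vacuum term). At $\gamma=1/3$, which is the value used in the lower bound, your error $L^3\rho^{2/3}$ is a factor $\rho^{-1}$ too large, and the subsequent Gr\"onwall argument (Corollary~\ref{cor:Nbd}, $\langle\xi_\lambda,\mathcal{N}\xi_\lambda\rangle\leq CL^3\rho^{7/6}$) would fail since $2/3<7/6$.

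The missing ingredient is Lemma~\ref{lem:bosonbd}: smear the pair operator against $\varphi$ \emph{before} estimating, forming $b_\sigma(\varphi_z)=\int dz'\,\varphi(z-z')\,b_{z',\sigma}$, and compute the commutator $[b^*_\sigma(\varphi_z),b_\sigma(\varphi_z)]$. Its vacuum contribution is controlled by the $L^2$ norm, $\|b^*_\sigma(\varphi_z)\psi\|\leq\|b_\sigma(\varphi_z)\psi\|+C\rho^{1/2}\|\varphi\|_2\|\psi\|$, and $\|\varphi\|_2\leq C\rho^{-\gamma/2}$ is far smaller than $\|\varphi\|_1\leq C\rho^{-2\gamma}$. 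Feeding this into $\|b^*_\uparrow(\varphi_z)\xi_\lambda\|\,\|b_{z,\downarrow}\xi_\lambda\|$, a Schur-type estimate (using $\|\varphi\|_1$) on the annihilation--annihilation piece produces $C\rho^{1-2\gamma}\langle\xi_\lambda,\mathcal{N}\xi_\lambda\rangle$---this is where $\gamma\leq 1/2$ enters---while the commutator correction produces $C\rho^{1/2-\gamma/2}\int dz\,\|b_{z,\downarrow}\xi_\lambda\|\leq CL^{3/2}\rho^{1-\gamma/2}\langle\xi_\lambda,\mathcal{N}\xi_\lambda\rangle^{1/2}$, which after one AM--GM gives $\langle\xi_\lambda,\mathcal{N}\xi_\lambda\rangle+CL^3\rho^{2-\gamma}$, as claimed. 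Decoupling $\varphi$ by Cauchy--Schwarz discards exactly this pseudo-bosonic structure of the smeared operator.
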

\begin{corollary} Let $\psi$ be an approximate ground state. Then:
\begin{equation}\label{eq:prop76}
\langle \xi_{\lambda}, \mathcal{N} \xi_{\lambda}  \rangle \leq CL^{3} \rho^{\frac{7}{6}}\;.
\end{equation}
\end{corollary}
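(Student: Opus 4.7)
The plan is a direct application of Grönwall's inequality to the differential bound of Proposition~\ref{prp:propN}, combined with the $\lambda=0$ initial condition supplied by Corollary~\ref{cor:Nbd}.

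First, I would observe that at $\lambda = 0$ we have $\xi_0 = R^* \psi$, and since $\psi$ is an approximate ground state, Corollary~\ref{cor:Nbd} already yields
\begin{equation*}
\langle \xi_0, \mathcal{N} \xi_0 \rangle \leq C L^3 \rho^{7/6}.
\end{equation*}
Next, I would integrate the differential inequality \eqref{eq:propN} from $0$ to $\lambda \in [0,1]$. Setting $f(\lambda) := \langle \xi_\lambda, \mathcal{N} \xi_\lambda \rangle$, Proposition~\ref{prp:propN} gives $f'(\lambda) \leq C f(\lambda) + C L^3 \rho^{2-\gamma}$, so by the standard integrating-factor form of Grönwall,
\begin{equation*}
f(\lambda) \leq e^{C\lambda}\bigl( f(0) + C L^3 \rho^{2-\gamma}\bigr) \leq e^{C}\bigl( C L^3 \rho^{7/6} + C L^3 \rho^{2-\gamma}\bigr).
\end{equation*}

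Finally, under the hypotheses of the section we have $\gamma \leq 1/3$, which in particular gives $2 - \gamma \geq 5/3 \geq 7/6$, so $\rho^{2-\gamma} \leq \rho^{7/6}$ for $\rho$ small. Both terms are therefore bounded by $C L^3 \rho^{7/6}$ (with a new constant $C$ absorbing the $e^C$ prefactor), and since the bound is uniform in $\lambda \in [0,1]$ the claim \eqref{eq:prop76} follows. There is no real obstacle here: the substantive work has already been done in Proposition~\ref{prp:propN} (the propagation estimate) and in Corollary~\ref{cor:Nbd} (the a priori bound on $R^*\psi$); the only point to verify is the compatibility $2 - \gamma \geq 7/6$ between the error exponent and the initial bound, which is automatic in the regime $\gamma \leq 1/3$ considered throughout the section.
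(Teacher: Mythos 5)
Your proposal is correct and matches the paper's own proof: apply Gr\"onwall's inequality to the differential bound of Proposition~\ref{prp:propN}, using the a priori estimate $\langle \xi_0, \mathcal{N}\xi_0\rangle \leq CL^3\rho^{7/6}$ from Corollary~\ref{cor:Nbd} as the initial condition at $\lambda=0$. One minor remark: the hypothesis $\gamma\le 1/2$ of Proposition~\ref{prp:propN} already gives $2-\gamma\ge 3/2>7/6$, so the more restrictive $\gamma\le 1/3$ you invoke is not actually needed here.
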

\begin{proof} Eq. (\ref{eq:prop76}) immediately follows from (\ref{eq:propN}) and Gr\"onwall lemma, recalling the a priori estimate $\langle \xi_{0}, \mathcal{N} \xi_{0} \rangle \leq CL^{3} \rho^{\frac{7}{6}}$, Eq. (\ref{eq:bdcalN}).
\end{proof}
Let us now discuss the proof of Proposition \ref{prp:propN}. The proof is based on the following lemma.
\begin{lemma}\label{lem:bosonbd} Let $g\in L^{1}(\Lambda_{L}) \cap L^{2}(\Lambda_{L})$, and let:
\begin{equation}\label{eq:bphi}
b_{z,\sigma} = a_{\sigma}(u^{\text{r}}_{z}) a_{\sigma}(\overline{v}^{\text{r}}_{z})\;,\qquad b_{\sigma}(g_{z}) = \int dz'\, \overline{g(z-z')} b_{z',\sigma}\;.
\end{equation}
Then:
\begin{equation}\label{eq:bnorms}
\| b_{\sigma}(g_{z}) \| \leq C_{\beta}\rho^{\frac{1}{2}}\|g\|_{1}\;,\qquad \| b^{*}_{\sigma}(g_{z}) \| \leq C_{\beta}\rho^{\frac{1}{2}}\|g\|_{1}
\end{equation}
and:
\begin{equation}\label{eq:estbosb}
\| b^{*}_{\sigma}(g_{z}) \psi \| \leq \| b_{\sigma}(g_{z}) \psi \| + C\rho^{\frac{1}{2}}\|g\|_{2} \|\psi\|\;.
\end{equation}
\end{lemma}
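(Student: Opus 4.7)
The plan is to reduce (\ref{eq:bnorms}) to the basic fermionic norm bound $\|a(f)\|\le\|f\|_2$ from (\ref{eq:bdferm}), and (\ref{eq:estbosb}) to a direct commutator computation that exploits the orthogonality $u^{\text{r}}\bar v^{\text{r}}=0$. For (\ref{eq:bnorms}), write $b_{z,\sigma}=a(u^{\text{r}}_z)\,a(\bar v^{\text{r}}_z)$; then Proposition~\ref{prp:decreg} gives $\|b_{z,\sigma}\|\le\|u^{\text{r}}_z\|_2\|v^{\text{r}}_z\|_2\le C\rho^{-3\beta/2}\rho^{1/2}=C_\beta\rho^{1/2}$ uniformly in $z$, so $\|b_\sigma(g_z)\|\le\int dz'\,|g(z-z')|\,\|b_{z',\sigma}\|\le C_\beta\rho^{1/2}\|g\|_1$, and the same estimate for $b^*_\sigma(g_z)$ by $\|A^*\|=\|A\|$.

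For (\ref{eq:estbosb}), I start from the identity $\|b^*(g_z)\psi\|^2-\|b(g_z)\psi\|^2=\langle\psi,[b_\sigma(g_z),b^*_\sigma(g_z)]\psi\rangle$. A direct CAR computation using $u^{\text{r}}\bar v^{\text{r}}=v^{\text{r}}u^{\text{r}}=0$ (so that the only nontrivial pairings are $\{a(\bar v^{\text{r}}_{z'}),a^*(\bar v^{\text{r}}_{z''})\}=\omega^{\text{r}}(z'-z'')$ and $\{a(u^{\text{r}}_{z'}),a^*(u^{\text{r}}_{z''})\}=\tilde u^{\text{r}}(z'-z'')$, where $\tilde u^{\text{r}}$ is the kernel of $(u^{\text{r}})^2$) yields, after normal-ordering,
\begin{equation*}
[b_{z',\sigma},b^*_{z'',\sigma}]=\omega^{\text{r}}(z'-z'')\tilde u^{\text{r}}(z'-z'')-\omega^{\text{r}}(z'-z'')\,a^*(u^{\text{r}}_{z''})a(u^{\text{r}}_{z'})-\tilde u^{\text{r}}(z'-z'')\,a^*(\bar v^{\text{r}}_{z''})a(\bar v^{\text{r}}_{z'}).
\end{equation*}
The c-number piece, after integration against $\overline{g(z-z')}g(z-z'')$, equals by Parseval $L^{-3}\sum_p|\hat g(p)|^2\,\widehat{\omega^{\text{r}}\tilde u^{\text{r}}}(p)$ with $\widehat{\omega^{\text{r}}\tilde u^{\text{r}}}(p)=L^{-3}\sum_k(\hat v^{\text{r}}(k))^2(\hat u^{\text{r}}(p-k))^2\le L^{-3}\sum_k(\hat v^{\text{r}}(k))^2=\omega^{\text{r}}(0)\le\rho$, so this contribution is bounded by $C\rho\|g\|_2^2\|\psi\|^2$.

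For the two operator pieces, I exploit the positive-semidefinite factorizations $\omega^{\text{r}}(z'-z'')=\int dy\,\overline{v^{\text{r}}(y;z')}\,v^{\text{r}}(y;z'')$ (from $\omega^{\text{r}}=(v^{\text{r}})^*v^{\text{r}}$) and analogously $\tilde u^{\text{r}}(z'-z'')=\int dy\,u^{\text{r}}(y;z')\,u^{\text{r}}(y;z'')$. Substituting the former into the $\omega^{\text{r}}$-piece and pulling the scalar weights under $a(\cdot)$ via antilinearity, that piece rewrites as $-\int dy\,\bigl\|a(\chi^y_z)\psi\bigr\|^2\le 0$ for suitable $\chi^y_z\in L^2(\Lambda_L)$; the $\tilde u^{\text{r}}$-piece is handled identically. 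Adding the three contributions yields $\|b^*(g_z)\psi\|^2\le\|b(g_z)\psi\|^2+C\rho\|g\|_2^2\|\psi\|^2$, whence (\ref{eq:estbosb}) follows from $\sqrt{a+b}\le\sqrt a+\sqrt b$.

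The main obstacle is achieving the clean $C\rho^{1/2}$ prefactor (rather than $C_\beta\rho^{1/2}$) in (\ref{eq:estbosb}): a crude bound on the c-number piece via $\|\omega^{\text{r}}\tilde u^{\text{r}}\|_1$ or $\|\tilde u^{\text{r}}\|_2$ would cost powers of $\rho^{-\beta}$ from the ultraviolet support of $\hat u^{\text{r}}$ (which extends up to $|k|\sim\rho^{-\beta}$). The combination Parseval plus the trivial bound $\hat u^{\text{r}}\le 1$ reduces everything to the Fermi-ball density $\omega^{\text{r}}(0)\le\rho$. Equally crucial is the observation that the operator pieces are manifestly non-positive after the $A^*A$-type factorizations, so one avoids introducing $\langle\psi,\mathcal N\psi\rangle$, which for a general $\psi$ with $\|\psi\|=1$ would spoil the $\rho\|g\|_2^2\|\psi\|^2$ bound.
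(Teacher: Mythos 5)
Your proof is correct and follows essentially the same route as the paper's: the identity $\|b^*(g_z)\psi\|^2-\|b(g_z)\psi\|^2=\langle\psi,[b(g_z),b^*(g_z)]\psi\rangle$, normal ordering of the commutator, the $A^*A$-type factorizations showing the operator pieces have a definite sign, and a $\rho\|g\|_2^2$ bound on the c-number piece. The one small deviation is your Parseval treatment of the c-number: the paper instead bounds it in position space via $|\omega^{\text{r}}(x;y)|\le\rho$ and $\|(u^{\text{r}})^2_x\|_1\le C$ together with Cauchy--Schwarz on the $g$-factors, which is already $\beta$-uniform; so your remark that one \emph{must} pass to Fourier to avoid $\rho^{-\beta}$ losses is a slight overstatement (only the cruder $\|(u^{\text{r}})^2_x\|_2\sim\rho^{-3\beta/2}$ would be lossy), though the Parseval argument you give is equally clean and correct.
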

\begin{proof} Consider the first of (\ref{eq:bnorms}), the proof of the second is identical. The inequality simply follows from the boundedness of the fermionic operators:
\begin{eqnarray}
\| b_{\sigma}(g_{z})\psi \| &\leq& \int dz' |g(z-z')| \| a_{\sigma}(u^{\text{r}}_{z'}) a_{\sigma}(\overline{v}^{\text{r}}_{z'}) \psi \| \nonumber\\
&\leq& \int dz' |g(z-z')| \|u^{\text{r}}_{z'}\|_{2} \| \overline{v}^{\text{r}}_{z'} \|_{2} \|\psi\| \nonumber\\
&\leq& C_{\beta} \|g\|_{1} \rho^{\frac{1}{2}} \|\psi\|\;.
\end{eqnarray}
Let us now prove (\ref{eq:estbosb}). It is convenient to rewrite the first norm as:
\begin{equation}\label{eq:bosonbound}
\| b^{*}_{\sigma}(g_{z}) \xi_{\lambda}\|^{2} = \| b_{\sigma}(g_{z}) \xi_{\lambda}\|^{2} - \langle \xi_{\lambda}, [ b^{*}_{\sigma}(g_{z}), b_{\sigma}(g_{z}) ]\xi_{\lambda} \rangle\;.
\end{equation}
Let us compute the commutator. We get:
\begin{eqnarray}
&&[ b^{*}_{\sigma}(g_{z}), b_{\sigma}(g_{z}) ] = \int dxdy\,g(z-x) \overline{g(z-y)} [  a^{*}_{\sigma}(\overline{v}^{\text{r}}_{x})a^{*}_{\sigma}(u^{\text{r}}_{x}), a_{\sigma}(u^{\text{r}}_{y}) a_{\sigma}(\overline{v}^{\text{r}}_{y}) ] \nonumber\\
&&\qquad = \int dxdy\,g(z-x) \overline{g(z-y)} \Big( a^{*}_{\sigma}(\overline{v}^{\text{r}}_{x}) [ a^{*}_{\sigma}(u^{\text{r}}_{x}), a_{\sigma}(u^{\text{r}}_{y}) a_{\sigma}(\overline{v}^{\text{r}}_{y}) ] + [  a^{*}_{\sigma}(\overline{v}^{\text{r}}_{x}), a_{\sigma}(u^{\text{r}}_{y}) a_{\sigma}(\overline{v}^{\text{r}}_{y}) ] a^{*}_{\sigma}(u^{\text{r}}_{x})\Big)\nonumber\\
&&\qquad = \int dxdy\,g(z-x) \overline{g(z-y)} \Big( a^{*}_{\sigma}(\overline{v}^{\text{r}}_{x}) (u^{\text{r}})^{2}_{\sigma}(x;y) a_{\sigma}(\overline{v}^{\text{r}}_{y}) - a_{\sigma}(u^{\text{r}}_{y}) \omega^{\text{r}}_{\sigma}(x;y) a^{*}_{\sigma}(u^{\text{r}}_{x}) \Big)\;.
\end{eqnarray}
Putting the last term into normal order, we get:
\begin{eqnarray}
[ b^{*}_{\sigma}(g_{z}), b_{\sigma}(g_{z}) ] &=& -\int dxdy\,g(z-x) \overline{g(z-y)} \omega^{\text{r}}_{\sigma}(x;y) (u^{\text{r}})^{2}_{\sigma}(y;x) \\
&& + \int dxdy\,g(z-x) \overline{g(z-y)} \Big( a^{*}_{\sigma}(\overline{v}^{\text{r}}_{x}) (u^{\text{r}})^{2}_{\sigma}(x;y) a_{\sigma}(\overline{v}^{\text{r}}_{y}) + a^{*}_{\sigma}(u^{\text{r}}_{x}) \omega^{\text{r}}_{\sigma}(x;y)  a_{\sigma}(u^{\text{r}}_{y})\Big)\;.\nonumber
\end{eqnarray}
The last two terms are nonnegative. In fact:
\begin{eqnarray}
&&\int dxdy\,g(z-x) \overline{g(z-y)} \langle \xi_{\lambda}, a^{*}_{\sigma}(\overline{v}^{\text{r}}_{x}) (u^{\text{r}})^{2}_{\sigma}(x;y) a_{\sigma}(\overline{v}^{\text{r}}_{y}) \xi_{\lambda}\rangle\nonumber\\
&&\qquad = \int dr\, \Big\langle \xi_{\lambda}, \Big( \int dx\, g(z-x) a^{*}_{\sigma}(\overline{v}^{\text{r}}_{x}) u_{\sigma}^{\text{r}}(x;r) \Big)\Big( \int dy\, \overline{g(z-y)} a_{\sigma}(\overline{v}^{\text{r}}_{y}) u_{\sigma}^{\text{r}}(r;y) \Big) \xi_{\lambda} \Big\rangle\nonumber\\
&&\qquad \geq 0\;,
\end{eqnarray}
and similarly:
\begin{eqnarray}
&&\int dxdy\, g(z-x) \overline{g(z-y)} \langle \xi_{\lambda}, a^{*}_{\sigma}(u^{\text{r}}_{x}) \omega^{\text{r}}_{\sigma}(x;y)  a_{\sigma}(u^{\text{r}}_{y}) \xi_{\lambda}\rangle \nonumber\\
&&\qquad = \int dr\, \Big\langle \xi_{\lambda}, \Big( \int dx\, g(z-x) a^{*}_{\sigma}(u^{\text{r}}_{x}) \overline{v}_{\sigma}^{\text{r}}(x;r) \Big)\Big( \int dy\, \overline{g(z-y)} a_{\sigma}(u^{\text{r}}_{y}) v_{\sigma}^{\text{r}}(r;y) \Big) \xi_{\lambda} \Big\rangle\nonumber\\
&&\qquad \geq 0\;.
\end{eqnarray}
Therefore:
\begin{equation}\label{eq:bbos1}
\| b^{*}_{\uparrow}(g_{z}) \xi_{\lambda}\|^{2} \leq \| b_{\uparrow}(g_{z}) \xi_{\lambda}\|^{2} + \int dxdy\,g(z-x) \overline{g(z-y)} \omega^{\text{r}}_{\sigma}(x;y) (u^{\text{r}})^{2}_{\sigma}(y;x)\;.
\end{equation}
The last term can be estimated as, using that $(u^{\text{r}})^{2}$ satisfies the same estimates as $u^{\text{r}}$, Eqs. (\ref{eq:decuv}):
\begin{eqnarray}\label{eq:bbos2}
\Big|\int dxdy\,g(z-x) \overline{g(z-y)} \omega^{\text{r}}_{\sigma}(x;y) (u^{\text{r}})^{2}_{\sigma}(y;x) \Big| &\leq& \rho \int dxdy\, |g(z-x)|^{2} |(u^{\text{r}})^{2}_{\sigma}(y;x)| \nonumber\\
&\leq& C\rho\|g\|_{2}^{2}\;.
\end{eqnarray}
In conclusion:
\begin{equation}
\| b^{*}_{\uparrow}(g_{z}) \xi_{\lambda}\|^{2} \leq \| b_{\uparrow}(g_{z}) \xi_{\lambda}\|^{2} + C\rho\|g\|_{2}^{2}\;.
\end{equation}
This concludes the proof.
\end{proof}
We are now ready to prove Proposition \ref{prp:propN}.
\begin{proof}(of Proposition \ref{prp:propN}.) We compute:
\begin{eqnarray}\label{eq:partN}
\partial_{\lambda} \langle \xi_{\lambda}, \mathcal{N} \xi_{\lambda} \rangle &=& -\langle \xi_{\lambda}, [ \mathcal{N}, (B - B^{*}) ] \xi_{\lambda} \rangle \\
&=& -4\int dzdz'\, \varphi(z-z') \langle \xi_{\lambda}, a_{\uparrow}(u^{\text{r}}_{z}) a_{\uparrow}(\overline{v}^{\text{r}}_{z}) a_{\downarrow}(u^{\text{r}}_{z'}) a_{\downarrow}(\overline{v}^{\text{r}}_{z'}) \xi_{\lambda} \rangle + \text{c.c..}\nonumber
\end{eqnarray}
Using the notation (\ref{eq:bphi}), we rewrite (\ref{eq:partN}) as:
\begin{equation}
\partial_{\lambda} \langle \xi_{\lambda}, \mathcal{N} \xi_{\lambda} \rangle = -4\int dz\, \langle \xi_{\lambda}, b_{\uparrow}(\varphi_{z}) b_{\downarrow,z} \xi_{\lambda}  \rangle + \text{c.c.}\;.
\end{equation}
We then estimate:
\begin{equation}\label{eq:bstarb}
|\langle \xi_{\lambda}, b_{\uparrow}(\varphi_{z}) b_{\downarrow,z} \xi_{\lambda}  \rangle|\leq \| b^{*}_{\uparrow}(\varphi_{z}) \xi_{\lambda}\| \| b_{\downarrow,z} \xi_{\lambda} \|\;;
\end{equation}
by Lemma \ref{lem:bosonbd}, using that $\|\varphi\|_{2} \leq C\rho^{-\frac{\gamma}{2}}$:
\begin{equation}\label{eq:bzbz}
|\langle \xi_{\lambda}, b_{\uparrow}(\varphi_{z}) b_{\downarrow,z} \xi_{\lambda}  \rangle|\leq \| b_{\uparrow}(\varphi_{z}) \xi_{\lambda}\| \| b_{\downarrow,z} \xi_{\lambda} \| + C\rho^{\frac{1}{2} - \frac{\gamma}{2}} \| b_{\downarrow,z} \xi_{\lambda} \|\;.
\end{equation}
Consider the first term in (\ref{eq:bzbz}). We have, using that $\|\varphi\|_{1} \leq C\rho^{-2\gamma}$:
\begin{eqnarray}
\int dz\, \| b_{\uparrow}(\varphi_{z}) \xi_{\lambda}\| \| b_{\downarrow,z} \xi_{\lambda} \| &\leq& C\rho\int dzdz'\, \varphi(z-z') (\| a_{\uparrow}(u^{\text{r}}_{z}) \|^{2} + \| a_{\downarrow}(u^{\text{r}}_{z'}) \|^{2})\nonumber\\
&\leq& C\rho^{1-2\gamma} \langle \xi_{\lambda}, \mathcal{N}\xi_{\lambda}\rangle\;.
\end{eqnarray}
Consider now the second term in (\ref{eq:bzbz}). We have:
\begin{eqnarray}\label{eq:csineq}
C\rho^{\frac{1}{2} - \frac{\gamma}{2}}\int dz\, \| b_{\downarrow,z} \xi_{\lambda} \| &\leq& CL^{\frac{3}{2}} \rho^{1 - \frac{\gamma}{2}} \| \mathcal{N}^{\frac{1}{2}}\xi_{\lambda} \| \nonumber\\ &\leq& CL^{3}\rho^{2 - \gamma} + \langle \xi_{\lambda}, \mathcal{N}\xi_{\lambda}\rangle\;.
\end{eqnarray}
All together:
\begin{equation}
\partial_{\lambda} \langle \xi_{\lambda}, \mathcal{N} \xi_{\lambda} \rangle \leq C\rho^{1-2\gamma} \langle \xi_{\lambda}, \mathcal{N}\xi_{\lambda}\rangle + CL^{3} \rho^{2 - \gamma} + C\langle \xi_{\lambda}, \mathcal{N}\xi_{\lambda}\rangle\;.
\end{equation}
The final claim follows from $\gamma \leq 1/2$. This concludes the proof.
\end{proof}
The next lemma will allow us to bound recurrent expressions in our computations. We shall use the short-hand notations $\partial^{n} \overline{v}^{\text{r}}_{x}$, $\partial^{n} u_{x}^{\text{r}}$ to denote the functions (in Eq. (\ref{eq:foll}) $x$ is fixed, $y$ is the argument of the functions):
\begin{equation}\label{eq:foll}
\frac{\partial^{n}}{\partial y_{i_{1}} \cdots \partial y_{i_n}} \overline{v}^{\text{r}}(y;x)\;,\qquad \frac{\partial^{n}}{\partial y_{i_{1}} \cdots \partial y_{i_n}} u^{\text{r}}(y;x)\;,
\end{equation}
for some choice of indices $i_{1}, \ldots, i_{n}$ (their values will be inessential for the bounds).
\begin{lemma}\label{lem:phi} Under the assumptions of Theorem \ref{thm:main}, the following holds. Let $n_{2} \in \mathbb{N}$, and let $\psi$ be an approximate ground state. Let $\gamma \leq 7/18$. Then:
\begin{eqnarray}
\Big| \int dxdy\, \varphi(x-y) \langle \xi_{\lambda}, a_{\uparrow}(u^{\text{r}}_{x}) a_{\uparrow}(\partial^{n_{2}}\overline{v}^{\text{r}}_{x}) a_{\downarrow}(u^{\text{r}}_{y}) a_{\downarrow}(\overline{v}^{\text{r}}_{y}) \xi_{\lambda} \rangle\Big| &\leq& C L^{\frac{3}{2}} \rho^{1 - \frac{\gamma}{2} + \frac{n_{2}}{3}} \| \mathcal{N}^{\frac{1}{2}} \xi_{\lambda} \|\nonumber\\
\Big| \int dxdy\, \varphi(x-y) \langle \xi_{\lambda}, a_{\uparrow}(\partial u^{\text{r}}_{x}) a_{\uparrow}(\partial^{n_{2}}\overline{v}^{\text{r}}_{x}) a_{\downarrow}(u^{\text{r}}_{y}) a_{\downarrow}(\overline{v}^{\text{r}}_{y}) \xi_{\lambda} \rangle\Big| &\leq& CL^{\frac{3}{2}} \rho^{1 - \frac{\gamma}{2} + \frac{n_{2}}{3}} ( \| \mathbb{H}_{0}^{\frac{1}{2}}\xi_{\lambda}\| + \rho^{\frac{1}{3}} \|\mathcal{N}^{\frac{1}{2}}\xi_{\lambda} \| )\;.\nonumber
\end{eqnarray}
\end{lemma}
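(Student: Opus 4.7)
The plan is to view the four–annihilator expectation as an almost–bosonic pair–pair correlator, in the same spirit as the proof of Proposition~\ref{prp:propN}. Introduce the modified pair operator $b^{(n_{2})}_{x,\uparrow} := a_{\uparrow}(u^{\text{r}}_{x}) a_{\uparrow}(\partial^{n_{2}} \overline{v}^{\text{r}}_{x})$, which differs from $b_{x,\sigma}$ in (\ref{eq:bphi}) only by the insertion of $\partial^{n_{2}}$, and set $b^{(n_{2})}_{\uparrow}(\varphi_{y}) := \int dx\, \varphi(x-y)\, b^{(n_{2})}_{x,\uparrow}$ (recall $\varphi$ is real). The quantity to be estimated equals $\int dy\, \langle \xi_{\lambda}, b^{(n_{2})}_{\uparrow}(\varphi_{y})\, b_{y,\downarrow}\, \xi_{\lambda} \rangle$. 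Moving the adjoint of the $\uparrow$–block to the bra and applying Cauchy--Schwarz, this is dominated by $\int dy\, \|(b^{(n_{2})}_{\uparrow}(\varphi_{y}))^{*} \xi_{\lambda}\| \cdot \|b_{y,\downarrow} \xi_{\lambda}\|$, reducing the problem to two spin–factor estimates.

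The $\downarrow$–factor is controlled exactly as in the proof of Proposition~\ref{prp:propN}: extracting $\|a_{\downarrow}(u^{\text{r}}_{y})\| \le C_{\beta}$ and using $\int dy\, \|a_{\downarrow}(\overline{v}^{\text{r}}_{y}) \xi_{\lambda}\|^{2} \le \langle \xi_{\lambda}, \mathcal{N} \xi_{\lambda} \rangle$ yields $\int dy\, \|b_{y,\downarrow} \xi_{\lambda}\|^{2} \le C_{\beta}\|\mathcal{N}^{1/2}\xi_{\lambda}\|^{2}$. For the $\uparrow$–factor I would first establish, uniformly in $y$, the direct estimate $\|b^{(n_{2})}_{\uparrow}(\varphi_{y}) \xi_{\lambda}\| \le C_{\beta} \rho^{n_{2}/3} \|\varphi\|_{2} \|\mathcal{N}^{1/2} \xi_{\lambda}\|$, obtained by extracting $\|a_{\uparrow}(u^{\text{r}}_{x})\| \le C_{\beta}$, Cauchy--Schwarz in $x$ against $\varphi$, and estimating $\int dx\, \|a_{\uparrow}(\partial^{n_{2}} \overline{v}^{\text{r}}_{x}) \xi_{\lambda}\|^{2} = \sum_{k} |k|^{2n_{2}} |\hat v^{\text{r}}(k)|^{2} \langle \xi_{\lambda}, \hat a^{*}_{k,\uparrow} \hat a_{k,\uparrow} \xi_{\lambda} \rangle \le C \rho^{2n_{2}/3} \|\mathcal{N}^{1/2}\xi_{\lambda}\|^{2}$, using that $\hat v^{\text{r}}$ is supported in $\{|k|\le k_{F}\}$. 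To pass from $b^{(n_{2})}_{\uparrow}(\varphi_{y})$ to its adjoint I would mimic the commutator computation of Lemma~\ref{lem:bosonbd}: the scalar contribution from $[(b^{(n_{2})}_{\uparrow}(\varphi_{y}))^{*}, b^{(n_{2})}_{\uparrow}(\varphi_{y})]$ is now weighted by $(\partial^{n_{2}} \overline{v}^{\text{r}})^{2}$ rather than $(\overline{v}^{\text{r}})^{2}$, giving $C\rho^{1+2n_{2}/3}\|\varphi\|_{2}^{2}\|\xi_{\lambda}\|^{2}$, while the remaining operator pieces are nonnegative and are discarded. Combining this with $\|\varphi\|_{2} \le C\rho^{-\gamma/2}$ (from the $\varphi_{\infty}(x)\sim a/|x|$ decay on a ball of radius $\rho^{-\gamma}$), Cauchy--Schwarz in $y$ to generate the $L^{3/2}$ factor, and the a priori bound $\|\mathcal{N}^{1/2}\xi_{\lambda}\| \le CL^{3/2}\rho^{7/12}$ of Proposition~\ref{prp:propN} to absorb the $\rho^{1/2}\|\xi_{\lambda}\|$ residual from the scalar commutator into the $\mathcal{N}^{1/2}$ term, should deliver the claimed bound.

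For the second estimate, the extra derivative on $u^{\text{r}}$ is handled in momentum space by the elementary dichotomy $|k|^{2} \le 2(|k|^{2} - \mu_{\sigma}) + 2\mu_{\sigma}$ on the support $\{|k| \ge k_{F}^{\sigma}\}$ of $\hat u^{\text{r}}$: this gives $\int dx\, \|a_{\uparrow}(\partial u^{\text{r}}_{x}) \xi_{\lambda}\|^{2} \le 2 \langle \xi_{\lambda}, \mathbb{H}_{0} \xi_{\lambda} \rangle + C\rho^{2/3} \langle \xi_{\lambda}, \mathcal{N} \xi_{\lambda} \rangle \le C(\|\mathbb{H}_{0}^{1/2}\xi_{\lambda}\| + \rho^{1/3}\|\mathcal{N}^{1/2}\xi_{\lambda}\|)^{2}$. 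Re-running the previous argument with this improved bound taking the role of the uniform operator estimate on $a_{\uparrow}(u^{\text{r}}_{x})$ yields the second inequality.

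The principal technical obstacle is the precise bookkeeping of the competing powers of $\rho$ coming from (i) the $n_{2}$ derivatives on $\overline{v}^{\text{r}}$, each producing a factor $k_{F}\sim \rho^{1/3}$; (ii) the $L^{2}$–norm $\|\varphi\|_{2}\sim \rho^{-\gamma/2}$; and (iii) the a priori bounds on $\mathcal{N}$ and $\mathbb{H}_{0}$ for approximate ground states. The constraint $\gamma \le 7/18$ enters precisely to ensure that the scalar remainder from the bosonic commutator is subdominant with respect to the main $\mathcal{N}^{1/2}$ term, so that the whole argument closes as stated.
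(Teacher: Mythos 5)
Your overall strategy mirrors the paper's: reduce to a pair--pair correlator, split $\|(b^{(n_2)}_\uparrow(\varphi_y))^*\xi_\lambda\|$ via the commutator identity of Lemma~\ref{lem:bosonbd}, and note that $\partial^{n_2}\overline{v}^{\text{r}}$ costs $\rho^{n_2/3}$. However, there is a genuine gap in how you estimate the leading part of $\int dy\,\|b^{(n_2)}_\uparrow(\varphi_y)\xi_\lambda\|\,\|b_{y,\downarrow}\xi_\lambda\|$. You bound $\|b^{(n_2)}_\uparrow(\varphi_y)\xi_\lambda\|\le C_\beta\rho^{n_2/3}\|\varphi\|_2\|\mathcal N^{1/2}\xi_\lambda\|$ \emph{uniformly in $y$} and then integrate the $\downarrow$-factor separately via Cauchy--Schwarz in $y$. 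This produces two factors of $\|\mathcal N^{1/2}\xi_\lambda\|$, i.e.\ a full $\langle\xi_\lambda,\mathcal N\xi_\lambda\rangle$, where the stated bound has only one; converting back via $\|\mathcal N^{1/2}\xi_\lambda\|\le CL^{3/2}\rho^{7/12}$ costs an extra $L^{3/2}$ (and also $\rho^{7/12}$ rather than $\rho^{1}$), so the estimate does not close. The paper avoids this by \emph{not} decoupling: it keeps the $x$-integral inside and pairs $\|a_\uparrow(u^{\text{r}}_x)\xi_\lambda\|$ against $\|a_\downarrow(u^{\text{r}}_y)\xi_\lambda\|$ with AM--GM, so the double integral $\int dxdy\,|\varphi(x-y)|(\|a_\uparrow(u^{\text{r}}_x)\xi_\lambda\|^2+\|a_\downarrow(u^{\text{r}}_y)\xi_\lambda\|^2)\le \|\varphi\|_1\langle\xi_\lambda,\mathcal N\xi_\lambda\rangle$ gives a single power of $\mathcal N$ with no extraneous $L^{3/2}$, and the factor $\rho^{1+n_2/3}$ comes out of the two $\|\,\cdot\,\overline{v}^{\text{r}}\|_2$ norms.

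A secondary, compounding issue is your treatment of the $\downarrow$-factor: you extract $\|a_\downarrow(u^{\text{r}}_y)\|\le C_\beta$ (an operator norm of size $\rho^{-3\beta/2}$) and keep $a_\downarrow(\overline{v}^{\text{r}}_y)$ acting on $\xi_\lambda$. This is not what the proof of Proposition~\ref{prp:propN} does: there one extracts $\|\overline{v}^{\text{r}}_y\|_2\le\rho^{1/2}$ and keeps $a_\downarrow(u^{\text{r}}_y)$, so that $\int dy\,\|b_{y,\downarrow}\xi_\lambda\|\le\rho^{1/2}L^{3/2}\|\mathcal N^{1/2}\xi_\lambda\|$. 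Your choice replaces $\rho^{1/2}$ by $C_\beta$ and thus already loses a factor $\rho^{1/2+c\beta}$ in the commutator-residual term. If you use the paper's extraction on the $\downarrow$-side and keep the $x$-integral inside the $\uparrow$-factor (estimating $\|b^{(n_2)}_\uparrow(\varphi_y)\xi_\lambda\|\le\rho^{1/2+n_2/3}\int dx\,|\varphi(x-y)|\,\|a_\uparrow(u^{\text{r}}_x)\xi_\lambda\|$, then AM--GM), the first term becomes $C\rho^{1+n_2/3-2\gamma}\langle\xi_\lambda,\mathcal N\xi_\lambda\rangle$ and the residual becomes $CL^{3/2}\rho^{1+n_2/3-\gamma/2}\|\mathcal N^{1/2}\xi_\lambda\|$, after which the condition $\gamma\le 7/18$ and the a priori bound absorb the first into the second exactly as the constraint suggests. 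Your treatment of the second inequality (splitting $|k|^2\le 2||k|^2-\mu_\sigma|+2\mu_\sigma$ on $\mathrm{supp}\,\hat u^{\text{r}}$) is correct and matches the paper.
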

We refer the reader to Appendix \ref{app:lemphi} for the proof. We shall use Lemma \ref{lem:phi} to propagate the a priori estimates on $\xi_{0} = R^{*}\psi$ for $\mathbb{H}_{0}$, $\mathbb{Q}_{1}$ and $\widetilde{\mathbb{Q}}_{1}$, Eqs. (\ref{eq:aprH0}), (\ref{eq:aprQ1}), to the interpolating states $\xi_{\lambda} = T^{*}_{\lambda} R^{*}\psi$, via a Gr\"onwall-type argument. To do so, the next proposition will play an important role.
\begin{proposition}[Propagation estimate for $\mathbb{H}_{0}$, $\mathbb{Q}_{1}$ - Part 1.]\label{prp:ders} Let $\gamma \leq 7/18$. Under the assumptions of Theorem \ref{thm:main} the following is true:
\begin{eqnarray}\label{eq:contrH0}
&&\frac{d}{d\lambda}\langle \xi_{\lambda}, \mathbb{H}_{0} \xi_{\lambda} \rangle = -\langle \xi_{\lambda}, \mathbb{T}_{1} \xi_{\lambda}\rangle + \mathcal{E}_{\mathbb{H}_{0}}(\xi_{\lambda})\;,\quad \frac{d}{d\lambda} \langle \xi_{\lambda}, \mathbb{Q}_{1} \xi_{\lambda} \rangle = -\langle \xi_{\lambda}, \mathbb{T}_{2} \xi_{\lambda} \rangle + \mathcal{E}_{\mathbb{Q}_{1}}(\xi_{\lambda})\;,\nonumber\\&&\quad\qquad\qquad\qquad\qquad \frac{d}{d\lambda} \langle \xi_{\lambda}, \widetilde{\mathbb{Q}}_{1} \xi_{\lambda} \rangle = -\langle \xi_{\lambda}, \mathbb{T}_{2} \xi_{\lambda} \rangle + \mathcal{E}_{\widetilde{\mathbb{Q}}_{1}}(\xi_{\lambda})
\end{eqnarray}
where:
\begin{eqnarray}\label{eq:Tdef}
\mathbb{T}_{1} &=& -\int dxdy\, \theta(|x-y|_{L} < \rho^{-\gamma})(-2\Delta \varphi)(x-y) a_{\uparrow}(u^{\text{r}}_{x}) a_{\uparrow}(\overline{v}^{\text{r}}_{x}) a_{\downarrow}(u^{\text{r}}_{y}) a_{\downarrow}(\overline{v}^{\text{r}}_{y}) + \text{h.c.}\nonumber\\
\mathbb{T}_{2} &=& -\int dxdy\, V(x-y) \varphi(x-y) a_{\uparrow}(\overline{v}^{\text{r}}_{x}) a_{\uparrow}(u_{x}) a_{\downarrow}(\overline{v}^{\text{r}}_{y}) a_{\downarrow}(u_{y})  + \text{h.c.,}
\end{eqnarray}
and the error terms are bounded as, for $0\leq \eta < \min\{\gamma, \frac{1}{3}\}$:
\begin{eqnarray}\label{eq:errH0}
| \mathcal{E}_{\mathbb{H}_{0}}(\xi_{\lambda}) | &\leq& CL^{\frac{3}{2}} \rho^{\frac{4}{3} - \frac{\gamma}{2}} ( \| \mathbb{H}_{0}^{\frac{1}{2}}\xi_{\lambda}\| + \rho^{\frac{1}{3}} \|\mathcal{N}^{\frac{1}{2}}\xi_{\lambda} \| )\nonumber\\
|\mathcal{E}_{\widetilde{\mathbb{Q}}_{1}}(\xi_{\lambda})| &\leq& C_{\beta}\rho^{1 - 2\gamma - \eta} \| \widetilde{\mathbb{Q}}^{\frac{1}{2}}_{1} \xi_{\lambda} \| ( \| \mathbb{H}_{0}^{\frac{1}{2}} \xi_{\lambda} \| + \rho^{\frac{5\eta}{2}} \| \mathcal{N}^{\frac{1}{2}} \xi_{\lambda} \| ) + CL^{\frac{3}{2}} \rho^{2 - 2\gamma} \| \widetilde{\mathbb{Q}}^{\frac{1}{2}}_{1} \xi_{\lambda} \|\nonumber\\
|\mathcal{E}_{\mathbb{Q}_{1}}(\xi_{\lambda})| &\leq& C_{\beta}\rho^{1 - 2\gamma - \eta} \| \widetilde{\mathbb{Q}}^{\frac{1}{2}}_{1} \xi_{\lambda} \| ( \| \mathbb{H}_{0}^{\frac{1}{2}} \xi_{\lambda} \| + \rho^{\frac{5\eta}{2}} \| \mathcal{N}^{\frac{1}{2}} \xi_{\lambda} \| ) + CL^{\frac{3}{2}} \rho^{2 - 2\gamma} \| \mathbb{Q}^{\frac{1}{2}}_{1} \xi_{\lambda} \|\;.
\end{eqnarray}
\end{proposition}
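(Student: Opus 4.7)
My plan is to differentiate under the $\lambda$-flow, extract the leading ``bosonic'' contribution as $-\mathbb{T}_{i}$, and bound the remainders using Lemma \ref{lem:phi}. The starting point is the general identity, valid for any self-adjoint $\mathcal{O}$:
\[
\frac{d}{d\lambda}\langle \xi_\lambda, \mathcal{O}\xi_\lambda\rangle = -\langle \xi_\lambda, [\mathcal{O}, B - B^*]\xi_\lambda\rangle\;.
\]

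For the first identity I compute $[\mathbb{H}_0, B]$ by Leibniz, using $[\mathbb{H}_0, \hat a_{k,\sigma}] = -||k|^2 - \mu_\sigma|\,\hat a_{k,\sigma}$. Exploiting the Fourier supports of $\hat u^r$ (outside $\mathcal{B}_F^\sigma$) and $\hat v^r$ (inside $\mathcal{B}_F^\sigma$), $||k|^2-\mu_\sigma|$ equals $|k|^2-\mu_\sigma$ on the first and $\mu_\sigma-|k|^2$ on the second, so the $\mu_\sigma$-contributions cancel pairwise in each product $a_\sigma(u^r_z)a_\sigma(\bar v^r_z)$, leaving
\[
[\mathbb{H}_0, a_\sigma(u^r_z) a_\sigma(\bar v^r_z)] = a_\sigma((\Delta u^r)_z)\, a_\sigma(\bar v^r_z) - a_\sigma(u^r_z)\, a_\sigma((\Delta\bar v^r)_z)\;.
\]
Since the kernels are translation-invariant, $a_\sigma((\Delta u^r)_z) = \Delta_z a_\sigma(u^r_z)$. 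Integrating by parts twice in $z$ (respectively $z'$) for each of the two spin factors transfers the Laplacian onto $\varphi(z-z')$, the contributions of the two spins adding up to $(\Delta_z + \Delta_{z'})\varphi(z-z') = 2\Delta\varphi(z-z')$. This reconstructs $-\mathbb{T}_1$ after using the support of $\varphi_\infty$ to recover the cutoff $\theta(|x-y|_L<\rho^{-\gamma})$. The remainder $\mathcal{E}_{\mathbb{H}_0}$ collects the cross-terms of the double integration by parts, i.e. terms of the form $\int\varphi\, a(\partial u^r_z) a(\partial \bar v^r_z) a(u^r_{z'}) a(\bar v^r_{z'})$ plus hermitian conjugate. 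These match exactly the hypothesis of the second inequality of Lemma \ref{lem:phi} with $n_2 = 1$, which yields the announced estimate on $\mathcal{E}_{\mathbb{H}_0}$.

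For $\mathbb{Q}_1$ and $\widetilde{\mathbb{Q}}_1$, I compute the commutator by using the CAR to move the four annihilation operators of $B$ past the two creation operators $a^*_\sigma(u_x) a^*_{\sigma'}(u_y)$ of $\mathbb{Q}_1$. Since on the Fourier side $\hat u \hat u^r = \hat u^r$ and $\hat u \hat{\bar v}^r = 0$, the non-vanishing contractions are only those pairing an $a^*_\sigma(u_x)$ with an $a_\sigma(u^r_z)$; after collapsing the kernel $u^r(x-z)$ by an integration in $x$ and reassembling the remaining four-annihilation structure, the leading piece is exactly $-\mathbb{T}_2$, with the product $V(x-y)\varphi(x-y)$ arising by combining the potential in $\mathbb{Q}_1$ and the correlation function in $B$. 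Since $B$ couples exclusively opposite spins, the same-spin part $\mathbb{Q}_1-\widetilde{\mathbb{Q}}_1$ produces the same leading contribution (all the additional Wick contractions it would generate can be absorbed into the error). The non-leading contractions split into two families: (a) normal-ordered terms that are quadratic in $a, a^*$ coupled with a $\varphi$-integral; their expectation I bound by Cauchy--Schwarz as $\|\widetilde{\mathbb{Q}}_1^{1/2}\xi_\lambda\|$ times a factor controlled by Lemma \ref{lem:phi} (after at most one derivative appears on $u^r$ coming from the difference $u - u^r$), yielding the $\mathbb{H}_0^{1/2}\xi_\lambda$ and $\rho^{5\eta/2}\mathcal{N}^{1/2}\xi_\lambda$ weights, with $\eta\in[0,\min\{\gamma,1/3\})$ tuning the Cauchy--Schwarz split; (b) fully contracted scalar remainders, which are bounded by $L^{3/2}\rho^{2-2\gamma}\|\widetilde{\mathbb{Q}}_1^{1/2}\xi_\lambda\|$ using $\|\varphi\|_1 \leq C\rho^{-2\gamma}$ and the bounds of Proposition \ref{prp:decreg}.

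The main obstacle is the $\mathbb{Q}_1$ computation. After all Wick contractions in $[\mathbb{Q}_1, B]$ one obtains many terms, and organizing them so that the subleading ones produce only a \emph{single} square root of $\widetilde{\mathbb{Q}}_1$ (rather than a full $\widetilde{\mathbb{Q}}_1$), with a prefactor no worse than $\rho^{1-2\gamma-\eta}$, is delicate. This single-square-root structure is what ultimately lets Lemma \ref{lem: est priori Q1} close a Gr\"onwall argument for Proposition \ref{prp:ders} at the correct scale $L^3\rho^2$; obtaining instead an estimate involving $\widetilde{\mathbb{Q}}_1$ linearly would be insufficient. The secondary technical point is the systematic use of the regularity of the cutoffs $\hat u^r, \hat v^r$ (smooth, compactly supported) to legitimize all the integrations by parts uniformly in $L$, via the decay bounds of Proposition \ref{prp:decreg}; the $\mu_\sigma$-cancellation for $\mathbb{H}_0$ and the orthogonality relations $u u^r = u^r$, $u \bar v^r = 0$ (inherited from the respective Fourier supports) are the algebraic inputs that drive the identification of the leading terms $\mathbb{T}_1, \mathbb{T}_2$.
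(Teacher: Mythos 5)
Your overall scheme matches the paper's: differentiate, write the derivative as a commutator, expose the scattering-equation combination $\mathbb{T}_1, \mathbb{T}_2$, and bound the remainders so that the result closes under Gr\"onwall. For $\mathbb{H}_0$ your account is essentially right, except that the Leibniz decomposition $a(\Delta u^{\text{r}}_z)a(\bar v^{\text{r}}_z) - a(u^{\text{r}}_z)a(\Delta\bar v^{\text{r}}_z) = \Delta_z\big[a(u^{\text{r}}_z)a(\bar v^{\text{r}}_z)\big] - 2a(\nabla u^{\text{r}}_z)a(\nabla\bar v^{\text{r}}_z) - 2a(u^{\text{r}}_z)a(\Delta\bar v^{\text{r}}_z)$ produces two remainder species, not one: the cross-gradient term (handled by the second inequality of Lemma \ref{lem:phi} with $n_2=1$) and the term $a(u^{\text{r}}_z)a(\Delta\bar v^{\text{r}}_z)$ (handled by the first inequality of Lemma \ref{lem:phi} with $n_2 = 2$). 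Both are needed to account for $\mathcal{E}_{\mathbb{H}_0}$.

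For $\mathbb{Q}_1$ there are two genuine gaps. First, you write that one can ``collapse the kernel $u^{\text{r}}(x-z)$ by an integration in $x$'' to reach $\mathbb{T}_2$, but $u^{\text{r}}$ is only an approximate delta. Getting from the double-contraction term $\int V(x-y)\varphi(z-z') u^{\text{r}}_\uparrow(z;x) u^{\text{r}}_\downarrow(z';y)\, a_\uparrow(\bar v^{\text{r}}_z)a_\uparrow(u_x)a_\downarrow(\bar v^{\text{r}}_{z'})a_\downarrow(u_y)$ to $\mathbb{T}_2$ requires the decomposition $u^{\text{r}} = \delta^{\text{r}} - \nu$ (Eq.~\eqref{eq:udelta}): the $\nu$ pieces are what produce the $CL^{3/2}\rho^{2-2\gamma}\|\widetilde{\mathbb{Q}}_1^{1/2}\xi_\lambda\|$ contribution (Eqs.~\eqref{eq:nu1}--\eqref{eq:nu2}), and replacing the smoothed $\delta^{\text{r}}$ by a true delta is the content of Lemma \ref{lem:UV2}, which is precisely where the regularity of $V$ enters. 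Your ``fully contracted scalar remainders'' do not exist in this calculation; the $\rho^{2-2\gamma}$ error is this $u^{\text{r}} \to \delta$ approximation, not a contraction. Second, the partial-contraction errors (one $u^{\text{r}}$ kernel instead of two, the term $\text{A}$ in the paper) are \emph{not} controlled by Lemma \ref{lem:phi}: the paper instead uses Lemma \ref{lem:bosonbd} together with a low/high momentum split $u = u^< + u^>$ at scale $\rho^\eta$, the high part extracting $\mathbb{H}_0^{1/2}\xi_\lambda$ via $\int dx\,\|a(u^>_x)\xi_\lambda\|^2 \leq C\rho^{-2\eta}\langle\xi_\lambda, \mathbb{H}_0\xi_\lambda\rangle$. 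This split, not ``a derivative on $u^{\text{r}}$ from $u - u^{\text{r}}$,'' is what generates the weighted sum $\|\mathbb{H}_0^{1/2}\xi_\lambda\| + \rho^{5\eta/2}\|\mathcal{N}^{1/2}\xi_\lambda\|$ with the claimed prefactor $\rho^{1-2\gamma-\eta}$. You do correctly isolate the key structural demand---the error must appear only through a single square root of $\widetilde{\mathbb{Q}}_1$---but the mechanism you propose to achieve it is not the one that works.
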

\begin{proof}
\noindent{\underline{Derivative of $\langle \xi_{\lambda}, \mathbb{H}_{0} \xi_{\lambda} \rangle$}.} We compute:
\begin{equation}
\frac{d}{d\lambda} \langle \xi_{\lambda}, \mathbb{H}_{0} \xi_{\lambda} \rangle = -\langle \xi_{\lambda}, [ \mathbb{H}_{0}, B ] \xi_{\lambda} \rangle + \text{c.c.}
\end{equation}
with
\begin{equation}\label{eq:HB}
[ \mathbb{H}_{0}, B ] = \int dzdz'\, \varphi(z-z')[ \mathbb{H}_{0}, a_{\uparrow}(u^{\text{r}}_{z}) a_{\uparrow}(\overline{v}^{\text{r}}_{z}) a_{\downarrow}(u^{\text{r}}_{z'}) a_{\downarrow}(\overline{v}^{\text{r}}_{z'}) ]\;.
\end{equation}
We write the commutator as:
\begin{eqnarray}
[ \mathbb{H}_{0}, a_{\uparrow}(u^{\text{r}}_{z}) a_{\uparrow}(\overline{v}^{\text{r}}_{z}) a_{\downarrow}(u^{\text{r}}_{z'}) a_{\downarrow}(\overline{v}^{\text{r}}_{z'}) ] &=& [ \mathbb{H}_{0}, a_{\uparrow}(u_{z}^{\text{r}}) a_{\uparrow}(\overline{v}^{\text{r}}_{z}) ] a_{\downarrow}(u^{\text{r}}_{z'}) a_{\downarrow}(\overline{v}^{\text{r}}_{z'}) \nonumber\\
&& + a_{\uparrow}(u_{z}^{\text{r}}) a_{\uparrow}(\overline{v}^{\text{r}}_{z}) [ \mathbb{H}_{0}, a_{\downarrow}(u^{\text{r}}_{z'}) a_{\downarrow}(\overline{v}^{\text{r}}_{z'}) ] \;,
\end{eqnarray}
where, recalling that $\mathbb{H}_{0} = \sum_{\sigma} | |k|^{2} - \mu_{\sigma} | \hat a^{*}_{k,\sigma} \hat a_{k,\sigma}$, with $\mu_{\sigma} = k_{F}^{\sigma 2}$:
\begin{eqnarray}\label{eq:kin3}
[ \mathbb{H}_{0}, a_{\sigma}(u_{z}^{\text{r}}) a_{\sigma}(\overline{v}^{\text{r}}_{z}) ] &=& -a_{\sigma}((-\Delta - \mu_{\sigma})u_{z}^{\text{r}}) a_{\sigma}(\overline{v}^{\text{r}}_{z}) - a_{\sigma}(u_{z}^{\text{r}}) a_{\sigma}((\Delta + \mu_{\sigma})\overline{v}^{\text{r}}_{z})  \\
&=& \Delta_{z} a_{\sigma}(u^{\text{r}}_{z}) a_{\sigma}(\overline{v}^{\text{r}}_{z}) - 2a_{\sigma}(u_{z}^{\text{r}}) a_{\sigma}(\Delta\overline{v}^{\text{r}}_{z}) - 2a_{\sigma}(\nabla u^{\text{r}}_{z}) a_{\sigma}(\nabla \overline{v}^{\text{r}}_{z})\;.\nonumber
\end{eqnarray}
To write this identity we used that $u^{\text{r}}(z';z) \equiv u^{\text{r}}(z-z')$ and $v^{\text{r}}(z';z) \equiv v^{\text{r}}(z' + z)$; recall the definitions (\ref{eq:reguv}). Hence, $\Delta_{z'} u^{\text{r}}(z';z) = \Delta_{z} u^{\text{r}}(z';z)$ and $\Delta_{z'} v^{\text{r}}(z';z) = \Delta_{z} v^{\text{r}}(z';z)$. We define:
\begin{eqnarray}
\text{I} &=& \int dzdz'\, \varphi(z-z') \Delta_{z} \langle \xi_{\lambda}, a_{\uparrow}(u^{\text{r}}_{z}) a_{\uparrow}(\overline{v}^{\text{r}}_{z}) a_{\downarrow}(u^{\text{r}}_{z'}) a_{\downarrow}(\overline{v}^{\text{r}}_{z'}) \xi_{\lambda}\rangle + (\uparrow, z) \leftrightarrow (\downarrow, z') \nonumber\\
\text{II} &=& -2\int dzdz'\, \varphi(z-z') \langle \xi_{\lambda}, a_{\uparrow}(u_{z}^{\text{r}}) a_{\uparrow}(\Delta\overline{v}^{\text{r}}_{z}) a_{\downarrow}(u^{\text{r}}_{z'}) a_{\downarrow}(\overline{v}^{\text{r}}_{z'})\xi_{\lambda}\rangle + (\uparrow, z) \leftrightarrow (\downarrow, z')  \nonumber\\
\text{III} &=& -2\int dzdz'\, \varphi(z-z') \langle \xi_{\lambda}, a_{\uparrow}(\nabla u^{\text{r}}_{z}) a_{\uparrow}(\nabla \overline{v}^{\text{r}}_{z}) a_{\downarrow}(u^{\text{r}}_{z'}) a_{\downarrow}(\overline{v}^{\text{r}}_{z'}) \xi_{\lambda} \rangle + (\uparrow, z) \leftrightarrow (\downarrow, z')\;.
\end{eqnarray}
To estimate $\text{II}$ and $\text{III}$ we shall use Lemma \ref{lem:phi}. We have:
\begin{eqnarray}\label{eq:II-III}
|\text{II}| &\leq& C L^{\frac{3}{2}} \rho^{\frac{5}{3} - \frac{\gamma}{2}} \| \mathcal{N}^{\frac{1}{2}} \xi_{\lambda} \|\nonumber\\
|\text{III}| &\leq& CL^{\frac{3}{2}} \rho^{\frac{4}{3} - \frac{\gamma}{2}} ( \| \mathbb{H}_{0}^{\frac{1}{2}}\xi_{\lambda}\| + \rho^{\frac{1}{3}} \|\mathcal{N}^{\frac{1}{2}}\xi_{\lambda} \| )\;.
\end{eqnarray}
Consider now $\text{I}$. We rewrite it as:
\begin{equation}\label{eq:kinI}
\text{I} = \int_{|z-z'|_{L} \leq \rho^{-\gamma}} dzdz'\, \varphi(z-z') (\Delta_{z} + \Delta_{z'}) \langle \xi_{\lambda}, a_{\uparrow}(u^{\text{r}}_{z}) a_{\uparrow}(\overline{v}^{\text{r}}_{z}) a_{\downarrow}(u^{\text{r}}_{z'}) a_{\downarrow}(\overline{v}^{\text{r}}_{z'}) \xi_{\lambda}\rangle\;.
\end{equation}
The condition on the integration domain follows from the fact that $\varphi$ is the periodization of $\varphi_{\infty}$, Eq. (\ref{eq:phiper}), which is compactly supported in the ball $B$ of radius $\rho^{-\gamma}$, Eq. (\ref{eq:scat2}). Therefore, using that $\varphi_{\infty} = \nabla \varphi_{\infty} = 0$ on $\partial B$, we can integrate by parts in Eq. (\ref{eq:kinI}), without producing boundary terms. This is the point of our analysis where we take advantage of the Neumann boundary conditions of the scattering equation. We have:
\begin{equation}
\text{I} = \int dzdz'\, \theta(|z-z'|_{L} < \rho^{-\gamma})(2\Delta \varphi)(z-z') \langle \xi_{\lambda}, a_{\uparrow}(u^{\text{r}}_{z}) a_{\uparrow}(\overline{v}^{\text{r}}_{z}) a_{\downarrow}(u^{\text{r}}_{z'}) a_{\downarrow}(\overline{v}^{\text{r}}_{z'}) \xi_{\lambda} \rangle\;,
\end{equation}
with $\theta(\cdot)$ the characteristic function. All in all:
\begin{eqnarray}\label{eq:pause}
&&\frac{d}{d\lambda}\langle \xi_{\lambda}, \mathbb{H}_{0} \xi_{\lambda} \rangle = -\langle \xi_{\lambda}, \mathbb{T}_{1} \xi_{\lambda} \rangle  + \mathcal{E}_{\mathbb{H}_{0}}(\xi_{\lambda})\;,
\end{eqnarray}
where 
\begin{equation}
\mathbb{T}_{1} = \int dxdy\, \theta(|z-z'|_{L} < \rho^{-\gamma}) (2\Delta \varphi)(x-y) a_{\uparrow}(u^{\text{r}}_{x}) a_{\uparrow}(\overline{v}^{\text{r}}_{x}) a_{\downarrow}(u^{\text{r}}_{y}) a_{\downarrow}(\overline{v}^{\text{r}}_{y}) + \text{h.c.} \nonumber\\
\end{equation}
and $\mathcal{E}_{\mathbb{H}_{0}}(\xi_{\lambda})$ collects the error terms, produced by the contributions $\text{II}$ and $\text{III}$. Thanks to the estimates in (\ref{eq:II-III}):
\begin{equation}
|\mathcal{E}_{\mathbb{H}_{0}}(\xi_{\lambda})| \leq CL^{\frac{3}{2}} \rho^{\frac{4}{3} - \frac{\gamma}{2}} ( \| \mathbb{H}_{0}^{\frac{1}{2}}\xi_{\lambda}\| + \rho^{\frac{1}{3}} \|\mathcal{N}^{\frac{1}{2}}\xi_{\lambda} \| )\;.
\end{equation}
This concludes the proof of the first of Eqs. (\ref{eq:contrH0}). 

\medskip

\noindent{\underline{Derivative of $\langle \xi_{\lambda}, \widetilde{\mathbb{Q}}_{1} \xi_{\lambda} \rangle$}.} We compute:
\begin{equation}\label{eq:derQ1}
\frac{d}{d\lambda} \langle \xi_{\lambda}, \widetilde{\mathbb{Q}}_{1} \xi_{\lambda} \rangle = -\langle \xi_{\lambda}, [ \widetilde{\mathbb{Q}}_{1}, B] \xi_{\lambda}\rangle + \text{c.c.,}
\end{equation}
with
\begin{eqnarray}\label{eq:comQ1}
&&[ \widetilde{\mathbb{Q}}_{1}, B ] =\\&& \frac{1}{2} \sum_{\sigma\neq \sigma'} \int dxdydzdz'\, V(x-y) \varphi(z-z') [ a^{*}_{\sigma}(u_{x})a^{*}_{\sigma'}(u_{y}) a_{\sigma'}(u_{y})a_{\sigma}(u_{x}), a_{\uparrow}(u^{\text{r}}_{z}) a_{\uparrow}(\overline{v}^{\text{r}}_{z}) a_{\downarrow}(u^{\text{r}}_{z'}) a_{\downarrow}(\overline{v}^{\text{r}}_{z'})]\;.\nonumber
\end{eqnarray}
We rewrite the commutator as:
\begin{eqnarray}
&&[ a^{*}_{\sigma}(u_{x})a^{*}_{\sigma'}(u_{y}) a_{\sigma'}(u_{y})a_{\sigma}(u_{x}), a_{\uparrow}(u^{\text{r}}_{z}) a_{\uparrow}(\overline{v}^{\text{r}}_{z}) a_{\downarrow}(u^{\text{r}}_{z'}) a_{\downarrow}(\overline{v}^{\text{r}}_{z'})] = \\
&&\qquad -[ a^{*}_{\sigma}(u_{x})a^{*}_{\sigma'}(u_{y}), a_{\uparrow}(u^{\text{r}}_{z}) a_{\downarrow}(u^{\text{r}}_{z'}) ] a_{\uparrow}(\overline{v}^{\text{r}}_{z}) a_{\downarrow}(\overline{v}^{\text{r}}_{z'}) a_{\sigma'}(u_{y})a_{\sigma}(u_{x})\;.\nonumber
\end{eqnarray}
where:
\begin{eqnarray}\label{eq:comuQ1}
\big[ a^{*}_{\sigma}(u_{x}) a^{*}_{\sigma'}(u_{y}) ,  a_{\uparrow}(u^{\text{r}}_{z})  a_{\downarrow}(u^{\text{r}}_{z'}) \big] &=&  a^{*}_{\sigma}(u_{x}) \big( \delta_{\sigma'\uparrow} u^{\text{r}}_{\sigma'}(z;y) a_{\downarrow}(u^{\text{r}}_{z'}) - \delta_{\sigma'\downarrow} u^{\text{r}}_{\sigma'}(z';y) a_{\uparrow}(u^{\text{r}}_{z}) \big)\\
&& - a^{*}_{\sigma'}(u_{y})\big( \delta_{\sigma\uparrow} u^{\text{r}}_{\sigma}(z;x) a_{\downarrow}(u^{\text{r}}_{z'}) - \delta_{\sigma\downarrow} u^{\text{r}}_{\sigma}(z';x) a_{\uparrow}(u^{\text{r}}_{z}) \big)\nonumber\\
&& + \delta_{\sigma\uparrow} \delta_{\sigma'\downarrow} u^{\text{r}}_{\sigma}(z;x) u^{\text{r}}_{\sigma'}(z';y) - \delta_{\sigma\downarrow}\delta_{\sigma'\uparrow} u^{\text{r}}_{\sigma}(z';x) u^{\text{r}}_{\sigma'}(z;y)\;.\nonumber
\end{eqnarray}
Consider the first term in the first line of Eq. (\ref{eq:comuQ1}). All the other terms in the first and second line of (\ref{eq:comuQ1}) give rise to contributions to (\ref{eq:comQ1}) that can be estimated in exactly the same way. We have, recall the definition (\ref{eq:bphi}) of $b_{\sigma}(\varphi_{z})$:
\begin{eqnarray}
&&\int dxdydzdz'\, V(x-y) \varphi(z-z') u^{\text{r}}_{\uparrow}(z;y) \langle \xi_{\lambda}, a^{*}_{\downarrow}(u_{x}) a_{\downarrow}(u^{\text{r}}_{z'}) a_{\uparrow}(\overline{v}^{\text{r}}_{z}) a_{\downarrow}(\overline{v}^{\text{r}}_{z'}) a_{\uparrow}(u_{y})a_{\downarrow}(u_{x})\xi_{\lambda}\rangle\nonumber\\
&&\equiv -\int dxdydz\, V(x-y) u^{\text{r}}_{\uparrow}(z;y) \langle \xi_{\lambda}, a^{*}_{\downarrow}(u_{x})  b_{\downarrow}(\varphi_{z}) a_{\uparrow}(\overline{v}^{\text{r}}_{z}) a_{\uparrow}(u_{y})a_{\downarrow}(u_{x}) \xi_{\lambda}\rangle =: \text{A}\;.
\end{eqnarray}
We then estimate, by Lemma \ref{lem:bosonbd}:
\begin{eqnarray}
|\text{A}| &\leq& \int dxdydz\, V(x-y) |u^{\text{r}}_{\uparrow}(z;y)| \| b^{*}_{\downarrow}(\varphi_{z}) a_{\downarrow}(u_{x}) \xi_{\lambda}\| \| a_{\uparrow}(\overline{v}^{\text{r}}_{z}) a_{\uparrow}(u_{y})a_{\downarrow}(u_{x}) \xi_{\lambda}\| \nonumber\\
&\leq& C\rho^{\frac{1}{2}} \int dxdydz\, V(x-y) |u^{\text{r}}_{\uparrow}(z;y)| \| b_{\downarrow}(\varphi_{z}) a_{\downarrow}(u_{x}) \xi_{\lambda}\| \| a_{\uparrow}(u_{y})a_{\downarrow}(u_{x}) \xi_{\lambda}\| \nonumber\\
&& + C\rho^{1 - \frac{\gamma}{2}} \int dxdydz\, V(x-y) |u^{\text{r}}_{\uparrow}(z;y)| \| a_{\downarrow}(u_{x}) \xi_{\lambda}\| \| a_{\uparrow}(u_{y})a_{\downarrow}(u_{x}) \xi_{\lambda}\|\nonumber\\
&\equiv& \text{A}_{1} + \text{A}_{2}\;.
\end{eqnarray}
To get the second inequality, we used the estimate (\ref{eq:estbosb}), together with $\| a_{\uparrow}(\overline{v}^{\text{r}}_{z}) \| \leq C\rho^{\frac{1}{2}}$. Consider the term $\text{A}_{2}$. By Cauchy-Schwarz inequality, using that $\| u^{\text{r}}_{y} \|_{1} \leq C$:
\begin{equation}\label{eq:A2}
|\text{A}_{2}| \leq C\rho^{1 - \frac{\gamma}{2}} \| \mathcal{N}^{\frac{1}{2}} \xi_{\lambda} \| \| \widetilde{\mathbb{Q}}^{\frac{1}{2}}_{1} \xi_{\lambda} \|\;.
\end{equation}
Consider now the term $\text{A}_{1}$. Here we split $u_{x}$ as $u^{<}_{x} + u^{>}_{x}$, with $\hat u^{<}(k)$ supported for $|k| \leq \rho^{\eta}$, with $\eta < \min\{\gamma , \frac{1}{3}\}$ to be chosen later. Correspondingly:
\begin{eqnarray}
\text{A}_{1} &\leq& C\rho^{\frac{1}{2}} \int dxdydz\, V(x-y) |u^{\text{r}}_{\uparrow}(z;y)| \| b_{\downarrow}(\varphi_{z}) a_{\downarrow}(u^{<}_{x}) \xi_{\lambda}\| \| a_{\uparrow}(u_{y})a_{\downarrow}(u_{x}) \xi_{\lambda}\| \nonumber\\
&& + C\rho^{\frac{1}{2}} \int dxdydz\, V(x-y) |u^{\text{r}}_{\uparrow}(z;y)| \| b_{\downarrow}(\varphi_{z}) a_{\downarrow}(u^{>}_{x}) \xi_{\lambda}\| \| a_{\uparrow}(u_{y})a_{\downarrow}(u_{x}) \xi_{\lambda}\|\nonumber\\
&\equiv& \text{A}_{1;1} + \text{A}_{1;2}\;.
\end{eqnarray}
Consider $\text{A}_{1;1}$. Using that $\| u^{<}_{x} \|_{2} \leq C\rho^{\frac{3\eta}{2}}$, we get:
\begin{eqnarray}\label{eq:A11}
|\text{A}_{1;1}| &\leq& C\rho^{\frac{1}{2} + \frac{3\eta}{2}} \int dxdydz\, V(x-y) |u^{\text{r}}_{\uparrow}(z;y)| \| b_{\downarrow}(\varphi_{z})  \xi_{\lambda}\| \| a_{\uparrow}(u_{y})a_{\downarrow}(u_{x}) \xi_{\lambda}\|\nonumber\\
&\leq& C\rho^{1 + \frac{3\eta}{2}}  \int dxdydzdz'\, V(x-y) |u^{\text{r}}_{\uparrow}(z;y)| \varphi(z-z') \| a_{\downarrow}(u^{\text{r}}_{z}) \xi_{\lambda} \| \| a_{\uparrow}(u_{y})a_{\downarrow}(u_{x}) \xi_{\lambda}\| \nonumber\\
&\leq& C \rho^{1 + \frac{3\eta}{2} - 2\gamma} \| \mathcal{N}^{\frac{1}{2}}\xi_{\lambda} \| \| \widetilde{\mathbb{Q}}_{1}^{\frac{1}{2}} \xi_{\lambda} \|\;,
\end{eqnarray}
where the last step follows from Cauchy-Schwarz inequality. Finally, consider $\text{A}_{1;2}$. We have:
\begin{eqnarray}\label{eq:A12}
|\text{A}_{1;2}| &\leq& C_{\beta}\rho^{1 - 2\gamma} \int dxdydz\, V(x-y) |u^{\text{r}}_{\uparrow}(z;y)| \|a_{\downarrow}(u^{>}_{x}) \xi_{\lambda}\| \| a_{\uparrow}(u_{y})a_{\downarrow}(u_{x}) \xi_{\lambda}\|\nonumber\\
&\leq& C_{\beta}\rho^{1-2\gamma - \eta} \| \mathbb{H}_{0}^{\frac{1}{2}} \xi_{\lambda} \| \| \widetilde{\mathbb{Q}}^{\frac{1}{2}}_{1} \xi_{\lambda} \|\;,
\end{eqnarray}
where the last step follows from Cauchy-Schwarz inequality, combined with:
\begin{equation}
\int dx\, \|a_{\downarrow}(u^{>}_{x}) \xi_{\lambda}\|^{2} \leq C\rho^{-2\eta} \langle \xi_{\lambda}, \mathbb{H}_{0} \xi_{\lambda} \rangle\;.
\end{equation}
This inequality can be proven in a way completely analogous to (\ref{eq:Nbig}). In fact:
\begin{eqnarray}
\int dx\, \|a_{\sigma}(u^{>}_{x}) \xi_{\lambda}\|^{2} &=& \sum_{k} (\hat u_{\sigma}^{>}(k))^{2} \langle \xi_{\lambda}, \hat a^{*}_{k,\sigma} a_{k,\sigma} \xi_{\lambda}\rangle \nonumber\\
&\leq& \rho^{-2\eta} \sum_{k} |k|^{2} (\hat u_{\sigma}^{>}(k))^{2} \langle \xi_{\lambda}, \hat a^{*}_{k,\sigma} a_{k,\sigma} \xi_{\lambda}\rangle\nonumber\\
&\leq& C\rho^{-2\eta} \sum_{k} | |k|^{2} - \mu_{\sigma} | \langle \xi_{\lambda}, \hat a^{*}_{k,\sigma} a_{k,\sigma} \xi_{\lambda}\rangle\nonumber\\
&\leq&C\rho^{-2\eta} \langle \xi_{\lambda}, \mathbb{H}_{0} \xi_{\lambda}\rangle\;,
\end{eqnarray}
where we used that $\hat u_{\sigma}^{>}(k)$ is supported for $|k| \geq \rho^{\eta}$, that $0\leq \hat u_{\sigma}^{>}(k) \leq 1$, and that $\mu_{\sigma} \leq C\rho^{\frac{2}{3}} \ll C\rho^{2\eta}$. Hence:
\begin{equation}\label{eq:bdA}
|\text{A}| \leq C_{\beta}\rho^{1 - 2\gamma - \eta} \| \widetilde{\mathbb{Q}}^{\frac{1}{2}}_{1} \xi_{\lambda} \| ( \| \mathbb{H}_{0}^{\frac{1}{2}} \xi_{\lambda} \| + \rho^{\frac{5\eta}{2}} \| \mathcal{N}^{\frac{1}{2}} \xi_{\lambda} \| )\;.
\end{equation}
The other three terms arising from the first two lines of the right-hand side of (\ref{eq:comuQ1}) can be estimated in exactly the same way.

Next, let us plug the last two terms in the right-hand side of Eq. (\ref{eq:comuQ1}) in Eq. (\ref{eq:comQ1}). We get:
\begin{equation}\label{eq:Q1main}
\text{I}_{\text{main}} = -\int dxdydzdz'\, V(x-y) \varphi(z-z') u^{\text{r}}_{\uparrow}(z;x) u^{\text{r}}_{\downarrow}(z';y) a_{\uparrow}(\overline{v}^{\text{r}}_{z}) a_{\uparrow}(u_{x}) a_{\downarrow}(\overline{v}^{\text{r}}_{z'}) a_{\downarrow}(u_{y})\;. 
\end{equation}
We shall use that $u^{\text{r}}_{\uparrow}(z;x)$ behaves as a delta function, to leading order in $\rho$. More precisely, we write:
\begin{equation}\label{eq:udelta}
u_{\sigma}^{\text{r}}(z;x) = \delta_{\sigma}^{\text{r}}(z;x) - \nu_{\sigma}(z;x)\;,
\end{equation}
where $\nu_{\sigma}(z;x)\equiv \nu_{\sigma}(z-x)$ with Fourier transform given by:
\begin{equation}
\hat \nu_{\sigma}(k) = \left\{ \begin{array}{cc} 1 & \text{for $0\leq |k| \leq k^{\sigma}_{F}$} \\ 0 & \text{for $|k| > 2k_{F}^{\sigma}$,}   \end{array} \right.
\end{equation}
and it smoothly interpolates between $1$ and $0$ for $k_{F}^{\sigma} \leq |k| \leq 2k_{F}^{\sigma}$. The function $\nu_{\sigma, x}(y)$ satisfies the bounds:
\begin{equation}
\| \nu_{\sigma, x} \|_{1} \leq C\;,\qquad \| \nu_{\sigma, x} \|_{\infty} \leq C\rho\;.
\end{equation}
Instead, $\delta_{\sigma}^{\text{r}}(z;x) \equiv \delta^{\text{r}}(z-x)$, with:
\begin{equation}
\hat \delta_{\sigma}^{\text{r}}(k) = \left\{ \begin{array}{cc} 1 & \text{for $0\leq |k| \leq \frac{3}{2}\rho^{-\beta}$} \\ 0 & \text{for $|k| \geq 2\rho^{-\beta}$} \\ \end{array} \right.
\end{equation}
and it smoothly interpolates between $1$ and $0$ in the region $(3/2)\rho^{-\beta} \leq |k| \leq 2\rho^{-\beta}$. The function $\delta^{\text{r}}_{\sigma,x}(y)$ is an approximate Dirac delta function at $x$, such that 
\begin{equation}
\| \delta^{\text{r}}_{\sigma,x} \|_{1} \leq C\;,\qquad \|\delta^{\text{r}}_{\sigma, x}\|_{\infty} \leq C\rho^{-3\beta}\;.
\end{equation}
Let us perform the replacement (\ref{eq:udelta}) in Eq. (\ref{eq:Q1main}). Consider the terms with one $\nu$. We have:
\begin{eqnarray}\label{eq:nu1}
&&\int dxdydzdz'\, V(x-y) \varphi(z-z') |\delta_{\sigma}^{\text{r}}(z;x)| |\nu_{\sigma'}(z';y)| |\langle \xi_{\lambda}, a_{\uparrow}(\overline{v}^{\text{r}}_{z}) a_{\uparrow}(u_{x}) a_{\downarrow}(\overline{v}^{\text{r}}_{z'}) a_{\downarrow}(u_{y}) \xi_{\lambda}\rangle|\nonumber\\
&&\quad \leq C\int dxdy\, V(x-y) \rho^{2-2\gamma} \| a_{\uparrow}(u_{x}) a_{\downarrow}(u_{y}) \xi_{\lambda}\|\nonumber\\
&&\quad \leq CL^{\frac{3}{2}} \rho^{2 - 2\gamma} \| \widetilde{\mathbb{Q}}^{\frac{1}{2}}_{1} \xi_{\lambda} \|\;.
\end{eqnarray}
Next, consider the terms with two $\nu$. Proceeding as above we have:
\begin{eqnarray}\label{eq:nu2}
&&\int dxdydzdz'\, V(x-y) \varphi(z-z') |\nu_{\sigma}(z;x)| |\nu_{\sigma'}(z';y)| |\langle \xi_{\lambda}, a(\overline{v}^{\text{r}}_{z}) a_{\uparrow}(u_{x}) a(\overline{v}^{\text{r}}_{z'}) a_{\downarrow}(u_{y}) \xi_{\lambda}\rangle| \nonumber\\
&& \quad \leq C\int dxdy\, V(x-y) \rho^{2 - 2\gamma} \| a_{\uparrow}(u_{x}) a_{\downarrow}(u_{y}) \xi_{\lambda}\|\nonumber\\
&&\quad \leq CL^{\frac{3}{2}} \rho^{2 - 2\gamma} \| \widetilde{\mathbb{Q}}^{\frac{1}{2}}_{1} \xi_{\lambda} \|\;.
\end{eqnarray}
Therefore, the main contribution to Eq. (\ref{eq:Q1main}) is:
\begin{equation}\label{eq:Q1fin}
-\int dxdydzdz'\, V(x-y) \varphi(z-z') \delta^{\text{r}}_{\uparrow}(z;x) \delta^{\text{r}}_{\downarrow}(z';y) a_{\uparrow}(\overline{v}^{\text{r}}_{z}) a_{\uparrow}(u_{x}) a_{\downarrow}(\overline{v}^{\text{r}}_{z'}) a_{\downarrow}(u_{y})\;.
\end{equation}
The next lemma will allow us to replace the approximate $\delta$ functions with true $\delta$ functions, up to a small error. This is one of the points where we use the regularity of the potential.
\begin{lemma}\label{lem:UV2} Under the assumptions of Proposition \ref{prp:ders}, the following is true:
\begin{eqnarray}\label{eq:phizz}
&&-\int dxdydzdz'\, V(x-y) \varphi(z-z') \delta^{\text{r}}_{\uparrow}(z;x) \delta^{\text{r}}_{\downarrow}(z';y) \langle \xi_{\lambda}, a_{\uparrow}(\overline{v}^{\text{r}}_{z}) a_{\uparrow}(u_{x}) a_{\downarrow}(\overline{v}^{\text{r}}_{z'}) a_{\downarrow}(u_{y}) \xi_{\lambda} \rangle  \nonumber\\
&& = -\int dxdy\, V(x-y) \varphi(x-y) \langle \xi_{\lambda}, a_{\uparrow}(\overline{v}^{\text{r}}_{x}) a_{\uparrow}(u_{x}) a_{\downarrow}(\overline{v}^{\text{r}}_{y}) a_{\downarrow}(u_{y}) \xi_{\lambda} \rangle + \widehat{\mathcal{E}}_{\widetilde{\mathbb{Q}}_{1}}(\xi_{\lambda})\;,
\end{eqnarray}
where, for all $n\geq 4$, taking $V\in C^{k}$ with $k$ large enough:
\begin{equation}
|\widehat{\mathcal{E}}_{\widetilde{\mathbb{Q}}_{1}}(\xi_{\lambda})| \leq C_{n} \rho^{\beta (n - 3)} ( CL^{3} \rho^{2} + \langle \xi_{\lambda}, \widetilde{\mathbb{Q}}_{1} \xi_{\lambda}\rangle )\;.
\end{equation}
\end{lemma}
The proof of Lemma \ref{lem:UV2} is deferred to Appendix \ref{sec:UV2}. 
Putting together (\ref{eq:bdA}), (\ref{eq:nu1}), (\ref{eq:nu2}), (\ref{eq:phizz}) and recalling the definition of $\mathbb{T}_{2}$, Eq. (\ref{eq:Tdef}), the claim follows. The proof of the statement about $\langle \xi_{\lambda},\mathbb{Q}_{1} \xi_{\lambda}\rangle$ is exactly the same, and we shall omit the details. 
\end{proof}
\subsection{Scattering equation cancellation}
The next result will imply an important cancellation, that will be used to propagate the a priori bounds for $\mathbb{H}_{0}$, $\mathbb{Q}_{1}$, and to compute the ground state energy at order $\rho^{2}$.
\begin{proposition}[Scattering equation cancellation]\label{prp:cancscat} Let:
\begin{equation}
\widetilde{\mathbb{Q}}^{\text{r}}_{4} = \int dxdy\, V(x-y) a^{*}_{\uparrow}(u^{\text{r}}_{x}) a^{*}_{\downarrow}(u^{\text{r}}_{y}) a^{*}_{\downarrow}(\overline{v}^{\text{r}}_{y}) a^{*}_{\uparrow}(\overline{v}^{\text{r}}_{x})  + \text{h.c..}
\end{equation}
Let $\gamma \leq 7/18$. Under the assumptions of Theorem \ref{thm:main}, the following holds:
\begin{eqnarray}\label{eq:lambdagamma}
&&\langle \xi_{\lambda}, ( \mathbb{T}_{1} + \mathbb{T}_{2} + \widetilde{\mathbb{Q}}^{\text{r}}_{4}) \xi_{\lambda} \rangle \\
&&= \lambda_{\gamma} \int dxdy\,\theta(|x - y|_{L}< \rho^{-\gamma}) (1-\varphi(x-y)) \langle \xi_{\lambda}, (a_{\uparrow}(\overline{v}^{\text{r}}_{x}) a_{\uparrow}(u^{\text{r}}_{x}) a_{\downarrow}(\overline{v}^{\text{r}}_{y}) a_{\downarrow}(u^{\text{r}}_{y}) + \text{h.c.}) \xi_{\lambda}  \rangle + \mathcal{E}_{\mathbb{T}_{2}}(\xi_{\lambda})\nonumber
\end{eqnarray}
with $|\lambda_{\gamma}| \leq C\rho^{3\gamma}$ and, for $\delta > 0$ and $L$ large enough:
\begin{equation}\label{eq:ETerr}
| \mathcal{E}_{\mathbb{T}_{2}}(\xi_{\lambda}) | \leq CL^{\frac{3}{2}}\rho^{\frac{3}{2}} \| \mathcal{N}^{\frac{1}{2}} \xi_{\lambda} \|\;.\end{equation}
Moreover:
\begin{eqnarray}\label{eq:lambdagamma0}
&&\Big| \lambda_{\gamma} \int dxdy\,\theta(|x - y|_{L} < \rho^{-\gamma}) (1-\varphi(x-y)) \langle \xi_{\lambda}, (a_{\uparrow}(\overline{v}^{\text{r}}_{x}) a_{\uparrow}(u^{\text{r}}_{x}) a_{\downarrow}(\overline{v}^{\text{r}}_{y}) a_{\downarrow}(u^{\text{r}}_{y}) + \text{h.c.}) \xi_{\lambda}  \rangle \Big| \nonumber\\
&&\qquad \leq C L^{\frac{3}{2}} \rho^{1 + \frac{3\gamma}{2}} \| \mathcal{N}^{\frac{1}{2}} \xi_{\lambda} \|\;.
\end{eqnarray}
\end{proposition}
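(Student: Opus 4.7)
The plan is to put $\mathbb{T}_1$, $\mathbb{T}_2$, and $\widetilde{\mathbb{Q}}^{\text{r}}_4$ into a common normal-ordered form by fermionic anticommutation, so that the sum of their coefficients $2\Delta\varphi\cdot\theta - V\varphi + V$ can be collapsed via the Neumann scattering equation (\ref{eq:scat2}), which in its integrated form reads $2\Delta\varphi + V(1-\varphi) = 2\lambda_\gamma(1-\varphi)$. Outside the ball $|x-y|_L < \rho^{-\gamma}$ the coefficient vanishes identically (since $\varphi = 0$ there and $V$ is compactly supported at scale $O(1) \ll \rho^{-\gamma}$), so only the small $\lambda_\gamma$ term survives, producing the right-hand side of (\ref{eq:lambdagamma}).

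I would first reorganize the three operators. $\mathbb{T}_1$ is already of shape $a_\uparrow(\ur_x) a_\uparrow(\overline{\vr}_x) a_\downarrow(\ur_y) a_\downarrow(\overline{\vr}_y) + \text{h.c.}$ For $\mathbb{T}_2$, two applications of same-spin anticommutation $\{a_\sigma(f), a_\sigma(g)\} = 0$ reorder $a_\uparrow(\overline{\vr}_x) a_\uparrow(u_x)\cdots$ to match, with total sign $(-1)^2 = +1$, leaving the sole mismatch $u$ versus $\ur$. For the Hermitian conjugate of $\widetilde{\mathbb{Q}}^{\text{r}}_4$, different-spin anticommutations reshuffle the four annihilation operators into the same form. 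I then replace $u \mapsto \ur$ in $\mathbb{T}_2$, collecting the remainder into $\mathcal{E}_{\mathbb{T}_2}$. The kernel $u - \ur$ splits into an IR shell $\{k_F \leq |k| \leq 2k_F\}$, handled via $\|(u-\ur)^{\text{IR}}\|_2 \leq C\rho^{1/2}$ together with $\|a_\sigma(\overline{\vr}_x)\| \leq C\rho^{1/2}$ and Cauchy--Schwarz against $V\varphi \in L^1$; and a UV tail $|k| \geq \tfrac{3}{2}\rho^{-\beta}$, handled by using the gap $|k|^2 - \mu_\sigma \geq c\rho^{-2\beta}$ to convert $\mathcal{N}$-weights into $\mathbb{H}_0$-weights, combined with the regularity $V \in C^k$ (repeated integration by parts in the kernel $V(x-y)\varphi(x-y)$ produces arbitrarily many factors of $\rho^\beta$). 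Summing these contributions gives the bound on $\mathcal{E}_{\mathbb{T}_2}(\xi_\lambda)$.

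For the residual estimate (\ref{eq:lambdagamma0}), I would rewrite the fermionic string as $\pm b_{x,\uparrow} b_{y,\downarrow}$ with the notation of (\ref{eq:bphi}), set $h(z) := \theta(|z|_L < \rho^{-\gamma})(1-\varphi(z))$, and apply Cauchy--Schwarz in the form
\[
\Big|\int dx\,dy\,h(x-y)\langle \xi_\lambda, b_{x,\uparrow} b_{y,\downarrow}\xi_\lambda\rangle\Big| \leq \Big(\int dy\,\|b^*_\uparrow(\bar h_y)\xi_\lambda\|^2\Big)^{1/2} \Big(\int dy\,\|b_{y,\downarrow}\xi_\lambda\|^2\Big)^{1/2}.
\]
The second factor is controlled by $\|b_{y,\downarrow}\xi_\lambda\| \leq C\rho^{1/2}\|a_\downarrow(\ur_y)\xi_\lambda\|$ together with $\int dy\,\|a_\downarrow(\ur_y)\xi_\lambda\|^2 \leq \langle\xi_\lambda,\mathcal{N}\xi_\lambda\rangle$; for the first, Lemma \ref{lem:bosonbd} yields $\|b^*_\uparrow(\bar h_y)\xi_\lambda\| \leq \|b_\uparrow(\bar h_y)\xi_\lambda\| + C\rho^{1/2}\|h\|_2\|\xi_\lambda\|$, and Young's inequality produces $\int dy\,\|b_\uparrow(\bar h_y)\xi_\lambda\|^2 \leq \|h\|_1^2 \cdot C\rho\langle\xi_\lambda,\mathcal{N}\xi_\lambda\rangle$. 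Inserting $\|h\|_1 \leq C\rho^{-3\gamma}$, $\|h\|_2 \leq C\rho^{-3\gamma/2}$, $|\lambda_\gamma| \leq C\rho^{3\gamma}$, and the a priori bound $\langle\mathcal{N}\rangle \leq CL^3\rho^{7/6}$ (absorbable into $L^{3/2}\rho^{1+3\gamma/2}\|\mathcal{N}^{1/2}\xi_\lambda\|$ precisely under the hypothesis $\gamma \leq 7/18$) yields the claimed estimate.

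The hardest step will be the replacement $u \mapsto \ur$ in $\mathbb{T}_2$: since $u$ is a sharp projector onto $\{|k| > k_F\}$, with neither UV cutoff nor Fermi-surface smoothing, neither the bosonic estimates of Lemma \ref{lem:bosonbd} nor the decay estimates of Proposition \ref{prp:decreg} apply to it directly. Controlling the UV tail of $u-\ur$ requires repeated integration by parts in the kernel $V\varphi$, where the regularity hypothesis $V \in C^k$ enters with a constant growing in $k$; balancing this against the remaining error budget is what fixes the lower bound on $k$ and the permissible size of $\beta$.
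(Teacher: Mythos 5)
Your proposal follows the same route as the paper's proof. You correctly identify that after replacing $u$ by $u^{\text{r}}$ in $\mathbb{T}_2$ and reordering by anticommutation, the three operators share the common structure $\int g(x-y)\, b_{x,\uparrow} b_{y,\downarrow} + \text{h.c.}$ with total coefficient $g = 2\theta\Delta\varphi + V(1-\varphi)$, which the Neumann scattering equation collapses to $2\lambda_\gamma\,\theta(1-\varphi)$; and your Cauchy--Schwarz argument for (\ref{eq:lambdagamma0}) reproduces the paper's via Lemma \ref{lem:bosonbd}, with $\|h\|_1 \leq C\rho^{-3\gamma}$, $\|h\|_2 \leq C\rho^{-3\gamma/2}$, $|\lambda_\gamma| \leq C\rho^{3\gamma}$, the propagated number bound, and $\gamma \leq 7/18$ used exactly where you say, to absorb $\rho\langle\mathcal{N}\rangle$. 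Two small remarks: the paper passes from $\varphi$ to $\varphi_\infty$ (and from $\Lambda_L$ to $\tilde\Lambda_L$) to invoke the scattering PDE, generating a boundary term $\frak{e}_L$ of order $L^2$, which you omit but which is benign; and for the UV tail of $u - u^{\text{r}}$ the paper relies only on the smoothness of the kernel $V\varphi$ inside the ball (giving arbitrary powers of $\rho^\beta$ via integration by parts, Appendix C.2) and not on the kinetic gap $|k|^2 - \mu_\sigma \gtrsim \rho^{-2\beta}$ — your route through $\mathbb{H}_0$-weights would also work but is not what the stated bound (\ref{eq:ETerr}) records, which involves $\mathcal{N}^{1/2}$ alone.
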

\begin{proof} Let $\mathbb{T}_{2}^{\text{r}}$ be the operator obtained from $\mathbb{T}_{2}$ after replacing all $u$ with $u^{\text{r}}$. The error term $\mathcal{E}_{\mathbb{T}_{2}}(\xi_{\lambda})$ takes into account the difference $\langle \xi_{\lambda}, (\mathbb{T}_{2} - \mathbb{T}_{2}^{\text{r}}) \xi_{\lambda}\rangle$. We postpone its estimate (\ref{eq:ETerr}) to Appendix \ref{app:ab}.

Recall the notation $b_{x,\sigma} = a_{\sigma}(u^{\text{r}}_{x}) a_{\sigma}(\overline{v}^{\text{r}}_{x})$, Eq. (\ref{eq:bphi}). We then have, using that by the compact support of the potential $V(x-y) \equiv V(x-y) \theta(|x-y|_{L} < \rho^{-\gamma})$:
\begin{eqnarray}
&&\langle \xi_{\lambda}, ( \mathbb{T}_{1} + \mathbb{T}^{\text{r}}_{2} + \widetilde{\mathbb{Q}}^{\text{r}}_{4}) \xi_{\lambda} \rangle \\
&&=\int dxdy\, \theta(|x-y|_{L}<\rho^{-\gamma})\big(2\Delta \varphi(x-y) + V(x-y)(1 - \varphi(x-y))\big) \langle \xi_{\lambda}, b_{x,\uparrow} b_{y,\downarrow} \xi_{\lambda}  \rangle + \text{c.c.}\nonumber\\
&& \equiv 2\int dxdy\, \theta(|x-y|_{L}<\rho^{-\gamma}) \Big(-\Delta f(x-y) + \frac{1}{2}V(x-y) f(x-y)\Big) \langle \xi_{\lambda}, b_{x,\uparrow} b_{y,\downarrow} \xi_{\lambda}  \rangle + \text{c.c.,} \nonumber
\end{eqnarray}
with $f = 1 - \varphi$. Recall that $\varphi$ is the periodization of $\varphi_{\infty}$, the solution of the Neumann problem (\ref{eq:scat2}), over $\Lambda_{L}$: $\varphi(x) = \sum_{n \in \mathbb{Z}^{3}} \varphi_{\infty}(x + n_{1} e_{1} L + n_{2} e_{2} L + n_{3} e_{3} L)$. The function $\varphi_{\infty}(x)$ is compactly supported in $\mathbb{R}^{3}$, with support in $B = \{ x\in \mathbb{R}^{3} \mid |x| \leq \rho^{-\gamma} \}$. Thus, up to a boundary term we can replace $f$ with $f_{\infty} = 1 - \varphi_{\infty}$:
\begin{eqnarray}
&&\int_{\Lambda_{L} \times \Lambda_{L}} dxdy\, \theta(|x-y|_{L}<\rho^{-\gamma}) \Big(-\Delta f(x-y) + \frac{1}{2}V(x-y) f(x-y)\Big) \langle \xi_{\lambda}, b_{x,\uparrow} b_{y,\downarrow} \xi_{\lambda}  \rangle\\
&&\quad = \int_{\tilde \Lambda_{L} \times \tilde \Lambda_{L}} dxdy\, \theta(|x-y|_{L}<\rho^{-\gamma}) \Big(-\Delta f_{\infty}(x-y) + \frac{1}{2}V_{\infty}(x-y) f_{\infty}(x-y)\Big) \langle \xi_{\lambda}, b_{x,\uparrow} b_{y,\downarrow} \xi_{\lambda}  \rangle + \frak{e}_{L}\nonumber
\end{eqnarray}
where $\tilde \Lambda_{L} = [\rho^{-\gamma}, L - \rho^{-\gamma}]^{3}$, and $\frak{e}_{L}$ is a boundary term:
\begin{equation}\label{eq:eLbd}
| \frak{e}_{L} | \leq CL^{2} \rho^{-4\gamma} ( \| \theta(|\cdot|_{L} < \rho^{-\gamma})\Delta \varphi \|_{\infty} + \|\varphi\|_{\infty} ) \rho^{-3\beta + 1}\;.
\end{equation}
Therefore, using that $f_{\infty}$ solves the scattering equation in a ball of radius $\rho^{-\gamma}$, Eq. (\ref{eq:scat2}), we easily get:
\begin{eqnarray}\label{eq:scatTTQ}
&&\langle \xi_{\lambda}, ( \mathbb{T}_{1} + \mathbb{T}^{\text{r}}_{2} + \widetilde{\mathbb{Q}}^{\text{r}}_{4}) \xi_{\lambda} \rangle \\
&& = 2\lambda_{\gamma} \int_{\tilde \Lambda_{L} \times \tilde \Lambda_{L}} dxdy\, \theta(|x - y|_{L} < \rho^{-\gamma})f_{\infty}(x-y) \langle \xi_{\lambda}, (b_{x,\uparrow} b_{y,\downarrow} + \text{h.c.}) \xi_{\lambda} \rangle + \frak{e}_{L} \nonumber\\
&& = 2\lambda_{\gamma} \int_{\Lambda_{L} \times \Lambda_{L}} dxdy\, \theta(|x - y|_{L} < \rho^{-\gamma})f(x-y) \langle \xi_{\lambda}, (b_{x,\uparrow} b_{y,\downarrow} +  \text{h.c.})\xi_{\lambda}  \rangle + \frak{e}_{L}\;,\nonumber
\end{eqnarray}
up to a redefinition of the boundary term (still satisfying (\ref{eq:eLbd})). To conclude, we estimate the integral using Lemma \ref{lem:bosonbd}, setting $g(x-y) = \lambda_{\gamma} \theta(|x - y|_{L} < \rho^{-\gamma})f(x-y)$. We shall use that:
\begin{equation}
\| g \|_{1} \leq C\;,\qquad \| g \|_{2} \leq C\rho^{\frac{3\gamma}{2}}\;.
\end{equation}
We get, thanks to Lemma \ref{lem:bosonbd}:
\begin{eqnarray}
&&\Big|\int dxdy\, g(x-y) \langle \xi_{\lambda}, a_{\uparrow}(\overline{v}^{\text{r}}_{x}) a_{\uparrow}(u^{\text{r}}_{x}) a_{\downarrow}(\overline{v}^{\text{r}}_{y}) a_{\downarrow}(u^{\text{r}}_{y}) \xi_{\lambda}  \rangle\Big| \nonumber\\
&&\qquad\qquad \leq \int dy\, \| b^{*}_{\uparrow}(g_{y}) \xi_{\lambda} \| \| b_{\downarrow,y} \xi_{\lambda} \|\nonumber\\
&&\qquad\qquad \leq \int dy\, \| b_{\uparrow}(g_{y}) \xi_{\lambda} \| \| b_{\downarrow,y} \xi_{\lambda} \| + C\rho \|g\|_{2} \int dy\, \| a_{\downarrow}(u_{y}) \xi_{\lambda}\|\nonumber\\
&& \qquad\qquad \leq C\rho\int dxdy\, |g(x-y)| \| a_{\uparrow}(u_{x}) \xi_{\lambda} \| \| a_{\downarrow}(u_{y}) \xi_{\lambda} \| + CL^{\frac{3}{2}} \rho^{1 + \frac{3\gamma}{2}} \| \mathcal{N}^{\frac{1}{2}}\xi_{\lambda} \|\;.
\end{eqnarray}
Thus, by Cauchy-Schwarz inequality, using that $\| g \|_{1} \leq C$:
\begin{eqnarray}
&&\Big|\int dxdy\, g(x-y) \langle \xi_{\lambda}, a_{\uparrow}(\overline{v}^{\text{r}}_{x}) a_{\uparrow}(u^{\text{r}}_{x}) a_{\downarrow}(\overline{v}^{\text{r}}_{y}) a_{\downarrow}(u^{\text{r}}_{y}) \xi_{\lambda}  \rangle\Big| \nonumber\\
&&\qquad \qquad \leq C\rho \langle \xi_{\lambda}, \mathcal{N}\xi_{\lambda} \rangle + CL^{\frac{3}{2}} \rho^{1 + \frac{3\gamma}{2}} \| \mathcal{N}^{\frac{1}{2}}\xi_{\lambda} \|\nonumber\\
&&\qquad \qquad \leq CL^{\frac{3}{2}} \rho^{1 + \frac{3\gamma}{2}} \| \mathcal{N}^{\frac{1}{2}}\xi_{\lambda} \|\;,
\end{eqnarray}
where in the last step we used the propagation of the a priori estimate for the number operator, and the assumption $\gamma \leq 7/18$. This concludes the proof.
\end{proof}
\subsection{Propagation of the estimates}
We now have all the ingredients needed in order to propagate the a priori estimates for $\mathbb{H}_{0}$ and for $\mathbb{Q}_{1}$. 
\begin{proposition}[Propagation estimate for $\mathbb{H}_{0}$, $\mathbb{Q}_{1}$ - Part 2.]\label{prp:prop2} Let $5/18\leq \gamma \leq 1/3$. Under the assumptions of Theorem \ref{thm:main} the following is true. For all $\lambda \in [0,1]$:
\begin{equation}\label{eq:derA}
\langle \xi_{\lambda}, \mathbb{H}_{0}\xi_{\lambda}\rangle \leq CL^{3} \rho^{2}\;,\qquad \langle \xi_{\lambda}, \mathbb{Q}_{1} \xi_{\lambda} \rangle \leq CL^{3} \rho^{2}\;,\qquad  \langle \xi_{\lambda}, \widetilde{\mathbb{Q}}_{1} \xi_{\lambda} \rangle \leq CL^{3} \rho^{2}\;.
\end{equation}
\end{proposition}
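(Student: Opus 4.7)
My plan is to set $F(\lambda) := \langle \xi_\lambda, (\mathbb{H}_0 + \widetilde{\mathbb{Q}}_1) \xi_\lambda \rangle$ and to close a Gr\"onwall-type inequality $F'(\lambda) \leq CF(\lambda) + CL^3 \rho^2$ uniformly for $\lambda \in [0,1]$. Combined with the initial-time bound $F(0) \leq CL^3 \rho^2$, which is exactly the content of Lemma \ref{lem: apH0} and Lemma \ref{lem: est priori Q1} applied to $\xi_0 = R^*\psi$, this yields $F(\lambda) \leq CL^3\rho^2$ on the whole interval; positivity of $\mathbb{H}_0$ and $\widetilde{\mathbb{Q}}_1$ then delivers the first and third bounds in \eqref{eq:derA}.

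The key algebraic step is to differentiate $F$ using Proposition \ref{prp:ders} and add and subtract $\widetilde{\mathbb{Q}}_4^{\text{r}}$ in order to recognize the scattering combination:
\begin{equation*}
F'(\lambda) = -\langle \xi_\lambda, (\mathbb{T}_1 + \mathbb{T}_2 + \widetilde{\mathbb{Q}}_4^{\text{r}}) \xi_\lambda \rangle + \langle \xi_\lambda, \widetilde{\mathbb{Q}}_4^{\text{r}} \xi_\lambda\rangle + \mathcal{E}_{\mathbb{H}_0}(\xi_\lambda) + \mathcal{E}_{\widetilde{\mathbb{Q}}_1}(\xi_\lambda).
\end{equation*}
The first summand is controlled by Proposition \ref{prp:cancscat}, together with the already-established propagation bound $\langle \xi_\lambda, \mathcal{N} \xi_\lambda \rangle \leq CL^3 \rho^{7/6}$ from the corollary of Proposition \ref{prp:propN}. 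This gives
\begin{equation*}
|\langle \xi_\lambda, (\mathbb{T}_1 + \mathbb{T}_2 + \widetilde{\mathbb{Q}}_4^{\text{r}}) \xi_\lambda \rangle| \leq C L^{3/2} \rho^{1 + 3\gamma/2} \|\mathcal{N}^{1/2} \xi_\lambda\| \leq C L^3 \rho^{\,19/12 + 3\gamma/2},
\end{equation*}
which is $\leq CL^3 \rho^2$ precisely when $\gamma \geq 5/18$, matching the lower threshold in the statement. The quadratic term $\widetilde{\mathbb{Q}}_4^{\text{r}}$ is handled by a Cauchy--Schwarz argument modeled on the proof of part $e)$ of Proposition \ref{prp:bogbd}, giving $|\langle \xi_\lambda, \widetilde{\mathbb{Q}}_4^{\text{r}} \xi_\lambda\rangle| \leq \delta \langle \xi_\lambda, \widetilde{\mathbb{Q}}_1 \xi_\lambda\rangle + (C/\delta) L^3 \rho^2$ for any $\delta > 0$.

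It remains to absorb the error terms $\mathcal{E}_{\mathbb{H}_0}$ and $\mathcal{E}_{\widetilde{\mathbb{Q}}_1}$ from \eqref{eq:errH0}. Choosing some $\eta \in (0, \min\{\gamma, 1/3\})$ and applying Young's inequality to the products $\|\widetilde{\mathbb{Q}}_1^{1/2}\xi_\lambda\|\cdot\|\mathbb{H}_0^{1/2}\xi_\lambda\|$ and $\|\widetilde{\mathbb{Q}}_1^{1/2}\xi_\lambda\|\cdot\|\mathcal{N}^{1/2}\xi_\lambda\|$, one part of each is bounded by $\delta F(\lambda)$, while the scalar prefactors, together with the $\rho^{7/12}$-control on $\|\mathcal{N}^{1/2}\xi_\lambda\|/L^{3/2}$, produce residues of order $L^3 \rho^{2+s}$ with $s > 0$ in the range $5/18 \leq \gamma \leq 1/3$ (here one uses $\rho^{4/3 - \gamma/2}\cdot\rho^{2/3}, \rho^{2-2\gamma}$, and $\rho^{2-4\gamma+\cdots}$ against $\rho^2$). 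After choosing $\delta$ small enough, one obtains the desired differential inequality and Gr\"onwall's lemma concludes.

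For the remaining bound on $\mathbb{Q}_1$, I would repeat the argument with $G(\lambda) := \langle \xi_\lambda, (\mathbb{H}_0 + \mathbb{Q}_1)\xi_\lambda\rangle$: Proposition \ref{prp:ders} supplies the very same leading structure $-(\mathbb{T}_1 + \mathbb{T}_2)$ and an error $\mathcal{E}_{\mathbb{Q}_1}$ whose only difference from $\mathcal{E}_{\widetilde{\mathbb{Q}}_1}$ is that the last factor involves $\|\mathbb{Q}_1^{1/2}\xi_\lambda\|$ in place of $\|\widetilde{\mathbb{Q}}_1^{1/2}\xi_\lambda\|$, which is absorbed directly into $\delta G(\lambda)$ by Young's inequality. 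The main obstacle I expect is therefore not conceptual but is the careful bookkeeping of exponents of $\rho$ from Proposition \ref{prp:ders} and Proposition \ref{prp:cancscat} against the narrow window $5/18 \leq \gamma \leq 1/3$: every error must split, after Young's inequality, into a piece proportional to $F(\lambda)$ with arbitrarily small constant and a residue of size $L^3 \rho^{2+s}$ with $s > 0$. All other ingredients are already in place.
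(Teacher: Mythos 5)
Your overall strategy --- a Gr\"onwall argument on $\langle \xi_\lambda, (\mathbb{H}_0 + \widetilde{\mathbb{Q}}_1)\xi_\lambda\rangle$ and on $\langle \xi_\lambda, (\mathbb{H}_0 + \mathbb{Q}_1)\xi_\lambda\rangle$, seeded by Lemmas~\ref{lem: apH0} and~\ref{lem: est priori Q1} at $\lambda = 0$, with derivatives taken from Proposition~\ref{prp:ders}, the term $\widetilde{\mathbb{Q}}_4^{\text{r}}$ added and subtracted to invoke the scattering cancellation of Proposition~\ref{prp:cancscat}, and the residual errors absorbed by Young's inequality --- is the same as the paper's. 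The paper runs one Gr\"onwall on $\mathbb{H}_0 + \mathbb{Q}_1$ and then deduces the $\widetilde{\mathbb{Q}}_1$ bound from $\widetilde{\mathbb{Q}}_1 \leq \mathbb{Q}_1$, whereas you propose two interpolations, but that is cosmetic. Your identification of $\gamma \geq 5/18$ from $19/12 + 3\gamma/2 \geq 2$, using $\|\mathcal{N}^{1/2}\xi_\lambda\| \leq CL^{3/2}\rho^{7/12}$, matches the paper's estimate~\eqref{eq:TTQgro} exactly.

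The one step that does not go through as written is your bound on the re-added term $\widetilde{\mathbb{Q}}_4^{\text{r}}$. Cauchy--Schwarz applied directly to $\widetilde{\mathbb{Q}}_4^{\text{r}}$, in the spirit of Eq.~\eqref{eq:Q4Q1}, does \emph{not} put $\widetilde{\mathbb{Q}}_1$ on the right-hand side: since $\widetilde{\mathbb{Q}}_4^{\text{r}}$ carries $a^*_\sigma(u^{\text{r}}_x)a^*_{\sigma'}(u^{\text{r}}_y)$ rather than $a^*_\sigma(u_x)a^*_{\sigma'}(u_y)$, the Cauchy--Schwarz term you obtain is $\int dx\,dy\,V(x-y)\|a_\sigma(u^{\text{r}}_x)a_{\sigma'}(u^{\text{r}}_y)\xi_\lambda\|^2$, i.e.\ a \emph{regularized} quartic operator $\widetilde{\mathbb{Q}}_1^{\text{r}}$, not $\widetilde{\mathbb{Q}}_1$ itself. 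There is no operator inequality $\widetilde{\mathbb{Q}}_1^{\text{r}} \leq C\,\widetilde{\mathbb{Q}}_1$: the map $u \mapsto u^{\text{r}}$ is a Fourier multiplier (a convolution in position space), which does not commute with multiplication by $V(x-y)$, so a state annihilated by every $a_\sigma(u_x)a_{\sigma'}(u_y)$ with $x-y$ in the support of $V$ need not be annihilated by the smeared operators $a_\sigma(u^{\text{r}}_x)a_{\sigma'}(u^{\text{r}}_y)$. The crude estimate $\widetilde{\mathbb{Q}}_1^{\text{r}} \leq C_\beta\,\mathcal{N}$ is also too weak here, as $\langle \mathcal{N}\rangle \sim L^3\rho^{7/6}$ overwhelms the $L^3\rho^2$ target. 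The paper avoids the issue by decomposing $\widetilde{\mathbb{Q}}_4^{\text{r}} = \widetilde{\mathbb{Q}}_4 - (\widetilde{\mathbb{Q}}_4 - \widetilde{\mathbb{Q}}_4^{\text{r}})$, applying the bound of Proposition~\ref{prp:bogbd}~$e)$ only to $\widetilde{\mathbb{Q}}_4$ (which does land on $\widetilde{\mathbb{Q}}_1$), and controlling the difference $\widetilde{\mathbb{Q}}_4 - \widetilde{\mathbb{Q}}_4^{\text{r}}$ separately via the regularization lemma of Appendix~\ref{app:ab}, which exploits the smallness of the shell component $\eta$ of $v$ in $L^2$ and the high-frequency decay of $\hat V$ and $\hat\varphi$ (this is one of the places where the $C^k$-regularity of $V$ is actually used). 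Once you replace your direct Cauchy--Schwarz step by this splitting, the rest of your argument closes as stated.
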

\begin{proof} The last bound immediately follows from the second, using that $ \widetilde{\mathbb{Q}}_{1} \leq \mathbb{Q}_{1}$. Let us prove the first two bounds. Using Eqs. (\ref{eq:contrH0}) we get:
\begin{eqnarray}
\frac{d}{d\lambda} \langle \xi_{\lambda}, (\mathbb{H}_{0} + \mathbb{Q}_{1}) \xi_{\lambda} \rangle &=& -\langle \xi_{\lambda}, (\mathbb{T}_{1} + \mathbb{T}_{2}) \xi_{\lambda}\rangle + \mathcal{E}_{\mathbb{H}_{0}}(\xi_{\lambda}) + \mathcal{E}_{\mathbb{Q}_{1}}(\xi_{\lambda})\\
&=& -\langle \xi_{\lambda}, (\mathbb{T}_{1} + \mathbb{T}_{2} + \widetilde{\mathbb{Q}}^{\text{r}}_{4}) \xi_{\lambda}\rangle + \langle \xi_{\lambda}, \widetilde{\mathbb{Q}}^{\text{r}}_{4} \xi_{\lambda}\rangle + \mathcal{E}_{\mathbb{H}_{0}}(\xi_{\lambda}) + \mathcal{E}_{\mathbb{Q}_{1}}(\xi_{\lambda})\;.\nonumber
\end{eqnarray}
As proven in Appendix \ref{app:ab}:
\begin{equation}
| \langle \xi_{\lambda}, (\widetilde{\mathbb{Q}}_{4} - \widetilde{\mathbb{Q}}_{4}^{\text{r}}) \xi_{\lambda}\rangle | \leq \frac{C}{\delta}\rho^{2 + \frac{\epsilon}{3}} + \delta \langle \xi_{\lambda}, \widetilde{\mathbb{Q}}_{1} \xi_{\lambda}\rangle\;.
\end{equation}
Also, thanks to (\ref{eq:bdQ4}):
\begin{equation}
\pm \widetilde{\mathbb{Q}}_{4} \leq \delta \widetilde{\mathbb{Q}}_{1} + (C/\delta) L^{3} \rho^{2}\;.
\end{equation}
Therefore, using that $\widetilde{\mathbb{Q}}_{1} \leq \mathbb{Q}_{1}$:
\begin{equation}\label{eq:derHQ}
\frac{d}{d\lambda} \langle \xi_{\lambda}, ( \mathbb{H}_{0} + \mathbb{Q}_{1} ) \xi_{\lambda}\rangle \leq -\langle \xi_{\lambda}, (\mathbb{T}_{1} + \mathbb{T}_{2} + \widetilde{\mathbb{Q}}^{\text{r}}_{4}) \xi_{\lambda}\rangle + C\langle \xi_{\lambda}, \mathbb{Q}_{1} \xi_{\lambda} \rangle + CL^{3} \rho^{2} + \mathcal{E}_{\mathbb{H}_{0}}(\xi_{\lambda}) + \mathcal{E}_{\mathbb{Q}_{1}}(\xi_{\lambda})\;.
\end{equation}
To estimate the various terms, we shall use the bounds of Propositions \ref{prp:ders}, \ref{prp:cancscat}. We have, for $5/18\leq \gamma \leq 1/3$, from Eqs. (\ref{eq:ETerr}), (\ref{eq:lambdagamma0}):
\begin{eqnarray}\label{eq:TTQgro}
|\langle \xi_{\lambda}, (\mathbb{T}_{1} + \mathbb{T}_{2} + \widetilde{\mathbb{Q}}^{\text{r}}_{4}) \xi_{\lambda}\rangle| &\leq& CL^{\frac{3}{2}}\rho^{1 + \frac{3\gamma}{2}} \| \mathcal{N}^{\frac{1}{2}} \xi_{\lambda} \|\nonumber\\
&\leq& CL^{3}\rho^{2}\;,
\end{eqnarray}
where we used the propagation of the a priori estimate for number operator, Eq. (\ref{eq:propN}). The bound (\ref{eq:TTQgro}) is the only point where we need the lower bound on $\gamma$.

Consider now $\mathcal{E}_{\mathbb{H}_{0}}(\xi_{\lambda})$. We have, from the first bound in (\ref{eq:errH0}):
\begin{eqnarray}\label{eq:EH0gro}
\mathcal{E}_{\mathbb{H}_{0}}(\xi_{\lambda}) &\leq& CL^{\frac{3}{2}} \rho^{\frac{7}{6}} \| \mathbb{H}_{0}^{\frac{1}{2}} \xi_{\lambda} \| + CL^{\frac{3}{2}} \rho^{\frac{3}{2}} \|\mathcal{N}^{\frac{1}{2}} \xi_{\lambda}\| \nonumber\\
&\leq& C\langle \xi_{\lambda}, \mathbb{H}_{0} \xi_{\lambda} \rangle + CL^{3} \rho^{2}\;,
\end{eqnarray}
where we used again the propagation of the apriori estimate for number operator (\ref{eq:prop76}). Also, from the third of (\ref{eq:errH0}), it is not difficult to see that, for $\frac{4}{3}\gamma - \frac{7}{18} < \eta < \gamma$ (which is a nonempty set for $\eta$, since $\gamma \leq 1/3$), and for $\beta$ small enough:
\begin{equation}\label{eq:EQ1gro}
\mathcal{E}_{\mathbb{Q}_{1}}(\xi_{\lambda}) \leq C \langle \xi_{\lambda}, (\mathbb{H}_{0} + \mathbb{Q}_{1}) \xi_{\lambda}\rangle + CL^{3} \rho^{2}\;.
\end{equation}
The final claim follows from Eqs. (\ref{eq:derHQ})-(\ref{eq:EQ1gro}), together with Gr\"onwall lemma.
\end{proof}
Let us rewrite the bounds of Propositions \ref{prp:ders}, \ref{prp:cancscat}, using the propagation of the a priori estimates (\ref{eq:derA}). We have, for $5/18\leq \gamma \leq 1/3$ and $\eta < \gamma$:
\begin{eqnarray}\label{eq:bdsfin1}
| \mathcal{E}_{\mathbb{H}_{0}}(\xi_{\lambda}) | &\leq& CL^{3} \rho^{\frac{7}{3} - \frac{\gamma}{2}} + CL^{\frac{3}{2}} \rho^{\frac{5}{3} - \frac{\gamma}{2}} \|\mathcal{N}^{\frac{1}{2}} \xi_{\lambda}\|\nonumber\\
|\mathcal{E}_{\widetilde{\mathbb{Q}}_{1}}(\xi_{\lambda})| &\leq& C_{\beta} L^{3} \rho^{3 - 2\gamma - \eta} + C_{\beta} L^{\frac{3}{2}} \rho^{2 + \frac{3\eta}{2} - 2\gamma} \| \mathcal{N}^{\frac{1}{2}}\xi_{\lambda} \| \nonumber\\
|\mathcal{E}_{\mathbb{Q}_{1}}(\xi_{\lambda})| &\leq& C_{\beta} L^{3} \rho^{3 - 2\gamma - \eta} + C_{\beta} L^{\frac{3}{2}} \rho^{2 + \frac{3\eta}{2} - 2\gamma} \| \mathcal{N}^{\frac{1}{2}}\xi_{\lambda} \|\;,
\end{eqnarray}
and:
\begin{equation}\label{eq:bdsfin2}
|\langle \xi_{\lambda}, ( \mathbb{T}_{1} + \mathbb{T}_{2} + \widetilde{\mathbb{Q}}^{\text{r}}_{4}) \xi_{\lambda} \rangle| \leq C L^{\frac{3}{2}} \rho^{1 + \frac{3\gamma}{2}} \| \mathcal{N}^{\frac{1}{2}} \xi_{\lambda} \|\;.
\end{equation}
The reason why we kept the dependence on the number operator is that the estimate on $\|\mathcal{N}^{\frac{1}{2}} \xi_{\lambda}\|$ obtained propagating the a priori bound for $\mathcal{N}$ on $\xi_{0}$ is not optimal, in contrast to the estimates for $\mathbb{H}_{0}$ and $\mathbb{Q}_{1}$. We shall conclude the section by proving an improved version of the bound for the number operator.
\begin{proposition}[Improved a priori estimate for the number operator.]\label{prp:Nimpro} Let $\gamma \leq 1/2$. We have:
\begin{equation}\label{eq:Nimpro}
\langle \xi_{\lambda}, \mathcal{N} \xi_{\lambda} \rangle \leq C L^{\frac{3}{2}} \rho^{\frac{1}{6}} \| \mathbb{H}_{0}^{\frac{1}{2}} \xi_{1} \| + CL^{3} \rho^{2 - \gamma}\;.
\end{equation}
\end{proposition}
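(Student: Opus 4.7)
The plan is to first establish the bound at the endpoint $\lambda = 1$ using Lemma~\ref{lem:apriori}, then extend it to arbitrary $\lambda \in [0,1]$ by integrating the differential inequality of Proposition~\ref{prp:propN} backwards in $\lambda$ and applying a Gr\"onwall argument.

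For the endpoint bound, I would apply Lemma~\ref{lem:apriori} to $\xi_{1}$ with a parameter $\alpha \geq 2/3$ to be optimized, giving
\[
\langle \xi_{1}, \mathcal{N}\xi_{1}\rangle \leq C L^{3}\rho^{\frac{1}{3}+\alpha} + \rho^{-\alpha}\|\mathbb{H}_{0}^{\frac{1}{2}}\xi_{1}\|^{2}.
\]
AM--GM shows that the right-hand side is at least $2\sqrt{CL^{3}\rho^{1/3}}\,\|\mathbb{H}_{0}^{1/2}\xi_{1}\| = CL^{3/2}\rho^{1/6}\|\mathbb{H}_{0}^{1/2}\xi_{1}\|$, with this minimum value attained at the unconstrained optimum $\alpha^{\ast}$ satisfying $CL^{3}\rho^{1/3+\alpha^{\ast}} = \rho^{-\alpha^{\ast}}\|\mathbb{H}_{0}^{1/2}\xi_{1}\|^{2}$. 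When $\alpha^{\ast} \geq 2/3$, taking $\alpha = \alpha^{\ast}$ gives the target bound $\langle \xi_{1}, \mathcal{N}\xi_{1}\rangle \leq CL^{3/2}\rho^{1/6}\|\mathbb{H}_{0}^{1/2}\xi_{1}\|$ immediately. The complementary case $\alpha^{\ast} < 2/3$ is equivalent to $\|\mathbb{H}_{0}^{1/2}\xi_{1}\| > CL^{3/2}\rho^{5/6}$; in that regime I would take $\alpha = 2/3$ and use the a priori bound $\|\mathbb{H}_{0}^{1/2}\xi_{1}\|^{2} \leq CL^{3}\rho^{2}$ from Proposition~\ref{prp:prop2} to get $\rho^{-2/3}\|\mathbb{H}_{0}^{1/2}\xi_{1}\|^{2} \leq CL^{3}\rho^{4/3}$. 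Since $\|\mathbb{H}_{0}^{1/2}\xi_{1}\| > CL^{3/2}\rho^{5/6}$ in this regime, both $CL^{3}\rho$ and $CL^{3}\rho^{4/3}$ are bounded by $CL^{3/2}\rho^{1/6}\|\mathbb{H}_{0}^{1/2}\xi_{1}\|$, so the same target bound holds.

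For the propagation step, I would integrate the differential inequality of Proposition~\ref{prp:propN}, $|\partial_{s}\langle \xi_{s}, \mathcal{N}\xi_{s}\rangle| \leq C\langle \xi_{s}, \mathcal{N}\xi_{s}\rangle + CL^{3}\rho^{2-\gamma}$, from $s = \lambda$ to $s = 1$ to obtain
\[
\langle \xi_{\lambda}, \mathcal{N}\xi_{\lambda}\rangle \leq \langle \xi_{1}, \mathcal{N}\xi_{1}\rangle + C\int_{\lambda}^{1}\langle \xi_{s}, \mathcal{N}\xi_{s}\rangle\, ds + CL^{3}\rho^{2-\gamma}.
\]
A standard backward Gr\"onwall argument then yields $\langle \xi_{\lambda}, \mathcal{N}\xi_{\lambda}\rangle \leq C\langle \xi_{1}, \mathcal{N}\xi_{1}\rangle + CL^{3}\rho^{2-\gamma}$, and combining with the endpoint bound of the previous step proves the proposition. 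The only real subtlety is the two-regime case analysis in the AM--GM optimization; the propagation itself follows routinely from the already-established Proposition~\ref{prp:propN}.
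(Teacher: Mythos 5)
Your argument is correct and takes essentially the same route as the paper: backward Gr\"onwall via Proposition~\ref{prp:propN} to reduce to $\lambda=1$, then Lemma~\ref{lem:apriori} applied to $\xi_{1}$ with an optimized $\alpha \geq 2/3$. The paper simply declares that one chooses $\alpha$ so that $\rho^{1/3+\alpha} = \max\bigl\{L^{-3}\rho^{-\alpha}\langle\xi_{1},\mathbb{H}_{0}\xi_{1}\rangle,\,\rho^{2-\gamma}\bigr\}$, without verifying that the resulting $\alpha$ respects the constraint $\alpha \geq 2/3$ of Lemma~\ref{lem:apriori}; you make this explicit by splitting into $\alpha^{*}\geq 2/3$ and $\alpha^{*}<2/3$, which is a genuine gain in rigor. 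The one thing to flag is that your treatment of the second case invokes Proposition~\ref{prp:prop2}, which is proved only under $5/18\leq\gamma\leq 1/3$, a narrower range than the $\gamma\leq 1/2$ in the statement of Proposition~\ref{prp:Nimpro}. That dependence is avoidable: since $\psi$ is an approximate ground state and $\gamma\leq 1/2$, Eq.~(\ref{eq:prop76}) already gives $\langle\xi_{1},\mathcal{N}\xi_{1}\rangle\leq CL^{3}\rho^{7/6}$, and in the regime $\|\mathbb{H}_{0}^{1/2}\xi_{1}\|>CL^{3/2}\rho^{5/6}$ the right-hand side of the target bound already exceeds $CL^{3}\rho > CL^{3}\rho^{7/6}$, so the endpoint estimate holds trivially there, with no need for the a priori kinetic bound at all.
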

\begin{proof} 
Lemma \ref{prp:propN}, together with Gronwall lemma, immediately implies:
\begin{equation}
\langle \xi_{\lambda}, \mathcal{N} \xi_{\lambda} \rangle \leq \langle \xi_{1}, \mathcal{N} \xi_{1} \rangle + CL^{3} \rho^{2 - \gamma}\;.
\end{equation}
To estimate $\langle \xi_{1}, \mathcal{N} \xi_{1} \rangle$ in terms of the kinetic energy, we use Lemma \ref{lem:apriori}:
\begin{equation}
\langle \xi_{\lambda}, \mathcal{N} \xi_{\lambda} \rangle \leq CL^{3} \rho^{\frac{1}{3} + \alpha} + \frac{1}{\rho^{\alpha}}\langle \xi_{1}, \mathbb{H}_{0} \xi_{1} \rangle + CL^{3} \rho^{2 - \gamma}\;.
\end{equation}
We choose $\alpha$ such that:
\begin{equation}
\rho^{\frac{1}{3} + \alpha} = \max\Big\{ \frac{1}{L^{3} \rho^{\alpha}} \langle \xi_{1}, \mathbb{H}_{0}\xi_{1} \rangle, \rho^{2 - \gamma} \Big\}\;.
\end{equation} 
By doing so, we get:
\begin{equation}
\langle \xi_{\lambda}, \mathcal{N} \xi_{\lambda} \rangle \leq C L^{\frac{3}{2}} \rho^{\frac{1}{6}} \| \mathbb{H}_{0}^{\frac{1}{2}} \xi_{1} \| + CL^{3} \rho^{2 - \gamma}\;.
\end{equation} 
\end{proof}
\noindent{\bf Improved bounds for the error terms.} To conclude the section, let us reexpress the bounds (\ref{eq:bdsfin1}), (\ref{eq:bdsfin2}) in view of the improved estimate (\ref{eq:Nimpro}). Take $5/18\leq \gamma \leq 1/3$, in order to be able to use the propagation of the a priori estimates (\ref{prp:prop2}). The bound (\ref{eq:Nimpro}) implies:
\begin{equation}\label{eq:interp1}
\| \mathcal{N}^{\frac{1}{2}} \xi_{\lambda} \| \leq C L^{\frac{3}{4}} \rho^{\frac{1}{12}} \| \mathbb{H}_{0}^{\frac{1}{2}} \xi_{1} \|^{\frac{1}{2}} + C L^{\frac{3}{2}} \rho^{1 - \frac{\gamma}{2}}\;.
\end{equation}
Plugging this bound in the first of (\ref{eq:bdsfin1}) we get, for $0<\delta < 1$:
\begin{eqnarray}\label{eq:errH00}
| \mathcal{E}_{\mathbb{H}_{0}}(\xi_{\lambda}) | &\leq& CL^{3} \rho^{\frac{7}{3} - \frac{\gamma}{2}} + CL^{\frac{9}{4}} \rho^{\frac{7}{4} - \frac{\gamma}{2}} \| \mathbb{H}_{0}^{\frac{1}{2}} \xi_{1} \|^{\frac{1}{2}}\nonumber\\
&\leq& C_{\delta}L^{3} \rho^{\frac{7}{3} - \frac{2\gamma}{3}} + \delta \langle \xi_{1}, \mathbb{H}_{0} \xi_{1}\rangle\;,
\end{eqnarray}
where in the last step we used Young's inequality $|ab| \leq C_{pq} (|a|^{p} + |b|^{q})$ with $1/p + 1/q = 1$, with $p=4$ and $q=4/3$. Similarly, using  the bound (\ref{eq:interp1}) in the second of (\ref{eq:bdsfin1}):
\begin{eqnarray}\label{eq:errQ1}
|\mathcal{E}_{\widetilde{\mathbb{Q}}_{1}}(\xi_{\lambda})| &\leq& C_{\beta} L^{3} \rho^{3 - 2\gamma - \eta} + C_{\beta} L^{3} \rho^{3 + \frac{3\eta}{2} - \frac{5\gamma}{2}} + C_{\beta} L^{\frac{9}{4}} \rho^{\frac{25}{12} + \frac{3\eta}{2} - 2\gamma} \| \mathbb{H}_{0}^{\frac{1}{2}} \xi_{1}\|^{\frac{1}{2}}\nonumber\\
&\leq& C_{\beta} L^{3} \rho^{3 - 2\gamma - \eta} + C_{\beta} L^{3} \rho^{3 + \frac{3\eta}{2} - \frac{5\gamma}{2}} + C_{\beta,\delta} L^{3} \rho^{\frac{25}{9} + 2\eta - \frac{8}{3}\gamma} + \delta\langle \xi_{1}, \mathbb{H}_{0} \xi_{1}\rangle\;.
\end{eqnarray}
For $\eta \leq \frac{4}{9} + \frac{\gamma}{3}$, which is implied by $\eta < \gamma$ and $\gamma \leq 1/3$:
\begin{eqnarray}\label{eq:errQ11}
|\mathcal{E}_{\widetilde{\mathbb{Q}}_{1}}(\xi_{\lambda})| &\leq& C_{\beta} L^{3} \rho^{3 - 2\gamma - \eta} + C_{\beta,\delta} L^{3} \rho^{\frac{25}{9} + 2\eta - \frac{8}{3}\gamma} + \delta\langle \xi_{1}, \mathbb{H}_{0} \xi_{1}\rangle\nonumber\\
&\leq& C_{\beta} L^{3} \rho^{\frac{79}{27} - \frac{20}{9}\gamma} + \delta\langle \xi_{1}, \mathbb{H}_{0} \xi_{1}\rangle\;,
\end{eqnarray}
where in the last step we optimized over $\eta$, $\eta = \frac{2}{27} + \frac{2}{9}\gamma$, which is strictly less than $\frac{4}{9} + \frac{\gamma}{3}$. The same bound holds for $|\mathcal{E}_{\mathbb{Q}_{1}}(\xi_{\lambda})|$. Finally, for $\gamma \leq 1/3$, plugging the bound (\ref{eq:interp1}) in (\ref{eq:bdsfin2}):
\begin{eqnarray}\label{eq:interp2}
|\langle \xi_{\lambda}, ( \mathbb{T}_{1} + \mathbb{T}_{2} + \widetilde{\mathbb{Q}}^{\text{r}}_{4}) \xi_{\lambda} \rangle| &\leq& CL^{3} \rho^{2 + \gamma} + CL^{\frac{9}{4}} \rho^{\frac{13}{12} + \frac{3\gamma}{2}} \|\mathbb{H}_{0}^{\frac{1}{2}} \xi_{1}\|^{\frac{1}{2}}  \nonumber\\
&\leq&  C_{\delta}L^{3} \rho^{\frac{13}{9} + 2\gamma} + \delta \langle \xi_{1}, \mathbb{H}_{0} \xi_{1}\rangle\;.
\end{eqnarray}
These bounds will play an important role in the proof of the lower bound for the ground state energy, discussed in the next section.

\section{Lower bound on the ground state energy}\label{sec:lwbd}
In this section we shall prove the lower bound for the ground state energy of the dilute Fermi gas. In what follows, $\psi$ will be approximate ground state, in the sense of Definition \ref{def:appgs}. In the following we shall always assume that $5/18\leq \gamma \leq 1/3$, which is the range of values of $\gamma$ for which the estimates (\ref{eq:interp1})-(\ref{eq:interp2}) hold.

The starting point is, recall Proposition \ref{prp:conj} and the bounds of Proposition \ref{prp:bogbd}:
\begin{equation}\label{eq:begin}
\langle \psi, \mathcal{H} \psi \rangle \geq E_{\text{HF}}(\omega) + \langle \xi_{0}, (\mathbb{H}_{0} + \widetilde{\mathbb{Q}}_{1} + \widetilde{\mathbb{Q}}_{4}) \xi_{0} \rangle + \mathcal{E}_{\text{1}}(\psi)\;,
\end{equation}
where the error term $\mathcal{E}_{1}(\psi)$ is bounded as:
\begin{equation}
| \mathcal{E}_{1}(\psi) | \leq C\rho^{\alpha} \langle \xi_{0}, \widetilde{\mathbb{Q}}_{1} \xi_{0}\rangle + C\rho^{1 - \alpha} \langle \xi_{0}, \mathcal{N} \xi_{0} \rangle\;.
\end{equation}
From the estimate (\ref{eq:Nimpro})  for the number operator, together with the a priori estimate (\ref{eq:aprQ1}) for $\widetilde{\mathbb{Q}}_{1}$, we get, optimizing over $\alpha$, $\alpha = 1/9$:
\begin{eqnarray}
| \mathcal{E}_{1}(\psi) | \leq C_{\delta}L^{3} \rho^{2 + \frac{1}{9}} + \delta \langle \xi_{1}, \mathbb{H}_{0} \xi_{1} \rangle\;.
\end{eqnarray}
To extract the correlation energy at order $\rho^{2}$ we shall use an interpolation argument. We write:
\begin{eqnarray}\label{eq:step1}
\langle \xi_{0}, (\mathbb{H}_{0} + \widetilde{\mathbb{Q}}_{1} + \widetilde{\mathbb{Q}}_{4}) \xi_{0} \rangle &=&  \langle \xi_{1}, (\mathbb{H}_{0} + \widetilde{\mathbb{Q}}_{1}) \xi_{1} \rangle - \int_{0}^{1} d\lambda\, \frac{d}{d\lambda} \langle \xi_{\lambda}, (\mathbb{H}_{0} + \widetilde{\mathbb{Q}}_{1})  \xi_{\lambda}\rangle + \langle \xi_{0}, \widetilde{\mathbb{Q}}_{4} \xi_{0} \rangle \nonumber\\
&=&  \langle \xi_{1}, (\mathbb{H}_{0} + \widetilde{\mathbb{Q}}_{1}) \xi_{1} \rangle + \int_{0}^{1} d\lambda\, \langle \xi_{\lambda}, (\mathbb{T}_{1} + \mathbb{T}_{2}) \xi_{\lambda} \rangle + \langle \xi_{0}, \widetilde{\mathbb{Q}}_{4} \xi_{0} \rangle + \mathcal{E}_{2}(\psi)\;,
\end{eqnarray}
where the $\mathbb{T}_{1}, \mathbb{T}_{2}$ operators are defined in (\ref{eq:Tdef}) and, thanks to the bounds (\ref{eq:errH00}), (\ref{eq:errQ11}):
\begin{eqnarray}
|\mathcal{E}_{2}(\psi)| &\leq& \max_{\lambda \in [0;1]} |\mathcal{E}_{\mathbb{H}_{0}}(\xi_{\lambda})| + \max_{\lambda \in [0;1]} |\mathcal{E}_{\widetilde{\mathbb{Q}}_{1}}(\xi_{\lambda})| \nonumber\\
&\leq& C_{\delta}\rho^{\frac{7}{3} - \frac{2\gamma}{3}} + \delta\langle \xi_{1}, \mathbb{H}_{0}\xi_{1}\rangle\;,
\end{eqnarray}
where we used the condition $\gamma \leq 1/3$. Next, in order to make use of the cancellation due to the scattering equation, we rewrite:
\begin{eqnarray}\label{eq:intt}
\langle \xi_{0}, (\mathbb{H}_{0} + \widetilde{\mathbb{Q}}_{1} + \widetilde{\mathbb{Q}}_{4}) \xi_{0} \rangle &=& \langle \xi_{1}, (\mathbb{H}_{0} + \widetilde{\mathbb{Q}}_{1}) \xi_{1} \rangle + \int_{0}^{1} d\lambda\, \langle \xi_{\lambda}, (\mathbb{T}_{1} + \mathbb{T}_{2} + \widetilde{\mathbb{Q}}_{4}^{\text{r}}) \xi_{\lambda} \rangle\nonumber\\
&& - \int_{0}^{1} d\lambda\, \langle \xi_{\lambda}, \widetilde{\mathbb{Q}}_{4}^{\text{r}} \xi_{\lambda} \rangle + \langle \xi_{0}, \widetilde{\mathbb{Q}}_{4} \xi_{0} \rangle + \mathcal{E}_{2}(\psi)\;,\nonumber\\
&\equiv& \langle \xi_{1}, (\mathbb{H}_{0} + \widetilde{\mathbb{Q}}_{1}) \xi_{1} \rangle - \int_{0}^{1} d\lambda\, \langle \xi_{\lambda}, \widetilde{\mathbb{Q}}_{4}^{\text{r}} \xi_{\lambda} \rangle + \langle \xi_{0}, \widetilde{\mathbb{Q}}_{4} \xi_{0} \rangle + \mathcal{E}_{3}(\psi) + \mathcal{E}_{2}(\psi)\;,
\end{eqnarray}
where, using (\ref{eq:interp2}):
\begin{eqnarray}\label{eq:E3}
|\mathcal{E}_{3}(\psi)| &\leq& \max_{\lambda \in [0;1]} |\langle \xi_{\lambda}, (\mathbb{T}_{1} + \mathbb{T}_{2} + \widetilde{\mathbb{Q}}_{4}^{\text{r}}) \xi_{\lambda} \rangle|\nonumber\\
&\leq& C_{\delta}L^{3} \rho^{\frac{13}{9} + 2\gamma} + \delta \langle \xi_{1}, \mathbb{H}_{0} \xi_{1}\rangle\;.
\end{eqnarray}
Let us now consider the second and third term in Eq. (\ref{eq:intt}). We rewrite it as:
\begin{eqnarray}\label{eq:derQ4}
- \int_{0}^{1} d\lambda\, \langle \xi_{\lambda}, \widetilde{\mathbb{Q}}_{4}^{\text{r}} \xi_{\lambda} \rangle + \langle \xi_{0}, \widetilde{\mathbb{Q}}_{4} \xi_{0} \rangle &=& \langle \xi_{1},(\widetilde{\mathbb{Q}}_{4} - \widetilde{\mathbb{Q}}^{\text{r}}_{4}) \xi_{1} \rangle - \int_{0}^{1}d\lambda\, \frac{d}{d\lambda} \langle \xi_{\lambda} \widetilde{\mathbb{Q}}_{4} \xi_{\lambda}\rangle + \int_{0}^{1} d\lambda \int_{\lambda}^{1} d\lambda' \frac{d}{d\lambda'} \langle \xi_{\lambda'}, \widetilde{\mathbb{Q}}_{4}^{\text{r}} \xi_{\lambda'} \rangle \nonumber\\
&\equiv& \mathcal{E}_{4}(\psi)  - \int_{0}^{1}d\lambda\, \frac{d}{d\lambda} \langle \xi_{\lambda} \widetilde{\mathbb{Q}}_{4} \xi_{\lambda}\rangle + \int_{0}^{1} d\lambda \int_{\lambda}^{1} d\lambda' \frac{d}{d\lambda'} \langle \xi_{\lambda'}, \widetilde{\mathbb{Q}}_{4}^{\text{r}} \xi_{\lambda'} \rangle\;,
\end{eqnarray}
where, as proven in Appendix \ref{app:ab}:
\begin{eqnarray}
| \mathcal{E}_{4}(\psi) | &=& |\langle \xi_{1},(\widetilde{\mathbb{Q}}_{4} - \widetilde{\mathbb{Q}}^{\text{r}}_{4}) \xi_{1} \rangle| \nonumber\\
&\leq& \frac{C}{\delta} L^{3} \rho^{2 + \frac{\epsilon}{3}} +\delta \langle \xi_{1}, \widetilde{\mathbb{Q}}_{1} \xi_{1}\rangle\;.
\end{eqnarray}
The correlation energy at order $\rho^{2}$ arises from the last two terms in Eq. (\ref{eq:derQ4}). We will prove that:
\begin{eqnarray}\label{eq:commQ40}
\frac{d}{d\lambda}\langle \xi_{\lambda}, \widetilde{\mathbb{Q}}_{4} \xi_{\lambda} \rangle &=& 2 L^{3}\rho_{\uparrow} \rho_{\downarrow} \int dx\, V(x) \varphi(x) + \text{subleading terms.} \nonumber\\
\frac{d}{d\lambda}\langle \xi_{\lambda}, \widetilde{\mathbb{Q}}^{\text{r}}_{4} \xi_{\lambda} \rangle &=& 2 L^{3}\rho_{\uparrow} \rho_{\downarrow} \int dx\, V(x) \varphi(x) + \text{subleading terms.}
\end{eqnarray}
The explicit terms are precisely what we need to compute the ground state energy at order $\rho^{2}$. The quantitative version of the statement (\ref{eq:commQ40}) is the content of the next proposition.
\begin{proposition}[Extracting the correlation energy.]\label{prp:intQ4} Under the assumptions of Theorem \ref{thm:main} the following holds. Let $\psi$ be an approximate ground state. Take $\frac{5}{18} \leq \gamma \leq \frac{1}{3}$, $\epsilon \geq 0$ and $\frac{1}{6} + \frac{\epsilon}{12} < \gamma$. Then:
\begin{eqnarray}\label{eq:derQQ4}
\frac{d}{d\lambda}\langle \xi_{\lambda}, \widetilde{\mathbb{Q}}^{\text{r}}_{4} \xi_{\lambda} \rangle &=& 2\rho^{\text{r}}_{\uparrow}\rho^{\text{r}}_{\downarrow} \int dxdy\, V(x-y) \varphi(x-y) + \mathcal{E}_{\widetilde{\mathbb{Q}}^{\text{r}}_{4}}(\xi_{\lambda}) \nonumber\\
\frac{d}{d\lambda}\langle \xi_{\lambda}, \widetilde{\mathbb{Q}}_{4} \xi_{\lambda} \rangle &=& 2\rho^{\text{r}}_{\uparrow}\rho^{\text{r}}_{\downarrow} \int dxdy\, V(x-y) \varphi(x-y) + \mathcal{E}_{\widetilde{\mathbb{Q}}_{4}}(\xi_{\lambda})\;,
\end{eqnarray}
where, for any $0<\delta< 1$:
\begin{eqnarray}\label{eq:errQ4}
|\mathcal{E}_{\widetilde{\mathbb{Q}}^{\text{r}}_{4}}(\psi)| &\leq& CL^{3} \rho^{\frac{7}{3}} + C_{\beta,\alpha} L^{3}\rho^{\frac{26}{9} - \frac{4}{3}\gamma - \frac{7}{18}\epsilon} + \delta \langle \xi_{1}, \mathbb{H}_{0} \xi_{1}\rangle\nonumber\\
|\mathcal{E}_{\widetilde{\mathbb{Q}}_{4}}(\psi)| &\leq& CL^{3} \rho^{\frac{7}{3}} + C_{\beta,\alpha} L^{3}\rho^{\frac{26}{9} - \frac{4}{3}\gamma - \frac{7}{18}\epsilon} + \delta \langle \xi_{1}, \mathbb{H}_{0} \xi_{1}\rangle\;.
\end{eqnarray}
\end{proposition}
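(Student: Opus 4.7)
\textbf{Proof plan for Proposition \ref{prp:intQ4}.} Since $\widetilde{\mathbb{Q}}_4^{\text{r}}$ is self-adjoint, one has
\[
\partial_\lambda\langle \xi_\lambda, \widetilde{\mathbb{Q}}_4^{\text{r}}\xi_\lambda\rangle = -2\,\mathrm{Re}\,\langle \xi_\lambda, [\widetilde{\mathbb{Q}}_4^{\text{r}}, B]\xi_\lambda\rangle.
\]
Writing $\widetilde{\mathbb{Q}}_4^{\text{r}} = Q + Q^*$ with $Q = \int V(x-y)\, b^*_{x,\uparrow} b^*_{y,\downarrow}\, dxdy$, a direct application of the CAR together with the orthogonality $\ur\,\overline{\vr}=0$ shows that any two $b$-operators commute exactly (different spins make the underlying fermionic factors pairwise anticommute, same spin gives vanishing anticommutators by the disjoint momentum supports). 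In particular $[Q^*, B] = 0$, so only $[Q, B]$ contributes. Expanding the inner commutator and using in addition that $b^*_{y,\downarrow}$ commutes with $b_{z,\uparrow}$ (and symmetrically), the task reduces to analysing
\[
[b^*_{x,\uparrow} b^*_{y,\downarrow}, b_{z,\uparrow} b_{z',\downarrow}] = [b^*_{x,\uparrow}, b_{z,\uparrow}][b_{z',\downarrow}, b^*_{y,\downarrow}] + b^*_{x,\uparrow} b_{z,\uparrow}\,[b^*_{y,\downarrow}, b_{z',\downarrow}] + [b^*_{x,\uparrow}, b_{z,\uparrow}]\, b^*_{y,\downarrow} b_{z',\downarrow},
\]
and each single commutator decomposes (as in the proof of Lemma \ref{lem:bosonbd}) as $[b^*_{x,\sigma}, b_{y,\sigma}] = -\omegar_\sigma(x-y)(\ur)^2_\sigma(x-y) + \mathcal{O}_\sigma(x,y)$, with $\mathcal{O}_\sigma$ the normal-ordered bilinear $(\ur)^2_\sigma(x-y) a^*_\sigma(\overline{\vr}_x)a_\sigma(\overline{\vr}_y) + \omegar_\sigma(x-y) a^*_\sigma(\ur_x) a_\sigma(\ur_y)$.

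The leading contribution is the doubly-contracted c-number piece, which integrated against $V\varphi$ yields $-\int V(x-y)\varphi(z-z')\eta_\uparrow(x-z)\eta_\downarrow(y-z')\,dxdydzdz'$ with $\eta_\sigma := \omegar_\sigma(\ur)^2_\sigma$. Passing to Fourier, $\hat\eta_\sigma(p) = L^{-3}\sum_q (\hat v^{\text{r}}_\sigma(q))^2 (\hat u^{\text{r}}_\sigma(p-q))^2$; for $|p|$ in the effective support of $\hat V\hat\varphi$ the inequality $|p|\gg k_F^\sigma$ forces $(\hat u^{\text{r}}_\sigma(p-q))^2 \approx 1$ on the full $v^{\text{r}}$-sum, so $\hat\eta_\sigma(p)\approx L^{-3}\sum_q(\hat v^{\text{r}}_\sigma(q))^2 = \rho^{\text{r}}_\sigma$. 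By Parseval the quadruple integral then collapses to $-\rho^{\text{r}}_\uparrow\rho^{\text{r}}_\downarrow\int V(x-y)\varphi(x-y)\,dxdy$, plus an infrared discrepancy supported on $|p|\lesssim k_F$. Exploiting the $|p|^{-2}$ singularity of $\hat\varphi$ the latter is of size $L^3\rho^{7/3}$, and a matching UV error from $|p|\gtrsim\rho^{-\beta}$ is controlled by the $C^k$-regularity of $V$ in the spirit of Lemma \ref{lem:UV2}; together these generate the $CL^3\rho^{7/3}$ term in the bound on $\mathcal{E}$. Taking $-2\,\mathrm{Re}$ supplies the stated leading term $2\rho^{\text{r}}_\uparrow\rho^{\text{r}}_\downarrow\int V\varphi$.

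The main obstacle is the sharp estimation of the remaining operator pieces: the two \emph{mixed} terms of the form $b^*_{\cdot,\sigma}b_{\cdot,\sigma}$ times a single c-number commutator, and the contributions obtained by replacing one or both c-number contractions by the operator remainder $\mathcal{O}_\sigma$. For each I would rewrite the quartic in terms of the wave-packets $b_\sigma(g_\cdot)$ and $b^*_\sigma(g_\cdot)$ of Lemma \ref{lem:bosonbd}, apply Cauchy--Schwarz together with the norm bounds (\ref{eq:bnorms})--(\ref{eq:estbosb}), and combine with the propagated a priori bounds $\langle\xi_\lambda, (\mathbb{H}_0 + \widetilde{\mathbb{Q}}_1)\xi_\lambda\rangle\le CL^3\rho^2$ from Proposition \ref{prp:prop2} and the improved estimate (\ref{eq:Nimpro}) for $\mathcal{N}$. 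The UV/IR splitting $u = u^< + u^>$ with $\hat u^<$ supported on $\{|k|\le \rho^\eta\}$, as in the proof of Proposition \ref{prp:ders}, is essential in order to convert each $\|\mathcal{N}^{1/2}\xi_\lambda\|$ factor into $\rho^{-\eta}\|\mathbb{H}_0^{1/2}\xi_\lambda\|$; Young's inequality then produces the exponent $\frac{26}{9} - \frac{4\gamma}{3} - \frac{7\epsilon}{18}$ appearing in $\mathcal{E}$, and the admissibility of the optimal $\eta$ is precisely the assumption $\frac{1}{6}+\frac{\epsilon}{12}<\gamma$. Finally, the bound on $\partial_\lambda\langle\xi_\lambda, \widetilde{\mathbb{Q}}_4\xi_\lambda\rangle$ is obtained from the same decomposition with $u$ in place of $u^{\text{r}}$, absorbing $\widetilde{\mathbb{Q}}_4 - \widetilde{\mathbb{Q}}_4^{\text{r}}$ into $\mathcal{E}$ by the estimates of Appendix \ref{app:ab} and replacing the approximate deltas by true ones via the argument of Lemma \ref{lem:UV2}.
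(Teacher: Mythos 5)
Your algebraic skeleton matches the paper's: $\partial_\lambda\langle\xi_\lambda,\widetilde{\mathbb{Q}}_4^{\text{r}}\xi_\lambda\rangle = -2\,\mathrm{Re}\,\langle\xi_\lambda,[\widetilde{\mathbb{Q}}_4^{\text{r}},B]\xi_\lambda\rangle$, the observation that the $b$-operators commute among themselves (hence $[Q^*,B]=0$), and the extraction of $2\rho^{\text{r}}_\uparrow\rho^{\text{r}}_\downarrow\int V\varphi$ from the doubly c-number-contracted part of $[b^*_{x,\uparrow},b_{z,\uparrow}][b_{z',\downarrow},b^*_{y,\downarrow}]$ are all sound. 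The $b$-commutator route is a somewhat tidier way of organizing what the paper does via the pair-by-pair commutator decomposition (\ref{eq:bigcom})--(\ref{eq:bigcomm2}), and your replacement of $(u^{\text{r}})^2$ by an approximate Dirac delta is the argument leading to (\ref{eq:Imainreg}).

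The error analysis is where the proposal breaks down quantitatively. You assert that the two-piece splitting $u=u^<+u^>$ from the proof of Proposition \ref{prp:ders}, plus Young's inequality, produces the exponent $\frac{26}{9}-\frac{4\gamma}{3}-\frac{7\epsilon}{18}$; it does not, and the loss is not cosmetic. Two ingredients that the paper introduces specifically for this proposition are missing from your sketch. First, since the dangerous error term carries an $\omega^{\text{r}}$-contraction, the paper exploits the small ($\lesssim\rho^{1/3}$) momentum support of $\hat\omega^{\text{r}}$ to replace $\varphi$ by a filtered $\varphi_\sharp$ whose momentum window is tied to that of the $u$-piece being estimated --- see the Fourier-side argument (\ref{eq:consphi0})--(\ref{eq:consphi2}) and Lemma \ref{lem:bdL1} --- upgrading $\|\varphi\|_1\lesssim\rho^{-2\gamma}$ to $\|\varphi_0\|_1\lesssim\rho^{-2\eta}|\log\rho|$ and $\|\varphi_>\|_1\lesssim\rho^{-2\eta/\delta}|\log\rho|$. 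Without this the $u^>$-contribution carries an extra factor $\rho^{2\eta/\delta-2\gamma}$, and the optimized exponent falls to roughly $\frac{79}{27}-\frac{20\gamma}{9}-\frac{10\epsilon}{27}$, which at the choice $\gamma=\epsilon=\frac13$ used in Section \ref{sec:lwbd} is about $2.06<2+\frac19$ and would spoil the lower bound. Second, the paper uses a \emph{three}-piece decomposition $u^{\text{r}}=u^<+u^0+u^>$ with a tunable intermediate scale $\rho^{\eta/\delta}$ and an extra optimization in $\delta$ (choosing $\delta=3/2$); the two-piece version, even with the $\varphi_\sharp$ replacement, reaches only $\rho^{\frac{43}{15}-\frac85\gamma-\frac25\epsilon}$, not the stated $\rho^{\frac{26}{9}-\frac43\gamma-\frac7{18}\epsilon}$. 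Note also that the condition $\frac16+\frac{\epsilon}{12}<\gamma$ you quote is exactly $\eta<\gamma$ for the three-piece optimum $\eta=\frac1{18}+\frac{2\gamma}{3}+\frac{\epsilon}{36}$, so asserting it while running a two-piece argument is internally inconsistent.
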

\begin{remark}[Notations.] Unless otherwise stated, with a slight abuse of notation in the following we will use the notation $u^{\text{r}}(x;y)$ to denote the function $(u^{\text{r}})^{2}(x;y)$. The two functions satisfy the same estimates, recall Proposition \ref{prp:decreg}.
\end{remark}
\begin{proof} We shall only discuss the proof of the statement concerning $\widetilde{\mathbb{Q}}^{\text{r}}_{4}$, the one for $\widetilde{\mathbb{Q}}_{4}$ being completely analogous (it is actually simpler). We write:
\begin{equation}
\frac{d}{d\lambda} \langle \xi_{\lambda}, \widetilde{\mathbb{Q}}_{4}^{\text{r}} \xi_{\lambda}\rangle = \langle \xi_{\lambda}, [ \widetilde{\mathbb{Q}}^{\text{r}}_{4}, B] \xi_{\lambda}\rangle + \text{c.c.}\;,
\end{equation}
where:
\begin{eqnarray}\label{eq:1/2}
[ \widetilde{\mathbb{Q}}^{\text{r}}_{4}, B] = \frac{1}{2} \sum_{\sigma\neq \sigma'} \int dxdydzdz'\, V(x-y)\varphi(z-z')  [ a^{*}_{\sigma}(u^{\text{r}}_{x}) a^{*}_{\sigma'}(u^{\text{r}}_{y}) a^{*}_{\sigma'}(\overline{v}^{\text{r}}_{y}) a^{*}_{\sigma}(\overline{v}^{\text{r}}_{x}),  a_{\uparrow}(u^{\text{r}}_{z}) a_{\uparrow}(\overline{v}^{\text{r}}_{z}) a_{\downarrow}(u^{\text{r}}_{z'}) a_{\downarrow}(\overline{v}^{\text{r}}_{z'})]\;.\nonumber
\end{eqnarray}
We rewrite the commutator as:
\begin{eqnarray}\label{eq:bigcom}
&&\big[ a^{*}_{\sigma}(u^{\text{r}}_{x}) a^{*}_{\sigma'}(u^{\text{r}}_{y}) a^{*}_{\sigma'}(\overline{v}^{\text{r}}_{y}) a^{*}_{\sigma}(\overline{v}^{\text{r}}_{x}),  a_{\uparrow}(u^{\text{r}}_{z}) a_{\uparrow}(\overline{v}^{\text{r}}_{z}) a_{\downarrow}(u^{\text{r}}_{z'}) a_{\downarrow}(\overline{v}^{\text{r}}_{z'}) \big] \\
&&\qquad = -a^{*}_{\sigma}(u^{\text{r}}_{x}) a^{*}_{\sigma'}(u^{\text{r}}_{y}) \big[  a^{*}_{\sigma'}(\overline{v}^{\text{r}}_{y}) a^{*}_{\sigma}(\overline{v}^{\text{r}}_{x}),   a_{\uparrow}(\overline{v}^{\text{r}}_{z})  a_{\downarrow}(\overline{v}^{\text{r}}_{z'})\big] a_{\uparrow}(u^{\text{r}}_{z}) a_{\downarrow}(u^{\text{r}}_{z'})\nonumber\\
&&\qquad \quad - a_{\uparrow}(\overline{v}^{\text{r}}_{z})a_{\downarrow}(\overline{v}^{\text{r}}_{z'}) \big[ a^{*}_{\sigma}(u^{\text{r}}_{x}) a^{*}_{\sigma'}(u^{\text{r}}_{y}) ,  a_{\uparrow}(u^{\text{r}}_{z})  a_{\downarrow}(u^{\text{r}}_{z'}) \big] a^{*}_{\sigma'}(\overline{v}^{\text{r}}_{y}) a^{*}_{\sigma}(\overline{v}^{\text{r}}_{x})\;.\nonumber
\end{eqnarray}
As it will be clear, the only terms contributing at the order $\rho^{2}$ will be those anti-normal ordered, which arise from the last line. The first contribution in the right-hand side of Eq. (\ref{eq:bigcom}) is partially normal ordered; as such, it will give rise to an error term. The commutator produces contractions between the fermionic operators; we have, omitting the spin label for simplicity:
\begin{eqnarray}\label{eq:onemore}
\big[  a^{*}(\overline{v}^{\text{r}}_{y}) a^{*}(\overline{v}^{\text{r}}_{x}),   a(\overline{v}^{\text{r}}_{z})  a(\overline{v}^{\text{r}}_{z'})\big] &=& a^{*}(\overline{v}^{\text{r}}_{y}) \omega^{\text{r}}(x;z) a(\overline{v}^{\text{r}}_{z'}) - a^{*}(\overline{v}^{\text{r}}_{y}) \omega^{\text{r}}(x;z') a(\overline{v}^{\text{r}}_{z}) \nonumber\\
&& + a(\overline{v}^{\text{r}}_{z'}) \omega^{\text{r}}(y;z) a^{*}(\overline{v}^{\text{r}}_{x}) - a(\overline{v}^{\text{r}}_{z}) \omega^{\text{r}}(y;z') a^{*}(\overline{v}^{\text{r}}_{x})\;.
\end{eqnarray}
These four terms give rise to contributions to (\ref{eq:bigcom}) that will be estimated in exactly the same way (the lack of normal ordering in the last two terms in Eq. (\ref{eq:onemore}) will not matter). For instance, consider:
\begin{eqnarray}\label{eq:Q4I0}
&&\text{I} := \sum_{\sigma\neq \sigma'}\int dxdydzdz'\, V(x-y) \varphi(z-z') \tilde \omega^{\text{r}}(z;y)  \langle \xi_{\lambda}, a^{*}_{\sigma}(u^{\text{r}}_{x}) a^{*}_{\sigma'}(u^{\text{r}}_{y}) a^{*}_{\sigma}(\overline{v}^{\text{r}}_{x}) a_{\downarrow}(\overline{v}^{\text{r}}_{z'}) a_{\uparrow}(u^{\text{r}}_{z}) a_{\downarrow}(u^{\text{r}}_{z'}) \xi_{\lambda}\rangle\nonumber\;.
\end{eqnarray}
To begin, we write:
\begin{equation}\label{eq:splitI}
\text{I} = \text{I}_{1} + \text{I}_{2}\;,
\end{equation}
where $\text{I}_{1}$ is obtained from $\text{I}$ replacing $u^{\text{r}}_{x}$, $u^{\text{r}}_{y}$ with $u_{x}$, $u_{y}$, and $\text{I}_{2}$ is an error term. Let us first consider the term $\text{I}_{1}$. To improve the estimate for this term, we shall study separately different contributions in momentum space. Let $0\leq \eta < \gamma \leq 1/3$, $\delta > 1$, to be chosen later. We write $u^{\text{r}}_{z} = u^{<}_{z} + u^{0}_{z} + u^{>}_{z}$, with:
\begin{equation}
\hat u^{\sharp}(k) = \hat u^{\text{r}}(k) \chi_{\sharp}(k)\;,\qquad \sharp =\, <\,, 0\,, >\;,
\end{equation}
where
\begin{equation}
\chi_{<}(k) =  \chi\Big(\frac{|k|}{\rho^{\eta}}\Big)\;,\qquad \chi_{0}(k) = \chi\Big(\frac{|k|}{\rho^{\frac{\eta}{\delta}}}\Big) - \chi\Big(\frac{|k|}{\rho^{\eta}}\Big)\;,\qquad \chi_{>}(k) = 1 - \chi\Big(\frac{|k|}{\rho^{\frac{\eta}{\delta}}}\Big)\;.
\end{equation}
Correspondingly, we write $\text{I}_{1} = \text{I}_{1}^{<} + \text{I}_{1}^{0} + \text{I}_{1}^{>}$. The key observation is that in each term $\text{I}^{\sharp}_{1}$ we can replace $\varphi$ by $\varphi_{\sharp}$, with $\hat \varphi_{\sharp}$ with similar support properties as $u_{\sharp}$. In fact, recalling that:
\begin{equation}
a_{\sigma}(u^{\text{r}}_{z}) = \sum_{k} \overline{u_{\sigma}^{\text{r}}(k)} \overline{f_{k}(z)} \hat a_{k,\sigma}\;,
\end{equation}
with $\hat a_{k,\sigma} = a_{\sigma}(f_{k})$ and $f_{k}(z) = L^{-3/2} e^{-ik\cdot z}$, we get:
\begin{equation}\label{eq:consphi0}
\int dz\, \varphi(z-z') \tilde \omega^{\text{r}}(z;y) a_{\uparrow}(u^{\text{r}}_{z}) = \frac{1}{L^{\frac{3}{2}}}\sum_{k} \overline{\hat u_{\uparrow}^{\text{r}}(k)} \hat a_{k,\uparrow} \int dz\, e^{ik\cdot z} \varphi(z-z') \tilde \omega^{\text{r}}(z;y)\;,
\end{equation}
where:
\begin{equation}\label{eq:consphi}
\int dz\, e^{ik\cdot z} \varphi(z-z') \tilde \omega^{\text{r}}(z;y) = \frac{1}{L^{3}} \sum_{q} \hat \varphi(k+q) \hat \omega^{\text{r}}(q) e^{-iz'\cdot (k+q)} e^{-iy\cdot q}\;.
\end{equation}
Suppose that $k\in \text{supp}\, \hat u^{\text{r}}_{\sharp}$. Due to the fact that $\hat \omega^{\text{r}}(q)$ is supported for $|q| \leq C\rho^{\frac{1}{3}}$ and that $\rho^{\eta} \gg \rho^{\frac{1}{3}}$, we can freely replace $\hat \varphi(k+q)$ in Eq. (\ref{eq:consphi}) with:
\begin{equation}
\hat \varphi_{\sharp}(k+q) = \hat \varphi(k+q) \tilde \chi_{\sharp}(k+q)\;,
\end{equation}
with $\tilde \chi_{\sharp}$ defined as:
\begin{equation}\label{eq:consphi2}
\tilde \chi_{<}(p) = \chi\Big(\frac{|k|}{2\rho^{\eta}}\Big)\;,\qquad \tilde \chi_{0}(p) = \chi\Big(\frac{|k|}{2\rho^{\frac{\eta}{\delta}}}\Big) - \chi\Big(\frac{2|k|}{\rho^{\eta}}\Big)\;,\qquad  \tilde \chi_{>}(p) = \chi(4\rho^{\beta}|k|) - \chi\Big(\frac{2|k|}{\rho^{\frac{\eta}{\delta}}}\Big)\;.
\end{equation}
As discussed in Appendix \ref{sec:scat}:
\begin{equation}
\| \varphi_{<} \|_{1} \leq C\rho^{-2\gamma} |\log \rho|\;,\qquad \| \varphi_{0} \|_{1} \leq C\rho^{-2\eta} |\log \rho|\;,\qquad \| \varphi_{>} \|_{1} \leq C\rho^{-\frac{2\eta}{\delta}} |\log \rho|\;.
\end{equation} 
The last two estimates improve on $\|\varphi\|_{1} \leq C\rho^{-2\gamma}$ since $\eta < \gamma$. To estimate $\text{I}_{1}^{\sharp}$, we shall also use that:
\begin{equation}
\| u^{<}_{z} \|_{2} \leq C\rho^{\frac{3\eta}{2}}\;,\qquad \| u^{0}_{z} \|_{2} \leq C\rho^{\frac{3\eta}{2\delta}}\;,\qquad \| u^{>}_{z} \|_{2} \leq C_{\beta}\;.
\end{equation}
The first two estimates are better than $\| u^{\text{r}}_{z} \|_{2} \leq C_{\beta}$. Moreover, we shall use that:
\begin{eqnarray}\label{eq:various}
&&\int dz\, \| a(u^{<}_{z}) \xi_{\lambda} \|^{2} \leq C\langle \xi_{\lambda}, \mathcal{N} \xi_{\lambda} \rangle\;,\qquad \int dz\, \| a(u^{0}_{z}) \xi_{\lambda} \|^{2} \leq C\rho^{-2\eta}\langle \xi_{\lambda}, \mathbb{H}_{0} \xi_{\lambda} \rangle\;,\nonumber\\
&&\qquad\qquad\qquad\qquad \int dz\, \| a(u^{>}_{z}) \xi_{\lambda} \|^{2} \leq C\rho^{-\frac{2\eta}{\delta}}\langle \xi_{\lambda}, \mathbb{H}_{0} \xi_{\lambda} \rangle\;.
\end{eqnarray}
\medskip

\noindent{\it\underline{Estimate for $\text{I}_{1}^{<}$.}} Here we replace $\varphi$ with $\varphi_{<}$. Also, using the fact that $\hat v^{\text{r}}(k)$ is supported for momenta $|k| \leq C\rho^{\frac{1}{3}}$, repeating an argument similar to the one of (\ref{eq:consphi0})-(\ref{eq:consphi2}) but this time for the $z'$ integration, we can freely replace $u^{\text{r}}_{z'}$ with $\tilde u^{<}_{z'}$; in the following, we shall denote by $\tilde u^{\sharp}_{z'}$ function whose Fourier transform has similar support properties as $\hat \varphi_{\sharp}$, possibly replacing the factors $2$ and $1/2$ in Eq. (\ref{eq:consphi2}) by $4$ and $1/4$. We then have:
\begin{eqnarray}\label{eq:Q4I}
&&|\text{I}_{1}^{<}| \leq \sum_{\sigma \neq \sigma'} \int dxdydzdz'\, V(x-y) | \varphi_{<}(z-z')| |\tilde \omega^{\text{r}}(z;y)| \| \overline{v}^{\text{r}}_{x} \|_{2} \| \overline{v}^{\text{r}}_{z'} \|_{2} \| u^{<}_{z} \|_{2} \| a_{\sigma}(u_{x}) a_{\sigma'}(u_{y}) \xi_{\lambda} \| \| a_{\downarrow}(\tilde u^{<}_{z'}) \xi_{\lambda} \| \nonumber\\
&&\quad \leq C\rho^{1 + \frac{3\eta}{2}} \sum_{\sigma \neq \sigma'} \int dxdydzdz'\, V(x-y) | \varphi_{<}(z-z')| |\tilde \omega^{\text{r}}(z;y)| \| a_{\sigma}(u_{x}) a_{\sigma'}(u_{y}) \xi_{\lambda} \| \| a_{\downarrow}(\tilde u^{<}_{z'}) \xi_{\lambda} \| \nonumber\\
&&\quad \leq C|\log \rho|\rho^{1 + \frac{3\eta}{2} - 2\gamma - \frac{\epsilon}{3}} \| \widetilde{\mathbb{Q}}^{\frac{1}{2}}_{1} \xi_{\lambda} \| \| \mathcal{N}^{\frac{1}{2}} \xi_{\lambda} \|\;,
\end{eqnarray}
the last step following from Cauchy-Schwarz inequality, combined with $\| \varphi_{<} \|_{1} \leq C\rho^{-2\gamma}$, $\| \omega^{\text{r}} \|_{1} \leq C\rho^{-\frac{\epsilon}{3}}$. 
\medskip

\noindent{\it\underline{Estimate for $\text{I}_{1}^{0}$.}} Here we replace $\varphi$ with $\varphi_{0}$ and $u^{\text{r}}_{z'}$ with $\tilde u^{0}_{z'}$. We have:
\begin{eqnarray}\label{eq:Q4I02}
&&|\text{I}_{1}^{0}| \leq \sum_{\sigma \neq \sigma'} \int dxdydzdz'\, V(x-y) | \varphi_{0}(z-z')| |\tilde \omega^{\text{r}}(z;y)| \| \overline{v}^{\text{r}}_{x} \|_{2} \| \overline{v}^{\text{r}}_{z'} \|_{2} \| u^{0}_{z} \|_{2} \| a_{\sigma}(u_{x}) a_{\sigma'}(u_{y}) \xi_{\lambda} \| \| a_{\downarrow}(\tilde u^{0}_{z'}) \xi_{\lambda} \| \nonumber\\
&&\quad \leq C\rho^{1 + \frac{3\eta}{2\delta}} \sum_{\sigma \neq \sigma'} \int dxdydzdz'\, V(x-y) | \varphi_{0}(z-z')| |\tilde \omega^{\text{r}}(z;y)| \| a_{\sigma}(u_{x}) a_{\sigma'}(u_{y}) \xi_{\lambda} \| \| a_{\downarrow}(\tilde u^{0}_{z'}) \xi_{\lambda} \| \nonumber\\
&&\quad \leq C|\log \rho|\rho^{1 + \frac{3\eta}{2\delta} - 3\eta - \frac{\epsilon}{3}} \| \widetilde{\mathbb{Q}}^{\frac{1}{2}} \xi_{\lambda} \| \| \mathbb{H}_{0}^{\frac{1}{2}}\xi_{\lambda} \|\;,
\end{eqnarray}
by Cauchy-Schwarz inequality, this time using that $\|\varphi_{0}\| \leq C|\log \rho|\rho^{-2\eta}$ and the second of (\ref{eq:various}).
\medskip

\noindent{\it\underline{Estimate for $\text{I}^{>}_{1}$.}} Here we replace $\varphi$ with $\varphi_{>}$ and $u^{\text{r}}_{z'}$ with $\tilde u^{>}_{z'}$. We then have:
\begin{eqnarray}\label{eq:Q4Imag}
&&|\text{I}_{1}^{>}| \leq \sum_{\sigma \neq \sigma'} \int dxdydzdz'\, V(x-y) | \varphi_{>}(z-z')| |\tilde \omega^{\text{r}}(z;y)| \| \overline{v}^{\text{r}}_{x} \|_{2} \| \overline{v}^{\text{r}}_{z'} \|_{2} \| u^{>}_{z} \|_{2} \| a_{\sigma}(u_{x}) a_{\sigma'}(u_{y}) \xi_{\lambda} \| \| a_{\downarrow}(\tilde u^{>}_{z'}) \xi_{\lambda} \| \nonumber\\
&&\quad \leq C_{\beta}\rho \sum_{\sigma \neq \sigma'} \int dxdydzdz'\, V(x-y) | \varphi_{>}(z-z')| |\tilde \omega^{\text{r}}(z;y)| \| a_{\sigma}(u_{x}) a_{\sigma'}(u_{y}) \xi_{\lambda} \| \| a_{\downarrow}(\tilde u^{>}_{z'}) \xi_{\lambda} \| \nonumber\\
&&\quad \leq C_{\beta} \rho^{1 - \frac{3\eta}{\delta} - \frac{\epsilon}{3}} \| \widetilde{\mathbb{Q}}^{\frac{1}{2}} \xi_{\lambda} \| \| \mathbb{H}_{0}^{\frac{1}{2}}\xi_{\lambda} \|\;,
\end{eqnarray}
again by CS inequality, using that $\|\varphi_{>}\| \leq C|\log \rho|\rho^{-2\eta/\delta}$ and the last of (\ref{eq:various}).
\medskip

\noindent{\it\underline{Putting it together: estimate for $\text{I}_{1}$.}} From (\ref{eq:Q4I}), (\ref{eq:Q4I02}), (\ref{eq:Q4Imag}), we get:
\begin{eqnarray}
| \text{I}_{1} | &\leq& C|\log\rho|\rho^{1 + \frac{3\eta}{2} - 2\gamma - \frac{\epsilon}{3}} \| \widetilde{\mathbb{Q}}^{\frac{1}{2}}_{1} \xi_{\lambda} \| \| \mathcal{N}^{\frac{1}{2}} \xi_{\lambda} \|\nonumber\\&& + C|\log \rho|\rho^{1 + \frac{3\eta}{2\delta} - 3\eta - \frac{\epsilon}{3}} \| \widetilde{\mathbb{Q}}^{\frac{1}{2}} \xi_{\lambda} \| \| \mathbb{H}_{0}^{\frac{1}{2}}\xi_{\lambda} \| + C_{\beta} \rho^{1 - \frac{3\eta}{\delta} - \frac{\epsilon}{3}} \| \widetilde{\mathbb{Q}}^{\frac{1}{2}} \xi_{\lambda} \| \| \mathbb{H}_{0}^{\frac{1}{2}}\xi_{\lambda} \|\;.
\end{eqnarray}
The optimal value of $\delta$ is $\delta = 3/2$. For this value, using also the propagation of the a priori estimates for $\widetilde{\mathbb{Q}}_{1}$, $\mathbb{H}_{0}$, Eqs. (\ref{eq:derA}), we get, for $0<\alpha < 1$:
\begin{eqnarray}\label{eq:I1est00}
|\text{I}_{1}| &\leq& C|\log\rho|L^{\frac{3}{2}}\rho^{2 + \frac{3\eta}{2} - 2\gamma - \frac{\epsilon}{3}} \| \mathcal{N}^{\frac{1}{2}} \xi_{\lambda} \| + C_{\beta} L^{3} \rho^{3 - 2\eta - \frac{\epsilon}{3}} \\
&\leq& C|\log\rho|L^{3} \rho^{3 + \frac{3\eta}{2} - \frac{5}{2}\gamma - \frac{\epsilon}{3}} + C_{\beta} L^{3} \rho^{3 - 2\eta - \frac{\epsilon}{3}} + C_{\alpha}L^{3} |\log\rho|^{\frac{4}{3}} \rho^{\frac{25}{9} + 2\eta - \frac{8}{3}\gamma - \frac{4\epsilon}{9}} + \alpha \langle \xi_{1}, \mathbb{H}_{0} \xi_{1}\rangle\;.\nonumber
\end{eqnarray}
The last estimate follows after using the bound (\ref{eq:Nimpro}) for the number operator, and using Young's inequality. For $\eta \leq 1/3$, the third term is bigger than the first. Hence:
\begin{eqnarray}\label{eq:estI1}
|\text{I}_{1}| &\leq& C_{\beta} L^{3} \rho^{3 - 2\eta - \frac{\epsilon}{3}} + C_{\alpha}L^{3} |\log \rho|^{\frac{4}{3}} \rho^{\frac{25}{9} + 2\eta - \frac{8}{3}\gamma - \frac{4\epsilon}{9}} + \alpha \langle \xi_{1}, \mathbb{H}_{0} \xi_{1}\rangle \nonumber\\
&\leq& C_{\beta,\alpha} L^{3}\rho^{\frac{26}{9} - \frac{4}{3}\gamma - \frac{7}{18}\epsilon} + \alpha \langle \xi_{1}, \mathbb{H}_{0} \xi_{1}\rangle\;,
\end{eqnarray}
where we optimized over $\eta$, $\eta = \frac{1}{18} + \frac{2\gamma}{3} + \frac{\epsilon}{36}$, which is less $\gamma$ for $\frac{1}{6} + \frac{\epsilon}{12} < \gamma$. All the other contributions arising from the first line of (\ref{eq:bigcom}) are estimated in this way. 
\medskip

\noindent{\it\underline{Estimate for $\text{I}_{2}$.}} Consider now the term $\text{I}_{2}$ in Eq. (\ref{eq:splitI}). As discussed in Appendix \ref{app:6uv}, this term satisfies the same estimate as $\text{I}_{1}$:
\begin{eqnarray}\label{eq:estI2}
|\text{I}_{2}| \leq C_{\beta,\alpha} L^{3}\rho^{\frac{26}{9} - \frac{4}{3}\gamma - \frac{7}{18}\epsilon} + \alpha \langle \xi_{1}, \mathbb{H}_{0} \xi_{1}\rangle\;.
\end{eqnarray}
\noindent{\it\underline{Other contributions to (\ref{eq:bigcom}).}} Consider now the second line of (\ref{eq:bigcom}). We rewrite it as:
\begin{eqnarray}\label{eq:bigcomm2}
&&-a_\uparrow(\overline{v}^{\text{r}}_z)a_\downarrow(\overline{v}^{\text{r}}_{z'})[a_{\sigma}^{*}(u^{\text{r}}_{x})a_{\sigma'}^{*}(u^{\text{r}}_{y}), a_{\uparrow}(u^{\text{r}}_z)a_{\downarrow}(u^{\text{r}}_{z'})]a^{*}_{\sigma'}(\overline{v}_y)a^{*}_\sigma(\overline{v}_x)  \nonumber\\
&&\qquad = -a^{*}_{\sigma'}(\overline{v}_y)a^{*}_\sigma(\overline{v}_x)[a^{*}_\sigma(u^{\text{r}}_x)a^{*}_{\sigma'}(u^{\text{r}}_y), a_\uparrow(u^{\text{r}}_z)a_\downarrow(u^{\text{r}}_{z'})]a_{\uparrow}(\overline{v}^{\text{r}}_z)a_\downarrow(\overline{v}^{\text{r}}_{z'})\nonumber\\
&&\quad\qquad - [a^{*}_\sigma(u_x^{\text{r}})a^{*}_{\sigma'}(u^{\text{r}}_y),a_\uparrow(u^{\text{r}}_z)a_\downarrow(u^{\text{r}}_{z'}) ][a_\uparrow(\overline{v}^{\text{r}}_z)a_\downarrow(\overline{v}^{\text{r}}_{z'}), a^{*}_{\sigma'}(\overline{v}_y^{\text{r}})a_{\sigma}(\overline{v}_{x}^{\text{r}})]\;.
\end{eqnarray}
Let us consider the first term in the right-hand side. We rewrite the commutator as, omitting the spin for simplicity:
\begin{eqnarray}\label{eq:onemore2}
[a^{*}(u^{\text{r}}_x)a^{*}(u^{\text{r}}_y), a(u^{\text{r}}_z)a(u^{\text{r}}_{z'})] &=& a^{*}(u^{\text{r}}_{y}) u^{\text{r}}(x;z) a(u^{\text{r}}_{z'}) - a^{*}(u^{\text{r}}_{y}) u^{\text{r}}(x;z') a(u^{\text{r}}_{z}) \nonumber\\
&& + a(u^{\text{r}}_{z'}) u^{\text{r}}(y;z) a^{*}(u^{\text{r}}_{x}) - a(u^{\text{r}}_{z}) u^{\text{r}}(y;z') a^{*}(u^{\text{r}}_{x})\;.
\end{eqnarray}
The first two terms give rise to error terms of the form:
\begin{equation}
\text{II}_{a} = \int dxdy dzdz'\, V(x-y) \varphi(z-z') u^{\text{r}}(y;z) \langle \xi_{\lambda}, a^{*}(u^{\text{r}}_{x}) a^{*}(\overline{v}_y)a^{*}(\overline{v}_x) a(\overline{v}^{\text{r}}_z)a(\overline{v}^{\text{r}}_{z'}) a(u^{\text{r}}_{z'})\xi_{\lambda} \rangle\;.
\end{equation}
We then get, using that $\| u^{\text{r}}_{y} \|_{1} \leq C$:
\begin{eqnarray}\label{eq:estII}
|\text{II}_{a}| &\leq& C \rho^{2} \int dxdydzdz'\, V(x-y) \varphi(z-z') |u^{\text{r}}(y;z) | (\| a(u^{\text{r}}_{x}) \xi_{\lambda} \|^{2} + \| a(u^{\text{r}}_{z'}) \xi_{\lambda} \|^{2} ) \nonumber\\
&\leq& C\rho^{2 - 2\gamma} \langle \xi_{\lambda}, \mathcal{N} \xi_{\lambda} \rangle\;.
\end{eqnarray}
To bound the error terms produced by the last two terms in (\ref{eq:onemore2}), we normal order them. The normal ordered contribution can be estimated as $\text{II}_{a}$. The new contraction produces an error term of the form:
\begin{equation}
\text{II}_{b} =\int dxdy dzdz'\, V(x-y) \varphi(z-z') u^{\text{r}}(y;z) u^{\text{r}}(x;z') \langle \xi_{\lambda}, a^{*}(\overline{v}_y)a^{*}(\overline{v}_x) a(\overline{v}^{\text{r}}_z)a(\overline{v}^{\text{r}}_{z'}) \xi_{\lambda} \rangle\;.
\end{equation}
We have, using that $|\varphi(z-z')| \leq C$:
\begin{eqnarray}\label{eq:estIIb}
|\text{II}_{b}| &\leq& C\rho \int dxdy dzdz\, V(x-y) |u^{\text{r}}(y;z)| |u^{\text{r}}(x;z')| (\| a(\overline{v}_y) \xi_{\lambda} \|^{2} + \| a(\overline{v}^{\text{r}}_{z'}) \xi_{\lambda} \|^{2}) \nonumber\\
&\leq& C\rho \langle \xi_{\lambda}, \mathcal{N} \xi_{\lambda} \rangle\;.
\end{eqnarray}
This concludes the analysis of the error terms produced by the first term in the right-hand side of (\ref{eq:bigcomm2}). We are left with the second term in (\ref{eq:bigcomm2}), involving two commutators. We compute:
\begin{eqnarray}\label{eq:commu}
	[a^{*}_\sigma(u_x^{\text{r}})a^{*}_{\sigma'}(u^{\text{r}}_y),a_\uparrow(u^{\text{r}}_z)a_\downarrow(u^{\text{r}}_{z'}) ] &=& \delta_{\sigma',\uparrow} u^{\text{r}}_{\sigma'}(z;y)a^{*}_\sigma(u_x^{\text{r}})a_{\downarrow}(u^{\text{r}}_{z'})-\delta_{\sigma',\downarrow} u^{\text{r}}_{\sigma'}(z';y)a^{*}_\sigma(u_x^{\text{r}})a_\uparrow(u^{\text{r}}_z)\\
	&& -\delta_{\sigma, \uparrow} u^{\text{r}}_\sigma(z;x)a^{*}_{\sigma'}(u^{\text{r}}_y)a_\downarrow(u^{\text{r}}_{z'})-\delta_{\sigma,\downarrow}u^{\text{r}}_\sigma(z';x)a^{*}_{\sigma'}(u^{\text{r}}_y)a_\uparrow(u^{\text{r}}_z)\;,\nonumber\\
	&& +\delta_{\sigma,\uparrow}\delta_{\sigma',\downarrow}u^{\text{r}}_\sigma(z;x) u^{\text{r}}_{\sigma'}(z';y) - \delta_{\sigma,\downarrow}\delta_{\sigma', \uparrow}u^{\text{r}}_{\sigma}(z';x) u^{\text{r}}_{\sigma'}(z;y)\nonumber
\end{eqnarray}
and
\begin{eqnarray}\label{eq:commv}
	[a_\uparrow(\overline{v}^{\text{r}}_z)a_{\downarrow}(\overline{v}^{\text{r}}_{z'}), a^{*}_{\sigma'}(\overline{v}^{\text{r}}_y)a^{*}_{\sigma}(\overline{v}_x)]&=& \delta_{\sigma', \uparrow}\omega^{\text{r}}_{\sigma'}(z;y)a^{*}_\sigma(\overline{v}_x)a_\downarrow(\overline{v}^{\text{r}}_{z'}) -\delta_{\sigma',\downarrow}\omega^{\text{r}}_{\sigma'}(z';y)a^{*}_\sigma(\overline{v}^{\text{r}}_x)a_\uparrow(\overline{v}^{\text{r}}_z)\\
	&& +\delta_{\sigma,\uparrow}\omega^{\text{r}}_\sigma(z;x)a^{*}_{\sigma'}(\overline{v}_y^{\text{r}})a_\downarrow(\overline{v}^{\text{r}}_{z'})-\delta_{\sigma,\downarrow}\omega^{\text{r}}_\sigma(z';x)a^{*}_{\sigma'}(\overline{v}^{\text{r}}_y)a_{\downarrow}(\overline{v}^{\text{r}}_{z'})\nonumber\\
	&& +\delta_{\sigma,\uparrow}\delta_{\sigma',\downarrow}\omega^{\text{r}}_\sigma(z;x)\omega_{\sigma'}^{\text{r}}(z';y) -\delta_{\sigma,\downarrow}\delta_{\sigma', \uparrow}\omega^{\text{r}}_{\sigma}(z';x)\omega^{\text{r}}_{\sigma'}(z;y)\;.\nonumber
\end{eqnarray}
The last two terms in (\ref{eq:commu}) times the last two terms in (\ref{eq:commv}) produce the explicit $O(\rho^{2})$ term in the final claim. Summing also the complex conjugate, and using that $\varphi(x) = \varphi(-x)$, we have:
\begin{eqnarray}\label{eq:prima}
\text{I}_{\text{main}} &=& \sum_{\sigma\neq \sigma'}\int dxdydzdz'\, V(x-y) \varphi(z-z')  (\delta_{\sigma,\uparrow}\delta_{\sigma',\downarrow}u^{\text{r}}_\sigma(z;x) u^{\text{r}}_{\sigma'}(z';y) - \delta_{\sigma,\downarrow}\delta_{\sigma', \uparrow}u^{\text{r}}_{\sigma}(z';x) u^{\text{r}}_{\sigma'}(z;y))\nonumber\\&&\qquad \qquad \cdot  (\delta_{\sigma,\uparrow}\delta_{\sigma',\downarrow}\omega^{\text{r}}_\sigma(z;x)\omega_{\sigma'}^{\text{r}}(z';y) -\delta_{\sigma,\downarrow}\delta_{\sigma', \uparrow}\omega^{\text{r}}_{\sigma}(z';x)\omega^{\text{r}}_{\sigma'}(z;y)) \nonumber\\
&=& 2\int dxdydzdz'\, V(x-y) \varphi(z-z') u^{\text{r}}_{\uparrow} (z;x) u^{\text{r}}_{\downarrow}(z';y) \omega^{\text{r}}_{\uparrow}(z;x)\omega_{\downarrow}^{\text{r}}(z';y)\;,
\end{eqnarray}
where in the last step we used that $\varphi(x) = \varphi(-x)$. As proven in Appendix \ref{sec:UV4}, thanks to the regularity of the potential, the function $u^{\text{r}}_{\sigma}(z;x)$ can be replaced by the Dirac delta $\delta(z-x)$, up to higher order terms in the density:
\begin{equation}\label{eq:Imainreg}
\text{I}_{\text{main}} = 2\rho^{\text{r}}_{\uparrow} \rho^{\text{r}}_{\downarrow} \int dxdy\, V(x-y) \varphi(x-y) + \mathcal{E}_{\text{main}}\;,\qquad |\mathcal{E}_{\text{main}}| \leq CL^{3}\rho^{3 - 2\gamma}\;.
\end{equation}
All the other terms arising in the product of (\ref{eq:commu}) and of (\ref{eq:commv}) give rise to subleading contributions. For instance, a typical term is:
\begin{eqnarray}
\text{III}_{a} &=& \int  dxdydzdz'\, V(x-y) \varphi(z-z') u^{\text{r}}(z;y)\omega^{\text{r}}(z;y)\langle \xi_{\lambda}, a^{*}(u_x^{\text{r}})a(u^{\text{r}}_{z'})a^{*}(\overline{v}^{\text{r}}_x)a(\overline{v}^{\text{r}}_{z'}) \xi_{\lambda}\rangle\nonumber\\
&\equiv& \int  dxdydzdz'\, V(x-y) \varphi(z-z') u^{\text{r}}(z;y)\omega^{\text{r}}(z;y)\langle \xi_{\lambda}, a^{*}(u_x^{\text{r}})a^{*}(\overline{v}^{\text{r}}_x)a(\overline{v}^{\text{r}}_{z'}) a(u^{\text{r}}_{z'})\xi_{\lambda}\rangle\;,
\end{eqnarray}
where we used the orthogonality between $u^{\text{r}}$ and $\overline{v}^{\text{r}}$. Proceeding as for $\text{II}_{a}$, using that $\| u^{\text{r}}_{y} \|_{1} \leq C$ and that $\| \omega^{\text{r}}_{y} \|_{\infty} \leq C\rho$, we get:
\begin{equation}\label{eq:estIIIa}
|\text{III}_{a}| \leq C\rho^{2 - 2\gamma} \langle \xi_{\lambda}, \mathcal{N} \xi_{\lambda} \rangle\;.
\end{equation}
Another typical term is:
\begin{equation}
\text{III}_{b}  = \int  dxdydzdz'\, V(x-y)\varphi(z-z')\omega^{\text{r}}(z;x)\omega^{\text{r}}(z';y)u^{\text{r}}(z;y) \langle \xi_{\lambda}, a^{*}(u^{\text{r}}_x)a(u^{\text{r}}_{z'}) \xi_{\lambda}\rangle\;,
\end{equation}
which we bound as:
\begin{equation}
|\text{III}_{b}| \leq C \rho^{2 - 2\gamma} \langle \xi_{\lambda}, \mathcal{N} \xi_{\lambda} \rangle\;.
\end{equation}
The last type of error term arising in the product of (\ref{eq:commu}) and (\ref{eq:commv}) is:
\begin{equation}\label{eq:estIIIc}
\text{III}_{c} = \int  dxdydzdz'\, V(x-y)\varphi(z-z')u^{\text{r}}(z;x)u^{\text{r}}(z';y)\omega^{\text{r}}(z;y) \langle \xi_{\lambda}, a^{*}(\overline{v}_x^{\text{r}})a(\overline{v}^{\text{r}}_{z'}) \xi_{\lambda}\rangle\;,
\end{equation}
which we estimate as:
\begin{equation}
|\text{III}_{c}| \leq C \rho \langle \xi_{\lambda}, \mathcal{N} \xi_{\lambda}\rangle\;.
\end{equation}
\medskip

\noindent{\it\underline{Conclusion.}} Putting together (\ref{eq:estI1}), (\ref{eq:estI2}), (\ref{eq:estII}), (\ref{eq:estIIb}), (\ref{eq:Imainreg}), (\ref{eq:estIIIa})-(\ref{eq:estIIIc}) we have, using that $\rho^{2 - 2\gamma} \leq \rho$ for $\gamma \leq 1/2$:
\begin{eqnarray}
|\mathcal{E}_{\widetilde{\mathbb{Q}}_{4}}(\xi_{\lambda})| &\leq& C \rho \langle \xi_{\lambda}, \mathcal{N} \xi_{\lambda}\rangle + C_{\beta,\alpha} L^{3}\rho^{\frac{26}{9} - \frac{4}{3}\gamma - \frac{7}{18}\epsilon} + \alpha \langle \xi_{1}, \mathbb{H}_{0} \xi_{1}\rangle \nonumber\\
&\leq& CL^{3} \rho^{\frac{7}{3}} + C_{\beta,\alpha} L^{3}\rho^{\frac{26}{9} - \frac{4}{3}\gamma - \frac{7}{18}\epsilon} + \alpha \langle \xi_{1}, \mathbb{H}_{0} \xi_{1}\rangle\;,
\end{eqnarray}
where in the last step we used the bound (\ref{eq:Nimpro}) for the number operator. This concludes the proof.
\end{proof}
\noindent{\bf Conclusion: proof of the lower bound.} We are now ready to prove a lower bound for the ground state energy. We shall collect all the error terms, starting from Eq. (\ref{eq:begin}). We have, for $0<\alpha<1$:
\begin{eqnarray}
&&\langle \psi, \mathcal{H} \psi \rangle \geq E_{\text{HF}}(\omega) - \rho_{\uparrow}\rho_{\downarrow} \int dxdy\, V(x-y) \varphi(x-y) + \langle \xi_{1}, (\mathbb{H}_{0} + \widetilde{\mathbb{Q}}_{1}) \xi_{1} \rangle (1 - \alpha) \nonumber\\
&& - C_{\alpha} L^{3} \rho^{2 + \frac{1}{9}} - C_{\alpha} L^{3} \rho^{\frac{7}{3} - \frac{2\gamma}{3}} - C_{\alpha} L^{3} \rho^{\frac{13}{9} + 2\gamma} - C_{\alpha} L^{3} \rho^{2 + \frac{\epsilon}{3}} - CL^{3} \rho^{\frac{7}{3}} - C_{\beta,\alpha} L^{3}\rho^{\frac{26}{9} - \frac{4}{3}\gamma - \frac{7}{18}\epsilon}\;,
\end{eqnarray}
where we also used that $|\rho_{\sigma} - \rho_{\sigma}^{\text{r}}| \leq \rho^{1 + \frac{\epsilon}{3}}$. The integral in the right-hand side can be written as, up to a boundary term:
\begin{equation}\label{eq:Vinfty}
\int dxdy\, V(x-y) \varphi(x-y) = L^{3}\int_{\mathbb{R}^{3}} dx\, V_{\infty}(x) \varphi_{\infty}(x) + \frak{e}_{L}\;,
\end{equation}
with $\frak{e}_{L} = O(L^{2})$. Recall that
\begin{equation}\label{eq:aagamma}
8\pi a_{\gamma} = \int_{\mathbb{R}^{3}} dx\, V_{\infty}(x) (1 - \varphi_{\infty}(x))\;,\qquad |a - a_{\gamma}| \leq C\rho^{\gamma}\;.
\end{equation}
The optimal choice of parameters is:
\begin{equation}
\epsilon = \frac{1}{3}\;,\qquad \gamma = \frac{1}{3}\;,
\end{equation}
which fulfills the assumptions of Proposition \ref{prp:intQ4}. Taking $\frac{1}{2} \leq \alpha  < 1$, we finally have, for $L$ large enough:
\begin{equation}\label{eq:lowfin}
\frac{\langle \psi, \mathcal{H} \psi \rangle}{L^{3}} \geq \frac{3}{5}(6\pi^{2})^{\frac{2}{3}} (\rho_{\uparrow}^{\frac{5}{3}} + \rho_{\downarrow}^{\frac{5}{3}}) + 8\pi a \rho_{\uparrow} \rho_{\downarrow} - CL^{3}\rho^{2 + \frac{1}{9}} + \langle \xi_{1}, (\mathbb{H}_{0} + \widetilde{\mathbb{Q}}_{1}) \xi_{1} \rangle (1 - \alpha)\;.
\end{equation}
This concludes the proof of the lower bound.
\begin{remark}[Improved condensation estimate.]\label{rem:cond} The inequality (\ref{eq:lowfin}) can be used to prove an improved estimate for $\langle \xi_{1}, \mathbb{H}_{0} \xi_{1}\rangle$, for states $\psi$ that are energetically close enough to the ground state. Let $\psi$ be a fermionic state such that:
\begin{equation}\label{eq:distE}
\frac{\langle \psi, \mathcal{H} \psi \rangle}{L^{3}} - \frac{3}{5}(6\pi^{2})^{\frac{2}{3}} (\rho_{\uparrow}^{\frac{5}{3}} + \rho_{\downarrow}^{\frac{5}{3}}) + 8\pi a \rho_{\uparrow} \rho_{\downarrow} \leq C\rho^{2 + \frac{1}{9}}\;.
\end{equation}
As we will see in the next section, such states exists; in particular, the ground state satisfies the inequality (\ref{eq:distE}). Eqs. (\ref{eq:lowfin}), (\ref{eq:distE}) imply:
\begin{equation}
\langle \xi_{1}, \mathbb{H}_{0}\xi_{1}\rangle \leq CL^{2} \rho^{2 + \frac{1}{9}}\;;
\end{equation}
plugging this bound in (\ref{eq:Nimpro}), we get, for $\gamma = 1/3$:
\begin{equation}
\langle R^{*}\psi, \mathcal{N} R^{*}\psi \rangle \leq CL^{3} \rho^{\frac{11}{9}}\;.
\end{equation}
This inequality can be used to prove, see \cite{BPS}:
\begin{equation}
\tr\, \gamma^{(1)}_{\psi} (1 - \omega) \leq CL^{3} \rho^{\frac{11}{9}}\;.
\end{equation}
This bound improves on the condensation estimate (\ref{eq:condest}). The optimal condensation estimate is expected to be of order $\rho^{\frac{4}{3}}$: this is consistent with the fact that the next order correction to the ground state energy is of order $\rho^{\frac{7}{3}}$, \cite{HY}.
\end{remark}

\section{Upper bound on the ground state energy}\label{sec:upper}
In this section we shall conclude the proof of Theorem \ref{thm:main}, by proving an upper bound on the ground state energy that matches the lower bound we obtained in Section \ref{sec:lwbd}, up to $o(\rho^{2})$. This will be done taking the natural trial state $\psi = R T \Omega$, with $T$ the correlation structure defined in Section \ref{sec:T}, for a suitable value of the parameter $\gamma$, to be optimized. 

To begin, notice that $\psi$ is an $N$-particle state, with $N_{\uparrow}$ particles with spin $\uparrow$ and $N_{\downarrow}$ particles with spin $\downarrow$. In fact, we can rewrite $\psi$ as:
\begin{equation}
\psi = \widetilde{T} R\Omega\;,\qquad \widetilde{T} = e^{\widetilde B - \widetilde B^{*}}\;,\qquad \widetilde B := \int dzdz'\, \varphi(z-z') a_{\uparrow}(u^{\text{r}}_{z}) a^{*}_{\uparrow}(\overline{v} \overline{v}^{\text{r}}_{z}) a_{\downarrow}(u^{\text{r}}_{z'}) a^{*}_{\downarrow}(\overline{v} \overline{v}^{\text{r}}_{z'})\;,
\end{equation}
and $[\widetilde{B}, \mathcal{N}_{\sigma}] = 0$, with $\mathcal{N}_{\sigma} = \int dx\, a^{*}_{x,\sigma} a_{x,\sigma}$. Also, by the defining properties of fermionic Bogoliubov transformations (\ref{eq:defbogi}), we know that $R\Omega$ is an $N$-particle state, with $N_{\uparrow}$ particles with spin $\uparrow$ and $N_{\downarrow}$ particles with spin $\downarrow$. Therefore:
\begin{equation}
\mathcal{N}_{\sigma} \psi = \mathcal{N}_{\sigma} \widetilde{T} R\Omega = \widetilde{T} \mathcal{N}_{\sigma} R\Omega = N_{\sigma} \psi\;.
\end{equation}
Moreover, being $R$ and $T$ unitary operators, $\|\psi\| = \|\Omega\| = 1$.

To compute the energy of $\psi$, we shall rely on the estimates we have already proved for the lower bound. An important role in the upper bound is played by the following bound for the number operator, for $\gamma \leq 1/2$:
\begin{equation}\label{eq:Nup}
\langle \xi_{\lambda}, \mathcal{N} \xi_{\lambda}\rangle \leq CL^{3} \rho^{2-\gamma}\;,\qquad \xi_{\lambda} := T_{1-\lambda} \Omega\;.
\end{equation}
This bound follows from (\ref{eq:Nimpro}), using that now $\xi_{1} = \Omega$. Thanks to (\ref{eq:Nup}), it is not difficult to see that to prove the propagation of the estimates in Proposition \ref{prp:prop2} it is enough to assume $\gamma \leq 1/3$. The only point where we required a lower bound for $\gamma$ is the estimate (\ref{eq:TTQgro}), which now holds for all $\gamma$, as it is clear from the bound (\ref{eq:Nup}).

By Propositions \ref{prp:conj}, \ref{prp:bogbd}, we have:
\begin{eqnarray}
E_{L}(N_{\uparrow}, N_{\downarrow}) &\leq& E_{\text{HF}}(\omega) + \langle T\Omega, \mathbb{H}_{0} T\Omega \rangle + \langle T\Omega, \mathbb{X} T\Omega \rangle + \langle T\Omega, \mathbb{Q} T\Omega \rangle\nonumber\\
&\leq& E_{\text{HF}}(\omega) + \langle T\Omega, (\mathbb{H}_{0} + \mathbb{Q}_{1} + \mathbb{Q}_{4}) T\Omega \rangle + \mathcal{E}_{1}(\psi)\;,
\end{eqnarray}
with:
\begin{equation}\label{eq:E1up}
|\mathcal{E}_{1}(\psi)| \leq C\rho \langle T\Omega, \mathcal{N} T\Omega \rangle\ \leq CL^{3} \rho^{3 - \gamma}\;.
\end{equation}
Here we crucially used that the state $\xi_{0} = T\Omega$ is such that $\xi_{0}^{(n)} = 0$ unless $n = 4k$ for $k\in \mathbb{N}$, and hence that:
\begin{equation}
\langle T\Omega, \mathbb{Q}_{3} T\Omega\rangle = 0\;,
\end{equation}
recall Eq. (\ref{eq:Q3canc}). Consider now the $\langle T\Omega, \mathbb{Q}_{4} T\Omega\rangle$ term. We rewrite it as:
\begin{equation}
\langle T\Omega, \mathbb{Q}_{4} T\Omega\rangle = \langle T\Omega, \widetilde{\mathbb{Q}}_{4} T\Omega\rangle + \langle T\Omega, \widehat{\mathbb{Q}}_{4} T\Omega\rangle\;,
\end{equation}
with $\widehat{\mathbb{Q}}_{4}$ the contribution to $\mathbb{Q}_{4}$ with aligned spins, 
\begin{equation}
\widehat{\mathbb{Q}}_{4} = \frac{1}{2} \sum_{\sigma} \int dxdy\, V(x-y) a^{*}_{\sigma}(u_{x}) a^{*}_{\sigma}(u_{y}) a^{*}_{\sigma}(\overline{v}_{y}) a^{*}_{\sigma}(\overline{v}_{x})  + \text{h.c..}
\end{equation}
We claim that $\langle T\Omega, \widehat{\mathbb{Q}}_{4} T\Omega\rangle = 0$. To prove this, we shall use that $T\Omega$ and $\widehat{\mathbb{Q}}_{4} T\Omega$ belong to different spin sectors, and hence they are orthogonal vectors in the Fock space. Let $\mathcal{S}$ be the spin operator,
\begin{equation}
\mathcal{S} = \sum_{\sigma} \sigma \mathcal{N}_{\sigma}\;,\qquad \mathcal{N}_{\sigma} = \int dx\, a^{*}_{x,\sigma} a_{x,\sigma},
\end{equation}
where we identify $\uparrow\, \equiv +$ and $\downarrow\, \equiv -$. Clearly, $\mathcal{S} \Omega = 0$. Also, since $[ B, \mathcal{S} ] = 0$, we have $[T, \mathcal{S}] = 0$. Therefore, $\mathcal{S} T \Omega = 0$. At the same time,
\begin{eqnarray}
&&\mathcal{S} \int dxdy\, V(x-y) a^{*}_{\sigma}(u_{x}) a^{*}_{\sigma}(u_{y}) a^{*}_{\sigma}(\overline{v}_{y}) a^{*}_{\sigma}(\overline{v}_{x}) T\Omega \\
&&\qquad \qquad \qquad = \int dxdy\, V(x-y) a^{*}_{\sigma}(u_{x}) a^{*}_{\sigma}(u_{y}) a^{*}_{\sigma}(\overline{v}_{y}) a^{*}_{\sigma}(\overline{v}_{x}) (\mathcal{S} + \sigma 4) T\Omega \nonumber\\
&&\qquad \qquad \qquad = \sigma 4 \int dxdy\, V(x-y) a^{*}_{\sigma}(u_{x}) a^{*}_{\sigma}(u_{y}) a^{*}_{\sigma}(\overline{v}_{y}) a^{*}_{\sigma}(\overline{v}_{x}) T\Omega \neq 0\;.
\end{eqnarray}
Therefore, $\langle T\Omega, \widehat{\mathbb{Q}}_{4} T\Omega\rangle = 0$ by orthogonality between different spin sectors. We are then left with:
\begin{equation}\label{eq:Eup0}
E_{L}(N_{\uparrow}, N_{\downarrow}) \leq E_{\text{HF}}(\omega) + \langle T\Omega, (\mathbb{H}_{0} + \mathbb{Q}_{1} + \widetilde{\mathbb{Q}}_{4}) T\Omega \rangle + \mathcal{E}_{1}(\psi)\;,
\end{equation}
with $\mathcal{E}_{1}(\psi)$ bounded as in (\ref{eq:E1up}). Proceeding exactly as in Section \ref{sec:lwbd}, Eq. (\ref{eq:step1}), we get:
\begin{eqnarray}
\langle T\Omega, (\mathbb{H}_{0} + \mathbb{Q}_{1} + \widetilde{\mathbb{Q}}_{4}) T\Omega \rangle = \int_{0}^{1} d\lambda\, \langle \xi_{\lambda}, (\mathbb{T}_{1} + \mathbb{T}_{2}) \xi_{\lambda} \rangle + \langle \xi_{0}, \widetilde{\mathbb{Q}}_{4} \xi_{0} \rangle + \mathcal{E}_{2}(\psi)\;,
\end{eqnarray}
where $\mathcal{E}_{2}(\psi)$ can be bounded as, for $0\leq \gamma \leq 1/3$, thanks to the estimates (\ref{eq:bdsfin1}) and the bound  (\ref{eq:Nup}) for the number operator:
\begin{eqnarray}
|\mathcal{E}_{2}(\psi)| &\leq& \max_{\lambda \in [0;1]} | \mathcal{E}_{\mathbb{H}_{0}}(\xi_{\lambda}) | + \max_{\lambda \in [0;1]} | \mathcal{E}_{\mathbb{Q}_{1}}(\xi_{\lambda}) | \nonumber\\
&\leq& CL^{3} \rho^{\frac{7}{3} - \frac{\gamma}{2}} + C_{\beta}L^{3} \rho^{3 - \frac{11}{5}\gamma}\nonumber\\
&\leq& CL^{3} \rho^{\frac{7}{3} - \frac{\gamma}{2}}\;.
\end{eqnarray}
To get the second inequality we optimized over the parameter $\eta$ appearing in the bound for $ \mathcal{E}_{\mathbb{Q}_{1}}(\xi_{\lambda})$, $\eta = \frac{\gamma}{5}$, and to get the third we used that $\gamma \leq 1/3$. Then we write, proceeding as in Eq. (\ref{eq:intt}):
\begin{eqnarray}
\langle T\Omega, (\mathbb{H}_{0} + \mathbb{Q}_{1} + \widetilde{\mathbb{Q}}_{4}) T\Omega \rangle = - \int_{0}^{1} \langle \xi_{\lambda}, \widetilde{\mathbb{Q}}_{4}^{\text{r}} \xi_{\lambda} \rangle + \langle \xi_{0}, \widetilde{\mathbb{Q}}_{4} \xi_{0} \rangle + \mathcal{E}_{3}(\psi) + \mathcal{E}_{2}(\psi)\;,
\end{eqnarray}
with:
\begin{eqnarray}
|\mathcal{E}_{3}(\psi)| &\leq& \max_{\lambda \in [0;1]} |\langle \xi_{\lambda}, (\mathbb{T}_{1} + \mathbb{T}_{2} + \widetilde{\mathbb{Q}}_{4}^{\text{r}}) \xi_{\lambda} \rangle|\nonumber\\
&\leq& C L^{3} \rho^{2 + \gamma}\;,
\end{eqnarray}
where we used the estimate (\ref{eq:bdsfin2}), and the bound for the number operator (\ref{eq:Nup}). Next, proceeding as in Eq. (\ref{eq:derQ4}) we have:
\begin{eqnarray}
- \int_{0}^{1} d\lambda\, \langle \xi_{\lambda}, \widetilde{\mathbb{Q}}_{4}^{\text{r}} \xi_{\lambda} \rangle + \langle \xi_{0}, \widetilde{\mathbb{Q}}_{4} \xi_{0} \rangle = - \int_{0}^{1}d\lambda\, \frac{d}{d\lambda} \langle \xi_{\lambda} \widetilde{\mathbb{Q}}_{4} \xi_{\lambda}\rangle + \int_{0}^{1} d\lambda \int_{\lambda}^{1} d\lambda' \frac{d}{d\lambda'} \langle \xi_{\lambda'}, \widetilde{\mathbb{Q}}_{4}^{\text{r}} \xi_{\lambda'} \rangle\;.
\end{eqnarray}
We compute the derivatives using Proposition \ref{prp:intQ4}. We obtain:
\begin{eqnarray}\label{eq:derQQ4up}
\frac{d}{d\lambda}\langle \xi_{\lambda}, \widetilde{\mathbb{Q}}^{\text{r}}_{4} \xi_{\lambda} \rangle &=& 2\rho^{\text{r}}_{\uparrow}\rho^{\text{r}}_{\downarrow} \int dxdy\, V(x-y) \varphi(x-y) + \mathcal{E}_{\widetilde{\mathbb{Q}}^{\text{r}}_{4}}(\xi_{\lambda}) \nonumber\\
\frac{d}{d\lambda}\langle \xi_{\lambda}, \widetilde{\mathbb{Q}}_{4} \xi_{\lambda} \rangle &=& 2\rho^{\text{r}}_{\uparrow}\rho^{\text{r}}_{\downarrow} \int dxdy\, V(x-y) \varphi(x-y) + \mathcal{E}_{\widetilde{\mathbb{Q}}_{4}}(\xi_{\lambda})\;;
\end{eqnarray}
the bound for the error terms can be improved with respect to (\ref{eq:derQQ4}), making use of the estimate for the number operator (\ref{eq:Nup}). Inspection of the proof of Proposition \ref{prp:intQ4} shows that the estimate for the error terms $\mathcal{E}_{\widetilde{\mathbb{Q}}^{\text{r}}_{4}}(\xi_{\lambda})$, $\mathcal{E}_{\widetilde{\mathbb{Q}}_{4}}(\xi_{\lambda})$, are determined by the bound for the term $\text{I}_{1}$ in the first line of Eq. (\ref{eq:I1est00}):
\begin{eqnarray}
|\text{I}_{1}| &\leq& CL^{\frac{3}{2}}\rho^{2 + \frac{3\eta}{2} - 2\gamma - \frac{\epsilon}{3}} \| \mathcal{N}^{\frac{1}{2}} \xi_{\lambda} \| + C_{\beta} L^{3} \rho^{3 - 2\eta - \frac{\epsilon}{3}}\nonumber\\
&\leq& C_{\beta} L^{3} \rho^{3 - \frac{10}{7}\gamma - \frac{\epsilon}{3}}\;,
\end{eqnarray}
where in the last step we used the bound (\ref{eq:Nup}) and we optimized over $\eta$, $\eta = \frac{5}{7}\gamma$. Notice that, with respect to the original proof of Proposition \ref{prp:intQ4}, the optimal value of $\eta$ is now independent of $\epsilon$. We find:
\begin{equation}\label{eq:errQ4up0}
|\mathcal{E}_{\widetilde{\mathbb{Q}}^{\text{r}}_{4}}(\psi)| \leq C_{\beta}L^{3} \rho^{3 - \frac{10}{7}\gamma - \frac{\epsilon}{3}}\;,\quad |\mathcal{E}_{\widetilde{\mathbb{Q}}_{4}}(\psi)| \leq C_{\beta}L^{3} \rho^{3 - \frac{10}{7}\gamma - \frac{\epsilon}{3}}\;.
\end{equation}
With respect to the original proof of Proposition \ref{prp:intQ4}, the bound (\ref{eq:errQ4up0}) holds for all $\epsilon\geq 0$, as a consequence of the fact that the optimal value of $\eta$ does not depend on $\epsilon$.

\medskip

\noindent{\bf Conclusion: proof of the upper bound.} Putting together (\ref{eq:Eup0})-(\ref{eq:errQ4up0}), we find, for $0 \leq \gamma \leq 1/3$:
\begin{eqnarray}
E_{L}(N_{\uparrow}, N_{\downarrow}) &\leq& E_{\text{HF}}(\omega) - \rho_{\uparrow}\rho_{\downarrow} \int dxdy\, V(x-y) \varphi(x-y)\nonumber\\
&& + CL^{3} \rho^{2 + \frac{\epsilon}{3}} + C L^{3} \rho^{2 + \gamma} + C L^{3} \rho^{\frac{7}{3} - \frac{\gamma}{2}} + C_{\beta}L^{3} \rho^{3 - \frac{10}{7}\gamma - \frac{\epsilon}{3}}\;,
\end{eqnarray}
where we replaced $\rho_{\sigma}^{\text{r}}$ with $\rho_{\sigma}$, thus giving rise to an error term $O(\rho^{2 + \frac{\epsilon}{3}})$. Using Eqs. (\ref{eq:Vinfty}), (\ref{eq:aagamma}), we get, for $L$ large enough:
\begin{equation}
\frac{E_{L}(N_{\uparrow}, N_{\downarrow})}{L^{3}} \leq \frac{3}{5}(6\pi^{2})^{\frac{2}{3}} (\rho_{\uparrow}^{\frac{5}{3}} + \rho_{\downarrow}^{\frac{5}{3}}) + 8\pi a \rho_{\uparrow} \rho_{\downarrow} + C\rho^{2 + \frac{\epsilon}{3}} + C \rho^{2 + \gamma} + C\rho^{\frac{7}{3} - \frac{\gamma}{2}} + C_{\beta} \rho^{3 - \frac{10}{7}\gamma - \frac{\epsilon}{3}}\;.\nonumber
\end{equation}
The optimal value of $\epsilon$ is $\epsilon = \frac{3}{2} - \frac{15}{7}\gamma$ (recall that we are assuming $\gamma\leq 1/3$, so that $\epsilon \geq 0$). For $\gamma \leq 7/9$, and for this choice of $\epsilon$, $\rho^{2 + \frac{\epsilon}{3}}$ is smaller than $\rho^{\frac{7}{3} - \frac{\gamma}{2}}$. Therefore:
\begin{equation}
\frac{E_{L}(N_{\uparrow}, N_{\downarrow})}{L^{3}} \leq \frac{3}{5}(6\pi^{2})^{\frac{2}{3}} (\rho_{\uparrow}^{\frac{5}{3}} + \rho_{\downarrow}^{\frac{5}{3}}) + 8\pi a \rho_{\uparrow} \rho_{\downarrow} + C \rho^{2 + \gamma} + C  \rho^{\frac{7}{3} - \frac{\gamma}{2}}\;.
\end{equation}
Optimizing over $\gamma$, $\gamma = 2/9$, we finally get:
\begin{equation}
\frac{E_{L}(N_{\uparrow}, N_{\downarrow})}{L^{3}} \leq \frac{3}{5}(6\pi^{2})^{\frac{2}{3}} (\rho_{\uparrow}^{\frac{5}{3}} + \rho_{\downarrow}^{\frac{5}{3}}) + 8\pi a \rho_{\uparrow} \rho_{\downarrow} + C  \rho^{2 + \frac{2}{9}}\;.
\end{equation}
This concludes the proof of the upper bound, and of Theorem \ref{thm:main}. \qed
\medskip

\noindent{\bf Acknowledgements.} Marco Falconi, Emanuela L. Giacomelli and Marcello Porta acknowledge financial support from the Swiss National Science Foundation, for the project ``Mathematical Aspects of Many-Body Quantum Systems''. Marcello Porta acknowledges financial support from the European Research Council
(ERC) under the European Union's Horizon 2020 research and innovation programme (ERC StG MaMBoQ, grant agreement n.802901). We thank Benjamin Schlein for useful discussions.

\appendix
\section{Properties of the scattering equation}\label{sec:scat}
We start by recalling some useful properties of the solution of the scattering equation (\ref{eq:scat2}). We refer the reader to \cite{LSSY, ESY} for more details.
\begin{lemma}\label{lem:scat} Let $V$ be a non-negative, compactly supported and spherically symmetric function, such that $\text{supp}\, V \subset \{ x\in \mathbb{R}^{3} \mid |x| \leq R_{0} \}$, for some $R_{0}>0$. Let $a$ be the scattering length of $V$. Let $R>R_{0}$ and let $f_{R}$ be the ground state of the Neumann problem on the ball $B_{R}(0) = \{ x\in \mathbb{R}^{3} \mid |x| < R \}$:
\begin{equation}\label{eq:scateq}
(-\Delta + \frac{1}{2} V)f_{R} = E_{R} f_{R}\;,
\end{equation}
with boundary condition:
\begin{equation}
f_{R}(x) = 1\;,\qquad \nabla f_{R}(x) = 0\;,\qquad \text{for $x\in \partial B_{R}(0)$.}
\end{equation}
For $R$ sufficiently large, the following holds.
\begin{itemize}
\item[(i)] We have:
\begin{equation}\label{eq:ERdiff}
|E_{R} - 3 a R^{-3}|\leq \frac{C}{R^{4}}\;.
\end{equation}
\item[(ii)] We have, for all $x\in B_{R}(0)$, for any $n\in \mathbb{N}$, provided $V\in C^{k}$ with $k$ large enough:
\begin{equation}\label{eq:bdf}
0\leq f_{R}(x) \leq 1\;,\qquad 1 - f_{R}(x) \leq \frac{C}{|x|+1}\;,\qquad | \nabla^{n} f_{R}(x) | \leq C_{n}\;.
\end{equation}
\item[(iii)] Let:
\begin{equation}
a_{R} = \frac{1}{8\pi} \int dx\, V(x) f_{R}(x)\;.
\end{equation}
Then:
\begin{equation}
|a - a_{R}| \leq \frac{C}{R}\;.
\end{equation}
\end{itemize}
\end{lemma}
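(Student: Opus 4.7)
My approach exploits spherical symmetry to reduce everything to a second-order ODE, and then compares $f_R$ to the zero-energy scattering solution on $\mathbb{R}^3$. Throughout I will use the fact that $u(r) := r f_R(r)$ satisfies
\begin{equation*}
-u''(r) + \tfrac{1}{2}V(r)u(r) = E_R u(r), \qquad r\in(0,R),
\end{equation*}
with boundary conditions $u(0)=0$, $u(R)=R$, $u'(R)=1$ (the second condition coming from $\nabla f_R(x)=0$ on $\partial B_R$).

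\textbf{Step 1 (pointwise bounds for $f_R$, part (ii)).} Positivity $f_R\geq 0$ follows from $V\geq 0$ by a Perron--Frobenius argument (or: if $f_R$ changed sign, its positive part would be a lower-energy competitor). Next, since $E_R$ is the ground state energy and the constant function $1$ satisfies the Neumann boundary condition, the variational principle gives $E_R \leq \frac{1}{2|B_R|}\int V = O(R^{-3})$, hence $E_R\to 0$. For $r>R_0$ (outside $\supp V$) the equation becomes $-u''=E_R u$, so $u$ is a linear combination of $\sin(\sqrt{E_R}\,r)$ and $\cos(\sqrt{E_R}\,r)$. Expanding for small $E_R\cdot R^2$ and matching with $u(R)=R$, $u'(R)=1$, I obtain $u(r) = r - \tilde a + O(E_R R^2 r)$ for some constant $\tilde a$, which yields $1 - f_R(x) = \tilde a/|x| + O(E_R R^2)$ and hence the $C/(|x|+1)$ bound (provided $\tilde a$ is bounded, which I will verify in Step 3). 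The upper bound $f_R\leq 1$ is then a consequence of the maximum principle applied to $1-f_R$, which is a nonnegative supersolution on $\{V=0\}$ and attains its boundary minimum $0$ on $\partial B_R$. The derivative bounds $|\nabla^n f_R|\leq C_n$ follow from elliptic regularity: on $\supp V$ one uses that $V\in C^k$ and standard interior/boundary Schauder estimates, while outside $\supp V$ the function $u$ is an explicit combination of trigonometric functions with uniformly bounded derivatives.

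\textbf{Step 2 (energy identity, part (i)).} Integrating the equation over $B_R$ and using the Neumann condition to kill the boundary term produced by $\int_{B_R}\Delta f_R$, I get the identity
\begin{equation*}
\tfrac{1}{2}\int_{B_R} V(x)f_R(x)\,dx = E_R \int_{B_R} f_R(x)\,dx,
\end{equation*}
i.e.\ $4\pi a_R = E_R \int_{B_R} f_R$. Using $1-f_R(x)\leq C/(|x|+1)$ from Step 1, I estimate $\int_{B_R}f_R = |B_R| - \int_{B_R}(1-f_R) = \tfrac{4\pi}{3}R^3 + O(R^2)$. Combined with $E_R=O(R^{-3})$, this gives
\begin{equation*}
E_R = \frac{3 a_R}{R^3} + O(R^{-4}),
\end{equation*}
and (i) then follows once (iii) is established (or directly from this identity with $a_R$ in place of $a$ and then absorbing the $|a-a_R|\leq C/R$ error into the $R^{-4}$ term).

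\textbf{Step 3 (comparison with the infinite-volume problem, part (iii)).} Let $\varphi_\infty$ solve $-\Delta \varphi_\infty + \tfrac12 V\varphi_\infty = 0$ on $\mathbb{R}^3$ with $\varphi_\infty(x)\to 1$ at infinity; then $8\pi a = \int V\varphi_\infty$ and $\varphi_\infty(x) = 1 - a/|x|$ for $|x|>R_0$. The radial ODE analysis of Step 1 shows that $f_R(x) - \varphi_\infty(x)$ is, outside $\supp V$, a combination of harmonic functions plus a perturbation of size $O(E_R R^2) = O(R^{-1})$; matching the boundary conditions at $r=R$ forces $\tilde a = a + O(R^{-1})$, whence $a_R = \tfrac{1}{8\pi}\int V f_R = \tfrac{1}{8\pi}\int V\varphi_\infty + O(R^{-1}) = a + O(R^{-1})$.

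\textbf{Main obstacle.} The delicate point is the quantitative comparison in Step 3: one needs to convert the ODE matching into a sharp $O(R^{-1})$ bound on $\tilde a - a$, which requires controlling the error $E_R u$ in the region $r\in(R_0,R)$ uniformly. This is handled by a Gronwall-type estimate on $u-\tilde u$, where $\tilde u(r)=r\varphi_\infty(r)$ satisfies the $E_R=0$ version of the equation; the error $E_R\int_0^R u \lesssim E_R R^2$ then propagates to $|\tilde a - a| \lesssim R^{-1}$. All remaining computations are standard and I will refer to \cite{LSSY, ESY} for the analogous arguments.
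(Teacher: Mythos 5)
Note first that the paper does not supply a proof of Lemma \ref{lem:scat}: it explicitly defers to \cite{LSSY, ESY} and the follow-up remark, so there is no internal argument to compare your sketch against. Your overall strategy --- passing to the radial ODE for $u(r) = r f_R(r)$, the variational bound $0 \leq E_R \leq \tfrac{1}{2|B_R|}\int V = O(R^{-3})$, the exact Neumann identity $\tfrac12\int V f_R = E_R\int f_R$ (hence $4\pi a_R = E_R\int f_R$), and the comparison of $u$ with the zero-energy solution $r\varphi_\infty(r)$ --- is exactly the standard route taken in those references, and Steps 2 and 3 are sound once the Gronwall details are filled in.

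The step that does not work as written is the bound $f_R\leq 1$. You invoke a maximum principle on $\{V=0\}$, calling $1-f_R$ a ``nonnegative supersolution''; that is circular (its nonnegativity is the thing to be proved), and the sign is reversed: on $\{V=0\}\cap B_R$ one has $\Delta(1-f_R) = -\Delta f_R = E_R f_R \geq 0$ using the already-established $f_R\geq 0$, so $1-f_R$ is a \emph{sub}solution there, meaning the maximum, not the minimum, is controlled by the boundary --- which moreover includes the unknown component $\partial(\text{supp}\, V)$, not just $\partial B_R$. The correct argument must use the radial ODE quantitatively. Integrating $(r^2 f_R')' = r^2(\tfrac12 V - E_R)f_R$ from the origin gives
\begin{equation*}
r^2 f_R'(r) \;=\; \tfrac12\int_0^r s^2 V(s) f_R(s)\, ds \;-\; E_R\int_0^r s^2 f_R(s)\, ds\;;
\end{equation*}
the first term is nonnegative, vanishes for $r$ below $\text{supp}\,V$, and saturates at $a_R$ once $r>R_0$, while the second is $O(E_R r^3)$ and, by the Neumann identity, equals $a_R$ exactly at $r=R$. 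This immediately gives $f_R'\geq 0$ on $(R_0,R)$, hence $f_R\leq f_R(R)=1$ there; inside $\text{supp}\,V$ (and in any hole of the support containing the origin, where the first integral has not yet turned on) the right-hand side can be negative, and one must use $E_R=O(R^{-3})$ to control the possible overshoot of $f_R$ near $0$ for $R$ large. This is precisely the content of Lemma A.1 in \cite{ESY} and the analogous lemma in \cite{LSSY}; your sketch would need to supply it, and the maximum-principle shortcut does not.
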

\begin{remark}
Concerning the last bound in (\ref{eq:bdf}), one can also prove that the derivatives decay in $|x|$. We will not need such improvement. For $n=1,2$, this bound is proven in {\it e.g.} \cite{ESY}. For higher values of $n$, the bound follows from the bounds for $n = 1,2$ and from the fact that $f_{R}$ solves the scattering equation. See also \cite{HSscat} for an explicit, nonperturbative expression of the scattering length $a$, in terms of the potential $V$.
\end{remark}
In the following we shall denote by $f = 1 - \varphi$ the extension to $\mathbb{R}^{3}$ of the Neumann solution of the scattering equation on the ball $B_{\rho^{-\gamma}}(0)$, with $\gamma > 0$ (that is, we will drop the $\infty$ symbol, that we used in the bulk of the paper, to avoid a clash of notations with the above lemma). Notice that the second bound in Eq. (\ref{eq:bdf}), together with the compact support in $B_{\rho^{-\gamma}}(0)$, immediately implies:
\begin{equation}
\| \varphi \|_{1} \leq C\rho^{-2\gamma}\;.
\end{equation}
In the proof of Proposition \ref{prp:intQ4},  an important role is played by cut-off versions of $\varphi$. We set:
\begin{eqnarray}\label{eq:phiminmagdef}
\varphi_{\sharp}(x) = \frac{1}{L^{3}} \sum_{p\in \frac{2\pi}{L} \mathbb{Z}^{3}} e^{ip\cdot x} \hat \varphi(p) \tilde \chi_{\sharp}(p)\;,
\end{eqnarray}
with $\tilde \chi_{\sharp}(p)$ as in Eqs. (\ref{eq:consphi2}). Notice that the functions $\varphi_{\sharp}$ are no longer compactly supported. We shall assume that $0\leq \eta < \gamma$, which is the interesting choice of parameters for our analysis.
\begin{lemma}\label{lem:bdL1} Let $V$ be as in Lemma \ref{lem:scat}. Then, for $L$ large enough:
\begin{equation}\label{eq:norms}
\| \varphi_{<} \|_{1} \leq C\rho^{-2\gamma} |\log \rho|\;,\qquad \|\varphi_{0}\|_{1} \leq C\rho^{-2\eta} |\log \rho|\;,\qquad \| \varphi_{>} \|_{1} \leq C\rho^{-\frac{2\eta}{\delta}} |\log \rho|\;.
\end{equation}
\end{lemma}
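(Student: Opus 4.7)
The plan is to reduce each bound to an estimate on the infinite-volume functions $\varphi_{\infty,\sharp}(x) := \int_{\mathbb{R}^3} \frac{dp}{(2\pi)^3}\, e^{ip\cdot x}\, \hat{\varphi}_\infty(p)\, \tilde\chi_\sharp(p)$ for $\sharp \in \{<,0,>\}$. Since $\varphi_\sharp$ is the periodization of $\varphi_{\infty,\sharp}$ over $\Lambda_L$, we have $\|\varphi_\sharp\|_{L^1(\Lambda_L)} \leq \|\varphi_{\infty,\sharp}\|_{L^1(\mathbb{R}^3)}$, and it suffices to control the right-hand side. For $\varphi_<$, Young's convolution inequality applied to $\varphi_{\infty,<} = \varphi_\infty * \check{\tilde\chi}_<$ gives $\|\varphi_{\infty,<}\|_1 \leq \|\varphi_\infty\|_1\, \|\check{\tilde\chi}_<\|_1$. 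Lemma~\ref{lem:scat}(ii) and the bound $0\leq \varphi_\infty\leq C/|x|$ on its support $B_{\rho^{-\gamma}}$ give $\|\varphi_\infty\|_1\leq C\rho^{-2\gamma}$, while a scaling argument shows that $\|\check{\tilde\chi}_<\|_1$ is bounded uniformly in $\rho$; this proves the first estimate (with room to spare for the $|\log\rho|$).

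The harder cases are $\varphi_0$ and $\varphi_>$, whose momentum supports lie in annuli $a_-\leq|p|\leq a_+$ with $(a_-,a_+)$ equal to $(\rho^\eta,\rho^{\eta/\delta})$ and $(\rho^{\eta/\delta},\rho^{-\beta})$ respectively, staying away from the low-momentum region $|p|\lesssim\rho^\gamma$ (since $\eta<\gamma$). In this ``bulk'' regime the scattering equation gives $|\hat\varphi_\infty(p)|\leq C/|p|^2$, together with the analogous derivative bounds $|\partial_r^k\hat\varphi_\infty(r)|\leq C_k\,r^{-2-k}$; this follows from the pointwise asymptotic $\varphi_\infty(x)\approx a/|x|-a/\rho^{-\gamma}$ away from the origin (imposed by the Neumann condition), whose Fourier transform can be computed explicitly in terms of $(1-\cos(R|p|))/|p|^2$ and $\sin(R|p|)/(R|p|^3)$ with $R=\rho^{-\gamma}$. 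Using the radial representation
\[
\varphi_{\infty,\sharp}(x)=\frac{1}{2\pi^2|x|}\int_0^\infty r\,\hat\varphi_\infty(r)\,\tilde\chi_\sharp(r)\,\sin(r|x|)\,dr\,,
\]
together with $|\sin|\leq 1$ and $|\hat\varphi_\infty(r)|\leq C/r^2$, yields the pointwise estimate
\[
|\varphi_{\infty,\sharp}(x)|\leq \frac{C}{|x|}\int_{a_-}^{a_+}\frac{dr}{r}\leq \frac{C|\log\rho|}{|x|}\,,
\]
in which the logarithm appears precisely because the ratio $a_+/a_-$ is a power of $\rho^{-1}$.

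For the tail, integration by parts $N$ times in $p$ produces $|\varphi_{\infty,\sharp}(x)|\leq C\,(|x|^N a_-^{N-1})^{-1}$, where the inner cutoff dominates since derivatives of $\tilde\chi_\sharp$ at $|p|\sim a_-$ scale as $a_-^{-1}$ (those at the outer boundary scale as $a_+^{-1}$, which is smaller). Splitting $\mathbb{R}^3$ at the threshold $|x|=1/a_-$ and combining the two pointwise bounds gives
\[
\|\varphi_{\infty,\sharp}\|_1\leq\int_{|x|\leq 1/a_-}\frac{C|\log\rho|}{|x|}\,dx+\int_{|x|>1/a_-}\frac{C\,dx}{|x|^N a_-^{N-1}}\leq\frac{C|\log\rho|}{a_-^{\,2}}
\]
(choosing $N=4$ for the tail), and inserting $a_-=\rho^\eta$ or $a_-=\rho^{\eta/\delta}$ recovers the claim. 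The main obstacle is the rigorous verification of the bulk bound $|\hat\varphi_\infty(p)|\leq C/|p|^2$ (and its derivative analogues) uniformly in $\rho$: the extension of $\varphi_\infty$ by zero across $\partial B_{\rho^{-\gamma}}$ is only $C^1$, and the Fourier transform of the characteristic function of the ball alone has merely $1/|p|$ decay, so one must exploit the precise cancellation between the $a/|x|$ tail and the constant boundary correction $-a/\rho^{-\gamma}$ provided by the Neumann scattering solution.
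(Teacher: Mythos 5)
Your treatment of $\varphi_<$ via periodization and Young's inequality is a clean shortcut that works (and even drops the logarithm), but the core of your argument for $\varphi_0$ and $\varphi_>$ rests on the derivative bound $|\partial_r^k\hat\varphi_\infty(r)|\leq C_k\,r^{-2-k}$ in the bulk region $r\gtrsim\rho^\gamma$, and this bound is \emph{false} for $k\geq 3$. Taking the Fourier transform of the Neumann scattering equation over $B=B_{\rho^{-\gamma}}$, one gets the relation
\[
(|p|^2-\lambda_\gamma)\hat\varphi_\infty(p)=\tfrac12\bigl(\hat V(p)-(\hat V*\hat\varphi_\infty)(p)\bigr)-\lambda_\gamma\,\widehat{\mathds{1}_{B}}(p)\,,
\]
and since $\widehat{\mathds{1}_{B}}(p)\sim R\cos(R|p|)/|p|^2$ with $R=\rho^{-\gamma}$ and $|\lambda_\gamma|\leq C\rho^{3\gamma}$, each $p$-derivative past the second pulls down a factor $R=\rho^{-\gamma}$ from the oscillatory factor $\cos(R|p|)$. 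The cancellation you describe (the matching of $a/|x|$ against the $-a\rho^\gamma$ boundary correction) is real and does give $|\hat\varphi_\infty|\lesssim|p|^{-2}$ and $|\partial_r\hat\varphi_\infty|\lesssim|p|^{-3}$, but it cannot tame the higher derivatives: the sharp statement is $|\partial_r^k\hat\varphi_\infty(r)|\lesssim r^{-2-k}+\rho^{(2-k)\gamma}r^{-4}$, and for $k\geq3$ and $r\gtrsim\rho^\gamma$ the second term dominates.

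This gap is not cosmetic: it breaks your single-split integration. With the correct third-derivative bound, your $N=4$ tail estimate degrades to $|\varphi_{\infty,\sharp}(x)|\lesssim\rho^{-\gamma}a_-^{-2}/|x|^4$ (not $a_-^{-3}/|x|^4$), and $\int_{|x|>1/a_-}\rho^{-\gamma}a_-^{-2}|x|^{-4}\,dx\sim\rho^{-\gamma}/a_-$, which for $a_-=\rho^{\eta/\delta}$ or $a_-=\rho^{\eta}$ exceeds the target $a_-^{-2}|\log\rho|$ precisely because $\eta<\gamma$ and $\delta>1$. One can check that no other single crossover radius saves this two-bound scheme: optimizing the split between your bulk bound $|\log\rho|/|x|$ and the $|x|^{-4}$ tail gives $\rho^{-2\gamma/3-4\eta/(3\delta)}$, strictly worse than $\rho^{-2\eta/\delta}$. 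What is needed is an \emph{intermediate} decay estimate corresponding to only two integrations by parts: $|\varphi_{\infty,\sharp}(x)|\lesssim a_-^{-2}/|x|^3$, which avoids the $\rho^{-\gamma}$ loss because $|\partial_r^2\hat\varphi_\infty|\lesssim r^{-4}$ holds without it. Using this on $1/a_-\lesssim|x|\lesssim\rho^{-\gamma}$ and switching to the $|x|^{-4}$ bound only for $|x|\gtrsim\rho^{-\gamma}$ recovers $a_-^{-2}|\log\rho|$; this three-region split is essentially what the paper does.
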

\begin{proof} In the following, we shall set $B \equiv B_{\rho^{-\gamma}}(0)$. We shall only prove the first and the last inequality, the proof of the second one being analogous to the one of the third.
\medskip

\noindent{\underline{Bound for $\varphi_{<}$}.} It is convenient to write:
\begin{equation}\label{eq:splitphimin}
\varphi_{<}(x) = \varphi_{\ll}(x) +  \tilde \varphi_{<}(x)\;,
\end{equation}
where:
\begin{equation}
\varphi_{\ll}(x) = \frac{1}{L^{3}}\sum_{p\in \frac{2\pi}{L} \mathbb{Z}^{3}} e^{ip\cdot x} \hat \varphi(p) \chi\Big( \frac{|p|}{\rho^{\gamma}}\Big)\tilde \chi_{<}(p)\;,\qquad  \tilde \varphi_{<}(x) = \frac{1}{L^{3}}\sum_{p\in \frac{2\pi}{L} \mathbb{Z}^{3}}  e^{ip\cdot x} \hat \varphi(p) \Big( 1 - \chi\Big(\frac{|p|}{\rho^{\gamma}}\Big)\Big) \tilde \chi_{<}(p)\;.
\end{equation}
Consider first $\varphi_{\ll}(x)$. We have, for $L$ large enough uniformly in $x$:
\begin{eqnarray}
| \varphi_{\ll}(x) | &\leq& C\int dp\, | \hat \varphi(p)| \chi(|p| / \rho^{\gamma}) \nonumber\\
&\leq& C\rho^{\gamma}\;.
\end{eqnarray}
In fact, since $|\varphi(x)| \leq C(1 + |x|)^{-1}$ and $\varphi(x)$ is compactly supported in $B$,
\begin{equation}
|\hat \varphi(p)| \leq \int dx\, |\varphi(x)| \leq C\rho^{-2\gamma}\;.
\end{equation}
Next, integrating by parts, for all $n\geq 1$, for $L$ large enough uniformly in $x$:
\begin{equation}
|x_{k}|^{n}_{L} |\varphi_{\ll}(x)| \leq C\int dp\, | \partial_{p_{k}}^{n} \hat \varphi(p)  \chi(|p| / \rho^{\gamma}) \tilde \chi_{<}(p)|\;.
\end{equation}
Every derivative brings a factor $\rho^{-\gamma}$. This is evident from the derivatives of the cutoff functions. Concerning $\hat \varphi(p)$:
\begin{eqnarray}
| \partial_{p_{k}}^{n} \hat \varphi(p) | &\leq& \int dx\, |x_{k}|^{n} | \varphi(x) | \nonumber\\
&\leq& C_{n} \rho^{- n\gamma - 2\gamma}\;,
\end{eqnarray}
where in the last step we bounded every $|x_{k}|$ factor by $\rho^{-\gamma}$, using the compact support of $\varphi$. Hence:
\begin{equation}
|x_{k}|_{L}^{n} |\varphi_{\ll}(x)| \leq C_{n} \rho^{- n\gamma + \gamma}\;,
\end{equation}
which gives:
\begin{equation}\label{eq:ll}
|\varphi_{\ll}(x)| \leq \frac{C_{n}\rho^{\gamma}}{ 1 + (\rho^{\gamma} |x|_{L})^{n}}\;.
\end{equation}
This bound implies that 
\begin{equation}\label{eq:phimmin}
\| \varphi_{\ll} \|_{1} \leq C\rho^{-2\gamma}\;.
\end{equation}
Let us now consider $\tilde \varphi_{<}$. We will prove decay estimates in configuration space using an integration by parts argument in momentum space. To efficiently estimate the derivatives of $\hat \varphi(p)$, it is convenient to consider the scattering equation in Fourier space.  We have, using that $\hat \varphi(p) = \int_{B} dx\, e^{ip\cdot x} \varphi(x)$, and recalling that $\varphi(x)$ solves (\ref{eq:scat2}) for $x\in B$:
\begin{equation}\label{eq:Fscat}
(|p|^{2} - \lambda_{\gamma}) \hat \varphi(p) + \frac{1}{2} (\hat V(p) - (\hat V * \hat \varphi)(p)) = -\lambda_{\gamma}\int_{B} dx\, e^{ip\cdot x}\;.
\end{equation}
To write Eq. (\ref{eq:Fscat}), we used that both $V(x)$ and $\varphi(x)$ have compact support in $B$, and that $\varphi = \nabla \varphi = 0$ on $\partial B$. We are interested in momenta $|p|$ such that $|p|^{2} \geq \rho^{2\gamma}$; this, together with the estimate $|\lambda_{\gamma}|\leq C\rho^{3\gamma}$, implies that $(|p|^{2} - \lambda_{\gamma})>0$. Therefore,
\begin{equation}\label{eq:phiF}
\partial_{p_{k}}^{n} \hat \varphi(p) = \partial_{p_{k}}^{n} \frac{1}{|p|^{2} - \lambda_{\gamma}} \Big( -  \frac{1}{2} \hat V(p) + \frac{1}{2}(\hat V * \hat \varphi)(p)) - \lambda_{\gamma} \int_{B} dx\, e^{ip\cdot x} \Big)\;.
\end{equation}
The derivatives of the first two terms in the brackets are bounded as, by the regularity of $\hat V$:
\begin{equation}
| \partial_{p_{k}}^{n} \hat V(p) | \leq C_{n}\;,\qquad |\partial_{p_{k}}^{n} (\hat V * \hat \varphi)(p)| \leq \int dx\, |x|^{n} |V(x)| |\varphi(x)| \leq C_{n}\;.\end{equation}
In the last inequality we used that $0\leq \varphi(x) \leq 1$, together with the compact support of $V(x)$. Also,
\begin{equation}
|p_{k}|^{n} |\hat V(p)| \leq C_{n}\;,\qquad |p_{k}|^{n} |(\hat V * \hat \varphi)(p)|\leq \int dx\, | \partial_{x_{k}}^{n} V(x) \varphi(x) | \leq C_{n}\;,
\end{equation}
where we used the fast decay of $\hat V(p)$, implies by the regularity of $V(x)$, and the fact that $\varphi(x)$ is regular in the support of $V$. Consider now the last term in the brackets. We compute:
\begin{eqnarray}\label{eq:phiF2}
\int_{B} dx\, e^{ip\cdot x} &=& 2\pi \int_{0}^{\rho^{-\gamma}} d t\, t^{2} \int_{-1}^{1} d\alpha\, e^{i t |p| \alpha}\nonumber\\
&=& 2\pi \int_{0}^{\rho^{-\gamma}} d t\, t^{2} \frac{2}{t |p|} \sin t |p|\nonumber\\
&=& \frac{4\pi}{|p|^{3}} \int_{0}^{|p| \rho^{-\gamma}} dt\, t \sin t\nonumber\\
&=& \frac{4\pi}{|p|^{3}} ( - |p| \rho^{-\gamma} \cos |p|\rho^{-\gamma} + \sin |p| \rho^{-\gamma} )\;.
\end{eqnarray}
Combined with Eq. (\ref{eq:phiF}), this computation implies, for $|p|\geq 1$:
\begin{equation}\label{eq:derr1}
|\partial_{p_{k}}^{n} \varphi(p)| \leq \frac{C\rho^{2\gamma - n\gamma}}{|p|^{4}} + \frac{C_{k+n}}{|p|^{k}}\;,\qquad \text{for all $k\in \mathbb{N}$.}
\end{equation}
Let us now consider the regime $\rho^{\gamma}\leq |p| \leq 1$. Eq. (\ref{eq:phiF2}) gives:
\begin{equation}
\Big| \partial_{p_{k}}^{n} \lambda_{\gamma} \int_{B} dx\, e^{ip\cdot x}  \Big| \leq C_{n} \frac{\rho^{2\gamma - n\gamma}}{|p|^{2}}\;;
\end{equation}
therefore, from Eq. (\ref{eq:phiF}) we get the bound, for $\rho^{\gamma} \leq |p| \leq 1$:
\begin{equation}\label{eq:esttildephi}
\Big|\partial_{p_{k}}^{n} \hat \varphi(p)\Big| \leq \frac{C_{n}}{|p|^{2}}\Big( \frac{1}{|p|^{n}} + \frac{\rho^{2\gamma - n\gamma}}{|p|^{2}} \Big)\;.
\end{equation}
For $|p| \geq \rho^{\gamma}$ and $n\geq 2$ the second term dominates. Let us now use the bound (\ref{eq:esttildephi}) to prove decay estimates for $ \tilde \varphi_{<}$. We have:
\begin{eqnarray}\label{eq:easy}
| \tilde \varphi_{<}(x) | &\leq& \int dp\, | \varphi(p) | ( 1 - \chi(|p| / \rho^{\gamma}) ) \chi(|p| / \rho^{\eta}) \nonumber\\
&\leq& C\rho^{\eta}\;.
\end{eqnarray}
Also, for $n\geq 1$:
\begin{eqnarray}\label{eq:tildephimin}
| x_{k} |_{L}^{n} |  \tilde \varphi_{<}(x) | &\leq& \int dp\, \Big| \partial^{n}_{p_{k}} \varphi(p) ( 1 - \chi(|p| / \rho^{\gamma}) ) \chi(|p| / \rho^{\eta}) \Big|\;.
\end{eqnarray}
Let $n\geq 2$. The bound (\ref{eq:esttildephi}) implies:
\begin{eqnarray}
| x_{k} |_{L}^{n} | \tilde  \varphi_{<}(x) | &\leq&  C_{n}\int_{\rho^{\gamma}}^{2\rho^{\gamma}} dp\, \frac{\rho^{-n\gamma}}{|p|^{2}} \nonumber\\
&& + C_{n} \int dp\, \frac{1}{|p|^{2}} \frac{\rho^{2\gamma - n\gamma}}{|p|^{2}} ( 1 - \chi(|p| / \rho^{\gamma}) ) \chi(|p| / \rho^{\eta})\;.
\end{eqnarray}
The first term bounds the terms where at least one derivative hits the characteristic functions, while the second arises from the estimate (\ref{eq:esttildephi}). Therefore, for $n\geq 2$:
\begin{equation}
| x_{k} |_{L}^{n} |  \tilde \varphi_{<}(x) | \leq C_{n} \rho^{-(n-1) \gamma} = C_{n} \rho^{-(n-1)\gamma - \eta} \rho^{\eta}\;.
\end{equation}
All together, recalling (\ref{eq:easy}), for $n\geq 2$:
\begin{equation}\label{eq:decnnn}
|  \tilde \varphi_{<}(x) | \leq \frac{C_{n} \rho^{\eta}}{ 1 + \Big( \rho^{\frac{(n-1)\gamma + \eta}{n}}|x|_{L} \Big)^{n}}\;.
\end{equation}
To estimate $\| \tilde \varphi_{<}\|_{1}$, we write:
\begin{equation}\label{eq:splittilde}
\|\tilde \varphi_{<}\|_{1} \leq \| \tilde \varphi_{<} \chi(|\cdot|_{L} \rho^{\gamma}) \|_{1} +  \| \tilde \varphi_{<} \chi^{c}(|\cdot|_{L} \rho^{\gamma})\|_{1}\;,
\end{equation}
and we shall study the two terms separately. Consider the first. Here we use (\ref{eq:decnnn}) with $n=3$. We get:
\begin{equation}\label{eq:splittilde1}
\int_{|x|_{L} \leq \rho^{-\gamma}} dx\, |  \tilde \varphi_{<}(x) | \leq C\rho^{\eta} \rho^{-3( 2\gamma/3 + \eta/3 )}  |\log \rho|= C\rho^{-2\gamma} | \log \rho |\;.
\end{equation}
Consider now the second term in (\ref{eq:splittilde}). Here we use (\ref{eq:decnnn}) with $n=4$. We get:
\begin{eqnarray}\label{eq:splittilde2}
\int_{|x|_{L} > \rho^{-\gamma}} dx\, |  \tilde \varphi_{<}(x) | &\leq& C\rho^{\eta} \rho^{-3 ( 3\gamma / 4 + \eta / 4 )} \frac{1}{ 1 + \rho^{-\gamma + 3\gamma / 4 + \eta / 4 } } \nonumber\\
&\leq& C\rho^{-2\gamma}\;.
\end{eqnarray}
Therefore, (\ref{eq:splittilde}), (\ref{eq:splittilde1}), (\ref{eq:splittilde2}) imply:
\begin{equation}
\| \tilde \varphi_{<} \|_{1} \leq C\rho^{-2\gamma} |\log \rho|\;.
\end{equation}
Combined with (\ref{eq:splitphimin}), (\ref{eq:phimmin}) we get:
\begin{equation}
\| \varphi_{<} \|_{1} \leq C\rho^{-2\gamma} |\log \rho|\;.
\end{equation}
This concludes the proof of the first of (\ref{eq:norms}). 

\medskip

\noindent{\it \underline{Bound for $\varphi_{>}$}.} Let us now prove the third estimate in (\ref{eq:norms}). To do this, it is convenient to write, for $n\in \mathbb{N}$ large enough:
\begin{equation}
\varphi_{>}(x) = \chi(|x|_{L} < \rho^{-n}) \varphi_{>}(x) + \chi(|x|_{L} \geq \rho^{-n}) \varphi_{>}(x)\;.
\end{equation}
For the second term, we use the nonoptimal bound $|\varphi_{>}(x)| \leq C_{m} \rho^{-3\beta - 2\gamma}/(1 + (\rho^{\gamma} |x|_{L})^{m})$, which can be proven as (\ref{eq:ll}), to show that:
\begin{equation}
\| \chi(|\cdot|_{L} \geq \rho^{-n}) \varphi_{>} \|_{1} \leq C\;,\qquad \text{for $n$ large enough.}
\end{equation}
Next, for the first term we approximate $\varphi_{>}(x)$ by its infinite volume counterpart $\varphi^{\infty}_{>}(x)$. We have:
\begin{equation}
\| \chi(|\cdot|_{L} < \rho^{-n}) \varphi_{>}\|_{1} \leq \| \chi(|\cdot|_{L} < \rho^{-n}) \varphi^{\infty}_{>}\|_{1} + \| \chi(|\cdot|_{L} < \rho^{-n})( \varphi^{\infty}_{>} - \varphi_{>})\|_{1}\;.
\end{equation}
Using that $| \varphi_{>}(x) - \varphi^{\infty}_{>}(x) | \leq C/L$ for fixed $x$, we have, for $L$ large enough:
\begin{equation}
\| \chi(|\cdot|_{L} < \rho^{-n})( \varphi^{\infty}_{>} - \varphi_{>})\|_{1} \leq C\;.
\end{equation}
Therefore, for $L$ large enough:
\begin{equation}\label{eq:basta}
\| \varphi_{>} \|_{1} \leq \| \varphi^{\infty}_{>} \|_{1} + C\;.
\end{equation}
Let us now focus on $\|\varphi_{>}^{\infty}\|_{1}$. We use that:
\begin{eqnarray}
\varphi^{\infty}_{>}(x) &=& \int dp\, e^{ip\cdot x} \hat \varphi(p) \chi(\rho^{\beta} |p|) (1 - \chi(|p| / \rho^{\eta/\delta})) \nonumber\\
&=& \frac{4\pi}{|x|}\int d t\, t \hat \varphi(t) \chi(\rho^{\beta} t) (1 - \chi(t / \rho^{\eta/\delta})) \sin (t |x|)
\end{eqnarray}
where: $t \equiv |p|$; in the last equality we used that, with a slight abuse of notation, $\varphi(p) \equiv \varphi(|p|)$; we performed the angular integration. Therefore,
\begin{equation}
|x|^{n} \varphi^{\infty}_{>}(x) = 4\pi \int d t\, t \hat \varphi(t) \chi(\rho^{\beta} t) (1 - \chi(t / \rho^{\eta/\delta})) |x|^{n-1}\sin (t |x|)\;.
\end{equation}
Using that $|x| \sin (t|x|) = -\partial_{t} \cos (t|x|)$, $|x| \cos(t|x|) = \partial_{t} \sin (t |x|)$, we get, integrating by parts:
\begin{equation}\label{eq:part}
|x|^{n} |\varphi^{\infty}_{>}(x)| \leq 4\pi \int d t\, \Big|\partial_{t}^{n-1} \Big(t \hat \varphi(t) \chi(\rho^{\beta} t) (1 - \chi(t / \rho^{\eta/\delta})) \Big)\Big|\;,
\end{equation}
where we used that all boundary terms vanish thanks to the characteristic functions. We are interested in estimating the right-hand side of (\ref{eq:part}) for $n=3$ and for $n=4$. We have various cases, depending on which function the derivatives hit.

Consider the terms where at least one derivative hits $\chi(\rho^{\beta} t)$. Then, using that $\partial^{k} \chi(\rho^{\beta} t) = \rho^{k\beta } \chi^{(k)}(\rho^{\beta} t)$, we see that $t$ is forced to be $O(\rho^{-\beta})$. Thanks to (\ref{eq:derr1}), it is not difficult to see that the resulting contribution to (\ref{eq:part}) is bounded uniformly in $\rho$.

Consider the case when all derivatives hit $(1 - \chi(t / \rho^{\eta/\delta}))$. Then, from $\partial_{t}^{n-1} (1 - \chi(t / \rho^{\eta/\delta})) = -\rho^{-(n-1)\frac{\eta}{\delta}} \chi^{(k)}(t / \rho^{\eta/\delta})$, using Eq. (\ref{eq:esttildephi}) we see that the resulting contribution is bounded as $\rho^{-(n-1)\frac{\eta}{\delta}}$. More generally, the same estimate holds true as long as the number of derivatives hitting $t\hat \varphi(t)$ is less or equal than $2$, and all the other derivatives hit $(1 - \chi(t / \rho^{\eta/\delta}))$.

The only case left to consider is when $n=4$, and all the $(n-1) = 3$ derivatives hit $t \hat \varphi(t)$. Thanks to (\ref{eq:esttildephi}), we see that this contribution, after integrating for $t\geq \rho^{-\eta/\delta}$, is bounded as $\rho^{-\frac{2\eta}{\delta}} \rho^{-\gamma}$. In conclusion, for $n=3,4$:
\begin{equation}
|x|^{n} |\varphi^{\infty}_{>}(x)| \leq C + C\rho^{-(n-1)\frac{\eta}{\delta}} + C\rho^{-\frac{2\eta}{\delta}} \rho^{-(n-3)\gamma} \leq C\rho^{-\frac{2\eta}{\delta}} \rho^{-(n-3)\gamma}\;,
\end{equation}
where in the last step we used that $n=3,4$ and that $\eta \leq \gamma$. Using also that $|\varphi^{\infty}_{>}(x)| \leq \int dp\, |\hat \varphi_{>}(p)| \leq C$, we get:
\begin{equation}\label{eq:bdphi>}
|\varphi^{\infty}_{>}(x)| \leq \frac{C}{1 + \big( \rho^{\frac{2\eta/\delta + (n-3)\gamma}{n}} |x| \Big)^{n}}\;,\qquad n = 3, 4\;.
\end{equation}
We are now ready to prove the second of (\ref{eq:norms}). We write:
\begin{equation}\label{eq:phifin0}
\| \varphi^{\infty}_{>} \|_{1} \leq \| \varphi^{\infty}_{>} \chi(|\cdot|_{L} \leq \rho^{-\gamma}) \|_{1} + \| \varphi^{\infty}_{>} \chi(|\cdot|_{L} > \rho^{-\gamma}) \|_{1}\;,
\end{equation}
and we estimate the two terms separately. For the first, we use (\ref{eq:bdphi>}) with $n=3$. We get:
\begin{equation}\label{eq:phifin1}
\| \varphi^{\infty}_{>} \chi(|\cdot|_{L} \leq \rho^{-\gamma}) \|_{1}  \leq C\rho^{-\frac{2\eta}{\delta}} |\log \rho|\;.
\end{equation}
For the second, we use (\ref{eq:bdphi>}) with $n=4$. We have:
\begin{eqnarray}\label{eq:phifin2}
\| \varphi^{\infty}_{>} \chi(|\cdot|_{L} > \rho^{-\gamma}) \|_{1} &\leq& C \rho^{-3 \big( \frac{2\eta/\delta + \gamma}{4} \big)} \frac{1}{1 + \rho^{-\gamma + \frac{2\eta/\delta + \gamma}{4}}}\nonumber\\
&\leq& C \rho^{-\frac{2\eta}{\delta}}\;.
\end{eqnarray}
Therefore, (\ref{eq:phifin1}), (\ref{eq:phifin2}) imply:
\begin{equation}
\| \varphi^{\infty}_{>} \|_{1} \leq C\rho^{-\frac{2\eta}{\delta}} |\log \rho|\;.
\end{equation}
Together with (\ref{eq:basta}), this proves the last of (\ref{eq:norms}). The proof of the second inequality in (\ref{eq:norms}) is completely analogous to the one we just discussed, we omit the details.
\end{proof}
\section{Proof of Lemma \ref{lem:phi}}\label{app:lemphi}
Let us start from the first bound. We proceed in exactly the same way as for the proof of Lemma \ref{lem:bosonbd}, with the only difference that $\overline{v}^{\text{r}}_{x}$ is replaced by $\partial^{n_{2}}\overline{v}^{\text{r}}_{x}$, which satisfies the bound $\|\partial^{n_{2}}\overline{v}^{\text{r}}_{x}\|_{2} \leq C\rho^{\frac{n_{2}}{3} + \frac{1}{2}}$. Therefore, we get:
\begin{eqnarray}
\Big| \int dxdy\, \varphi(x-y) \langle \xi_{\lambda}, a_{\uparrow}(u^{\text{r}}_{x}) a_{\uparrow}(\partial^{n_{2}}\overline{v}^{\text{r}}_{x}) a_{\downarrow}(u^{\text{r}}_{y}) a_{\downarrow}(\overline{v}^{\text{r}}_{y}) \xi_{\lambda} \rangle\Big| &\leq& C\rho^{1-2\gamma + \frac{n_{2}}{3}} \langle \xi_{\lambda}, \mathcal{N}\xi_{\lambda}\rangle + CL^{\frac{3}{2}} \rho^{1 - \frac{\gamma}{2} + \frac{n_{2}}{3}} \| \mathcal{N}^{\frac{1}{2}}\xi_{\lambda} \| \nonumber\\
&\leq& CL^{\frac{3}{2}} \rho^{1 - \frac{\gamma}{2} + \frac{n_{2}}{3}} \| \mathcal{N}^{\frac{1}{2}}\xi_{\lambda} \|\;.
\end{eqnarray}
The second inequality follows from $\| \mathcal{N}^{\frac{1}{2}}\xi_{\lambda} \| \leq CL^{\frac{3}{2}} \rho^{\frac{7}{12}}$ (propagation of the a priori estimate) and from $\gamma \leq 7/18$. Let us now prove the second bound. We shall proceed as for the first bound. The only difference is that instead of the estimate (\ref{eq:csineq}) we use:
\begin{eqnarray}
\rho^{\frac{1}{2} - \frac{\gamma}{2}} \int dx\, \| a_{\sigma}(\partial u^{\text{r}}_{y}) a_{\sigma}(\partial^{n_{3}}\overline{v}^{\text{r}}_{y}) \xi_{\lambda}\| &\leq& CL^{\frac{3}{2}} \rho^{1 - \frac{\gamma}{2} + \frac{n_{2}}{3}} \Big(\int dx\, \| a_{\sigma}(\partial u^{\text{r}}_{y}) \xi_{\lambda}\|^{2}\Big)^{\frac{1}{2}} \nonumber\\
&\leq& CL^{\frac{3}{2}} \rho^{1 - \frac{\gamma}{2} + \frac{n_{2}}{3}} ( \| \mathbb{H}_{0}^{\frac{1}{2}}\xi_{\lambda}\| + \rho^{\frac{1}{3}} \|\mathcal{N}^{\frac{1}{2}}\xi_{\lambda} \| )\;.
\end{eqnarray}
The last inequality follows from:
\begin{eqnarray}
\int dx\, \| a_{\sigma}(\partial u^{\text{r}}_{y}) \xi_{\lambda}\|^{2} &\leq& \sum_{k} |k|^{2} \| \hat a_{k,\sigma} \xi_{\lambda} \|^{2} \nonumber\\
&\leq& \langle \xi_{\lambda}, \mathbb{H}_{0} \xi_{\lambda}\rangle + C\rho^{\frac{2}{3}} \langle \xi_{\lambda}, \mathcal{N}\xi_{\lambda}\rangle\;.
\end{eqnarray}
This concludes the proof of Lemma \ref{lem:phi}. \qed
\section{Regularizations}\label{app:UV}
\subsection{Proof of Lemma \ref{lem:UV2}}\label{sec:UV2}
Let $R_{0}$ be such that $\text{supp}\, V_{\infty} \subset B_{R_{0}}(0)$. We rewrite:
\begin{eqnarray}\label{eq:Iapp}
&&\text{I} = \int dxdydzdz'\, V(x-y) \varphi(z-z') \delta^{\text{r}}_{\uparrow}(z;x) \delta^{\text{r}}_{\downarrow}(z';y) \langle \xi_{\lambda}, a_{\uparrow}(\overline{v}^{\text{r}}_{z}) a_{\uparrow}(u_{x}) a_{\downarrow}(\overline{v}^{\text{r}}_{z'}) a_{\downarrow}(u_{y}) \xi_{\lambda} \rangle \\
&&= \int dxdydzdz'\, V(x-y) \varphi(z-z') \chi(|z -z'|_{L} / (8R_{0}))\delta^{\text{r}}_{\uparrow}(z;x) \delta^{\text{r}}_{\downarrow}(z';y) \langle \xi_{\lambda}, a_{\uparrow}(\overline{v}^{\text{r}}_{z}) a_{\uparrow}(u_{x}) a_{\downarrow}(\overline{v}^{\text{r}}_{z'}) a_{\downarrow}(u_{y}) \xi_{\lambda} \rangle\nonumber\\
&& + \int dxdydzdz'\, V(x-y) \varphi(z-z') \chi^{c}(|z -z'|_{L} / (8R_{0}))\delta^{\text{r}}_{\uparrow}(z;x) \delta^{\text{r}}_{\downarrow}(z';y) \langle \xi_{\lambda}, a_{\uparrow}(\overline{v}^{\text{r}}_{z}) a_{\uparrow}(u_{x}) a_{\downarrow}(\overline{v}^{\text{r}}_{z'}) a_{\downarrow}(u_{y}) \xi_{\lambda} \rangle\nonumber\\
&&\equiv \text{I}_{a} + \text{I}_{b}\;,
\end{eqnarray}
where we set $\chi^{c} = 1 - \chi$. Let us consider $\text{I}_{\text{b}}$. Recall that $\hat \delta^{\text{r}}(p) = \chi(\rho^{\beta} |p|)$; therefore, a simple integration by parts argument shows that, for all $n\in \mathbb{N}$:
\begin{equation}\label{eq:deltardec}
| \delta^{\text{r}}(z;x) | \leq \frac{C_{n}\rho^{-3\beta}}{1 + (\rho^{-\beta} |z-x|_{L})^{n}}\;.
\end{equation}
We then have:
\begin{eqnarray}\label{eq:Ibest}
|\text{I}_{b}| &\leq& C\rho \int dxdydzdz'\, V(x-y) \varphi(z-z') \chi^{c}(|z -z'|_{L} / (8R_{0})) |\delta^{\text{r}}_{\uparrow}(z;x)| |\delta^{\text{r}}_{\downarrow}(z';y)| \| a_{\uparrow}(u_{x}) a_{\downarrow}(u_{y}) \xi_{\lambda} \| \nonumber\\
&\leq& C \rho^{1+ \beta (n-3)} \int dxdy\, V(x-y) \| a_{\uparrow}(u_{x}) a_{\downarrow}(u_{y}) \xi_{\lambda} \|\nonumber\\
&\leq& C_{n} \rho^{\beta (n-3)} ( CL^{3} \rho^{2} + \langle \xi_{\lambda}, \widetilde{\mathbb{Q}}_{1} \xi_{\lambda}\rangle )\;,
\end{eqnarray}
where the second inequality follows from (\ref{eq:deltardec}), and the last from Cauchy-Schwarz inequality. More precisely, to prove the first inequality we use that, for $|x- y|_{L} \leq R_{0}$ (a contraint imposed by the compact support of $V$ in $\Lambda_{L}$):
\begin{eqnarray}
&&\int dzdz'\, \chi^{c}(|z - z'|_{L} / (8R_{0}))  | \delta^{\text{r}}(z;x) | | \delta^{\text{r}}(z';y) | \nonumber\\
&&\qquad \leq \int dzdz'\, (\chi^{c}(|z-x|_{L} / R_{0}) + \chi^{c}(|z'-y|_{L} / R_{0}))  | \delta^{\text{r}}(z;x) | | \delta^{\text{r}}(z';y) | \nonumber\\
&&\qquad \leq C_{n} \rho^{\beta (n-3)}\;.
\end{eqnarray}
Consider now the term $\text{I}_{a}$. We claim that, for any two vectors $\xi, \psi \in \mathcal{F}$:
\begin{eqnarray}\label{eq:C5}
&&\int dzdz'\, \varphi(z-z') \chi(|z -z'|_{L} / (8R_{0})) \delta^{\text{r}}_{\uparrow}(z;x) \delta^{\text{r}}_{\downarrow}(z';y) \langle \xi_{\lambda}, a_{\uparrow}(\overline{v}^{\text{r}}_{z})a_{\downarrow}(\overline{v}^{\text{r}}_{z'}) \psi \rangle \nonumber\\
&&\qquad = \varphi(x-y) \chi(|x -y|_{L} / (8R_{0}))\langle \xi, a_{\uparrow}(\overline{v}^{\text{r}}_{x}) a_{\downarrow}(\overline{v}^{\text{r}}_{y}) \psi\rangle + \mathcal{E}_{x,y}(\xi, \psi)\;,
\end{eqnarray}
with, for all $n\geq 4$:
\begin{equation}\label{eq:Exyest}
| \mathcal{E}_{x,y}(\xi, \psi) | \leq C_{n} \rho^{1+ \beta (n-3)} \|\xi\| \|\psi\|\;,
\end{equation}
uniformly in $x,y$. To prove this, we proceed as follows. Let $m(z-z') = \varphi(z-z') \chi(|z -z'|_{L} / (8R_{0}))$. Being supported away from $|z-z'|_{L} = \rho^{-\gamma}$, this function is $C^{n}$ for any $n\in \mathbb{N}$, provided $(1 + |p|^{k})\hat V(p) \in L^{\infty}$ for $k$ large enough. This follows from the fact that $\varphi_{\infty}$ solves the scattering equation inside the ball of radius $\rho^{-\gamma}$; recall Lemma \ref{lem:scat}. Therefore, a simple integration by parts argument shows that:
\begin{equation}\label{eq:decf}
|\hat m(p)| \leq \frac{C_{n}}{1 + |p|^{n}}\qquad \text{for any $n\in \mathbb{N}$}\;,
\end{equation}
provided $V$ is regular enough. Next, we rewrite the approximate delta functions $\delta^{\text{r}}_{\sigma}$ as:
\begin{equation}\label{eq:deltasplit}
\delta^{\text{r}}_{\sigma}(z;x) = \delta_{\sigma}(z;x) - \delta_{\sigma}^{>}(z;x)
\end{equation}
where $\delta(\cdot)$ is the periodic Dirac delta over $\Lambda_{L}$, and $\hat \delta_{\sigma}^{>}(p) = 1 - \chi(\rho^{\beta} |p|)$. After performing the replacement in the left-hand side of (\ref{eq:C5}), we get:
\begin{equation}\label{eq:Iamain}
(\ref{eq:C5}) = m(x-y) \langle \xi, a_{\uparrow}(\overline{v}^{\text{r}}_{x}) a_{\downarrow}(\overline{v}^{\text{r}}_{y}) \psi\rangle + \mathcal{E}_{x,y}(\xi, \psi)
\end{equation}
where the error term $\mathcal{E}_{x,y}(\xi, \psi)$ collects terms with at least one $\delta^{>}$. Let us estimate it. Consider the term with two $\delta^{>}$:
\begin{eqnarray}\label{eq:ff}
&&\int dzdz'\, m(z-z') \delta_{\uparrow}^{>}(z;x) \delta_{\downarrow}^{>}(z';y) \langle \xi, a_{\uparrow}(\overline{v}^{\text{r}}_{z})  a_{\downarrow}(\overline{v}^{\text{r}}_{z'}) \psi \rangle \\
&& \qquad = \frac{1}{L^{3}} \sum_{q\in \frac{2\pi}{L} \mathbb{Z}^{3}} \hat m(q) \int dzdz'\, e^{iq\cdot z} e^{-iq\cdot z'}  \delta_{\uparrow}^{>}(z;x) \delta_{\downarrow}^{>}(z';y) \langle \xi, a_{\uparrow}(\overline{v}^{\text{r}}_{z}) a_{\downarrow}(\overline{v}^{\text{r}}_{z'}) \psi \rangle\;.\nonumber
\end{eqnarray}
We rewrite the innermost integral as:
\begin{equation}
\int dzdz'\,dr_{1} dr_{2} e^{iq\cdot z} e^{-iq\cdot z'} \delta_{\uparrow}^{>}(z;x) \delta_{\downarrow}^{>}(z';y) v_{\uparrow}^{\text{r}}(r_{1};z) v^{\text{r}}_{\downarrow}(r_{2};z') \langle \xi, a_{r_{1},\uparrow} a_{r_{2},\downarrow} \psi \rangle\;.
\end{equation}
Let us perform the $z$, $z'$ integrations. We have:
\begin{eqnarray}
\int dz\, e^{iq\cdot x} \delta_{\uparrow}^{>}(z;x) v_{\uparrow}^{\text{r}}(r_{1}; z) &=& \int dz\, e^{iq\cdot z} \int dp\, e^{ip\cdot (z-x)} \hat \delta_{\uparrow}^{>}(p) \int dp'\, e^{ip'\cdot (r_{1} + z)} \hat v^{\text{r}}_{\uparrow}(p')\nonumber\\
&=& \int dp\, e^{-ip\cdot (x+r_{1})} e^{-iq\cdot r_{1}} \hat \delta_{\uparrow}^{>}(p) \hat v^{\text{r}}_{\uparrow}(p+q) \nonumber\\
&=:& g_{x,q}^{\uparrow}(r_{1})\;.
\end{eqnarray}
Similarly,
\begin{equation}
\int dz\, e^{-iq\cdot y} \delta_{\downarrow}^{>}(z';y) v_{\downarrow}^{\text{r}}(r_{2}; z') =: g^{\downarrow}_{y,q}(r_{2})\;.
\end{equation}
These functions are in $L^{2}$; in fact,
\begin{equation}
\| g^{\sigma}_{x,q} \|_{2}^{2} \leq \int dp\, |\hat \delta_{\uparrow}^{>}(p) \hat v^{\text{r}}_{\uparrow}(p+q)|^{2} \leq C\rho\;.
\end{equation}
Also, notice that the $p$ integration in the definition of $g^{\sigma}_{x,q}$ is restricted to $|p+q|\leq k_{F}^{\sigma}$ (by the support properties of $\hat v^{\text{r}}(p+q)$), and $|p|\geq C\rho^{-\beta}$ (by the support properties of $\hat \delta^{>}(p)$). This implies that $g^{\sigma}_{x,q} = 0$ unless $|q| \geq c\rho^{-\beta}$; hence,
\begin{eqnarray}\label{eq:ff2}
\Big|\int dzdz'\, m(z-z') \delta_{\uparrow}^{>}(z-x) \delta_{\downarrow}^{>}(z'-y) \langle \xi, a_{\uparrow}(\overline{v}^{\text{r}}_{z}) a_{\downarrow}(\overline{v}^{\text{r}}_{z'})\psi \rangle\Big| &=& \Big| \frac{1}{L^{3}}\sum_{|q| \geq c\rho^{-\beta}} \hat m(q) \langle \xi, a_{\uparrow}(\overline{g_{x,q}}) a_{\downarrow}(\overline{g_{y,q}}) \psi\rangle\Big|\nonumber\\
&\leq& C\rho \int_{|q| \geq c\rho^{-\beta}} dq\, | \hat m(q) | \|\xi\| \|\psi\|\;.
\end{eqnarray}
Using the bound (\ref{eq:decf}), we get:
\begin{equation}
\Big|\int dzdz'\, m(z-z') \delta_{\uparrow}^{>}(z-x) \delta_{\downarrow}^{>}(z'-y) \langle \xi, a_{\uparrow}(\overline{v}^{\text{r}}_{z}) a_{\downarrow}(\overline{v}^{\text{r}}_{z'})\psi \rangle\Big| \leq C_{n} \rho^{1 + (n-3)\beta } \|\xi\| \|\psi\|\;,
\end{equation}
uniformly in $x$ and $y$. To conclude, consider the remaining error terms, of the form:
\begin{equation}
\int dzdz'\, m(z-z') \delta(z-x) \delta_{\downarrow}^{>}(z'-y) \langle \xi, a_{\uparrow}(\overline{v}^{\text{r}}_{z}) a_{\downarrow}(\overline{v}^{\text{r}}_{z'}) \psi\rangle 
= \int dz'\, m(x-z') \delta_{\downarrow}^{>}(z'-y)  \langle \xi, a_{\uparrow}(\overline{v}^{\text{r}}_{x}) a_{\downarrow}(\overline{v}^{\text{r}}_{z'}) \psi\rangle\;.
\end{equation}
Proceeding as before, we estimate this term as:
\begin{eqnarray}
\Big|\frac{1}{L^{3}}\sum_{q\in \frac{2\pi}{L} \mathbb{Z}^{3}} \hat m(q) e^{iq\cdot x} \langle \xi, a_{\uparrow}(\overline{v}^{\text{r}}_{x}) a_{\downarrow}(\overline{g_{y,q}}) \psi \rangle\Big|  &\equiv& \Big|\frac{1}{L^{3}}\sum_{|q| \geq c\rho^{-\beta}} \hat m(q) e^{iq\cdot x} \langle \xi, a_{\uparrow}(\overline{v}^{\text{r}}_{x}) a_{\downarrow}(\overline{g_{y,q}}) \psi \rangle\Big| \nonumber\\
&\leq& C\rho \int_{|q|\geq c\rho^{-\beta}} dq\, | \hat m(q) | \|\xi\| \|\psi\|\nonumber\\
&\leq& C_{n} \rho^{1 + \beta (n - 3)} \|\xi\| \|\psi\|\;.
\end{eqnarray}
This concludes the proof of (\ref{eq:C5}), (\ref{eq:Exyest}). Let us now plug (\ref{eq:C5}) into $\text{I}_{a}$. We have, using that by the support properties of $V$, $V(x-y)\chi(|x -y|_{L} / (8R_{0})) = V(x-y)$:
\begin{eqnarray}\label{eq:Iaerr}
\text{I}_{a} &=& \int dxdy\, V(x-y) \varphi(x-y)\langle \xi_{\lambda}, a_{\uparrow}(\overline{v}^{\text{r}}_{x}) a_{\uparrow}(u_{x}) a_{\downarrow}(\overline{v}^{\text{r}}_{y}) a_{\downarrow}(u_{y}) \xi_{\lambda} \rangle\nonumber\\
&& + \tilde{\text{I}}_{a}\;,
\end{eqnarray}
where $\tilde{\text{I}}_{a}$ is bounded as:
\begin{eqnarray}\label{eq:tildeIaest}
|\tilde{\text{I}}_{a}| &\leq& C_{n} \rho^{1+ \beta (n-3)} \int dxdy\, V(x-y) \| a_{\uparrow}(u_{x}) a_{\downarrow}(u_{y}) \xi_{\lambda}\| \nonumber\\
&\leq& C_{n} \rho^{\beta (n-3)} ( CL^{3} \rho^{2} + \langle \xi_{\lambda}, \widetilde{\mathbb{Q}}_{1} \xi_{\lambda} \rangle )\;.
\end{eqnarray}
Putting together (\ref{eq:Iapp}), (\ref{eq:Ibest}), (\ref{eq:tildeIaest}), we get:
\begin{eqnarray}
\text{I} &=& \int dxdy\, V(x-y) \varphi(x-y)\langle \xi_{\lambda}, a_{\uparrow}(\overline{v}^{\text{r}}_{x}) a_{\uparrow}(u_{x}) a_{\downarrow}(\overline{v}^{\text{r}}_{y}) a_{\downarrow}(u_{y}) \xi_{\lambda} \rangle + \widehat{\mathcal{E}}_{\widetilde{\mathbb{Q}}_{1}}(\xi_{\lambda}) \nonumber\\
| \widehat{\mathcal{E}}_{\widetilde{\mathbb{Q}}_{1}}(\xi_{\lambda}) | &\leq& C_{n} \rho^{\beta (n - 3)} ( CL^{3} \rho^{2} + \langle \xi_{\lambda}, \widetilde{\mathbb{Q}}_{1} \xi_{\lambda}\rangle )\;,
\end{eqnarray}
which concludes the of Lemma \ref{lem:UV2}.\qed

\subsection{Regularization of $\mathbb{T}_{2}$ and of $\widetilde{\mathbb{Q}}_{4}$}\label{app:ab}
\noindent{\bf Regularization of $\mathbb{T}_{2}$.} Let us start by discussing the regularization of $\mathbb{T}_{2}$, recall Eq. (\ref{eq:Tdef}). We write
\begin{equation}\label{eq:splitu}
\hat u(k) = \hat u^{\text{r}}(k) + \hat \alpha(k) + \hat \delta^{>}(k)\;,
\end{equation}
with $\hat \alpha(k)$ supported for $k_{F} \leq |k| \leq 2k_{F}$ and $\hat\delta^{>}(k)$ supported for $|k| \geq \rho^{-\beta}$. Let $\mathbb{T}_{2}^{\text{r}}$ be the operator obtained from $\mathbb{T}_{2}$ replacing $u$ by $u^{\text{r}}$. We write:
\begin{equation}
\mathbb{T}_{2} - \mathbb{T}^{\text{r}}_{2} = \mathbb{T}_{2;a} + \mathbb{T}_{2;b}\;,
\end{equation}
where $\mathbb{T}_{2;b}$ contains at least one operator $a(\delta^{>}_{x})$, while $\mathbb{T}_{2;a}$ contains at least one operator $a(\alpha_{x})$, and no operator $a(\delta^{>}_{x})$.
\medskip

\noindent{\it \underline{Bound for $\mathbb{T}_{2,\beta}$.}} Let $m(x-y) = V(x-y)\varphi(x-y)$. We claim that, omitting the spins for simplicity, for any $\xi \in \mathcal{F}$ and for $L$ large enough:
\begin{equation}\label{eq:XY}
\Big\| \int dy\, a(\delta^{>}_{y}) m(x-y) a(\overline{v}^{\text{r}}_{y}) \xi \Big\| \leq C_{n} \rho^{\beta n} \|\xi\|\;,\qquad \text{for any $n\in \mathbb{N}$ large enough}\;,
\end{equation}
provided $(1 + |p|^{k}) \hat V \in L^{\infty}$, for $k$ large enough. This bound allows to prove that $\langle \xi_{\lambda}, \mathbb{T}_{2;b} \xi_{\lambda}\rangle$ is small. For instance, consider (we omit the spin for simplicity):
\begin{equation}
\text{I} = \int dxdy\, m(x-y) \langle \xi_{\lambda}, a(\delta^{>}_{x}) a(\delta^{>}_{y}) a(\overline{v}_{x}) a(\overline{v}_{y})  \xi_{\lambda}\rangle\;.
\end{equation}
We have, from (\ref{eq:XY}):
\begin{eqnarray}
|\text{I}| &\leq& \int dy\, \Big\|  \Big(\int dx\, m(x-y) a(\delta^{>}_{x})  a(\overline{v}_{x})\Big)  a(\delta^{>}_{y}) a(\overline{v}_{y}) \xi_{\lambda}  \Big\|\nonumber\\
&\leq& C_{n}\rho^{\beta n} \int dy\, \| a(\delta^{>}_{y}) a(\overline{v}_{y}) \xi_{\lambda} \| \nonumber\\
&\leq& C_{n} \rho^{\beta n + \frac{1}{2}} L^{\frac{3}{2}} \| \mathcal{N}^{\frac{1}{2}} \xi_{\lambda} \|\;,
\end{eqnarray}
where in the last step we used $\|a(\overline{v}_{y}) \|\leq C\rho^{\frac{1}{2}}$ and Cauchy-Schwarz inequality. The contribution to $\mathbb{T}_{2;b}$ corresponding to the operators $a(\delta^{>}_{x})$, $a(\alpha_{y})$ can be estimated in exactly the same way, we omit the details. Using the propagation of the a priori estimate for the number operator, $\| \mathcal{N}^{\frac{1}{2}} \xi_{\lambda} \|\leq CL^{\frac{3}{2}} \rho^{\frac{7}{12}}$, we get:
\begin{equation}\label{eq:XY2}
|\langle \xi_{\lambda}, \mathbb{T}_{2;b} \xi_{\lambda}\rangle|\leq C_{n}L^{3} \rho^{\beta n + \frac{13}{12}}\;. 
\end{equation}
Let us prove the bound (\ref{eq:XY}). The statement is trivially true if $m$ is replaced by a constant, by the orthogonality of $\delta^{>}_{y}$ and of $v^{\text{r}}_{y}$. For nonconstant $m$, we proceed as follows.

We consider an operator $w$ with integral kernel $w(x;y) \equiv w(x-y)$, such that $\hat w(k) = 1$ for $|k| \leq \rho^{1/3}$ and $\hat w(k) = 0$ for $|k| > 2\rho^{1/3}$, and it smoothly interpolates between $1$ and $0$ for $\rho^{1/3} \leq |k| \leq 2 \rho^{1/3}$. Since $\hat v^{\text{r}}(k)$ is supported for $|k| \leq \rho^{\frac{1}{3}}$, we have $v^{\text{r}} = v^{\text{r}} w$. Hence:
\begin{equation}
a(\overline{v}^{\text{r}}_{y}) = \int dz\, a(\overline{v}^{\text{r}}_{z}) w(y;z)\;.
\end{equation}
Therefore,
\begin{eqnarray}
\int dy\, a^{*}(\delta^{>}_{y}) m(x-y) a(\overline{v}^{\text{r}}_{y}) &=& \int dydz\, a^{*}(\delta^{>}_{y}) m(x-y) w(y;z) a(\overline{v}^{\text{r}}_{z})\nonumber\\
&\equiv& \int dz\, a^{*}(A_{x,z}) a(\overline{v}^{\text{r}}_{z})
\end{eqnarray}
with:
\begin{equation}
A_{x,z}(r) = \int dy\, \delta^{>}(r;y) m(x-y) w(y;z)\;.
\end{equation}
We will need estimates on the decay properties of this function. For $L$ large enough, integration by parts gives:
\begin{eqnarray}\label{eq:Abd}
| (x-z)^{m_{1}} (x-r)^{m_{2}} A_{x,z}(r) | \leq \int dkdq\, | \partial_{k}^{m_{2}} \partial_{q}^{m_{1}} \hat m(k-q) \hat \delta^{>} (k) \hat w(q)  |\;.\nonumber
\end{eqnarray}
Using that, for any $n, m\in \mathbb{N}$:
\begin{equation}
| \partial^{m}_{k} \hat m(k) | \leq \frac{C_{n+m}}{1 + |k|^{n+m}}\;,\quad | \partial^{m}_{k}\hat w(k) | \leq C_{m}\rho^{-\frac{m}{3}} \chi(k\in \text{supp}\, \hat w)\;,\quad | \partial^{m}_{k} \hat \delta^{>}(k) | \leq C_{m} \rho^{\beta m} \chi(k\in \text{supp}\, \hat \delta^{>})
\end{equation}
we get:
\begin{eqnarray}
| (x-z)^{m_{1}} (x-r)^{m_{2}} A_{x,z}(r) | &\leq& C_{n, m_{1}, m_{2}} \int^{*} dk dq\, \frac{\rho^{-\frac{1}{3} (m_{1} + m_{2})}}{1 + |k-q|^{n}} \nonumber\\
&\leq& C_{n, m_{1}, m_{2}} \rho^{-\frac{1}{3}(m_{1} + m_{2} - 3)}  \rho^{\beta(n - 3)}\;,
\end{eqnarray}
where the asterisk denotes the constraints $q\in \text{supp}\, \hat w$, $k-q \in \text{supp}\, \hat \delta^{>}$. This bound implies:
\begin{equation}\label{eq:Axz}
|A_{x,z}(r)| \leq \frac{C_{n, m_{1}, m_{2}} \rho^{\beta(n - 3)}}{1 + (\rho^{\frac{1}{3}} |x-z|)^{m_{1}}} \frac{1}{1 + (\rho^{\frac{1}{3}} |x-r|)^{m_{2}}}\;.
\end{equation}
Therefore,
\begin{eqnarray}
\Big\|  \int dy\, a^{*}(\delta^{>}_{y}) m(x-y) a(\overline{v}^{\text{r}}_{y}) \varphi \Big\| &\equiv& \Big\|  \int dz\, a^{*}(A_{x,z}) a(\overline{v}^{\text{r}}_{z}) \varphi \Big\| \\
&\leq& C\rho^{\frac{1}{2}} \int dz\, \| A_{x,z} \|_{2} \| \|\varphi\|_{2}\;.\nonumber
\end{eqnarray}
Eq. (\ref{eq:Axz}) implies that, for all $n\in \mathbb{N}$:
\begin{equation}
 \int dz\, \| A_{x,z} \|_{2} \leq C_{n, m_{1}, m_{2}}\int dz\, \frac{\rho^{\beta(n - 3) - \frac{1}{2}}}{1 + (\rho^{\frac{1}{3}} |x-z|)^{m_{1}}} \leq C_{n, m_{1}, m_{2}} \rho^{\beta(n - 3) - 1}\;.
\end{equation}
Taking $n$ large enough, the claim (\ref{eq:XY}) follows.

\medskip

\noindent{\it \underline{Bound for $\mathbb{T}_{2;a}$.}} Consider:
\begin{equation}
\text{I} = \int dxdy\, V(x-y) \varphi(x-y) \langle \xi_{\lambda}, a_{\uparrow}(\overline{v}^{\text{r}}_{x}) a_{\uparrow}(\alpha_{x}) a_{\downarrow}(\overline{v}^{\text{r}}_{y}) a_{\downarrow}(u^{\text{r}}_{y})\xi_{\lambda} \rangle\;.
\end{equation}
The term corresponding to $a(\alpha_{x})$, $a(\alpha_{y})$ is estimated in exactly the same way. We have:
\begin{eqnarray}\label{eq:T2alpha}
|\text{I}| &\leq& C\int dxdy\, V(x-y) \| \overline{v}_{x}^{r} \|_{2} \| \alpha_{x} \|_{2} \| \overline{v}^{\text{r}}_{y} \|_{2} \| a_{\downarrow}(u^{\text{r}}) \xi_{\lambda} \|\nonumber\\
&\leq& CL^{\frac{3}{2}} \rho^{\frac{3}{2}} \| \mathcal{N}^{\frac{1}{2}} \xi_{\lambda} \| \leq CL^{\frac{3}{2}}\rho^{\frac{3}{2}} \| \mathcal{N}^{\frac{1}{2}} \xi_{\lambda} \|\;.
\end{eqnarray}
All the other contributions to $\mathbb{T}_{2;a}$ are bounded in the same way. 
\medskip

\noindent{\it \underline{Conclusion.}} Putting together (\ref{eq:XY2}), (\ref{eq:T2alpha}) we have, taking $n$ large enough in (\ref{eq:XY2}):
\begin{equation}
|\langle \xi_{\lambda}, (\mathbb{T}_{2} - \mathbb{T}_{2}^{\text{r}}) \xi_{\lambda} \rangle| \leq CL^{\frac{3}{2}}\rho^{\frac{3}{2}} \| \mathcal{N}^{\frac{1}{2}} \xi_{\lambda} \|\;.
\end{equation}

\noindent{\bf Regularization of $\widetilde{\mathbb{Q}}_{4}$.} We start by writing 
\begin{equation}\label{eq:veta}
\hat v(k) = \hat v^{\text{r}}(k) + \hat\eta(k)\;
\end{equation}
with $\hat \eta (k)$ supported for $k_{F} - \rho^{\alpha} \leq |k| \leq k_{F}$ and $\alpha = \frac{1}{3} + \frac{\epsilon}{3}$, recall the definition of the correlation structure given in Section \ref{sec:cor}. Let $\widetilde{\mathbb{Q}}^{\text{r}}_{4;1}$ be the operator obtained from $\widetilde{\mathbb{Q}}_{4}$ after replacing all $v$ by $v^{\text{r}}$: 
\begin{equation}
\widetilde{\mathbb{Q}}^{\text{r}}_{4;1} = \int dxdy\, V(x-y) a^{*}_{\uparrow}(u_{x}) a^{*}_{\downarrow}(u_{y}) a^{*}_{\downarrow}(\overline{v}^{\text{r}}_{y}) a^{*}_{\uparrow}(\overline{v}^{\text{r}}_{x})
\end{equation}
Also, recall that:
\begin{equation}
\widetilde{\mathbb{Q}}^{\text{r}}_{4} = \int dxdy\, V(x-y) a^{*}_{\uparrow}(u^{\text{r}}_{x}) a^{*}_{\downarrow}(u^{\text{r}}_{y}) a^{*}_{\downarrow}(\overline{v}^{\text{r}}_{y}) a^{*}_{\uparrow}(\overline{v}^{\text{r}}_{x})\;.
\end{equation}
We set:
\begin{equation}
\widetilde{\mathbb{Q}}_{4} - \widetilde{\mathbb{Q}}^{\text{r}}_{4;1} = \widetilde{\mathbb{Q}}_{4;a}\;.
\end{equation}
\noindent{\it \underline{Bound for $\widetilde{\mathbb{Q}}_{4;a}$.}} Recall (\ref{eq:veta}). Consider the term:
\begin{equation}
\text{I} = \sum_{\sigma\neq \sigma'} \int dxdy\, V(x-y) \langle \xi_{\lambda}, a_{\sigma}(u_{x}) a_{\sigma}(\overline{\eta}_{x}) a_{\sigma'}(u_{y}) a_{\sigma'}(\overline{v}^{\text{r}}_{y}) \xi_{\lambda}\rangle\;.
\end{equation}
Then:
\begin{eqnarray}
|\text{I}| &\leq& \sum_{\sigma\neq \sigma'} \int dxdy\, V(x-y) \Big(\frac{\delta}{2} \| a_{\sigma}(u_{x}) a_{\sigma'}(u_{y}) \xi_{\lambda} \|^{2} + \frac{1}{\delta} \| \overline{\eta}_{x}\|^{2}_{2} \| \overline{v}_{y} \|_{2}^{2}\Big)\nonumber\\
&\leq& \delta \langle \xi_{\lambda}, \widetilde{\mathbb{Q}}_{1} \xi_{\lambda} \rangle + \frac{C}{\delta}L^{3} \rho^{2 + \frac{\epsilon}{3}}\;,
\end{eqnarray}
where we used that $\| \overline{\eta}_{x} \|_{2}^{2} \leq C\rho^{\frac{2}{3} + \alpha}$ and $\alpha = \frac{1}{3} + \frac{\epsilon}{3}$. All the other contributions to $\widetilde{\mathbb{Q}}_{4;a}$ can be estimated in the same way. Hence:
\begin{equation}\label{eq:Q4a}
|\langle \xi_{\lambda}, \widetilde{\mathbb{Q}}_{4;a} \xi_{\lambda}\rangle| \leq C\delta \langle \xi_{\lambda}, \widetilde{\mathbb{Q}}_{1} \xi_{\lambda} \rangle + \frac{C}{\delta}L^{3} \rho^{2 + \frac{\epsilon}{3}}\;.
\end{equation}
\noindent{\it \underline{Conclusion.}} We write:
\begin{equation}
\langle \xi_{\lambda}, (\widetilde{\mathbb{Q}}_{4} - \widetilde{\mathbb{Q}}_{4}^{\text{r}}) \xi_{\lambda} \rangle = \langle \xi_{\lambda}, (\widetilde{\mathbb{Q}}_{4} - \widetilde{\mathbb{Q}}_{4;1}^{\text{r}}) \xi_{\lambda} \rangle + \langle \xi_{\lambda}, (\widetilde{\mathbb{Q}}^{\text{r}}_{4;1} - \widetilde{\mathbb{Q}}_{4}^{\text{r}}) \xi_{\lambda} \rangle\;.
\end{equation}
The first term is bounded as in (\ref{eq:Q4a}), while the second can be bounded exactly as $\langle \xi_{\lambda}, (\mathbb{T}_{2} - \mathbb{T}_{2}^{\text{r}})\xi_{\lambda}\rangle$. We get:
\begin{eqnarray}
| \langle \xi_{\lambda}, (\widetilde{\mathbb{Q}}_{4} - \widetilde{\mathbb{Q}}_{4}^{\text{r}}) \xi_{\lambda}\rangle | &\leq& | \langle \xi_{\lambda}, (\widetilde{\mathbb{Q}}_{4} - \widetilde{\mathbb{Q}}^{\text{r}}_{4;1}) \xi_{\lambda}\rangle | + | \langle \xi_{\lambda}, (\widetilde{\mathbb{Q}}^{\text{r}}_{4;1} - \widetilde{\mathbb{Q}}_{4}^{\text{r}}) \xi_{\lambda}\rangle |\nonumber\\
&\leq& C\delta \langle \xi_{\lambda}, \widetilde{\mathbb{Q}}_{1} \xi_{\lambda} \rangle + \frac{C}{\delta}L^{3} \rho^{2 + \frac{\epsilon}{3}} + CL^{\frac{3}{2}}\rho^{\frac{3}{2}} \| \mathcal{N}^{\frac{1}{2}} \xi_{\lambda} \|\;.
\end{eqnarray}
\subsection{Proof of Eq. (\ref{eq:estI2})}\label{app:6uv}
We write:
\begin{equation}
\text{I}_{2} = \text{I}_{2;a} + \text{I}_{2;b}\;,
\end{equation}
where $\text{I}_{2;b}$ is obtained from $\text{I}$ replacing at least one between $a^{*}(u_{x}), a^{*}(u_{y})$ with either $a^{*}(\delta^{>}_{x})$ or $a^{*}(\delta^{>}_{y})$, recall Eq. (\ref{eq:splitu}). The contribution of this term can be proven to be smaller than any power of $\rho^{\beta}$, proceeding as for $\mathbb{T}_{2,\beta}$, and we shall omit the details. Consider now $\text{I}_{2;a}$. One term contributing to $\text{I}_{2;a}$ is:
\begin{equation}\label{eq:Q4I03}
\sum_{\sigma\neq \sigma'}\int dxdydzdz'\, V(x-y) \varphi(z-z') \tilde \omega^{\text{r}}(z;y)  \langle \xi_{\lambda}, a^{*}_{\sigma}(\alpha_{x}) a^{*}_{\sigma'}(u_{y}) a^{*}_{\sigma}(\overline{v}^{\text{r}}_{x}) a_{\downarrow}(\overline{v}^{\text{r}}_{z'}) a_{\uparrow}(u^{\text{r}}_{z}) a_{\downarrow}(u^{\text{r}}_{z'}) \xi_{\lambda}\rangle\;.
\end{equation}
To estimate (\ref{eq:Q4I03}), we proceed exactly as for $\text{I}_{1}$, recall Eqs. (\ref{eq:Q4I})-(\ref{eq:estI1}) The only difference is that now the estimate:
\begin{eqnarray}\label{eq:inst}
\int dxdy\, V(x-y) \| a_{\sigma}(u_{x}) a_{\sigma'}(u_{y}) \xi_{\lambda} \|^{2} &\leq& \langle \xi_{\lambda}, \widetilde{\mathbb{Q}}_{1} \xi_{\lambda}\rangle\nonumber\\
&\leq& CL^{3} \rho^{2}\;,
\end{eqnarray}
is replaced by:
\begin{eqnarray}
\int dxdy\, V(x-y) \| a_{\sigma}(\alpha_{x}) a_{\sigma'}(u_{y}) \xi_{\lambda} \|^{2} &\leq& C\rho \langle \xi_{\lambda}, \mathcal{N} \xi_{\lambda}\rangle \nonumber\\
&\leq& C L^{3}\rho^{\frac{13}{6}}\;,
\end{eqnarray}
which is better than (\ref{eq:inst}). In conclusion, we can estimate $\text{I}_{2;a}$ (in a nonoptimal way) using the same bound we obtained for $\text{I}_{1}$, Eq. (\ref{eq:I1est00}).

\subsection{Proof of Eq. (\ref{eq:Imainreg})}\label{sec:UV4}
Here we show how to go from (\ref{eq:prima}) to (\ref{eq:Imainreg}). We rewrite $\hat u_{\sigma}^{\text{r}}(k) = \chi(\rho_{\sigma}^{\beta} |k|) - \hat\nu_{\sigma}(k)$, with $\hat \nu_{\sigma}(k)$ smooth and such that $\hat \nu_{\sigma}(k) = 1$ for $|k|\leq k_{F}^{\sigma}$, $\hat \nu_{\sigma}(k) = 0$ for $|k| > 2k_{F}^{\sigma}$. Therefore, for all $n\in \mathbb{N}$, its inverse Fourier transform $\nu_{\sigma}(x-y)$ decays as:
\begin{equation}
| \nu_{\sigma}(x-y) | \leq \frac{C_{n} \rho}{1 + \rho^{\frac{n}{3}} |x-y|^{n}}\;.
\end{equation}
Furthermore, let $\delta_{\sigma}^{\text{r}}(x-y)$ be the inverse Fourier transform of $\chi(\rho_{\sigma}^{\beta} |k|)$. We write $\delta_{\sigma}^{\text{r}}(x-y) = \delta(x-y) - \delta_{\sigma}^{>}(x-y)$, with $\delta(x-y)$ the periodic Dirac delta and $\hat \delta^{>}_{\sigma}(x-y)$ supported for $|k| \geq 2\rho^{-\beta}$. All together:
\begin{equation}
u^{\text{r}}_{\sigma}(x;y) = \delta(x-y) - \delta_{\sigma}^{>}(x-y) - \nu_{\sigma}(x-y)\;.
\end{equation}
We then get:
\begin{equation}
\int dxdydzdz'\, V(x-y) \varphi(z-z') u^{\text{r}}_{\uparrow} (z;x) u^{\text{r}}_{\downarrow}(z';y) \omega^{\text{r}}_{\uparrow}(z;x)\omega_{\downarrow}^{\text{r}}(z';y) = \rho^{\text{r}}_{\uparrow} \rho^{\text{r}}_{\downarrow} \int dxdy\, V(x-y) \varphi(x-y) + \mathcal{E}_{a} + \mathcal{E}_{b}
\end{equation}
where $\rho^{\text{r}}_{\sigma} = \omega_{\sigma}^{\text{r}}(x;x)$. The term $\mathcal{E}_{a}$ collects terms with no $\delta^{>}$ function and at least one $\nu$ function, while the term $\mathcal{E}_{b}$ collects terms with at least one $\delta^{>}$ function. Consider $\mathcal{E}_{a}$, and let us start from the terms with only one $\nu$ function:
\begin{eqnarray}\label{eq:1nu}
&&\int dxdydzdz'\, V(x-y) \varphi(z-z') \nu_{\uparrow}(z;x) \delta(z'-y) \omega^{\text{r}}_{\uparrow}(z;x) \omega^{\text{r}}_{\downarrow}(z';y) \nonumber\\
&&\qquad = \rho_{\downarrow}^{\text{r}}\int dxdydz\, V(x-y) \varphi(z-y) \nu_{\uparrow}(z;x) \omega^{\text{r}}_{\uparrow}(z;x) \;.
\end{eqnarray}
We have, using $\|\nu_{x}\|_{\infty} \leq C\rho$, $\|\omega^{\text{r}}_{x}\|_{\infty} \leq C\rho$:
\begin{equation}\label{eq: first est prop 5.25}
\left| \rho_{\downarrow}^{\text{r}}\int dxdydz\, V(x-y) \varphi(z-y) \nu_{\uparrow}(z;x) \omega^{\text{r}}_{\uparrow}(z;x) \right| \leq C L^{3}\rho^{3} \|V\|_{1} \|\varphi\|_{1} \leq CL^{3} \rho^{3-2\gamma}\;.
\end{equation}
Consider now the term with two $\nu$ functions. We get, using that $\| \nu_{\sigma} \|_{1} \leq C$:
\begin{equation}
\Big|\int dxdydzdz'\, V(x-y) \varphi(z-z') \nu_{\uparrow}(z;x) \nu_{\downarrow}(z';y) \omega^{\text{r}}_{\uparrow}(z;x) \omega^{\text{r}}_{\downarrow}(z';y)\Big| \leq CL^{3} \rho^{3 - 2\gamma}\;.
\end{equation}
Hence:
\begin{equation}\label{eq:bdEa}
|\mathcal{E}_{a}| \leq CL^{3} \rho^{3 - 2\gamma}\;.
\end{equation}
Let us now consider $\mathcal{E}_{b}$. Let us omit the spin for simplicity. To simplify the notation, in the following we shall set $\int dk\,(\cdots) = L^{-3} \sum_{k}(\cdots)$. We start from the term:
\begin{equation}
\text{I} = \int dxdydzdz'\, V(x-y) \varphi(z-z') \delta^{>}(z;x) \delta^{>}(z';y) \omega^{\text{r}}(z;x)\omega^{\text{r}}(z';y)\;.
\end{equation}
We rewrite it in momentum space as:
\begin{equation}
\text{I} = L^{3}\int dk_{1} dk_{2} dk_{3}\, \hat V(k_{1} + k_{3}) \hat \varphi(-k_{1} - k_{3}) \hat \delta^{>}(k_{1}) \hat \delta^{>}(k_{2}) \hat \omega^{\text{r}}(k_{3}) \hat \omega^{\text{r}}(-k_{1} -k_{2} - k_{3})\;,
\end{equation}
which we estimate as, using that $|k_{1} + k_{3}| \geq C\rho^{-\beta}$ by the support properties of $\omega^{\text{r}}(k_{3})$ and of $\hat \delta^{>}(k_{1})$:
\begin{equation}
|\text{I}| \leq L^{3}C_{n} \rho^{\beta n} \int dk_{1} dk_{2} dk_{3}\, \chi(|k_{1}+ k_{3}| \geq C\rho^{-\beta})| \hat \varphi(k_{1} + k_{3}) | \hat \omega^{\text{r}}(k_{3}) \hat \omega^{\text{r}}(-k_{1} -k_{2} - k_{3})\;.
\end{equation}
To prove this estimate we used that $|\hat V(k)| \leq C_{n} (1 + |k|^{n})^{-1}$. Also, by the decay properties of $\hat \varphi$, Eq. (\ref{eq:derr1}):
\begin{equation}
\int dk\, \chi(|k| \geq C\rho^{-\beta}) | \hat \varphi(k) | \leq C\;.
\end{equation}
Finally, using also that $\int dk\, \hat \omega^{\text{r}}(k) \leq C\rho$, we have:
\begin{equation}
|\text{I}|\leq L^{3} C_{n} \rho^{\beta n + 2}\;. 
\end{equation}
Consider now the term:
\begin{eqnarray}
\text{II} &=& \int dxdydzdz'\, V(x-y) \varphi(z-z') \delta^{>}(z;x) \nu(z';y) \omega^{\text{r}}(z;x)\omega^{\text{r}}(z';y) \nonumber\\
&=& L^{3} \int dk_{1} dk_{2} dk_{3}\, \hat V(k_{1} + k_{3}) \hat \varphi(-k_{1} - k_{3}) \hat \delta^{>}(k_{1}) \hat \nu(k_{2}) \hat \omega^{\text{r}}(k_{3}) \hat \omega^{\text{r}}(-k_{1} -k_{2} - k_{3})\;.
\end{eqnarray}
Using that $|\hat \nu(k_{2})| \leq 1$, this term can be estimated exactly as $\text{I}$. In conclusion, for any $n\in \mathbb{N}$, taking $V$ regular enough:
\begin{equation}\label{eq:bdEb}
|\mathcal{E}_{b}| \leq C_{n}L^{3} \rho^{2 + n\beta}\;. 
\end{equation}
Putting together (\ref{eq:bdEa}), (\ref{eq:bdEb}) we have:
\begin{equation}
|\mathcal{E}_{a}| + |\mathcal{E}_{b}|\leq CL^{3} \rho^{3 - 2\gamma}\;.
\end{equation}
This concludes the proof of Eq. (\ref{eq:Imainreg}).\qed


\begin{thebibliography}{999999}
%
\bibitem{BLS} V. Bach, E. H. Lieb and J. P. Solovej. Generalized Hartree-Fock theory and the Hubbard model. {\it J. Stat. Phys.} {\bf 76}, 3-89 (1994).
%
\bibitem{Bpro} N. Benedikter. Bosonic collective excitations in Fermi gases. {\it Rev. Math. Phys.} {\bf 32}, 2060009 (2020).
%
\bibitem{BJPSS} N.~{Benedikter}, V.~Jaksic, M.~{Porta}, C. Saffirio and B.~{Schlein}. Mean-field evolution of fermionic mixed states. {\it Comm. Pure Appl. Math.} {\bf 69}, 2250-2303 (2016).

\bibitem{BNPSS} N. Benedikter, P. T. Nam, M. Porta, B. Schlein and R. Seiringer. Optimal Upper Bound for the Correlation Energy of a Fermi Gas in the Mean-Field Regime. {\it Comm. Math. Phys.} {\bf 374}, 2097-2150 (2020).

\bibitem{BNPSS2} N. Benedikter, P. T. Nam, M. Porta, B. Schlein and R. Seiringer. Correlation Energy of a Weakly Interacting Fermi Gas. {\tt arXiv:2005.08933}.

\bibitem{BPS} N.~{Benedikter}, M.~{Porta} and B.~{Schlein}. {Mean-field evolution of fermionic systems}. \emph{Comm. Math. Phys} \textbf{331}, 1087-1131 (2014).

\bibitem{BPS2} N.~{Benedikter}, M.~{Porta} and B.~{Schlein}. {Mean-field dynamics of fermions with relativistic dispersion}. \emph{J. Math. Phys.} \textbf{55}, 021901 (2014).
%
\bibitem{BPSbook} N. Benedikter, M. Porta and B. Schlein. {\it Effective Evolution Equations from Quantum Dynamics.} SpringerBriefs in Mathematical Physics {\bf 7}. Springer, Berlin (2016).
%
\bibitem{BBCS1} C. Boccato, C. Brennecke, S. Cenatiempo, B. Schlein. Complete Bose-Einstein condensation in the Gross-Pitaevskii regime. {\it Comm. Math. Phys.} {\bf 359}, 975-1026 (2018).
%
\bibitem{BBCS2} C. Boccato, C. Brennecke, S. Cenatiempo, B. Schlein. The excitation spectrum of Bose gases interacting through singular potentials. {\it J. Eur. Math. Soc.} (2020), doi: 10.4171/JEMS/966.
%
\bibitem{BBCS3} C. Boccato, C. Brennecke, S. Cenatiempo, B. Schlein. Bogoliubov Theory in the
Gross-Pitaevskii limit. {\it Acta Mathematica} {\bf 222} (2019), 219-335.
%
\bibitem{BBCS} C. Boccato, C. Brennecke, S. Cenatiempo, B. Schlein. Optimal Rate for Bose-Einstein Condensation in the Gross-Pitaevskii Regime, {\it Communications in Mathematical Physics} {\bf 359}, 1-85 (2019). 
%
\bibitem{BS} C. Brennecke, B. Schlein. Gross-Pitaevskii dynamics for Bose-Einstein condensates. {\it Analysis \& PDE} {\bf 12}, 1513-1596 (2019).
%
\bibitem{D} F. J. Dyson. Ground-State Energy of a Hard-Sphere Gas. {\it Phys. Rev.} {\bf 106}, 20-26 (1957).
%
\bibitem{ESY} L. Erd\H{o}s, B. Schlein, H.T. Yau. Derivation of the Gross-Pitaevskii hierarchy for the dynamics of Bose-Einstein Condensate. {\it Commun. Pure Appl. Math.} {\bf 59}, 1659-1741 (2006).
%
\bibitem{ESYtrial} L. Erd\H{o}s, B. Schlein, H.T. Yau. Ground-state energy of a low-density Bose gas: A second-order upper bound. {\it Phys. Rev. A} {\bf 78}, 053627 (2008).
%
\bibitem{FS} S. Fournais, J. P. Solovej. The energy of dilute Bose gases. {\tt arXiv:1904.06164}
%
\bibitem{G} A. Giuliani. Ground state energy of the low density Hubbard model: An upper bound. {\it J. Math. Phys.} {\bf 48}, 023302 (2007)
%
\bibitem{HQED} C. Hainzl. One non-relativistic particle coupled to a photon field. {\it Ann. H. Poincar\'e} {\bf 4}, 217-237 (2003).
%
\bibitem{HSei} C. Hainzl and R. Seiringer. Mass renormalization and energy level shift in non-relativistic QED. {\it Adv. Theor. Math. Phys.} {\bf 6}, 847-871 (2002).
%
\bibitem{HSscat} C. Hainzl and R. Seiringer. The BCS Critical Temperature for Potentials with Negative Scattering Length. {\it Lett. Math. Phys.} {\bf 84}, 99-107 (2008).
%
\bibitem{HPR} C. Hainzl, M. Porta and F. Rexze. On the correlation energy of interacting fermionic systems in the mean-field regime. {\it Comm. Math. Phys.} {\bf 374}, 485-524 (2020).
%
\bibitem{HY} K. Huang and C. N. Yang. Quantum-Mechanical Many-Body Problem with Hard-Sphere Interaction. {\it Phys. Rev.} {\bf 105}, 767-775 (1957).
%
\bibitem{LSSY} E. H. Lieb, R. Seiringer, J. P. Solovej, J. Yngvason. {\it The Mathematics of the Bose gas and its condensation.} Birkh\"auser (2005)
%
\bibitem{LSS} E. H. Lieb, R. Seiringer, J. P. Solovej. Ground-state energy of the low-density Fermi gas. {\it Phys. Rev. A} {\bf 71} (2005)
%
\bibitem{LY} E. H. Lieb, J. Yngvason. Ground State Energy of the Low Density Bose Gas. {\it Phys. Rev. Lett.} {\bf 80}, 2504-2507 (1998).
%
\bibitem{NRS1} M. Napi\'orkowski, R. Reuvers, J. P. Solovej. The Bogoliubov Free Energy Functional I: Existence of Minimizers and Phase Diagram. {\it Arch. Rational Mech. Anal.} {\bf 229}, 1037-1090 (2018).
%
\bibitem{NRS2} M. Napi\'orkowski, R. Reuvers, J. P. Solovej. The Bogoliubov free energy functional II. The dilute limit. {\it Comm. Math. Phys.} {\bf 360}, 347-403 (2018).

\bibitem{PRSS} M. Porta, S. Rademacher, C. Saffirio and B. Schlein. Mean field evolution of fermions with Coulomb interaction. {\it J. Stat. Phys.} {\bf 166}, 1345-1364 (2017).
%
\bibitem{Ro} D. W. Robinson. {\it The Thermodynamic Pressure in Quantum Statistical Mechanics}. Springer Lecture Notes in Physics, {\bf 9} (1971).
%
\bibitem{R} D. Ruelle. {\it Statistical Mechanics. Rigorous Results}. World Scientific (1999).
%
\bibitem{Se} R. Seiringer. The Thermodynamic Pressure of a Dilute Fermi Gas. {\it Comm. Math. Phys.} {\bf 261}, 729-758 (2006).
%
\bibitem{SY} R. Seiringer, J. Yin. Ground State Energy of the Low Density Hubbard Model. {\it J. Stat. Phys.} {\bf 131}, 1139-1154 (2008).
%
\bibitem{Sol} J. P. Solovej. {\it Many Body Quantum Mechanics.} Lecture Notes (2007). Available online at: {\tt http://web.math.ku.dk/~solovej/MANYBODY/mbnotes-ptn-5-3-14.pdf}
%
\bibitem{YY} H.-T. Yau and J. Yin. The Second Order Upper Bound for the Ground Energy of a
Bose Gas. {\it Journal of Statistical Physics} {\bf 136}, 453-503 (2009).

\end{thebibliography}
\end{document}